\newcolumntype{L}[1]{>{\raggedright\let\newline\\\arraybackslash\hspace{0pt}}m{#1}}
\newcolumntype{C}[1]{>{\centering\let\newline\\\arraybackslash\hspace{0pt}}m{#1}}
\newcolumntype{R}[1]{>{\raggedleft\let\newline\\\arraybackslash\hspace{0pt}}m{#1}}
\title{A $\frac{3}{2}$-approximation algorithm for the Student-Project Allocation problem}
\titlerunning{A $\frac{3}{2}$-approximation algorithm for the Student-Project Allocation problem}%optional, please use if title is longer than one line
\author{Frances Cooper}{School of Computing Science, University of Glasgow\\{[Glasgow, Scotland, UK]}}{f.cooper.1@research.gla.ac.uk}{[https://orcid.org/0000-0001-6363-9002]}{[Supported by an Engineering and Physical Sciences Research Council Doctoral Training Account]}%mandatory, please use full name; only 1 author per \author macro; first two parameters are mandatory, other parameters can be empty.
\author{David Manlove}{School of Computing Science, University of Glasgow\\{[Glasgow, Scotland, UK]}}{david.manlove@glasgow.ac.uk}{[https://orcid.org/0000-0001-6754-7308]}{[Supported by Engineering and Physical Sciences Research Council grant EP/P028306/01]}
\authorrunning{F. Cooper and D. Manlove}%mandatory. First: Use abbreviated first/middle names. Second (only in severe cases): Use first author plus 'et. al.'
\subjclass{Theory of computation $\rightarrow$ Design and analysis of algorithms}% mandatory: Please choose ACM 2012 classifications from https://www.acm.org/publications/class-2012 or https://dl.acm.org/ccs/ccs_flat.cfm . E.g., cite as "General and reference $\rightarrow$ General literature" or \ccsdesc[100]{General and reference~General literature}. 
\keywords{Matching problems, Approximation, Algorithms, Stability}%mandatory
\newacronym{spa}{\sc spa}{Student-Project Allocation problem}
\newacronym{spa-st}{\sc spa-st}{Student-Project Allocation problem with lecturer preferences over Students including Ties}
\newacronym{max-spa-st}{\sc max spa-st}{finding a maximum stable matching in \acrshort{spa-st}}
\newacronym{spa-s}{\sc spa-s}{Student-Project Allocation problem with lecturer preferences over Students}
\newacronym{sm}{\sc sm}{Stable Marriage problem}
\newacronym{ip}{IP}{Integer Programming}
\newacronym{ha}{\sc ha}{House Allocation problem}
\newacronym{smti}{\sc smti}{Stable Marriage problem with Ties and Incomplete lists}
\newacronym{max-smti}{\sc max smti}{finding a maximum stable matching in \acrshort{smti}}
\newacronym{spa-p}{\sc spa-p}{Student-Project Allocation problem with lecturer preferences over Projects}
\newacronym{min-spa-st}{\sc min spa-st}{The problem of finding a minimum stable matching in \acrshort{spa-st}}
\newacronym{hrt}{\sc hrt}{Hospitals-Residents problem with Ties}
\newacronym{hr}{\sc hr}{Hospitals-Residents problem}
\theoremstyle{plain}
\newtheorem{prop}[theorem]{Proposition}
\tikzset{
simpleNodes/.style={
  circle,
  inner sep=0pt,
  text width=5mm,
  text height=3mm,
  align=center,
  draw=black,
  fill=black!5
  }, 
  simpleNodesBigger/.style={
  circle,
  inner sep=0pt,
  text width=6mm,
  align=center,
  draw=black,
  fill=black!5
  },
  thinLine/.style={black, thin},
  dashedLine/.style={black, dashed, thin},
thickLine/.style={black, ultra thick}}
\newcommand{\graphShiftStd}{0.5mm}
\newcommand{\graphShiftStdL}{0.6mm}
\newcommand{\graphShiftStdXL}{1.2mm}
\newcommand{\graphxStd}{1.5}
\newcommand{\graphyStd}{1.5}
\begin{document}

\maketitle

\begin{abstract}
The \emph{Student-Project Allocation problem with lecturer preferences over Students} (\acrshort{spa-s}) comprises three sets of agents, namely students, projects and lecturers, where students have preferences over projects and lecturers have preferences over students.  In this scenario we seek a \emph{stable matching}, that is, an assignment of students to projects such that there is no student and lecturer who have an incentive to deviate from their assignee/s.  We study \acrshort{spa-st}, the extension of \acrshort{spa-s} in which the preference lists of students and lecturers need not be strictly ordered, and may contain ties. In this scenario, stable matchings may be of different sizes, and it is known that \acrshort{max-spa-st}, the problem of finding a maximum stable matching in \acrshort{spa-st}, is $\NP$-hard.  We present a linear-time $\frac{3}{2}$-approximation algorithm for \acrshort{max-spa-st} and an Integer Programming (IP) model to solve \acrshort{max-spa-st} optimally.  We compare the approximation algorithm with the IP model experimentally using randomly-generated data.  We find that the performance of the approximation algorithm easily surpassed the $\frac{3}{2}$ bound, constructing a stable matching within 92\% of optimal in all cases, with the percentage being far higher for many instances.
\end{abstract}

\section{Introduction}

\subsection{Background and motivation}
In universities all over the world, students need to be assigned to projects as part of their degree programmes.  Lecturers typically offer a range of projects, and students may rank a subset of the available projects in preference order.  Lecturers may have preferences over students, or over the projects they offer, or they may not have explicit preferences at all.  There may also be capacity constraints on the maximum numbers of students that can be allocated to each project and lecturer.  The problem of allocating students to projects subject to these preference and capacity constraints is called the \emph{\acrlong{spa}} (\acrshort{spa}) \cite[Section 5.5]{Man13}\cite{CGKM17,CFG17}.  Variants of this problem can be defined for the cases that lecturers have preferences over the students that rank their projects \cite{AIM07}, or over the projects they offer \cite{MO08}, or not at all \cite{KIMS15}.  In this paper we focus on the first of these cases, where lecturers have preferences over students -- the so-called \emph{\acrlong{spa-s}} (\acrshort{spa-s}).

Finding an optimal allocation of students to projects manually is time-consuming and error-prone.  Consequently many universities automate the allocation process using a centralised algorithm.  Given the typical sizes of problem instances (e.g., $152$ students at the School of Computing Science, University of Glasgow in $2019$), the efficiency of the matching algorithm is of paramount importance.  In the case of \acrshort{spa-s}, the desired matching must be \emph{stable} with respect to the given preference lists, meaning that no student and lecturer have an incentive to deviate from the given allocation and form an assignment with one another \cite{Rot84}.

Abraham et al.\ \cite{AIM07} described a linear-time algorithm to find a stable matching in an instance $I$ of \acrshort{spa-s} when all preference lists in $I$ are strictly ordered.  They also showed that, under this condition, all stable matchings in $I$ are of the same size.  In this paper we focus on the variant of \acrshort{spa-s} in which preference lists of students and lecturers can contain ties, which we refer to as the \emph{\acrlong{spa-st}} (\acrshort{spa-st}).  Ties allow both students and lecturers to express indifference in their preference lists (in practice, for example, lecturers may be unable to distinguish between certain groups of students).  A stable matching in an instance of \acrshort{spa-st} can be found in linear time by breaking the ties arbitrarily and using the algorithm of Abraham et al.\ \cite{AIM07}.

The \emph{\acrlong{smti}} (\acrshort{smti}) is a special case of \acrshort{spa-st} in which each project and lecturer has capacity 1, and each lecturer offers one project.  Given an instance of \acrshort{smti}, it is known that stable matchings can have different sizes \cite{MIIMM02}, and thus the same is true for \acrshort{spa-st}.  Yet in practical applications it is desirable to match as many students to projects as possible. %, to minimise the number of students who are without a project after the algorithm runs.
This motivates \acrshort{max-spa-st}, the problem of finding a maximum (cardinality) stable matching in an instance of \acrshort{spa-st}.  This problem is $\NP$-hard, since the corresponding optimisation problem restricted to \acrshort{smti}, which we refer to as \acrshort{max-smti}, is $\NP$-hard \cite{MIIMM02}.  Kir\'{a}ly \cite{Kir12} described a $\frac{3}{2}$-approximation algorithm for \acrshort{max-smti}. He also showed how to extend this algorithm to the case of the \emph{\acrlong{hrt}} (\acrshort{hrt}), where \acrshort{hrt} is the special case of \acrshort{spa-st} in which each lecturer $l$ offers one project $p$, and the capacities of $l$ and $p$ are equal.  Yanagisawa \cite{Yan07} showed that \acrshort{max-smti} is not approximable within a factor of $\frac{33}{29}$ unless $\P=\NP$; the same bound applies to \acrshort{max-spa-st}.

\subsection{Our contribution}

In this paper we describe a linear-time $\frac{3}{2}$-approximation algorithm for \acrshort{max-spa-st}. This algorithm is a non-trivial extension of Kir\'aly's approximation algorithm for \acrshort{hrt} as mentioned above.  We also describe an \acrlong{ip} (\acrshort{ip}) model to solve \acrshort{max-spa-st} optimally.  Through a series of experiments on randomly-generated data, we then compare the sizes of stable matchings output by our approximation algorithm with the sizes of optimal solutions obtained from our IP model.  Our main finding is that the performance of the approximation algorithm easily surpassed the $\frac{3}{2}$ bound on the generated instances, constructing a stable matching within 92\% of optimal in all cases, with the percentage being far higher for many instances.

A natural ``cloning'' technique, involving transforming an instance $I$ of \acrshort{spa-st} into an instance $I'$ of \acrshort{smti}, and then using Kir\'{a}ly's $\frac{3}{2}$-approximation algorithm for \acrshort{smti} \cite{Kir12} in order to obtain a similar approximation in \acrshort{spa-st}, does not work in general, as we show in Section \ref{cloning_spast_smti}. This motivates the need for a bespoke algorithm for the \acrshort{spa-st} case. Note that we use Kir\'{a}ly's \acrshort{smti} algorithm rather than his \acrshort{hrt} algorithm as the former was more precisely described in \cite{Kir12}.

\subsection{Structure of this paper}
Section \ref{sec:spa-st_stab_defns} gives a formal definition of \acrshort{spa-st}. Section \ref{cloning_spast_smti} describes a technique to convert an \acrshort{spa-st} instance to an \acrshort{smti} instance, and gives an example where using this technique with Kir\'aly's $\frac{3}{2}$-approximation algorithm for \acrshort{smti} does not, in general, allow for the retention of the $\frac{3}{2}$ bound in the original \acrshort{spa-st} instance. Section \ref{sec_app_alg} describes the $\frac{3}{2}$-approximation algorithm with associated correctness proofs in Section \ref{32approx_corr_proofs}. The \acrshort{ip} model for \acrshort{max-spa-st} is given in Section \ref{sec_ip}.  The experimental evaluation is described in Section \ref{sec_spa-st_more_sec_exp}, and Section \ref{sec:conc} discusses future work.

\section{Formal definition of {\sc spa-st}}
\label{sec:spa-st_stab_defns}
%A formal definition of the \emph{\acrlong{spa-st}} (\acrshort{spa-st}) is now given. 
%
An instance $I$ of \acrshort{spa-st} comprises a set $S=\{s_1,s_2,...,s_{n_1}\}$ of \emph{students}, a set $P=\{p_1,p_2,...,p_{n_2}\}$ of \emph{projects}, and a set $L=\{l_1,l_2,...,l_{n_3}\}$ of \emph{lecturers}, where $|S| = n_1$, $|P| = n_2$ and $|L| = n_3$.  Each project is \emph{offered} by one lecturer, and each lecturer $l_k$ \emph{offers} a set of projects $P_k\subseteq P$, where $P_1,\dots,P_k$ partitions $P$.  Each project $p_j\in P$ has a \emph{capacity} $c_j\in \mathbb Z_0^+$, and similarly each lecturer $l_k\in L$ has a \emph{capacity} $d_k\in \mathbb Z_0^+$. We assume without loss of generality that $d_k$ is no greater than the sum of the capacities of the projects in $P_k$. Each student $s_i\in S$ has a set $A_i\subseteq P$ of \emph{acceptable} projects that they rank in order of preference. \emph{Ties} are allowed in preference lists, where a tie $t$ in a student $s_i$'s list indicates that $s_i$ is indifferent between all projects in $t$. Each lecturer $l_k\in L$ has a preference list over the students $s_i$ for which $A_i\cap P_k\neq \emptyset$.  Ties may also exist in lecturer preference lists. The \emph{rank} of project $p_j$ on student $s_i$'s list, denoted $\text{rank}(s_i,p_j)$, is defined as $1$ plus the number of projects that $s_i$ strictly prefers to $p_j$. An analogous definition exists for the rank of a student on a lecturer's list, denoted $\text{rank}(l_k,s_i)$. Let $m$ denote the total length of student preference lists.

An \emph{assignment} $M$ in $I$ is a subset of $S\times P$ such that, for each pair $(s_i,p_j)\in M$, $p_j \in A_i$, that is, $s_i$ finds $p_j$ acceptable. Let $M(s_i)$ denote the set of projects assigned to a student $s_i\in S$, let $M(p_j)$ denote the set of students assigned to a project $p_j\in P$, and let $M(l_k)$ denote the set of students assigned to projects in $P_k$ for a given lecturer $l_k\in L$.  A \emph{matching} $M$ is an assignment such that $|M(s_i)|\leq 1$ for all $s_i\in S$, $|M(p_j)|\leq c_j$ for all $p_j\in P$ and $|M(l_k)|\leq d_k$ for all $l_k\in L$. If $s_i\in S$ is assigned in a matching $M$, we let $M(s_i)$ denote $s_i$'s assigned project, otherwise $M(s_i)$ is empty. We describe a project $p_j$ as \emph{undersubscribed} or \emph{full} if $|M(p_j)|< c_j$ or $|M(p_j)| = c_j$, respectively. Similarly, we describe a lecturer $l_k$ as \emph{undersubscribed} or \emph{full} if $|M(l_k)|< d_k$ or $|M(l_k)| = d_k$, respectively.
%\subsection{Stability in \acrshort{spa-st}}

Given a matching $M$ in $I$, let $(s_i,p_j) \in (S \times P)\backslash M$ be a student-project pair, where $p_j$ is offered by lecturer $l_k$. Then $(s_i,p_j)$ is a \emph{blocking pair} of $M$ \cite{AIM07} if 1, 2 and 3 hold as follows:
\begin{enumerate}
	\item $s_i$ finds $p_j$ acceptable;
	\item $s_i$ either prefers $p_j$ to $M(s_i)$ or is unassigned in $M$;
	\item Either a, b or c holds as follows:
	\begin{enumerate}
		\item $p_j$ is \emph{undersubscribed} (i.e., $|M(p_j)|< c_j$) and $l_k$ is  \emph{undersubscribed} (i.e., $|M(l_k)|< d_k$);
		\item $p_j$ is undersubscribed, $l_k$ is full and either $s_i \in M(l_k)$ or $l_k$ prefers $s_i$ to the worst student in $M(l_k)$;
		\item $p_j$ is full and $l_k$ prefers $s_i$ to the worst student in $M(p_j)$.
	\end{enumerate}
\end{enumerate}

Let $(s_i,p_j)$ be a blocking pair of $M$. Then we say that $(s_i,p_j)$ is of \emph{type} $(3x)$ if $1$, $2$ and $3x$ are true in the above definition, where $x\in \{a,b,c\}$.  In order to more easily describe certain stages of the approximation algorithm, blocking pairs of type $(3b)$ are split into two subtypes as follows. $(3bi)$ defines a blocking pair of type $(3b)$ where $s_i$ is already assigned to another project of $l_k$'s. $(3bii)$ defines a blocking pair of type $(3b)$ where this is not the case.

A matching $M$ in an instance $I$ of \acrshort{spa-st} is \emph{stable} if it admits no blocking pair. Define \acrshort{max-spa-st} to be the problem of finding a maximum stable matching in \acrshort{spa-st} and let $M_{opt}$ denote a maximum stable matching for a given instance. Similarly, let \acrshort{min-spa-st} be the problem of finding a minimum stable matching in \acrshort{spa-st}.

%%%%%%%%%%%%%%%%%%%%%%%%%%%%%%%%%%%%%%%%%%%%%%%%%%%%%%%%%
%%%%%%%%%%%%%%%%%%%%%%%%%%%%%%%%%%%%%%%%%%%%%%%%%%%%%%%%%
\section{Cloning from {\sc spa-st} to {\sc smti}}
\label{cloning_spast_smti}
Manlove \cite[Theorem 3.11]{Man13} describes a polynomial transformation to convert a stable matching in an instance of \acrshort{hrt} to a stable matching in an instance of \acrshort{smti}, and vice versa, where the sizes of the matchings is conserved. An obvious question, which we address in this section, relates to whether a similar transformation could be used for \acrshort{spa-st}, which could allow existing approximation algorithms for \acrshort{hrt} and \acrshort{smti} to be used for the problem of finding a maximum stable matching in \acrshort{spa-st}.

A natural cloning method to convert instances of \acrshort{spa-st} to instances of \acrshort{hrt} is given as Algorithm \ref{Alg:APPROX_SPA-S_clonetohrt}. This algorithm involves converting students into residents and projects into hospitals. Hospitals inherit their capacity from projects. Residents inherit their preference lists naturally from students. Hospitals inherit their preference lists from the lecturer who offers their associated project; a resident entry $r_i$ is ranked only if $r_i$ also ranks this hospital. In order to translate lecturer capacities into the \acrshort{hrt} instance, a number of \emph{dummy residents} $R_d^k$ are created for each lecturer $l_k$. The number of dummy residents created for lecturer $l_k$, denoted $f_k$, is equal to the sum of capacities of their offered projects $P_k$ minus the capacity of $l_k$. We will ensure that all dummy residents are assigned in any stable matching. To this end, each dummy resident has a first position tie of all hospitals associated with projects of $l_k$, and each hospital $h_j$ in this set has a first position tie of all dummy residents associated with $l_k$. In this way, as all dummy residents must be assigned in any stable matching by Proposition \ref{th:stable_SPA_ST_clone_alldummyassigned}, lecturer capacities are automatically adhered to.

% THEOREM BEGIN
\begin{prop} 
	\label{th:stable_SPA_ST_clone_alldummyassigned}
	Let $I'$ be an instance of \acrshort{hrt} created from an instance $I$ of \acrshort{spa-st} using Algorithm \ref{Alg:APPROX_SPA-S_clonetohrt}. All dummy residents must be assigned in any stable matching in $I'$.
\end{prop}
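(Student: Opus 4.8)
The plan is to argue by contradiction using a counting argument local to a single lecturer. Suppose $M'$ is a stable matching in $I'$ in which some dummy resident $r_d$, associated with lecturer $l_k$, is unassigned. By construction $r_d$'s preference list consists solely of a first-position tie containing every hospital $h_j$ associated with a project $p_j \in P_k$, so $r_d$ finds each such $h_j$ acceptable and, being unassigned, would strictly prefer any of them to its (empty) current assignment.

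First I would show that every hospital $h_j$ associated with a project of $l_k$ is full in $M'$. If some such $h_j$ were undersubscribed, then $(r_d, h_j)$ would be a blocking pair, since $r_d$ is unassigned and finds $h_j$ acceptable; this contradicts stability. Next I would show that every resident assigned to such an $h_j$ lies in $h_j$'s first-position tie, hence is itself a dummy resident of $l_k$. The crucial feature of the construction is that this first-position tie consists \emph{exactly} of the dummy residents of $l_k$, with every genuine (student-derived) resident ranked strictly below. Since $r_d$ sits in this top tie, $h_j$ strictly prefers $r_d$ to any assignee ranked below the tie; were $h_j$ full yet retaining such a lower-ranked assignee, $(r_d, h_j)$ would again block. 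Hence, by stability, all assignees of $h_j$ are drawn from the top tie and are therefore dummy residents of $l_k$.

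Finally I would close with the count. The hospitals associated with $l_k$ have total capacity $\sum_{p_j \in P_k} c_j$ and, being full, collectively hold exactly that many residents, all dummies of $l_k$ by the previous step. But the number of dummy residents created for $l_k$ is only $\sum_{p_j \in P_k} c_j - d_k$, and $r_d$ among them is unassigned, so at most $\sum_{p_j \in P_k} c_j - d_k - 1$ dummies of $l_k$ are assigned. Since a dummy resident of $l_k$ ranks only hospitals of $l_k$ and so cannot be assigned elsewhere, this is strictly fewer than the $\sum_{p_j \in P_k} c_j$ residents needed to fill those hospitals, a contradiction. This handles the cases $d_k = 0$ and $d_k \geq 1$ uniformly.

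I expect the main obstacle to be the middle step: pinning down precisely that the first-position tie of each hospital contains only dummy residents (so that ``assigned within the top tie'' really forces ``is a dummy of $l_k$''), together with a careful invocation of the weak-stability definition to rule out a full hospital holding a strictly worse assignee while $r_d$ remains unassigned. Once that is secured, the counting step is routine.
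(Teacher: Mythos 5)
Your proof is correct and uses essentially the same ingredients as the paper's: the dummy resident's first-position tie over $H_k$, the fact that each hospital's top tie consists exactly of the dummies of $l_k$, and the count of $\sum_{p_j \in P_k} c_j - d_k$ dummies against total capacity $\sum_{p_j \in P_k} c_j$. The paper arranges the argument in mirror order (the count forces a hospital with fewer top-tie assignees than its capacity, which then forms a blocking pair with the unassigned dummy), whereas you derive from stability that all hospitals of $l_k$ are full with top-tie assignees only and then contradict the count -- a contrapositive rearrangement of the same argument, with your version being somewhat more explicit about why a full hospital holding a below-tie assignee is blocked.
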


% THEOREM END
\begin{proof}
In $I'$, for each lecturer $l_k$, the number of dummy residents created is equal to $f_k = \sum_{p_r \in P_k}{c_r} - d_k$. Assume for contradiction that one of the dummy residents $r_{d_1}^k$ is unassigned in some stable matching $M'$ of $I'$.

Let $H_k$ denote the set of hospitals associated with projects of $l_k$. Since $r_{d_1}^k$ is a dummy resident, it must have all hospitals in $H_k$ tied in first position. Also, each hospital in $H_k$ must rank all $f_k$ dummy residents (associated with $l_k$) in tied first position. Since $r_{d_1}^k$ is unassigned in $M'$, $r_{d_1}^k$ would prefer to be assigned to any hospital in $H_k$. Also, since there is at least one dummy resident unassigned, there must be at least one hospital $h_{d_2}^k$ in $H_k$ that has fewer first-choice assignees than its capacity. Hospital $h_{d_2}^k$ must exist since if it did not, then all dummy residents would be assigned. But then $(r_{d_1}^k,h_{d_2}^k)$ would be a blocking pair of $M'$, a contradiction.
\end{proof}

% ALGORITHM 
\begin{algorithm} [h]
  \caption[{\sf Clone-SPA-ST}, converts an {\sc spa-st} instance into an {\sc hrt} instance.]{{\sf Clone-SPA-ST}, converts an \acrshort{spa-st} instance into an \acrshort{hrt} instance.}
  \label{Alg:APPROX_SPA-S_clonetohrt}
	\begin{algorithmic}[1]
          \Require An instance $I$ of \acrshort{spa-st}
          \Ensure Return an instance $I'$ of \acrshort{hrt}
          
          \ForAll {student $s_i$ in $S$}
          \State Create a resident $r_i$ 
          \State $r_i$ inherits their preference list from $s_i$'s list, ranking hospitals rather than projects
          \EndFor
          \ForAll {project $p_j$ in $P$}
          \State Create a hospital $h_j$ 
          \State $h_j$'s capacity is given by $c'_j = c_j$
          \State Let $l_k$ be the lecturer offering project $p_j$
          \State $h_j$ inherits their preference list from $l_k$'s list, where a resident entry $r_i$ is retained only if $r_i$ also ranks $h_j$
          \EndFor
          
          \ForAll {lecturer $l_k$ in $L$}
          \If {$d_k < \sum_{p_j \in P_k}{c_j}$}
          \State let $f_k = \sum_{p_j \in P_k}{c_j} - d_k$
          \State Create $f_k$ new dummy residents $R_d^k = \{r^k_1,r^k_2,\dots ,r^k_{f_k}\}$
          \State Let $H_k$ denote the set of all hospitals in $I'$ associated with the projects of $P_k$ in $I$ 
          \State The preference list of each dummy resident is given by a first position tie of all hospitals in $H_k$
          \State A first position tie of all residents in $R_d^k$ is added to the start of the preference list of each hospital in $H_k$
          \EndIf
          \EndFor

          \State Let \acrshort{hrt} instance $I'$ be formed from all residents (including dummy residents) and hospitals
          \State \Return $I'$
	\end{algorithmic}
\end{algorithm}
% ALGORITHM END

% THEOREM BEGIN
\begin{theorem} 
	\label{th:stable_SPA_ST_clone_maxcardconserved}	
	Given an instance $I$ of \acrshort{spa-st} we can construct an instance $I'$ of \acrshort{hrt} in $O(n_1 + D n_2 + m)$ time with the property that a stable matching $M$ in $I$ can be converted to a stable matching $M'$ in $I'$ in $O(D n_2 + m)$ time, where $|M'| = |M| + \sum_{l_k \in L} \sum_{p_r \in P_k}(c_r) - d_k$. Here, $n_1$ denotes the number of students, $n_2$ the number of projects, $D$ the total capacities of lecturers and $m$ the total length of student preference lists. 
	\end{theorem}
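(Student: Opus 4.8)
The plan is to treat the statement as two separate claims---a running-time bound for the construction carried out by Algorithm~\ref{Alg:APPROX_SPA-S_clonetohrt}, and a size-preserving conversion of one stable matching---and to concentrate the effort on verifying that the converted matching is stable. Throughout, let $Q$ denote the set of lecturers $l_k$ with $d_k<\sum_{p_r\in P_k}c_r$, i.e.\ exactly the lecturers for which dummy residents are created, and write $f_k=\sum_{p_r\in P_k}c_r-d_k$ for the number of dummies spawned by $l_k\in Q$.

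For the timing bound I would charge each block of Algorithm~\ref{Alg:APPROX_SPA-S_clonetohrt} to one of the three terms. Creating the $n_1$ residents and copying each student's list costs $O(n_1+m)$. Creating the $n_2$ hospitals, setting capacities $e_j=c_j$, and assembling their filtered preference lists can be done in $O(n_2+m)$ by first precomputing $\text{rank}(l_k,s_i)$ for every relevant pair and then inserting each student--project incidence directly into the appropriate hospital list, so that no lecturer list is recopied once per project. The dummy machinery supplies the $Dn_2$ term: each $l_k\in Q$ produces $f_k$ dummies, each carrying a length-$|P_k|$ first-position tie, and each of the $|P_k|$ hospitals of $H_k$ receives $f_k$ dummy entries, so this phase does $O\bigl(\sum_{l_k\in Q}f_k|P_k|\bigr)$ work, which I would bound by $O(Dn_2)$ using $|P_k|\le n_2$. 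Summing gives the stated $O(n_1+Dn_2+m)$ construction time, and the identical accounting for emitting $M'$ (copy the real pairs, then sweep the hospitals of each lecturer inserting dummies) yields the $O(Dn_2+m)$ conversion time.

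Next I would define the conversion explicitly: send each $(s_i,p_j)\in M$ to $(r_i,h_j)$, and then, for every $l_k\in Q$, place its $f_k$ dummies arbitrarily into the free seats of the hospitals of $H_k$. This is feasible because the number of free seats is $\sum_{p_r\in P_k}c_r-|M(l_k)|\ge\sum_{p_r\in P_k}c_r-d_k=f_k$, using $|M(l_k)|\le d_k$. Since every dummy is placed (consistent with Proposition~\ref{th:stable_SPA_ST_clone_alldummyassigned}), $|M'|=|M|+\sum_{l_k\in Q}f_k=|M|+\sum_{l_k\in Q}\bigl(\sum_{p_r\in P_k}c_r-d_k\bigr)$, exactly the claimed value.

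The crux is stability of $M'$, which I would establish by assuming a blocking pair of $M'$ and exhibiting a blocking pair of $M$, contradicting its stability. A pair led by a dummy resident cannot block, since every dummy is assigned and its list is a single first-position tie, so it is never strictly better off. For a real pair $(r_i,h_j)$ with $l_k$ offering $p_j$, the resident side reproduces conditions~1 and~2 of the \acrshort{spa-st} blocking-pair definition for $(s_i,p_j)$, leaving me to match the hospital side to (3a), (3b) or (3c). If $h_j$ is undersubscribed in $M'$ then $l_k$ cannot be full in $M$ (a full lecturer has all $f_k$ of its free seats saturated by dummies) and $p_j$ is undersubscribed as well, giving (3a). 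If $h_j$ is full and prefers $r_i$ to its worst assignee, that assignee is necessarily a real resident $s_w\in M(p_j)$ because dummies occupy first position; if $h_j$ carries no dummy then $p_j$ is full in $M$ and (3c) applies, while if $h_j$ carries a dummy but $l_k$ is undersubscribed then $p_j$ is undersubscribed and (3a) applies. The single delicate case, which I expect to be the main obstacle, is $h_j$ full with a dummy while $l_k$ is full: here I would use $s_w\in M(p_j)\subseteq M(l_k)$ to obtain the rank chain $\text{rank}(l_k,s_i)<\text{rank}(l_k,s_w)\le\text{rank}(l_k,w)$, where $w$ is the worst student in $M(l_k)$, so that $l_k$ prefers $s_i$ to $w$ and (3b) fires. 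Reconciling ``worst student on the full project'' with ``worst student of the full lecturer'' is the only step that genuinely exploits the multi-project structure of \acrshort{spa-st}; once the rank chain is in place the remaining verification is routine.
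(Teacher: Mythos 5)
Your proposal is correct and follows essentially the same route as the paper's proof: the same Algorithm~\ref{Alg:APPROX_SPA-S_clonetohrt} construction, the same conversion (copy $M$, then pack all dummies into free seats of their lecturer's hospitals, feasible since $\sum_{p_r\in P_k}c_r-|M(l_k)|\ge f_k$), and the same stability argument by mapping a hypothetical blocking pair of $M'$ back to a blocking pair of the stable matching $M$; your sensible definition of $Q$ (which the theorem statement leaves undefined) matches the algorithm. Where you differ, your version is in fact tighter than the paper's: the paper sends every ``$h_j$ full and prefers $r_i$ to its worst assignee'' case straight to condition $(3c)$, which tacitly assumes $p_j$ is full in $M$; when a dummy occupies a seat of $h_j$, $p_j$ is undersubscribed in $M$ and $(3c)$ is unavailable, and your three-way split --- no dummy on $h_j$ gives $(3c)$, dummy with $l_k$ undersubscribed gives $(3a)$, dummy with $l_k$ full gives $(3b)$ via the rank chain through $s_w\in M(p_j)\subseteq M(l_k)$ --- is exactly what is needed to repair this shortcut. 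Your observation that a dummy-led pair can never block (it is assigned, and its list is one first-position tie) corresponds to the paper's appeal to Proposition~\ref{th:stable_SPA_ST_clone_alldummyassigned}. On running time you are no less rigorous than the paper, which merely counts at most $n_1+n_2+D$ agents and $Dn_2+m$ acceptable pairs: both arguments implicitly use $\sum_{l_k\in Q}f_k\le D$, which fails in general (a lecturer with $d_k=1$ offering two projects of capacity $100$ spawns $199$ dummies) and really requires something like $\sum_{p_r\in P_k}c_r\le 2d_k$ or truncated project capacities; since this looseness is inherited from the statement and shared by the paper's own proof, it is not a gap attributable to your attempt.
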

% THEOREM END

\begin{proof}
Suppose $M$ is a stable matching in $I$. We construct an instance $I'$ of \acrshort{hrt} using Algorithm \ref{Alg:APPROX_SPA-S_clonetohrt}. The time complexity of $O(n_1 + D n_2 + m)$ for the reduction carried out by the algorithm is achieved by noting that $I'$ has a maximum of $n_1 + n_2 + D$ agents and that there are a maximum of $D n_2 + m$ acceptable resident-hospital pairs.

Initially let $M' = M$ (such that residents take the place of students and hospitals take the place of projects). By Proposition \ref{th:stable_SPA_ST_clone_alldummyassigned}, all dummy residents of $I'$ must be assigned in $M'$ and so we let the set of dummy residents $R_d$ form a resident-complete matching with the set of all hospitals and add these pairs to $M'$. This is possible because for each lecturer $l_k$, each dummy resident associated with $l_k$, denoted $r_d^k$, finds all hospitals in $H_k$ acceptable, and moreover the total number of remaining positions of the hospitals is equal to $\sum_{p_r \in P_k}(c_r) - |M(l_k)| = \delta_k$ and the number of dummy residents $f_k$ satisfies $f_k = \sum_{p_r \in P_k}(c_r) - d_k \leq \delta_k$, since $|M(l_k)| \leq d_k$. 

We claim that $M'$ is stable in $I'$. Suppose for contradiction that $(r_i,h_j)$ blocks $M'$ in $I'$. 

\begin{itemize}
	\item All dummy residents must be assigned in $M'$ to their first-choice hospital by above, hence $r_i$ corresponds to a student $s_i$ in $I$. Resident $r_i$ inherited their preference list from $s_i$ hence we know that $s_i$ finds $p_j$ acceptable. Therefore by the definition given in Section \ref{sec:spa-st_stab_defns}, Condition $1$ of a blocking pair of $M$ in $I$ is satisfied.
	\item Resident $r_i$ is either unassigned in $M'$ or prefers $h_j$ to $M'(r_i)$. Student $s_i$ is therefore in an equivalent position and Condition $2$ of a blocking pair of $M$ in $I$ is satisfied.
	\item Hospital $h_j$ is either undersubscribed or prefers $r_i$ to their worst assignee in $M'$.
	\begin{itemize}
		\item If $h_j$ is undersubscribed, then $p_j$ must also be undersubscribed. If $l_k$ were full in $M$ then $|M(l_k)| = d_k$ and so the number of remaining positions of hospitals in $H_k$ before dummy residents are added, $\delta_k$, is equal to the number of dummy residents $f_k$ in this scenario. But then all hospitals in $H_k$ (including $h_j$) would be full in $M'$, contradicting the fact that $h_j$ is undersubscribed. Therefore $l_k$ must be undersubscribed, but then this satisfies Condition $3(a)$ of a blocking pair. 
		\item If $h_j$ prefers $r_i$ to their worst assignee in $M'$, then $l_k$ must prefer $s_i$ to their worst assignee in $M(p_j)$. This satisfies Condition $3(c)$ of a blocking pair.
	\end{itemize}
\end{itemize}

Therefore by the definition in Section \ref{sec:spa-st_stab_defns}, $(s_i,p_j)$ is a blocking pair of $M$ in $I$, a contradiction.

Since dummy residents are added in the algorithm's execution it is clear that in general $|M| \neq |M'|$. However, since all dummy residents must be assigned by Proposition \ref{th:stable_SPA_ST_clone_alldummyassigned}, it is trivial to calculate the difference 
$$|M'| = |M| + |R_d| = |M| + \sum_{l_k \in L} \sum_{p_r \in P_k}(c_r) - d_k.$$
\end{proof}

The converse of Theorem \ref{th:stable_SPA_ST_clone_maxcardconserved} is not true in general, as shown in the example in Figure \ref{spastapprox:ex_conversefail}. Here, a stable matching $M'$ in an instance $I_0'$ of \acrshort{hrt} does not convert into a stable matching $M$ of the associated instance $I_0$ of \acrshort{spa-st}.

\begin{figure}[]
\centering

\begin{subfigure}[t]{0.4\textwidth}
Student preferences:\\
$s_1$: $\boldsymbol{p_1}$ $p_2$\\
$s_2$: $p_2$ $\boldsymbol{p_3}$\\

Project details:\\
$p_1$: lecturer $l_1$, $c_1=1$\\
$p_2$: lecturer $l_1$, $c_2=1$\\
$p_3$: lecturer $l_2$, $c_3=1$\\
  
Lecturer preferences:\\
\begin{minipage}[b]{0.6\textwidth}
$l_1$: $s_2$ $\boldsymbol{s_1}$\\
$l_2$: $\boldsymbol{s_2}$\\
\end{minipage}
\begin{minipage}[b]{0.3\textwidth}
$d_1=1$\\
$d_2=1$\\
\end{minipage}

    \subcaption{Example \acrshort{spa-st} instance $I_0$. Non-stable matching $M=\{(s_1,p_1),(s_2,p_3)\}$ derived from $M'$ is shown in bold.}
    \label{spastapprox:ex_conversefail2}
  \end{subfigure}
\hspace*{1cm}
  \begin{subfigure}[t]{0.4\textwidth}
  Resident preferences:\\
$r_1$: $\boldsymbol{h_1}$ $h_2$ \\
$r_2$: $h_2$ $\boldsymbol{h_3}$ \\
$r_3$: ($h_1$ $\boldsymbol{h_2}$) \\

  Hospital preferences:\\
  \begin{minipage}[b]{0.6\textwidth}
$h_1$: $r_3$ $\boldsymbol{r_1}$ \\
$h_2$: $\boldsymbol{r_3}$ $r_2$ $r_1$\\
$h_3$: $\boldsymbol{r_2}$\newline\newline\newline\newline
\end{minipage}
\begin{minipage}[b]{0.3\textwidth}
$c'_1=1$\\
$c'_2=1$\\
$c'_3=1$\newline\newline\newline\newline
\end{minipage}
   \subcaption{\acrshort{hrt} instance $I_0'$ created from the \acrshort{spa-st} instance in Figure \ref{spastapprox:ex_conversefail2} Stable matching $M' = \{(r_1,h_1),(r_2,h_3),(r_3,h_2)\}$ is shown in bold.} 
    \label{spastapprox:ex_conversefail1}
  \end{subfigure}

  \caption[Conversion of a stable matching in {\sc hrt} into a matching in {\sc spa-st}.]{Conversion of a stable matching $M'$ in \acrshort{hrt} into matching $M$ in \acrshort{spa-st}.}
  \label{spastapprox:ex_conversefail}
    \end{figure}

A natural question arises as to whether, using a cloning process, we may retain the $\frac{3}{2}$ bound in specific cases where the converted matching $M$ of our original instance of \acrshort{spa-st} does in fact turn out to be stable. This might occur if, for example, a specific stable matching returned by Kir\'aly's algorithm as applied to the cloned instance turns out to be stable in the original \acrshort{spa-st} instance. The cloning process in question would be as follows. For instance $I$ of \acrshort{spa-st}, we use the cloning process described in Algorithm \ref{Alg:APPROX_SPA-S_clonetohrt} to convert to instance $I'$ of \acrshort{hrt} then further convert to instance $I''$ of \acrshort{smti} using the process described by Manlove \cite[Theorem 3.11]{Man13}. Next, Kir\'aly's $\frac{3}{2}$-approximation algorithm is used on $I''$ generating stable matching $M''$. Finally, $M''$ is converted to matching $M$ of $I$. Then, assuming $M$ is stable, the question is whether it is always the case that $M$ is a $\frac{3}{2}$-approximation to a maximum stable matching of $I$.

The following example demonstrates Algorithm \ref{Alg:APPROX_SPA-S_clonetohrt} in use and shows that the process described above is \emph{not} sufficient to retain the $\frac{3}{2}$-approximation in an \acrshort{spa-st} instance $I$ even if the constructed matching $M$ is stable in $I$.

Algorithm \ref{Alg:APPROX_SPA-S_clonetohrt} is used to convert the \acrshort{spa-st} instance $I_1$ in Figure \ref{spastapprox:ex_before_clone} to an instance $I_1'$ of \acrshort{hrt} in Figure \ref{spastapprox:ex_after_clone1} which is then itself converted to the instance $I_1''$ of \acrshort{smti} in Figure \ref{spastapprox:ex_after_clone2} using the process described by Manlove \cite[Theorem 3.11]{Man13}. In this process men correspond to hospitals in $I_1'$ (projects in $I_1$) and women correspond to residents in $I_1'$ (students in $I_1$). Executing Kir\'aly's \cite{Kir12} $\frac{3}{2}$-approximation algorithm on the \acrshort{smti} instance $I_1''$ could (depending on order of proposals) yield the matching 
$M''=\{(m_2, w_5), (m_3, w_4), (m_4, w_2), (m_6, w_6), (m_7, w_7)\}.$
A trace of how this matching is created is given in Table \ref{tab:tracekiraly}. As $w_5$, $w_6$ and $w_7$ were created from dummy residents in Algorithm \ref{Alg:APPROX_SPA-S_clonetohrt}, $M''$ (stable in $I_1''$) converts into the stable matching $M = \{(s_2,p_4), (s_4,p_3)\}$ of size $2$ in $I_1$. But a maximum stable matching in $I_1$ is of size $4$, given by $M_{opt} = \{(s_1,p_3),(s_2,p_1),(s_3,p_3),(s_4,p_2)\}$. Therefore using the cloning method described above and Kir\'aly's algorithm does not result in a $\frac{3}{2}$-approximation to the maximum stable matching for instances of \acrshort{spa-st}, even when the resultant \acrshort{spa-st} matching is stable. This motivates the development of a $\frac{3}{2}$-approximation algorithm to the maximum stable matching specifically for instances of \acrshort{spa-st}.

\begin{figure}[]
\centering
\begin{subfigure}[t]{0.29\textwidth}
Student preferences:\\
$s_1$: $p_3$\\
$s_2$: $p_4$ $p_1$ $p_2$\\
$s_3$: $p_3$\\
$s_4$: ($p_2$ $p_3$) $p_4$ $p_1$\\

Project details:\\
$p_1$: lecturer $l_1$, $c_1=2$\\
$p_2$: lecturer $l_1$, $c_2=2$\\
$p_3$: lecturer $l_2$, $c_3=2$\\
$p_4$: lecturer $l_2$, $c_4=1$\\
  
Lecturer preferences:\\
\begin{minipage}[b]{0.6\textwidth}
$l_1$: $s_2$ $s_4$\\
$l_2$: $s_4$ ($s_1$ $s_2$ $s_3$)\newline\newline\newline
\end{minipage}
\begin{minipage}[b]{0.3\textwidth}
$d_1=2$\\
$d_2=2$\newline\newline\newline
\end{minipage}

    \subcaption{Example \acrshort{spa-st} instance $I_1$.}
    \label{spastapprox:ex_before_clone}
  \end{subfigure}
  \hspace*{\fill}
%  \hspace{0.5cm}
  \begin{subfigure}[t]{0.3\textwidth}
  Resident preferences:\\
$r_1$: $h_3$ \\
$r_2$: $h_4$ $h_1$ $h_2$ \\
$r_3$: $h_3$ \\
$r_4$: ($h_2$ $h_3$) $h_4$ $h_1$ \\
$r_5$: ($h_1$ $h_2$) \\
$r_6$: ($h_1$ $h_2$) \\
$r_7$: ($h_3$ $h_4$) \\

  Hospital preferences:\\
  \begin{minipage}[b]{0.6\textwidth}
$h_1$: ($r_5$ $r_6$) $r_2$ $r_4$ \\
$h_2$: ($r_5$ $r_6$) $r_2$ $r_4$ \\
$h_3$: $r_7$ $r_4$ ($r_1$ $r_3$) \\
$h_4$: $r_7$ $r_4$ $r_2$\newline\newline\newline\newline
\end{minipage}
\begin{minipage}[b]{0.3\textwidth}
$c'_1=2$\\
$c'_2=2$\\
$c'_3=2$\\
$c'_4=1$\newline\newline\newline\newline
\end{minipage}
   \subcaption{\acrshort{hrt} instance $I_1'$ converted from the \acrshort{spa-st} instance in Figure \ref{spastapprox:ex_before_clone}.}
    \label{spastapprox:ex_after_clone1}
  \end{subfigure}
  \hspace*{\fill}
   \begin{subfigure}[t]{0.37\textwidth}
  Women's preferences:\\
$w_1$: ($m_3$ $m_7$) \\
$w_2$: $m_4$ ($m_1$ $m_5$) ($m_2$ $m_6$) \\
$w_3$: ($m_3$ $m_7$) \\
$w_4$: ($m_2$ $m_3$ $m_6$ $m_7$) $m_4$ ($m_1$ $m_5$) \\
$w_5$: ($m_1$ $m_2$ $m_5$ $m_6$) \\
$w_6$: ($m_1$ $m_2$ $m_5$ $m_6$) \\
$w_7$: ($m_3$ $m_4$ $m_7$) \\

  Men's preferences:\\
$m_1$: ($w_5$ $w_6$) $w_2$ $w_4$\\
$m_2$: ($w_5$ $w_6$) $w_2$ $w_4$\\
$m_3$: $w_7$ $w_4$ ($w_1$ $w_3$)\\
$m_4$: $w_7$ $w_4$ $w_2$\\
$m_5$: ($w_5$ $w_6$) $w_2$ $w_4$\\
$m_6$: ($w_5$ $w_6$) $w_2$ $w_4$\\
$m_7$: $w_7$ $w_4$ ($w_1$ $w_3$)\\
   \subcaption{\acrshort{smti} instance $I_1''$ converted from the \acrshort{hrt} instance in Figure \ref{spastapprox:ex_after_clone1}.}
    \label{spastapprox:ex_after_clone2}
  \end{subfigure}
  \caption[Conversion of an {\sc spa-st}  instance to an {\sc smti}  instance.]{Conversion of an \acrshort{spa-st}  instance to an \acrshort{smti}  instance.}
  \label{spastapprox:ex_beforeandafter_clone}
    \end{figure}

\begin{table}[tbp]\hspace*{-1cm} \centering\begin{tabular}{ p{0.5cm}p{6.5cm} | p{0.7cm} p{0.6cm}p{0.6cm}p{0.6cm}p{0.6cm}p{0.6cm}p{0.6cm} }\hline\hline 
& Action & $m_1$ & $m_2$ & $m_3$ & $m_4$ & $m_5$ & $m_6$ & $m_7$  \\ 
\hline 
$1$ & $m_7$ applies to $w_7$, accepted &  &  &  &  &  &  & $w_7$\\
$2$ & $m_6$ applies to $w_5$, accepted &  &  &  &  &  & $w_5$ & $w_7$\\
$3$ & $m_5$ applies to $w_6$, accepted &  &  &  &  & $w_6$ & $w_5$ & $w_7$\\
$4$ & $m_4$ applies to $w_7$, rejected, $m_4$ removes $w_7$ &  &  &  &  & $w_6$ & $w_5$ & $w_7$\\
$5$ & $m_4$ applies to $w_4$, accepted &  &  &  & $w_4$ & $w_6$ & $w_5$ & $w_7$\\
$6$ & $m_3$ applies to $w_7$, rejected, $m_3$ removes $w_7$ &  &  &  & $w_4$ & $w_6$ & $w_5$ & $w_7$\\
$7$ & $m_3$ applies to $w_4$, accepted, $m_4$ removes $w_4$ &  &  & $w_4$ &  & $w_6$ & $w_5$ & $w_7$\\
$8$ & $m_4$ applies to $w_2$, accepted &  &  & $w_4$ & $w_2$ & $w_6$ & $w_5$ & $w_7$\\
$9$ & $m_2$ applies to $w_5$, rejected, $m_2$ removes $w_5$ &  &  & $w_4$ & $w_2$ & $w_6$ & $w_5$ & $w_7$\\
$10$ & $m_2$ applies to $w_6$, rejected, $m_2$ removes $w_6$ &  &  & $w_4$ & $w_2$ & $w_6$ & $w_5$ & $w_7$\\
$11$ & $m_2$ applies to $w_2$, rejected, $m_2$ removes $w_2$ &  &  & $w_4$ & $w_2$ & $w_6$ & $w_5$ & $w_7$\\
$12$ & $m_2$ applies to $w_4$, rejected, $m_2$ removes $w_4$ &  &  & $w_4$ & $w_2$ & $w_6$ & $w_5$ & $w_7$\\
$13$ & $m_2$ advantaged &  &  & $w_4$ & $w_2$ & $w_6$ & $w_5$ & $w_7$\\

$14$ & $m_2$ applies to $w_5$, accepted, $m_6$ removes $w_5$ &  & $w_5$ & $w_4$ & $w_2$ & $w_6$ &  & $w_7$\\
$15$ & $m_6$ applies to $w_6$, rejected, $m_6$ removes $w_6$ &  & $w_5$ & $w_4$ & $w_2$ & $w_6$ &  & $w_7$\\
$16$ & $m_6$ applies to $w_2$, rejected, $m_6$ removes $w_2$ &  & $w_5$ & $w_4$ & $w_2$ & $w_6$ &  & $w_7$\\
$17$ & $m_6$ applies to $w_4$, rejected, $m_6$ removes $w_4$ &  & $w_5$ & $w_4$ & $w_2$ & $w_6$ &  & $w_7$\\
$18$ & $m_6$ advantaged &  & $w_5$ & $w_4$ & $w_2$ & $w_6$ &  & $w_7$\\

$19$ & $m_6$ applies to $w_5$, rejected, $m_6$ removes $w_5$ &  & $w_5$ & $w_4$ & $w_2$ & $w_6$ &  & $w_7$\\
$20$ & $m_6$ applies to $w_6$, accepted, $m_5$ removes $w_6$ &  & $w_5$ & $w_4$ & $w_2$ &  & $w_6$ & $w_7$\\
$21$ & $m_5$ applies to $w_5$, rejected, $m_5$ removes $w_5$ &  & $w_5$ & $w_4$ & $w_2$ &  & $w_6$ & $w_7$\\
$22$ & $m_5$ applies to $w_2$, rejected, $m_5$ removes $w_2$ &  & $w_5$ & $w_4$ & $w_2$ &  & $w_6$ & $w_7$\\
$23$ & $m_5$ applies to $w_4$, rejected, $m_5$ removes $w_4$ &  & $w_5$ & $w_4$ & $w_2$ &  & $w_6$ & $w_7$\\
$24$ & $m_5$ advantaged &  & $w_5$ & $w_4$ & $w_2$ &  & $w_6$ & $w_7$\\

$25$ & $m_5$ applies to $w_5$, rejected, $m_5$ removes $w_5$ &  & $w_5$ & $w_4$ & $w_2$ &  & $w_6$ & $w_7$\\
$26$ & $m_5$ applies to $w_6$, rejected, $m_5$ removes $w_6$ &  & $w_5$ & $w_4$ & $w_2$ &  & $w_6$ & $w_7$\\
$27$ & $m_5$ applies to $w_2$, rejected, $m_5$ removes $w_2$ &  & $w_5$ & $w_4$ & $w_2$ &  & $w_6$ & $w_7$\\
$28$ & $m_5$ applies to $w_4$, rejected, $m_5$ removes $w_4$ &  & $w_5$ & $w_4$ & $w_2$ &  & $w_6$ & $w_7$\\
$29$ & $m_5$ inactive &  & $w_5$ & $w_4$ & $w_2$ & $-$ & $w_6$ & $w_7$\\

$30$ & $m_1$ applies to $w_5$, rejected, $m_1$ removes $w_5$ &  & $w_5$ & $w_4$ & $w_2$ & $-$ & $w_6$ & $w_7$\\
$31$ & $m_1$ applies to $w_6$, rejected, $m_1$ removes $w_6$ &  & $w_5$ & $w_4$ & $w_2$ & $-$ & $w_6$ & $w_7$\\
$32$ & $m_1$ applies to $w_2$, rejected, $m_1$ removes $w_2$ &  & $w_5$ & $w_4$ & $w_2$ & $-$ & $w_6$ & $w_7$\\
$33$ & $m_1$ applies to $w_4$, rejected, $m_1$ removes $w_4$ &  & $w_5$ & $w_4$ & $w_2$ & $-$ & $w_6$ & $w_7$\\
$34$ & $m_1$ advantaged &  & $w_5$ & $w_4$ & $w_2$ & $-$ & $w_6$ & $w_7$\\

$35$ & $m_1$ applies to $w_5$, rejected, $m_1$ removes $w_5$ &  & $w_5$ & $w_4$ & $w_2$ & $-$ & $w_6$ & $w_7$\\
$36$ & $m_1$ applies to $w_6$, rejected, $m_1$ removes $w_6$ &  & $w_5$ & $w_4$ & $w_2$ & $-$ & $w_6$ & $w_7$\\
$37$ & $m_1$ applies to $w_2$, rejected, $m_1$ removes $w_2$ &  & $w_5$ & $w_4$ & $w_2$ & $-$ & $w_6$ & $w_7$\\
$38$ & $m_1$ applies to $w_4$, rejected, $m_1$ removes $w_4$ &  & $w_5$ & $w_4$ & $w_2$ & $-$ & $w_6$ & $w_7$\\
$39$ & $m_1$ inactive & $-$ & $w_5$ & $w_4$ & $w_2$ & $-$ & $w_6$ & $w_7$\\

   \hline\hline \end{tabular} \caption[Trace of running Kir\'{a}ly's {\sc smti} $\frac{3}{2}$-approximation algorithm for instance $I_1''$.]{Trace of running Kir\'aly's \acrshort{smti} $\frac{3}{2}$-approximation algorithm for instance $I_1''$ in Figure \ref{spastapprox:ex_after_clone2}. In this table, the phrase ``$m_i$ removes $w_j$'' indicates that man $m_i$ removes woman $w_j$ from their preference list.} \label{tab:tracekiraly} \end{table}

In Appendix \ref{app_spa-st_further_discussions}, we introduce instance $I_2$, which is almost identical to $I_1$. However, applying the above process to $I_2$ does yield a stable matching $M$ in $I_2$ that is a $\frac{3}{2}$-approximation to a maximum stable matching $M_{opt}$. Comparing these two instances, we give an intuitive idea as to how the addition of dummy residents in the conversion of an \acrshort{spa-st} instance to an \acrshort{smti} instance can prevent the retention of the $\frac{3}{2}$ bound. See Appendix \ref{app_spa-st_further_discussions} for more details.

\newpage
%%%%%%%%%%%%%%%%%%%%%%%%%%%%%%%%%%%%%%%%%%%%%%%%%%%%%%%%%
%%%%%%%%%%%%%%%%%%%%%%%%%%%%%%%%%%%%%%%%%%%%%%%%%%%%%%%%%
\section{$\frac{3}{2}$-approximation algorithm}
\label{sec_app_alg}
\subsection{Introduction and preliminary definitions}
\label{spa_sec_defins}
We begin by defining key terminology before describing the approximation algorithm itself in Section \ref{sec:alg-desc}, which is a non-trivial extension of Kir\'aly's \acrshort{hrt} algorithm \cite{Kir12}.  

%\subsection{Definitions}
A student $s_i \in S$ is either in \emph{phase $1$, $2$} or \emph{$3$}. In \emph{phase $1$} there are still projects on $s_i$'s list that they have not applied to. In \emph{phase $2$}, $s_i$ has iterated once through their list and are doing so again whilst a priority is given to $s_i$ on each lecturer's preference list, compared to other students who tie with $s_i$. In \emph{phase $3$}, $s_i$ is considered unassigned and carries out no more applications. A project $p_j$ is \emph{fully available} if $p_j$ and $l_k$ are both undersubscribed, where lecturer $l_k$ offers $p_j$. A student $s_i$ \emph{meta-prefers} project $p_{j_1}$ to $p_{j_2}$ if either (i) rank$(s_i,p_{j_1}) <$ rank$(s_i,p_{j_2})$, or (ii) rank$(s_i,p_{j_1}) =$ rank$(s_i,p_{j_2})$ and $p_{j_1}$ is fully available, whereas $p_{j_2}$ is not. In phase $1$ or $2$, $s_i$ may be either \emph{available}, \emph{provisionally assigned} or \emph{held}. Student $s_i$ is \emph{available} if they are not assigned to a project. Student $s_i$ is \emph{provisionally assigned} to project $p_j$ if $s_i$ has been assigned in phase $1$ to $p_j$ and there is a project still on $s_i$'s list that they meta-prefer to $p_j$. Otherwise, $s_i$ is \emph{held}. 

If a student $s_i$ is a provisionally assigned to project $p_j$, then $(s_i,p_j)$ is said to be \emph{precarious}. If $s_i$ is held in their assignment to $p_j$, then $(s_i,p_j)$ is said to be \emph{non-precarious}. A project $p_j$ is \emph{precarious} if it is assigned a student $s_i$ such that $(s_i, p_j)$ is precarious, otherwise $p_j$ is \emph{non-precarious}. A lecturer is \emph{precarious} if they offer a project $p_j$ that is precarious, otherwise $l_k$ is \emph{non-precarious}. Lecturer $l_k$ \emph{meta-prefers} $s_{i_1}$ to $s_{i_2}$ if either (i) rank$(l_k,s_{i_1}) <$ rank$(l_k,s_{i_2})$, or (ii) rank$(l_k,s_{i_1}) =$ rank$(l_k,s_{i_2})$ and $s_{i_1}$ is in phase $2$, whereas $s_{i_2}$ is not. The \emph{favourite} projects $F_i$ of a student $s_i$ are defined as the set of projects on $s_i$'s preference list for which there is no other project on $s_i$'s list meta-preferred to any project in $F_i$. A \emph{worst assignee} of lecturer $l_k$ is defined to be a student in $M(l_k)$ of worst rank, with priority given to phase $1$ students over phase $2$ students. Similarly, a \emph{worst assignee of lecturer $l_k$ in $M(p_j)$} is defined to be a student in $M(p_j)$ of worst rank, prioritising phase $1$ over phase $2$ students, where $l_k$ offers $p_j$.

We remark that some of the above terms such as \emph{favourite} and \emph{precarious} have been defined for the \acrshort{spa-st} setting by extending the definitions of the corresponding terms as given by Kir\'aly in the \acrshort{hrt} context \cite{Kir12}. %(described in Section \ref{spa_sec_defins}).

%%%%%%%%%%%%%%%%%%%%%%%%%%%%%%%%%%%%%%%%%%%%%%%%%%%%%%%%%
%%%%%%%%%%%%%%%%%%%%%%%%%%%%%%%%%%%%%%%%%%%%%%%%%%%%%%%%%
\subsection{Description of the algorithm}
\label{sec:alg-desc}
Algorithm {\sf Max-SPA-ST-Approx} (Algorithm \ref{Alg:APPROX_SPA-S_stable}) begins with an empty matching $M$ which will be built up over the course of the algorithm's execution. All students are initially set to be available and in phase $1$. The algorithm proceeds as follows. While there are still available students in phase $1$ or $2$, choose some such student $s_i$. Student $s_i$ applies to a favourite project $p_j$ at the head of their list, that is, there is no project on $s_i$'s list that $s_i$ meta-prefers to $p_j$. Let $l_k$ be the lecturer who offers $p_j$. We consider the following cases.
\begin{itemize}
  \item If $p_j$ and $l_k$ are both undersubscribed then $(s_i,p_j)$ is added to $M$. %This is the easiest case to describe. 
  Clearly if $(s_i,p_j)$ were not added to $M$, it would potentially be a blocking pair of type $(3a)$.
  
  \item If $p_j$ is undersubscribed, $l_k$ is full and $l_k$ is precarious where precarious pair $(s_{i'},p_{j'}) \in M$ for some project $p_j'$ offered by $l_k$, then we remove $(s_{i'},p_{j'})$ from $M$ and add pair $(s_i,p_j)$. This notion of precariousness allows us to find a stable matching of sufficient size even when there are ties in student preference lists (there may also be ties in lecturer preference lists). Allowing a pair $(s_{i'},p_{j'}) \in M$ to be precarious means that we are noting that $s_{i'}$ has other fully available project options in their preference list at equal rank to $p_{j'}$. Hence, if another student applies to $p_{j'}$ when $p_{j'}$ is full, or to a project offered by $l_k$ where $l_k$ is full, we allow this assignment to happen removing $(s_{i'},p_{j'})$ from $M$, since there is a chance that the size of the resultant matching could be increased. Note that since $s_{i'}$ does not remove $p_{j'}$ from their preference list, $s_{i'}$ will get a chance to reapply to $p_{j'}$ if applications to other fully available projects at the same rank are unsuccessful.

  \item If on the other hand $p_j$ is undersubscribed, $l_k$ is full and $l_k$ meta-prefers $s_i$ to a worst assignee $s_{i'}$, where $(s_{i'},p_{j'}) \in M$ for some project $p_{j'}$ offered by $l_k$, then we remove $(s_{i'},p_{j'})$ from $M$ and add pair $(s_i,p_j)$. It makes intuitive sense that if $l_k$ is full and gets an offer to an undersubscribed project $p_j$ from a student $s_i$ that they meta-prefer to a worst assigned student $s_{i'}$, then $l_k$ would want to remove $s_{i'}$ from $p_{j'}$ and take on $s_i$ for $p_j$. Student $s_{i'}$ will subsequently remove $p_{j'}$ from their preference list as $l_k$ will not want to assign to them on re-application. This is done via Algorithm \ref{Alg:APPROX_SPA-S_stable_extra}.

  \item If $p_j$ is full and precarious then pair $(s_i, p_j)$ is added to $M$ while precarious pair $(s_{i'}, p_j)$ is removed. As before, this allows $s_{i'}$ to potentially assign to other fully available projects at the same rank as $p_j$ on their list. Since $s_{i'}$ does not remove $p_j$ from their preference list, $s_{i'}$ will get another chance to assign to $p_j$ if these other applications to fully available projects at the same rank are not successful.

  \item If $p_j$ is full and $l_k$ meta-prefers $s_i$ to a worst assignee $s_{i'}$ in $M(p_j)$, then pair $(s_i, p_j)$ is added to $M$ while $(s_{i'}, p_j)$ is removed. As this lecturer's project is full (and non-precarious) the only time they will want to add a student $s_i$ to this project (meaning the removal of another student) is if $s_i$ is meta-preferred to a worst student $s_{i'}$ assigned to that project. Similar to before, $s_{i'}$ will not subsequently be able to assign to this project and so removes it from their preference list via Algorithm \ref{Alg:APPROX_SPA-S_stable_extra}.
\end{itemize}

\begin{algorithm} []
  \caption[{\sf Max-SPA-ST-Approx}$(I)$, $\frac{3}{2}$-approximation algorithm for {\sc spa-st}.]{{\sf Max-SPA-ST-Approx}$(I)$, $\frac{3}{2}$-approximation algorithm for \acrshort{spa-st}.} %finding a stable matching $M$ in an instance $I$ of
  \label{Alg:APPROX_SPA-S_stable}
  \begin{algorithmic}[1]
          \Require An instance $I$ of \acrshort{spa-st}
          \Ensure Return a stable matching $M$ where $|M| \geq \frac{2}{3}|M_{opt}|$
          \State $M \gets \emptyset$
          \State All students are initially set to be available and in phase $1$

          \While {there exists an available student $s_i \in S$ who is in phase $1$ or $2$} \label{Alg:APPROX_SPA-S_stable_mainloop} 
          \State Let $l_k$ be the lecturer who offers $p_j$
          \State $s_i$ applies to a favourite project $p_j \in A(s_i)$
           
           \If {$p_j$ is fully available}
           \State $M \gets M \cup \{(s_i,p_j)\}$ \label{Alg:APPROX_SPA-S_stable:pairadded1}
           
           \ElsIf {$p_j$ is undersubscribed, $l_k$ is full \textbf{and} ($l_k$ is precarious \textbf{or} $l_k$ meta-prefers $s_i$ to a worst assignee)} \label{Alg:APPROX_SPA-S_stable_elseif1} \Comment according to the \emph{worst assignee} definition in Section \ref{spa_sec_defins}
           \If {$l_k$ is precarious} 
           \State Let $p_{j'}$ be a project in $P_k$ such that there exists $(s_{i'},p_{j'}) \in M$ that is precarious%
           \label{Alg:APPROX_SPA-S_stable:pjidentified_asprecarious}

           \Else \Comment{$l_k$ is non-precarious} 

           \State Let $s_{i'}$ be a worst assignee of $l_k$ such that $l_k$ meta-prefers $s_i$ to $s_{i'}$ and let $p_{j'}=M(s_{i'})$

           \State {\sf Remove-Pref}($s_{i'},p_{j'}$) \label{Alg:APPROX_SPA-S_stable:rempref1}

           \EndIf
           \State $M \gets M\backslash\{(s_{i'},p_{j'})\}$ \label{Alg:APPROX_SPA-S_stable:1}
           \State $M \gets M \cup \{(s_i,p_j)\}$ \label{Alg:APPROX_SPA-S_stable:add1}

           \ElsIf {$p_j$ is full \textbf{and} ($p_j$ is precarious \textbf{or} $l_k$ meta-prefers $s_i$ to a worst assignee in $M(p_j)$)} \label{Alg:APPROX_SPA-S_stable_elseif2}

           \If {$p_j$ is precarious}
           \State Identify a student $s_{i'}\in M(p_j)$ such that $(s_{i'},p_j)$ is precarious

           \Else \Comment{$p_j$ is non-precarious} 
           \State Let $s_{i'}$ be a worst assignee of $l_k$ in $M(p_j)$ such that $l_k$ meta-prefers $s_i$ to $s_{i'}$

           \State {\sf Remove-Pref}($s_{i'},p_j$) \label{Alg:APPROX_SPA-S_stable:rempref2}
           \EndIf
           \State $M \gets M\backslash\{(s_{i'},p_j)\}$ \label{Alg:APPROX_SPA-S_stable:2}
           \State $M \gets M \cup \{(s_i,p_j)\}$ \label{Alg:APPROX_SPA-S_stable:add2}

           \Else 
           \State {\sf Remove-Pref}($s_i,p_j$) \label{Alg:APPROX_SPA-S_stable:rempref3}
           \EndIf
    
          \EndWhile
          \State {\sf Promote-Students}($M$) \label{Alg:APPROX_SPA-S_stable_promotealgline}
      \State \Return $M$;
          
  \end{algorithmic}
\end{algorithm}
% ALGORITHM END

% ALGORITHM 
\begin{algorithm} [h]
  \caption[{\sf Remove-Pref}($s_i,p_j$), subroutine for Algorithm \ref{Alg:APPROX_SPA-S_stable}.]{{\sf Remove-Pref}($s_i,p_j$), subroutine for Algorithm \ref{Alg:APPROX_SPA-S_stable}. Removes a project $p_j$ from a student $s_i$'s preference list.}
  \label{Alg:APPROX_SPA-S_stable_extra}
  \begin{algorithmic}[1]
          \Require An instance $I$ of \acrshort{spa-st} and a student $s_i$ and project $p_j$
          \Ensure Return an instance $I$ where $p_j$ is removed from $s_i$'s preference list
          
          \State Remove $p_j$ from $s_i$'s preference list
          \If {$s_i$'s preference list is empty}
          \State Reinstate $s_i$'s preference list
          
          \If {$s_i$ is in phase $1$}
          \State Move $s_i$ to phase $2$
          \ElsIf {$s_i$ is in phase $2$}
          \State Move $s_i$ to phase $3$
          \EndIf

          \EndIf
          \State \Return $I$
  \end{algorithmic}
\end{algorithm}
% ALGORITHM END

% ALGORITHM 
\begin{algorithm} [h]
  \caption[{\sf Promote-Students}($M$), subroutine for Algorithm \ref{Alg:APPROX_SPA-S_stable}.]{{\sf Promote-Students}($M$), subroutine for Algorithm \ref{Alg:APPROX_SPA-S_stable}. Removes all blocking pairs of type $(3bi)$.}%, including any that are created during this process}
  \label{Alg:APPROX_SPA-S_stable_3bibpBrief}
  \begin{algorithmic}[1]
          \Require \acrshort{spa-st} Instance $I$ and matching $M$ that does not contain blocking pairs of type $(3a)$, $(3bii)$ or $(3c)$.
          \Ensure Return a stable matching $M$.

  \While {there are still blocking pairs of type $(3bi)$}
  \State Let $(s_i,p_{j'})$ be a blocking pair of type $(3bi)$ 
  \State $M \gets M\backslash\{(s_i,M(s_i))\}$ 
    \State $M \gets M \cup \{(s_i,p_{j'})\}$ 
  \EndWhile
  \State \Return $M$
  \end{algorithmic}
\end{algorithm}
% ALGORITHM END

When removing a project from a student $s_i$'s preference list (the {\sf Remove-Pref} operation), if $s_i$ has removed all projects from their preference list and is in phase $1$ then their preference list is reinstated and they are set to be in phase $2$. If on the other hand they were already in phase $2$, then they are set to be in phase $3$ and are hence inactive. The proof that Algorithm \ref{Alg:APPROX_SPA-S_stable} produces a stable matching (see Section \ref{32approx_corr_proofs}) relies only on the fact that a student iterates once through their preference list. Allowing students to iterate through their preference lists a second time when in phase $2$ allows us to find a stable matching of sufficient size when there are ties in lecturer preference lists (there may also be ties in student preference lists). This is due to the meta-prefers definition where a lecturer favours one student $s_i$ over another $s_{i'}$ if they are the same rank and $s_i$ is in phase $2$ whereas $s_{i'}$ is not. Similar to above, this then allows $s_i$ to steal a position from $s_{i'}$ with the chance that $s_{i'}$ may find another assignment and increase the size of the resultant matching.

After the main while loop has terminated, the final part of the algorithm begins where all blocking pairs of type $(3bi)$ are removed using the {\sf Promote-Students} operation (Algorithm \ref{Alg:APPROX_SPA-S_stable_3bibpBrief}).

\newpage
\clearpage
\newpage
\subsection{Example execution of the algorithm}
A detailed trace of Algorithm {\sf Max-SPA-ST-Approx} (Algorithm \ref{Alg:APPROX_SPA-S_stable}) over the course of its execution, as applied to the example instance $I_3$ of \acrshort{spa-st} shown in Figure \ref{spa-st-fig-example-alg-run}, is given in Table \ref{spa-st-fig-example-alg-run-trace}. In this trace, each application by a student to a project is recorded along with their effects on the instance and matching (such as adding or removing a pair from a matching, removing a preference list element and a student changing phase). The line numbers of Algorithm \ref{Alg:APPROX_SPA-S_stable} where these effects take place are also recorded. The state of the matching after each application may be seen in the final twelve columns, indicating the assignments of each of the twelve students. An asterisk next to a project $p_j$ for column $s_i$ indicates that $(s_i, p_j)$ is a precarious pair. From this trace, we can see that all parts of the algorithm are executed at some point (although not all line numbers are listed in the trace, this observation may be easily verified by examining the structure of Algorithm \ref{Alg:APPROX_SPA-S_stable}). 

For this particular instance, the algorithm outputs stable matching $M$ where 

\begin{equation*}
\begin{split}
 M = \{&(s_1, p_3), (s_2, p_2), (s_3, p_8), (s_5, p_3), (s_6, p_1), \\
 & (s_7, p_6), (s_9, p_5), (s_{10}, p_{12}), (s_{11}, p_9), (s_{12}, p_{11}) \}.
 \end{split}
\end{equation*}

\begin{figure}
\begin{minipage}[t]{0.3\textwidth}
Student preferences:\\
$s_1$: ($p_2$ $p_3$) $p_1$\\
$s_2$: $p_2$ $p_1$\\
$s_3$: ($p_7$ $p_8$)\\
$s_4$: $p_7$\\
$s_5$: $p_1$ ($p_2$ $p_3$)\\
$s_6$: $p_1$\\
$s_7$: ($p_4$ $p_6$)\\
$s_8$: $p_5$\\
$s_9$: $p_5$\\
$s_{10}$: $p_9$ $p_{12}$\\
$s_{11}$: $p_9$ $p_{10}$\\
$s_{12}$: $p_{11}$\\
\end{minipage}
\begin{minipage}[t]{0.3\textwidth}
Project details:\\
$p_1$: lecturer $l_1$, $c_1=1$\\
$p_2$: lecturer $l_1$, $c_2=2$\\
$p_3$: lecturer $l_2$, $c_3=2$\\
$p_4$: lecturer $l_3$, $c_4=1$\\
$p_5$: lecturer $l_3$, $c_5=1$\\
$p_6$: lecturer $l_4$, $c_6=1$\\
$p_7$: lecturer $l_4$, $c_7=1$\\
$p_8$: lecturer $l_5$, $c_8=1$\\
$p_9$: lecturer $l_6$, $c_9=1$\\
$p_{10}$: lecturer $l_6$, $c_{10}=1$\\
$p_{11}$: lecturer $l_6$, $c_{11}=1$\\
$p_{12}$: lecturer $l_7$, $c_{12}=1$\\
\end{minipage}
\begin{minipage}[t]{0.35\textwidth}
Lecturer preferences:\\
\begin{minipage}[t]{0.6\textwidth}
$l_1$: ($s_1$ $s_2$) $s_6$ $s_5$\\
$l_2$: $s_1$ $s_5$\\
$l_3$: ($s_8$ $s_9$) $s_7$\\
$l_4$: $s_7$ ($s_3$ $s_4$)\\
$l_5$: $s_3$\\
$l_6$: $s_{12}$ ($s_{10}$ $s_{11}$)\\
$l_7$: $s_{10}$\\
\end{minipage}
\begin{minipage}[t]{0.25\textwidth}
$d_1=2$\\
$d_2=2$\\
$d_3=1$\\
$d_4=1$\\
$d_5=1$\\
$d_6=2$\\
$d_7=2$\\
\end{minipage}
\end{minipage}
\caption[{\sc spa-st} instance $I_3$.]{\acrshort{spa-st} instance $I_3$.}
\label{spa-st-fig-example-alg-run}
\end{figure}

% \newgeometry{bottom=1.5cm, top=1.5cm}

\begin{landscape}
 \thispagestyle{empty}
\begin{table}[tbp] \centering\begin{tabular}{ p{0.5cm}p{9cm} |p{1.5cm}| p{0.4cm} p{0.4cm}p{0.4cm}p{0.4cm}p{0.4cm}p{0.4cm}p{0.4cm}p{0.4cm}p{0.4cm}p{0.4cm}p{0.4cm}p{0.4cm}}\hline\hline 
& Action & Line(s) & $s_1$ & $s_2$ & $s_3$ & $s_4$ & $s_5$ & $s_6$ & $s_7$ & $s_8$ & $s_9$ & $s_{10}$ & $s_{11}$ & $s_{12}$  \\ 
\hline 

$1$ & $s_1$ applies to $p_2$ &\\
& ($s_1$, $p_2$) added to $M$ & 7 & $p_2$*\\

$2$ & $s_2$ applies to $p_2$ &\\
& ($s_2$, $p_2$) added to $M$ & 7 & $p_2$* & $p_2$\\

$3$ & $s_3$ applies to $p_7$ &\\
& ($s_3$, $p_7$) added to $M$ & 7 & $p_2$* & $p_2$ & $p_7$*\\

$4$ & $s_4$ applies to $p_7$ &\\
& ($s_3$, $p_7$) removed from $M$, ($s_4$, $p_7$) added to $M$ & 24, 25 & $p_2$* & $p_2$ & & $p_7$\\

$5$ & $s_5$ applies to $p_1$ &\\
& ($s_1$, $p_2$) removed from $M$, ($s_5$, $p_1$) added to $M$ & 15, 16 & & $p_2$ & & $p_7$ & $p_1$\\

$6$ & $s_6$ applies to $p_1$ &\\
& $s_5$ removes $p_1$, ($s_5$, $p_1$) removed from $M$, ($s_6$, $p_1$) added to $M$ & 22, 24, 25 & & $p_2$ & & $p_7$ & & $p_1$\\

$7$ & $s_7$ applies to $p_4$ &\\
& ($s_7$, $p_4$) added to $M$ & 7 & & $p_2$ & & $p_7$ & & $p_1$ & $p_4$\\

$8$ & $s_8$ applies to $p_5$ &\\
& $s_7$ removes $p_4$, ($s_7$, $p_4$) removed from $M$, ($s_8$, $p_5$) added to $M$ & 13, 15, 16 & & $p_2$ & & $p_7$ & & $p_1$ & & $p_5$\\

$9$ & $s_9$ applies to $p_5$ &\\
& $s_9$ rejected, $s_9$ removes $p_5$, $s_9$ moves to phase $2$ & 27 & & $p_2$ & & $p_7$ & & $p_1$ & & $p_5$\\

$10$ & $s_9$ applies to $p_5$ &\\
& $s_8$ removes $p_5$, $s_8$ moves to phase $2$, ($s_8$, $p_5$) removed from $M$, ($s_9$, $p_5$) added to $M$ & 22, 24, 25 & & $p_2$ & & $p_7$ & & $p_1$ & & & $p_5$\\

$11$ & $s_{10}$ applies to $p_9$ &\\
& ($s_{10}$, $p_9$) added to $M$ & 7 & & $p_2$ & & $p_7$ & & $p_1$ & & & $p_5$ & $p_9$\\

$12$ & $s_{11}$ applies to $p_9$ &\\
& $s_{11}$ rejected, $s_{11}$ removes $p_9$ & 27 & & $p_2$ & & $p_7$ & & $p_1$ & & & $p_5$ & $p_9$\\

   \hline\hline 
   \multicolumn{2}{l}{Table continued on next page.}\end{tabular} \end{table} 
\end{landscape}

\begin{landscape}
 \thispagestyle{empty}

   \begin{table}[tbp] \centering\begin{tabular}{ p{0.5cm}p{9cm} |p{1.5cm}| p{0.4cm} p{0.4cm}p{0.4cm}p{0.4cm}p{0.4cm}p{0.4cm}p{0.4cm}p{0.4cm}p{0.4cm}p{0.4cm}p{0.4cm}p{0.4cm}}\hline\hline 
& Action & Line(s) & $s_1$ & $s_2$ & $s_3$ & $s_4$ & $s_5$ & $s_6$ & $s_7$ & $s_8$ & $s_9$ & $s_{10}$ & $s_{11}$ & $s_{12}$  \\ 
\hline 

$13$ & $s_{11}$ applies to $p_{10}$ &\\
& ($s_{11}$, $p_{10}$) added to $M$ & 7 & & $p_2$ & & $p_7$ & & $p_1$ & & & $p_5$ & $p_9$ & $p_{10}$\\

$14$ & $s_{12}$ applies to $p_{11}$ &\\
& $s_{10}$ removes $p_9$, ($s_{10}$, $p_9$) removed from $M$, ($s_{12}$, $p_{11}$) added to $M$ & 13, 15, 16 & & $p_2$ & & $p_7$ & & $p_1$ & & & $p_5$ & & $p_{10}$ & $p_{11}$\\

$15$ & $s_3$ applies to $p_8$ &\\
& ($s_3$, $p_8$) added to $M$ & 7 & & $p_2$ & $p_8$ & $p_7$ & & $p_1$ & & & $p_5$ & & $p_{10}$ & $p_{11}$\\

$16$ & $s_1$ applies to $p_3$ &\\
& ($s_1$, $p_3$) added to $M$ & 7 & $p_3$ & $p_2$ & $p_8$ & $p_7$ & & $p_1$ & & & $p_5$ & & $p_{10}$ & $p_{11}$\\

$17$ & $s_5$ applies to $p_3$ &\\
& ($s_5$, $p_3$) added to $M$ & 7 & $p_3$ & $p_2$ & $p_8$ & $p_7$ & $p_3$ & $p_1$ & & & $p_5$ & & $p_{10}$ & $p_{11}$\\

$18$ & $s_7$ applies to $p_6$ &\\
& $s_4$ removes $p_7$, $s_4$ moves to phase $2$, ($s_4$, $p_7$) removed from $M$, ($s_7$, $p_6$) added to $M$ & 13, 14, 15 & $p_3$ & $p_2$ & $p_8$ & & $p_3$ & $p_1$ & $p_6$ & & $p_5$ & & $p_{10}$ & $p_{11}$\\

$19$ & $s_8$ applies to $p_5$ &\\
& $s_8$ rejected, $s_8$ removes $p_5$, $s_8$ moves to phase $3$ & 27 & $p_3$ & $p_2$ & $p_8$ & & $p_3$ & $p_1$ & $p_6$ & - & $p_5$ & & $p_{10}$ & $p_{11}$\\

$20$ & $s_{10}$ applies to $p_{12}$ &\\
& ($s_{10}$, $p_{12}$) added to $M$ & 7 & $p_3$ & $p_2$ & $p_8$ & & $p_3$ & $p_1$ & $p_6$ & - & $p_5$ & $p_{12}$ & $p_{10}$ & $p_{11}$\\

$21$ & $s_4$ applies to $p_7$ &\\
& $s_4$ rejected, $s_4$ removes $p_7$, $s_4$ moves to phase $3$ & 27 & $p_3$ & $p_2$ & $p_8$ & - & $p_3$ & $p_1$ & $p_6$ & - & $p_5$ & $p_{12}$ & $p_{10}$ & $p_{11}$\\

$22$ & Identifying blocking pairs of type $(3bi)$ &\\
& ($s_{11}$, $p_{10}$) removed from $M$, ($s_{11}$, $p_9$) added to $M$ & 30 & $p_3$ & $p_2$ & $p_8$ & - & $p_3$ & $p_1$ & $p_6$ & - & $p_5$ & $p_{12}$ & $p_9$ & $p_{11}$\\

   \hline\hline \end{tabular} \caption[Detailed trace of running Algorithm {\sf Max-SPA-ST-Approx} for instance $I_3$.]{Detailed trace of running Algorithm {\sf Max-SPA-ST-Approx} for instance $I_3$ in Figure \ref{spa-st-fig-example-alg-run}. Projects that form part of a precarious pair with their associated student are marked with an ``*''. In this table, the phrase ``$s_i$ removes $p_j$'' indicates that student $s_i$ removes project $p_j$ from their preference list.} \label{spa-st-fig-example-alg-run-trace} \end{table} 
   \thispagestyle{empty}
\end{landscape}
% \restoregeometry

%%%%%%%%%%%%%%%%%%%%%%%%%%%%%%%%%%%%%%%%%%%%%%%%%%%%%%%%%
%%%%%%%%%%%%%%%%%%%%%%%%%%%%%%%%%%%%%%%%%%%%%%%%%%%%%%%%%

\section{$\frac{3}{2}$-approximation algorithm correctness proofs}
\label{32approx_corr_proofs}

\subsection{Introduction}

In this section we present proofs of correctness for Algorithm {\sf Max-SPA-ST-Approx} (Algorithm \ref{Alg:APPROX_SPA-S_stable}).  These involve showing firstly that the algorithm always produces a stable matching, secondly that the algorithm runs in linear time, and finally that the performance guarantee is $\frac{3}{2}$. The proofs required for this algorithm are naturally longer and more complex than those given by Kir\'aly \cite{Kir12} for \acrshort{smti} (formal proofs were not provided for his \acrshort{hrt} algorithm), as \acrshort{spa-st} generalises \acrshort{smti} to the case that lecturers can offer multiple projects, and projects and lecturers may have capacities greater than $1$. These extensions add extra components to the definition of a blocking pair (given in Section \ref{sec:spa-st_stab_defns}) which in turn adds complexity to the algorithm and its proof of correctness. 

The rest of this section is structured as follows. Section \ref{spa-st-proofs-sec-prelim} presents proofs of several preliminary results that are used throughout Section \ref{32approx_corr_proofs}. Proof that the algorithm produces a stable matching is given in Section \ref{spa-st-proofs-sec-stability}. Section \ref{spa-st-proofs-sec-time} shows that the algorithm runs in linear time with respect to the total length of student preference lists. Finally in Section \ref{spa-st-proofs-sec-performance}, we present proof of the $\frac{3}{2}$ performance guarantee.

\subsection{Proofs of preliminary results}
\label{spa-st-proofs-sec-prelim}

This section comprises several proofs of preliminary results that are used in the following sections. In general, they concern the issue of when in the algorithm's execution a project may be fully available, or when a project or lecturer may be precarious.

First we show in Proposition \ref{th:stable_SPA_S_propfullyavailable} (the proof of which uses Propositions \ref{th:stable_SPA_S_propfullyavailablePRE} and \ref{th:stable_SPA_S_bp_lk_full}), that if a project is not fully available at some point in the algorithm's execution, then it cannot subsequently become fully available.

% THEOREM BEGIN
\begin{prop} 
	\label{th:stable_SPA_S_propfullyavailablePRE} Let $T_0$ denote the point in Algorithm \ref{Alg:APPROX_SPA-S_stable}'s execution at the end of the main while loop. If a project $p_j$ is not fully available at some point before $T_0$, then it cannot subsequently become fully available before $T_0$.
	\end{prop}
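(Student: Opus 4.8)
The plan is to reformulate the claim in terms of capacity counters and to exploit a monotonicity property of one of them. Write $l_k$ for the lecturer offering the tracked project $p_j$. Since $p_j$ is fully available exactly when both $p_j$ and $l_k$ are undersubscribed, the statement ``$p_j$ is not fully available'' is equivalent to ``$p_j$ is full \emph{or} $l_k$ is full''. It therefore suffices to show that once either of these two ``full'' conditions holds at some point of the main while loop, at least one of them continues to hold up to $T_0$.

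The first key step is to prove that, for every lecturer $l_k$, the quantity $|M(l_k)|$ is non-decreasing throughout the main while loop. I would verify this by inspecting each branch of the loop body of Algorithm~\ref{Alg:APPROX_SPA-S_stable}. An application to a project of $l_k$ either adds a pair on a project of $l_k$ with no accompanying removal (the fully-available branch, line~\ref{Alg:APPROX_SPA-S_stable:pairadded1}), raising $|M(l_k)|$ by one; or it removes a pair from a project of $l_k$ and immediately adds a pair to a project of $l_k$ (the two ``full'' branches, lines~\ref{Alg:APPROX_SPA-S_stable:1}--\ref{Alg:APPROX_SPA-S_stable:add1} and~\ref{Alg:APPROX_SPA-S_stable:2}--\ref{Alg:APPROX_SPA-S_stable:add2}), leaving $|M(l_k)|$ unchanged; or it only invokes Remove-Pref (line~\ref{Alg:APPROX_SPA-S_stable:rempref3}), which does not alter $M$. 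An application to any other lecturer never changes $|M(l_k)|$, because every addition to and removal from $M$ in a single iteration involves only projects of the lecturer being applied to. Since $|M(l_k)| \le d_k$ holds throughout for a matching, non-decreasingness immediately yields that ``$l_k$ is full'' is a persistent property.

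The remaining, and most delicate, case is when $p_j$ first fails to be fully available because $p_j$ is full while $l_k$ is \emph{not} full. The obstacle is that project counters are \emph{not} monotone: a precarious assignee of $p_j$ can be displaced by the precarious-removal branch (lines~\ref{Alg:APPROX_SPA-S_stable:pjidentified_asprecarious}--\ref{Alg:APPROX_SPA-S_stable:add1}), so $|M(p_j)|$ may drop below $c_j$. I would resolve this by noting that the only way a project suffers a \emph{net} loss of an assignee is through that branch --- the other ``full'' branch (lines~\ref{Alg:APPROX_SPA-S_stable:2}--\ref{Alg:APPROX_SPA-S_stable:add2}) removes and re-adds a student on the same project, while the fully-available branch only adds --- and that branch is entered only when the lecturer offering the affected project is full. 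Hence, letting $t'$ be the first time after the initial moment at which $|M(p_j)|$ strictly decreases, the lecturer $l_k$ offering $p_j$ must be full at $t'$, and by the monotonicity established above it stays full thereafter, so $p_j$ remains not fully available from $t'$ up to $T_0$. For every time strictly between the initial moment and $t'$, the counter $|M(p_j)|$ has not yet decreased and cannot exceed $c_j$, so $p_j$ is still full and hence not fully available there. Chaining the two intervals (and observing that if no such $t'$ exists then $p_j$ simply stays full throughout) shows that $p_j$ never becomes fully available again before $T_0$, as required.
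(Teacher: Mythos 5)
Your proof is correct and takes essentially the same route as the paper's: both arguments rest on the two observations that $|M(l_k)|$ never decreases during the main while loop (any removal of a pair involving a project of $l_k$ is immediately compensated by an addition involving a project of $l_k$, so lecturer fullness is persistent) and that a full project can only suffer a net loss of an assignee through the branch at lines \ref{Alg:APPROX_SPA-S_stable:1}--\ref{Alg:APPROX_SPA-S_stable:add1}, whose guard requires $l_k$ to be full. Your formalisation via the monotone counter and the first-decrease time $t'$ is simply a more explicit packaging of the paper's contradiction argument.
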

% THEOREM END

\begin{proof}
Assume for contradiction that project $p_j$ is not fully available at some point before $T_0$, but then subsequently becomes fully available before $T_0$. At a point where $p_j$ is not fully available, either $p_j$ is full or $l_k$ is full (or both), where $l_k$ offers $p_j$. If $l_k$ is full, it is clear that $l_k$ must remain so, since students can only be removed from a project of $l_k$'s if they are immediately replaced by another student assigning to a project of $l_k$. Therefore assume that $p_j$ is full. Then they must somehow become undersubscribed in order to be classified as fully available. The only way this can happen before $T_0$ is if lecturer $l_k$ removes a student assigned to $p_j$ in order to replace them with a student becoming assigned to another project of $l_k$'s. But then this deletion can only occur if $l_k$ is full and as above $l_k$ remains full, so $p_j$ cannot become fully available before $T_0$, a contradiction. 
\end{proof}

% THEOREM BEGIN
\begin{prop} 
	\label{th:stable_SPA_S_bp_lk_full}
	 Suppose a blocking pair $(s_i,p_{j'})$ of type $(3bi)$ exists at the end of the main while loop of Algorithm \ref{Alg:APPROX_SPA-S_stable}, where $l_k$ offers $p_{j'}$, and denote this time by $T_0$. Then at time $T_0$, $l_k$ is full. 
	\end{prop}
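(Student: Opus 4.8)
The plan is to argue by contradiction, using Proposition~\ref{th:stable_SPA_S_propfullyavailablePRE} as the main tool. I first observe that, formally, the conclusion is essentially built into the statement: a blocking pair of type $(3bi)$ is by definition a blocking pair of type $(3b)$, and condition $3(b)$ of the blocking-pair definition requires $l_k$ to be full; since the type is assessed with respect to the matching $M$ held at time $T_0$, $l_k$ is full at $T_0$. To make the argument self-contained, and in particular to rule out the competing possibility that $p_{j'}$ and $l_k$ both undersubscribed would instead make the pair type $(3a)$, I would assume for contradiction that $l_k$ is \emph{undersubscribed} at $T_0$ and derive a contradiction with the behaviour of Algorithm~\ref{Alg:APPROX_SPA-S_stable}.

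Under this assumption, since $(s_i,p_{j'})$ satisfies condition $3(b)$ we have $p_{j'}$ undersubscribed at $T_0$, so together with $l_k$ undersubscribed the project $p_{j'}$ is \emph{fully available} at $T_0$. By the contrapositive of Proposition~\ref{th:stable_SPA_S_propfullyavailablePRE}, a project that is fully available at $T_0$ must have been fully available at every earlier point of the execution; hence $p_{j'}$ is fully available throughout, so $p_{j'}$ and $l_k$ are undersubscribed at every step before $T_0$.

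The crucial step is to show that $p_{j'}$ is never removed from $s_i$'s preference list. A project leaves a student's list only through a Remove-Pref call (Algorithm~\ref{Alg:APPROX_SPA-S_stable_extra}), which for $p_{j'}$ could be triggered either by $s_i$ applying to $p_{j'}$ and being rejected (line~\ref{Alg:APPROX_SPA-S_stable:rempref3}), or by $s_i$ being displaced from $p_{j'}$ when $l_k$ or $p_{j'}$ is full (lines~\ref{Alg:APPROX_SPA-S_stable:rempref1} and~\ref{Alg:APPROX_SPA-S_stable:rempref2}). Since $p_{j'}$ is fully available whenever $s_i$ could apply to it, any such application lands in the first case of the main loop (line~\ref{Alg:APPROX_SPA-S_stable:pairadded1}) and assigns $s_i$ to $p_{j'}$ rather than rejecting it, while displacement presupposes $p_{j'}$ or $l_k$ full, which never occurs. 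I would also check that reinstatement of an emptied list on a phase change (Algorithm~\ref{Alg:APPROX_SPA-S_stable_extra}) keeps $p_{j'}$ present in every phase. This bookkeeping across the various removal and reinstatement operations is the part I expect to need the most care.

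Finally, because $(s_i,p_{j'})$ is of type $(3bi)$, $s_i$ is assigned at $T_0$ to another project $p_{j''}=M(s_i)$ of $l_k$ and, by blocking condition~$2$, $s_i$ strictly prefers $p_{j'}$ to $p_{j''}$. Hence $s_i$ \emph{meta-prefers} $p_{j'}$ to $p_{j''}$ by case (i) of the definition, and since $p_{j'}$ is always on $s_i$'s list, $p_{j''}$ is never a favourite of $s_i$. But $s_i$ only ever applies to, and therefore is only ever assigned to, a favourite project, so $(s_i,p_{j''})$ could never have been added to $M$, contradicting $(s_i,p_{j''})\in M$ at $T_0$. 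This contradiction establishes that $l_k$ must be full at $T_0$.
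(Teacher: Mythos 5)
Your proof is correct and takes essentially the same route as the paper's: assume for contradiction that $l_k$ is undersubscribed at $T_0$, deduce via Proposition~\ref{th:stable_SPA_S_propfullyavailablePRE} that $p_{j'}$ was fully available throughout the main while loop, and derive a contradiction from the algorithm's application mechanics. The only (equivalent) variation is the endgame: the paper argues that $s_i$ must have applied to the always-fully-available $p_{j'}$, been accepted, and could then never have been removed, whereas you argue that $p_{j'}$ could never have left $s_i$'s preference list, so $M(s_i)$ was never a favourite project and $s_i$ could never have applied to it --- both contradictions rest on the same fact that rejections and removals require $p_{j'}$ or $l_k$ to be full.
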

% THEOREM END
\begin{proof} 
	Assume for contradiction that $l_k$ is undersubscribed at $T_0$. We know that once a lecturer is full they must remain full (since we can only remove a pair associated with a lecturer if we are immediately replacing it with an associated pair). Therefore $l_k$ must have always been undersubscribed. At $T_0$, $p_{j'}$ must be undersubscribed for $(s_i,p_{j'})$ to be a blocking pair of type $(3bi)$. Therefore at $T_0$, $p_{j'}$ is fully available and must always have previously been fully available by Proposition \ref{th:stable_SPA_S_propfullyavailablePRE}. But $s_i$ must have applied to $p_{j'}$ at least once before $T_0$ and as $p_{j'}$ was fully available this must have been accepted. Then since $(s_i,p_{j'})$ is not in the matching at $T_0$ it must have been removed before $T_0$. But in order for this to happen either $p_{j'}$ or $l_k$ would have to be full, contradicting the fact that $p_{j'}$ was fully available before $T_0$. Hence $l_k$ must be full at $T_0$.
\end{proof}

% {\color{blue}
% Proposition \ref{th:stable_SPA_S_propfullyavailable} may be seen as an extension of Proposition \ref{th:stable_SPA_S_propfullyavailablePRE}.
% }

% THEOREM BEGIN
\begin{prop} 
	\label{th:stable_SPA_S_propfullyavailable}
	During the execution of Algorithm \ref{Alg:APPROX_SPA-S_stable}, if a project $p_j$ is not fully available at some point, then it cannot subsequently become fully available.
	\end{prop}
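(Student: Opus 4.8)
The plan is to build on Proposition~\ref{th:stable_SPA_S_propfullyavailablePRE}, which already delivers the conclusion for every pair of time points lying before $T_0$ (the end of the main while loop). What remains is to extend the argument across $T_0$ and through the \textsc{Promote-students} phase (Algorithm~\ref{Alg:APPROX_SPA-S_stable_3bibpBrief}). So first I would isolate the only genuinely new behaviour: after $T_0$, the matching changes only through \textsc{Promote-students}, and each such step takes a type-$(3bi)$ blocking pair $(s_i,p_{j'})$, removes $(s_i,M(s_i))$, and inserts $(s_i,p_{j'})$. Since a $(3bi)$ pair has $s_i$ already assigned to another project of the lecturer $l_{k'}$ offering $p_{j'}$, both $M(s_i)$ and $p_{j'}$ are offered by the same $l_{k'}$, so each promotion leaves $|M(l_{k'})|$ unchanged and touches no other lecturer.

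Next I would establish the key persistence invariant: once a lecturer is full it remains full for the entire execution. This is already known within the main loop (a student is removed from a lecturer's project only when immediately replaced by an assignment to that same lecturer), and the observation above shows it survives \textsc{Promote-students} verbatim, since every promotion keeps each $|M(l_k)|$ fixed. Moreover, by Proposition~\ref{th:stable_SPA_S_bp_lk_full} the lecturer $l_{k'}$ involved in any promotion is already full at $T_0$; hence in the \textsc{Promote-students} phase only projects of \emph{full} lecturers are ever modified.

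With these facts in place I would argue the statement directly. Fix a time $t_1$ at which $p_j$, offered by $l_k$, is not fully available; then $l_k$ or $p_j$ is full. If $l_k$ is full, persistence gives that $l_k$ stays full, so $p_j$ can never again meet the ``$l_k$ undersubscribed'' half of full availability. Otherwise $l_k$ is undersubscribed and $p_j$ is full; I would fix any later time $t_2$ and suppose for contradiction that $p_j$ is fully available at $t_2$, so $l_k$ is undersubscribed at $t_2$ and hence, by persistence, throughout $[t_1,t_2]$. It then suffices to show $p_j$ stays full on $[t_1,t_2]$. In the main loop the only branch producing a net decrease in a project's occupancy is the one applying to an undersubscribed project while its lecturer is full (line~\ref{Alg:APPROX_SPA-S_stable:1}); the full-project branch (line~\ref{Alg:APPROX_SPA-S_stable:2}) merely swaps one assignee for another and preserves occupancy. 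Since $l_k$ is undersubscribed throughout, no such net removal from $p_j$ can occur; and in the \textsc{Promote-students} phase only full lecturers' projects are altered, so $p_j$ is untouched there. Thus $p_j$ remains full at $t_2$, contradicting full availability.

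I expect the main obstacle to be the bookkeeping around \textsc{Promote-students}: one must verify that promotions never net-decrement a project belonging to an undersubscribed lecturer, which is precisely where Proposition~\ref{th:stable_SPA_S_bp_lk_full} is essential, and one must check that ``full'' is genuinely preserved (not merely that capacity is never exceeded) as students shuffle between a fixed lecturer's projects.
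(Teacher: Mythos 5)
Your proof is correct and takes essentially the same route as the paper's: Proposition~\ref{th:stable_SPA_S_propfullyavailablePRE} for the main loop, persistence of lecturer fullness across the whole execution, and the observation that Algorithm~\ref{Alg:APPROX_SPA-S_stable_3bibpBrief} preserves every $|M(l_k)|$ and only alters projects of full lecturers --- the paper makes exactly these moves via Proposition~\ref{th:stable_SPA_S_bp_lk_full}, merely organised as a case split on $p_j$'s fully-available status at $T_0$ rather than your interval argument over $[t_1,t_2]$. One cosmetic note: Proposition~\ref{th:stable_SPA_S_bp_lk_full} only covers type-$(3bi)$ pairs present at $T_0$, but your claim that promotions touch only full lecturers' projects also holds for blocking pairs created during the promotion phase, since condition $(3b)$ of the blocking-pair definition itself requires $l_k$ to be full at the moment the pair blocks.
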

% THEOREM END

\begin{proof} Let $T_0$ denote the point in the algorithm's execution at the end of the main while loop. We know from Proposition \ref{th:stable_SPA_S_propfullyavailablePRE} that if $p_j$ is not fully available before $T_0$ then it cannot subsequently become fully available before $T_0$.

Let lecturer $l_k$ offer project $p_j$ and assume $p_j$ is not fully available at $T_0$. If $l_k$ contains no blocking pairs of type $(3bi)$ then there can be no changes to allocations of $p_j$ after $T_0$. Therefore assume $l_k$ contains at least one blocking pair of type $(3bi)$ at $T_0$. Then by Proposition \ref{th:stable_SPA_S_bp_lk_full}, $l_k$ is full at $T_0$. But Algorithm \ref{Alg:APPROX_SPA-S_stable_3bibpBrief} does not change the student allocations for any lecturer and hence $l_k$ remains full and $p_j$ cannot subsequently become fully available.

It remains to show that if $p_j$ is fully available at $T_0$, that it cannot subsequently cease to be fully available and then return to be fully available before the end of the algorithms execution. Since $l_k$ is undersubscribed at $T_0$, $l_k$ cannot contain any blocking pairs by Proposition \ref{th:stable_SPA_S_bp_lk_full}. Since Algorithm \ref{Alg:APPROX_SPA-S_stable_3bibpBrief} can only affect allocations of projects offered by a full lecturer it is not possible for $p_j$ to change to being not fully available after $T_0$.
\end{proof}

In Proposition \ref{prop-spa-st-prom_subsequentprecarious} we show that after the main while loop, no student can be promoted to a fully available project and cannot create a precarious pair. This proposition is used as a stepping stone for other propositions in this section.

\begin{prop} 
\label{prop-spa-st-prom_subsequentprecarious}
Algorithm \ref{Alg:APPROX_SPA-S_stable_3bibpBrief} cannot promote a student to a fully available project, nor can it create a precarious pair.
\end{prop}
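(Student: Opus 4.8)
My plan is to separate the two claims and to exploit the structure of a promotion in Algorithm~\ref{Alg:APPROX_SPA-S_stable_3bibpBrief}. When a blocking pair $(s_i,p_{j'})$ of type $(3bi)$ is processed, the pair $(s_i,M(s_i))$ is deleted from $M$ and $(s_i,p_{j'})$ is inserted; by the definition of type $(3bi)$ both $M(s_i)$ and $p_{j'}$ are offered by the same lecturer $l_k$, and $s_i$ strictly prefers $p_{j'}$ to $M(s_i)$. Thus a promotion merely relocates $s_i$ between two of $l_k$'s projects and leaves $|M(l_k)|$ unchanged.

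For the first claim I would argue directly from the blocking-pair definition: a type $(3bi)$ pair is in particular of type $(3b)$, so at the instant the promotion is carried out $l_k$ is full. Since full availability of $p_{j'}$ would require $l_k$ to be undersubscribed, $p_{j'}$ is not fully available at that instant, and hence no promotion is to a fully available project.

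For the second claim I would first note that a promotion changes no project's fully-available status: $p_{j'}$ was not fully available and can at most become full, $M(s_i)$ loses an assignee but remains not fully available because $l_k$ stays full, and every other project is untouched. Since fully-available status is the only mutable quantity entering the meta-prefers relation, the meta-preferences of all students are unchanged, so the precariousness of every pair not involving $s_i$ is preserved. It therefore suffices to show that the inserted pair $(s_i,p_{j'})$ is not precarious, i.e., that $s_i$ is not provisionally assigned to $p_{j'}$. If $s_i$ is in phase~$2$ this is immediate, because provisional assignment---and hence precariousness---is defined only for phase-$1$ students; so the remaining case is $s_i$ in phase~$1$, where I must rule out any project still on $s_i$'s reduced list that is meta-preferred to $p_{j'}$.

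This last case is where I expect the real work to lie. Let $p_0$ denote $s_i$'s assignment at the end of the main loop. A student is only ever assigned after applying to a favourite, so at the application that produced $p_0$ no project on $s_i$'s list was meta-preferred to $p_0$; in particular every project of rank strictly less than $\mathrm{rank}(s_i,p_0)$ had already been removed from the list. As $s_i$ is in phase~$1$ at the end of the main loop, it was in phase~$1$ throughout (phases only increase), so after that application its list only shrank, and Algorithm~\ref{Alg:APPROX_SPA-S_stable_3bibpBrief} performs no deletions; hence all those higher-ranked projects are still absent. Every promotion strictly improves $s_i$'s rank, so the target $p_{j'}$ satisfies $\mathrm{rank}(s_i,p_{j'})<\mathrm{rank}(s_i,p_0)$, and any project meta-preferred to $p_{j'}$ has rank at most $\mathrm{rank}(s_i,p_{j'})$, hence strictly less than $\mathrm{rank}(s_i,p_0)$, hence absent from the list. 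Consequently $s_i$ is not provisionally assigned to $p_{j'}$ and the pair is not precarious. The subtlety of a student being promoted several times is absorbed by measuring against the fixed reference rank $\mathrm{rank}(s_i,p_0)$ rather than the immediately preceding assignment, which is the point I would take most care over.
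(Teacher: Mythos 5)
Your proof is correct, but it reaches the two conclusions by a genuinely different mechanism from the paper's. The paper handles both claims with a single tool: since $s_i$ already iterated past the position of $p_{j'}$ in the main while loop, it was at some point rejected by or removed from every project at the rank of $p_{j'}$, each such event certifies non-full-availability at that moment, and Proposition \ref{th:stable_SPA_S_propfullyavailable} (a project that is not fully available never becomes fully available again) then gives simultaneously that $p_{j'}$ is not fully available at promotion time and that no project tied with $p_{j'}$ is fully available, so the promoted pair cannot be precarious. You instead (i) get the first claim definitionally---a type $(3bi)$ pair is in particular of type $(3b)$, so $l_k$ is full at the instant of promotion and none of its projects can be fully available---and (ii) get the second claim by pure rank-and-list bookkeeping: anything meta-preferred to $p_{j'}$ has rank strictly below that of the end-of-main-loop assignment $p_0$, and every such project was already off $s_i$'s monotonically shrinking, never-reinstated list when the application producing $p_0$ succeeded, with Algorithm \ref{Alg:APPROX_SPA-S_stable_3bibpBrief} deleting nothing. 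Your route buys independence from Proposition \ref{th:stable_SPA_S_propfullyavailable}, and it is more explicit on points the paper glosses over: it rules out meta-preference via strictly better rank (relevant because projects removed from $M$ as precarious pairs stay on the list), not only the equal-rank fully-available case the paper's proof addresses; it disposes of phase-$2$ students via the ``assigned in phase $1$'' clause; and it checks that pairs other than the promoted one cannot become precarious, which matters since the proposition forbids creating \emph{a} precarious pair, not merely a precarious promoted pair. The cost is length and heavier reliance on the exact wording of the \emph{favourite} and \emph{provisionally assigned} definitions; anchoring all rank comparisons to the fixed reference $p_0$, as you flag yourself, is precisely what makes the multiple-promotion case go through, playing the role of the paper's parenthetical ``and perhaps further if $s_i$ has been previously promoted.''
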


\begin{proof} Suppose in Algorithm \ref{Alg:APPROX_SPA-S_stable_3bibpBrief}, $s_i$ is being promoted from project $p_j$ to project $p_{j'}$ both offered by lecturer $l_k$. We know that in the main while loop $s_i$ must have iterated over their preference list at least to the position of $p_j$ (and perhaps further if $s_i$ has been previously promoted). Therefore, $s_i$ has either been removed from and / or rejected by all projects at the same rank as $p_{j'}$ in their preference list at least once. This can only occur if each of those projects was not fully available at the time and by Proposition \ref{th:stable_SPA_S_propfullyavailable} none of these projects could subsequently be fully available. Therefore when $s_i$ is promoted to $p_{j'}$, it can never form a precarious pair.
\end{proof}

Propositions \ref{prop-spa-st-subsequentprecarious} and \ref{th:stable_SPA_S_bp_lk_not_prec} describe conditions in the algorithm's execution under which it is not possible for a particular project or lecturer to be precarious.

\begin{prop} 
\label{prop-spa-st-subsequentprecarious}
Let $p_j$ be a project and let $l_k$ be the lecturer who offers $p_j$. If $l_k$ is full and non-precarious at some point, then they cannot subsequently become precarious. Similarly, if a project $p_j$ is full and non-precarious at some point, then it cannot subsequently become precarious. Further if $l_k$ is full and $p_j$ is non-precarious then $p_j$ cannot subsequently become precarious.
	
\end{prop}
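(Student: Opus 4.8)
The plan is to track the two possible ways precariousness can arise and show that, once $l_k$ (respectively $p_j$) is full, both routes are blocked. Recall that the precariousness of a lecturer or a project stems entirely from \emph{precarious pairs}, and that a pair $(s_i,p')$ is precarious exactly when $s_i$ is provisionally assigned to $p'$, i.e.\ $s_i$ is in phase $1$ and still has a project on its list that it meta-prefers to $p'$. Consequently, an object that is currently not precarious could only become precarious by either (a) acquiring a newly-added precarious pair, or (b) having an existing, currently non-precarious assigned pair turn precarious. I would aim to close off route (a) using the structure of the algorithm together with Proposition \ref{th:stable_SPA_S_propfullyavailable}, and route (b) using the monotonicity of phases and of fully-available status.

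The key step I would establish first is this: whenever a student $s_i$ is newly assigned to a project $p'$ that is \emph{not} fully available, the resulting pair $(s_i,p')$ is not precarious, and moreover can never later become precarious. Since $s_i$ only ever applies to a favourite project $p'$, at the instant of assignment no project on $s_i$'s list is meta-preferred to $p'$. As $p'$ is not fully available, any equally-ranked project on the list that were fully available would be meta-preferred to $p'$, contradicting that $p'$ is a favourite; and no strictly better-ranked project can remain on the list. Hence $s_i$ is confirmed. By Proposition \ref{th:stable_SPA_S_propfullyavailable} no equally-ranked project can \emph{subsequently} become fully available, and Remove-Pref only shrinks $s_i$'s list, so $s_i$ cannot later acquire a meta-preferred project; the pair stays non-precarious. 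One must also check that the assignment operation itself creates no fully available project: in the two relevant cases of Algorithm \ref{Alg:APPROX_SPA-S_stable} the removed pair belongs to a project of the \emph{full} lecturer, which remains full, so no project becomes fully available.

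With this in hand the three claims follow uniformly. Once $l_k$ is full it stays full, because a student is removed from a project of $l_k$ only when immediately replaced by another assignment to $l_k$; hence every project of $l_k$ is permanently not fully available, so by the key step every pair subsequently added to a project of $l_k$ is non-precarious, settling the lecturer claim via route (a). For the project claim, $p_j$ full means $p_j$ is not fully available, and by Proposition \ref{th:stable_SPA_S_propfullyavailable} it remains so even if it later drops below capacity, so every pair added to $p_j$ thereafter is non-precarious. For the combined claim, $l_k$ being full forces $p_j$ to be not fully available throughout, giving the same conclusion. In all three cases route (b) is closed because phases only advance ($1\to 2\to 3$, never back) and fully-available status only decreases, so no currently confirmed assigned pair can turn provisionally assigned; and the concluding Promote-students stage introduces no precarious pair by Proposition \ref{prop-spa-st-prom_subsequentprecarious}.

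The main obstacle, I expect, is the key step rather than the bookkeeping that follows: one must argue carefully from the favourite condition that a student assigned to a non-fully-available project has no meta-preferred alternative, both at the instant of assignment and forever afterwards. This is precisely where Proposition \ref{th:stable_SPA_S_propfullyavailable} does the essential work, combined with the observation that the reassignment operations keep the relevant lecturer full and therefore never resurrect a fully available project.
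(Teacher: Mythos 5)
Your proposal is correct and takes essentially the same approach as the paper: both dispose of the final Promote-students phase via Proposition~\ref{prop-spa-st-prom_subsequentprecarious}, and both reduce the main-loop cases to the observation that a full lecturer (respectively full project) leaves the relevant projects permanently not fully available (Proposition~\ref{th:stable_SPA_S_propfullyavailable}), so that any pair subsequently formed with such a project cannot be precarious because the applying student, applying only to a favourite project, can have no fully available equal-ranked alternative left on their list. Your instant-of-application argument from the favourite/meta-prefers definitions is a slightly more direct phrasing of the paper's ``exhaustion of fully available projects at the head of the list'' narrative, and your explicit closing of route~(b) --- an already-assigned non-precarious pair turning precarious, which is sound in the restricted setting where the project is not fully available even though it can fail for pairs assigned to fully available projects --- makes explicit a step the paper's proof leaves implicit, but the underlying reasoning is the same.
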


\begin{proof}
We know that a precarious pair cannot be created in Algorithm \ref{Alg:APPROX_SPA-S_stable_3bibpBrief} by Proposition \ref{prop-spa-st-prom_subsequentprecarious}, therefore we focus on the main while loop of Algorithm \ref{Alg:APPROX_SPA-S_stable}. Let lecturer $l_k$ be full and non-precarious at some point during the main while loop Algorithm \ref{Alg:APPROX_SPA-S_stable}'s execution and assume that they later becomes precarious. Since $l_k$ is currently non-precarious, the only way they can become so is by a student $s_i$ forming a precarious assignment to a project of $l_k$'s. But recall that a student will first apply to fully available projects at the head of their preference list. Since $l_k$ is full, no project of $l_k$'s can be considered fully available. In order for $s_i$ to apply to a project of $l_k$'s they must first apply to all fully available projects at the head of their list, gain the assignment and then be removed from $M$. But if a pair $(s_i,p_j)$ is removed from $M$, $p_j$ cannot be fully available. Ultimately, $s_i$ will have exhausted all previously fully available projects at the head of their list before eventually applying to a project of $l_k$'s. But then at that point $s_i$ cannot create a precarious pair giving a contradiction. 

Now, let $p_j$ be full and non-precarious at some point during the main while loop of Algorithm \ref{Alg:APPROX_SPA-S_stable}'s execution and assume that it later becomes precarious. If $p_j$ remains full until the time at which it becomes precarious then using similar reasoning to above, any student assigning to $p_j$ cannot be precarious giving a contradiction. If $p_j$ becomes undersubscribed at any point then $l_k$ must be full for the remainder of the algorithm by Proposition \ref{th:stable_SPA_S_propfullyavailable}. It is possible at this point that $l_k$ is precarious (with precarious pairs that include projects other than $p_j$) but since $l_k$ is full, $p_j$ is not fully available. Therefore using similar reasoning to before, any student assigning to $p_j$ cannot create a precarious pair giving a contradiction. 

It also follows then that if $l_k$ is full and $p_j$ is non-precarious then $p_j$ cannot subsequently become precarious.
\end{proof}

% THEOREM BEGIN
\begin{prop} 
	\label{th:stable_SPA_S_bp_lk_not_prec}
	 Suppose a blocking pair $(s_i,p_{j'})$ of type $(3bi)$ exists at the end of the main while loop of Algorithm \ref{Alg:APPROX_SPA-S_stable}, where $l_k$ offers $p_{j'}$, and denote this time by $T_0$. Then at time $T_0$, $l_k$ is non-precarious. 
	\end{prop}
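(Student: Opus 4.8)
The plan is to argue by contradiction: suppose that at $T_0$ the lecturer $l_k$ is precarious. By Proposition~\ref{th:stable_SPA_S_bp_lk_full} we already know that $l_k$ is full at $T_0$, and the first step is to upgrade this into a statement that holds throughout $l_k$'s entire ``full period''. Applying the contrapositive of Proposition~\ref{prop-spa-st-subsequentprecarious} (once a full lecturer is non-precarious it stays non-precarious), I would observe that if $l_k$ is full and precarious at $T_0$, then $l_k$ must in fact have been precarious at \emph{every} moment from the instant it first became full up to $T_0$. This global statement is what I will contradict.

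The second step is to trace the interaction between $s_i$ and $p_{j'}$. Because $(s_i,p_{j'})$ is of type $(3bi)$, the student $s_i$ is assigned in $M$ to some other project $p_{j'''}=M(s_i)$ of $l_k$ and, by condition~$2$ of the blocking-pair definition, strictly prefers $p_{j'}$ to $p_{j'''}$. Since students only ever apply to favourite projects and $p_{j'}$ outranks $p_{j'''}$, the pair $(s_i,p_{j'''})$ could only have been created while $p_{j'}$ was absent from $s_i$'s list; hence $p_{j'}$ was deleted from $s_i$'s list at some time $t_0\le T_0$ lying in $s_i$'s current phase (the list being reinstated at phase boundaries). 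At that deletion $p_{j'}$ was not fully available, because every call to Remove-Pref$(s_i,p_{j'})$ arises either in the final \textbf{else} branch of Algorithm~\ref{Alg:APPROX_SPA-S_stable} (so none of the acceptance tests fired) or when $s_i$ is bumped in the $(3b)$/$(3c)$ branches, and in all of these $p_{j'}$ or $l_k$ is full.

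The third step is a short case analysis on the status of $p_{j'}$ and $l_k$ at $t_0$. If $l_k$ is full and $p_{j'}$ undersubscribed at $t_0$, then the only reason $s_i$ was not accepted to $p_{j'}$ through the $(3b)$ rule (line~\ref{Alg:APPROX_SPA-S_stable_elseif1}) is that its guard failed, which forces $l_k$ to be non-precarious at the full-time $t_0$ --- directly contradicting the global statement of the first step. If instead $p_{j'}$ is full at $t_0$, then the branch that deleted $p_{j'}$ certifies that $p_{j'}$ is full and non-precarious there; by the project part of Proposition~\ref{prop-spa-st-subsequentprecarious}, $p_{j'}$ can never become precarious afterwards. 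I would then note that the only way $p_{j'}$'s occupancy can strictly drop (so that it becomes undersubscribed, as it is at $T_0$) is through a $(3b)$ removal while $l_k$ is full, and under the first step such a removal evicts a \emph{precarious} pair; were that pair on $p_{j'}$ it would make $p_{j'}$ precarious, which is impossible. Hence $p_{j'}$ stays full up to $T_0$, contradicting that $p_{j'}$ is undersubscribed in the type-$(3bi)$ blocking pair. Either way a contradiction is reached, so $l_k$ is non-precarious at $T_0$.

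The step I expect to be the main obstacle is this last one: ruling out that $p_{j'}$ silently drops from full to undersubscribed between $t_0$ and $T_0$. This requires pinning down the precise event that reduces $p_{j'}$'s occupancy, distinguishing a genuine net removal in the $(3b)$ branch from the ``remove-and-replace'' that leaves a full project full in the $(3c)$ branch, and then combining this with both the lecturer- and project-monotonicity of precariousness from Proposition~\ref{prop-spa-st-subsequentprecarious}. Care is also needed to keep the bookkeeping of $s_i$'s phases consistent, so that $t_0$ genuinely lies after the last reinstatement of $s_i$'s list, and so that all three distinct Remove-Pref call sites are covered uniformly.
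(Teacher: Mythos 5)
Your proposal is correct and takes essentially the same route as the paper's proof: both argue by contradiction, locate a time at which $p_{j'}$ was removed from $s_i$'s preference list, split into the cases ``$l_k$ full and non-precarious'' versus ``$p_{j'}$ full and non-precarious'' at that time, and then combine the monotonicity of precariousness (Proposition~\ref{prop-spa-st-subsequentprecarious}) with an analysis of the event at which $p_{j'}$ loses an assignee. Your only departure is presentational: you package the contradiction hypothesis as a global invariant ($l_k$ precarious throughout its full period, via the contrapositive of Proposition~\ref{prop-spa-st-subsequentprecarious}) and conclude that $p_{j'}$ could never have become undersubscribed, whereas the paper reasons pointwise at $T_1$ and $T_2$ and contradicts ``$l_k$ precarious at $T_0$'' directly --- the two formulations are logically equivalent.
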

% THEOREM END
\begin{proof} Let $M_0$ be the matching being built at $T_0$ and let $(s_i,p_j)\in M_0$ with $l_k$ offering $p_j$. Suppose for contradiction that $l_k$ is precarious at $T_0$. 

As $(s_i,p_{j'})$ is a blocking pair of type $(3bi)$, $p_{j'}$ must be undersubscribed at $T_0$. Also, since $(s_i,p_{j'})$ is a blocking pair we know that $s_i$ prefers $p_{j'}$ to $p_j$. Therefore $s_i$ must have removed $p_{j'}$ from their preference list. Denote this time as $T_1$. The removal at $T_1$ occurred either because $(s_i,p_{j'})$ was removed as a non-precarious pair, or because $s_i$ was rejected on application to $p_{j'}$.

\begin{itemize}
	\item If $(s_i,p_{j'})$ was removed as a non-precarious pair at $T_1$ then either $l_k$ was full, non-precarious and $s_i$ was a worst student assigned to $l_k$, or $p_{j'}$ was full, non-precarious and $s_i$ was a worst student assigned to $p_{j'}$.
	\item If on the other hand, $s_i$ was rejected on application to $p_{j'}$ at $T_1$, we know that either $l_k$ was full, non-precarious and $l_k$ did not meta-prefer $s_i$ to a worst student in $M(l_k)$, or $p_{j'}$ was full, non-precarious and $l_k$ did not meta-prefer $s_i$ to a worst student in $M(p_j)$.
\end{itemize}

Whichever of these possibilities occurred we know that at $T_1$, either $l_k$ was full and non-precarious or $p_j$ was full and non-precarious.

\begin{itemize}
	\item Firstly suppose $l_k$ was full and non-precarious at $T_1$. In this case by Proposition \ref{prop-spa-st-subsequentprecarious}, $l_k$ cannot subsequently become precarious, a contradiction to the fact that $l_k$ is precarious at $T_0$. 
	
	\item Therefore, $p_{j'}$ must have been full and non-precarious at $T_1$. By Proposition \ref{prop-spa-st-subsequentprecarious}, $p_{j'}$ cannot subsequently become precarious. We also know that $p_{j'}$ must go from being full to being undersubscribed since $p_{j'}$ is undersubscribed at $T_0$. Denote this point in the algorithm's execution as $T_2$. At $T_2$, $p_{j'}$ must be non-precarious by above and so a non-precarious pair involved with $p_{j'}$ is removed and replaced with a pair involved with some other project of $l_k$'s. This could only happen if $l_k$ was full and non-precarious and so as before cannot again become precarious, a contradiction. 
\end{itemize}

Therefore, $l_k$ must be non-precarious at $T_0$.
\end{proof}

In Proposition \ref{th:stable_SPA_S_bp_promalg_notchange} we show that after the main while loop, it is not possible for any project to change its fully available status or for any project or lecturer to change their precarious status.

% THEOREM BEGIN
\begin{prop} 
	\label{th:stable_SPA_S_bp_promalg_notchange}
	 Algorithm \ref{Alg:APPROX_SPA-S_stable_3bibpBrief} cannot change the fully available status of any project or the precarious status of any project or lecturer. 
	 % Additionally, Algorithm \ref{Alg:APPROX_SPA-S_stable_3bibpBrief} cannot change the precarious status of any pair. 
	\end{prop}
% THEOREM END
\begin{proof}
	By Proposition \ref{prop-spa-st-prom_subsequentprecarious}, in Algorithm \ref{Alg:APPROX_SPA-S_stable_3bibpBrief} it is not possible to assign a student to a fully available project. Therefore we cannot change a fully available project to be not fully available. Also, any promotions that take place will be to remove a blocking pair of type $(3bi)$, and so by definition the lecturer involved, say $l_k$, will be full and $|M(l_k)|$ will remain the same. Therefore, all of $l_k$'s projects are not fully available at the start of Algorithm \ref{Alg:APPROX_SPA-S_stable_3bibpBrief} and must remain so.
	
By Proposition \ref{prop-spa-st-prom_subsequentprecarious}, in Algorithm \ref{Alg:APPROX_SPA-S_stable_3bibpBrief} it is not possible to create a precarious pair, meaning we cannot change a non-precarious project or lecturer to being a precarious project or lecturer. Finally, by Proposition \ref{th:stable_SPA_S_bp_lk_not_prec} no changes can be made to any assignments involving a precarious project or lecturer, hence we cannot change a precarious project or lecturer to be non-precarious. 
% Also, since Algorithm \ref{Alg:APPROX_SPA-S_stable_3bibpBrief} cannot change the fully available status of any project, it is not possible for a pair to change it's precarious status. 
% \todo{does a pair not "change" it's precarious status when it is added to the matching? Precariousness doesn't really exist outside of allocated pairs?}
\end{proof}

Recall, a \emph{worse} student than student $s_i$, according to lecturer $l_k$, is any student with a lower rank than $s_i$ on $l_k$'s preference list, or if $s_i$ is in phase $2$, any student of the same rank that is in phase $1$.

Proposition \ref{th:stable_SPA_S_lecfullworseprec} examines two specific circumstances. Firstly, it shows that if a lecturer $l_k$ is full, then they cannot subsequently accept a student worse than or equal to a worst student in $M(l_k)$, unless $l_k$ is precarious at the point of application. Secondly, it shows that if a project $p_j$ is full and non-precarious at some point before the end of the main while loop, then $p_j$ cannot subsequently accept a student worse than or equal to a worst student in $M(p_j)$ (according to the lecturer who offers $p_j$), before the end of the main while loop. In all propositions of this section up to this point, $T_0$ has been used to denote the point in the algorithm's execution at the end of the main while loop. In Proposition \ref{th:stable_SPA_S_lecfullworseprec} we change this notation to $T_{end}$. This is done in order for $T_0$, $T_{0.5}$, $T_1$ and $T_{end}$ to denote times that are ordered chronologically.

% THEOREM BEGIN
\begin{prop} 
	\label{th:stable_SPA_S_lecfullworseprec}
	Let $T_{end}$ denote the point in Algorithm \ref{Alg:APPROX_SPA-S_stable}'s execution at the end of the main while loop. Then the following statements are true. 
	\begin{enumerate}

		\item If a lecturer $l_k$ is full, then a  student $s_i$ worse than or equal to $l_k$'s worst assignee(s) cannot subsequently become assigned to a project $p_j$ offered by $l_k$ unless this occurs during the main while loop and $l_k$ is precarious when $s_i$ applies to $p_j$.

		\item If a project $p_j$ offered by $l_k$ is full and non-precarious before $T_{end}$, then a student $s_i$ worse than or equal to $l_k$'s worst ranked assignee(s) in $M(p_j)$ cannot subsequently become assigned to $p_j$ before $T_{end}$.

	\end{enumerate}
	  \end{prop}
 
% THEOREM END

\begin{proof}

We deal with each case separately.

\begin{enumerate}

\item Let $T_0$ be a point of the algorithm's execution, mentioned in the statement of the propostion, where $l_k$ is full. Let $T_1$ be the first point after $T_0$ where a student $s_i$ worse than or equal to $l_k$'s worst assignee(s) applies to $p_j$. As this is the first such point, it is not possible for $M(l_k)$ at $T_1$ to have students of lower rank than existed in $M(l_k)$ at $T_0$.

It is clear that in Algorithm \ref{Alg:APPROX_SPA-S_stable_3bibpBrief}, students assigned to a particular lecturer cannot change, hence we concentrate only on the main while loop of Algorithm \ref{Alg:APPROX_SPA-S_stable}. Since $l_k$ is full at $T_0$, $l_k$ must be full at $T_1$ since a pair may only be added and taken away from the same project, or from different projects offered by the same lecturer. Assume $l_k$ is non-precarious at $T_1$. Either $p_j$ is full or undersubscribed. Suppose $p_j$ is full. Since $l_k$ is non-precarious, $p_j$ is also non-precarious by definition. But then, since $l_k$ does not meta-prefer $s_i$ to a worst assignee in $M(l_k)$, the conditions on Line \ref{Alg:APPROX_SPA-S_stable_elseif2} cannot be satisfied and so $s_i$ cannot become assigned to $p_j$. Now, assume that $p_j$ is undersubscribed. Since $l_k$ is full, non-precarious, and does not meta-prefer $s_i$ to a worst assignee in $M(l_k)$, the conditions on Line \ref{Alg:APPROX_SPA-S_stable_elseif1} are also not satisfied and so $s_i$ cannot become assigned to $p_j$. Since it is not possible for a student $s_i$ worse than or equal to $l_k$'s worst assignee(s) at $T_0$ to become assigned to $p_j$ when this application happens for the first time, it is easy to extend this to any subsequent occurence during the main while loop.

If on the other hand, $l_k$ is precarious at $T_1$, since $l_k$ is also full at $T_1$, it is easy to see that $s_i$ may become assigned to $p_j$ (if either $p_j$ is undersubscribed or $p_j$ is full and precarious).

\item Similar to above, let $T_0$ be a point of the algorithm's execution, mentioned in the statement of the propostion, where $p_j$ is full and non-precarious. Let $T_1$ be the first point after $T_0$ and before $T_{end}$ where a student $s_i$ worse than or equal to $l_k$'s worst assignee(s) in $M(p_j)$ applies to $p_j$. As this is the first such point, it is not possible for $M(p_j)$ at $T_1$ to have students of lower rank than existed in $M(p_j)$ at $T_0$. This result can clearly be extended to apply not only to $T_1$, but also to any point after $T_0$ and before $T_1$.

Since $p_j$ is full and non-precarious at $T_0$, it must also be non-precarious at all future points by Proposition \ref{prop-spa-st-subsequentprecarious}. At $T_1$ either $p_j$ is full or undersubscribed. 

\begin{itemize}
	\item If $p_j$ is full at $T_1$, then since it is also non-precarious and there cannot be lower ranked students in $M(p_j)$ at $T_1$ than there were in $M(p_j)$ at $T_0$, the conditions on Line \ref{Alg:APPROX_SPA-S_stable_elseif2} cannot be satisfied and so $s_i$ cannot become assigned to $p_j$. 

	\item If $p_j$ is undersubscribed at $T_1$, then it first became undersubscribed at some point after $T_0$ and before $T_1$. Let $T_{0.5}$ be the point just prior to $p_j$ becoming undersubscribed. At $T_{0.5}$, $l_k$ must be full since $p_j$ can only become undersubscribed on Line \ref{Alg:APPROX_SPA-S_stable:1}. Let $(s_{i'}, p_j)$ be the pair that is removed from the matching just after $T_{0.5}$. We know $p_j$ is non-precarious after $T_0$ and so pair $(s_{i'}, p_j)$ is removed as a non-precarious pair, by definition. But, since $(s_{i'}, p_j)$ is non-precarious, it can only have been removed just after $T_{0.5}$ (on Line \ref{Alg:APPROX_SPA-S_stable:1}) if $l_k$ is non-precarious and $s_{i'}$ was not only a worst assignee in $M(p_j)$ but also in $M(l_k)$. This means that a worst assignee in $M(l_k)$ at $T_{0.5}$ is of equal rank to a worst assignee in $M(p_j)$ at $T_{0.5}$ and cannot be worse than a worst assignee in $M(p_j)$ at $T_0$ (by the result given at the start of this case). From this we can deduce that $s_i$ is worse than or equal to $l_k$'s worst assignee(s) in $M(l_k)$ at $T_{0.5}$. Thus, by the proof of Case 1 above, since $l_k$ was full and non-precarious at $T_{0.5}$ with $s_i$ worse than or equal to $l_k$'s worst assignee(s) in $M(l_k)$ at $T_{0.5}$, $s_i$ cannot subsequently become assigned to $p_j$ at $T_1$. Since it is not possible for a student $s_i$ worse than or equal to $l_k$'s worst assignee(s) in $M(p_j)$ at $T_0$ to become assigned to $p_j$ when this application happens for the first time before $T_{end}$, it is easy to extend this to any subsequent occurence at any time before $T_{end}$.
\end{itemize}

\end{enumerate}

Therefore each case is proved.
\end{proof}

\subsection{Stability}
\label{spa-st-proofs-sec-stability}

In this section we present several results building up to Theorem \ref{approx_not_destable} which shows that Algorithm {\sf Max-SPA-ST-Approx} always produces a stable matching.

First, in Lemma \ref{approx_end_while_only_3bi}, we show that after the main while loop of Algorithm \ref{Alg:APPROX_SPA-S_stable}, only blocking pairs of type $(3bi)$ can exist relative to $M$.

% THEOREM BEGIN
\begin{lemma} 
	\label{approx_end_while_only_3bi}
		Let $M_1$ denote the matching constructed immediately after the main while loop in Algorithm \ref{Alg:APPROX_SPA-S_stable} has completed and let $T_1$ denote this point in the algorithm's execution. At $T_1$, no blocking pair of type $(3a)$, $(3bii)$ or $(3c)$ can exist relative to $M_1$.
\end{lemma}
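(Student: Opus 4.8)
The plan is to fix a supposed blocking pair $(s_i,p_j)$ of type $(3a)$, $(3bii)$ or $(3c)$ relative to $M_1$ at $T_1$, where $l_k$ offers $p_j$, and to derive a contradiction in each case by tracking the history of $p_j$ on $s_i$'s preference list. First I would establish the \emph{key claim}: $s_i$ must have removed $p_j$ from its list via \textsc{Remove-Pref} at some point before $T_1$. Indeed, condition $(2)$ of a blocking pair says that $s_i$ is unassigned or prefers $p_j$ to $M_1(s_i)$. If $s_i$ is unassigned then, since the main while loop has terminated, $s_i$ must be in phase $3$, and to reach phase $3$ it emptied (hence removed every project of) its list in both phases $1$ and $2$; in particular it removed $p_j$. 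If instead $s_i$ is assigned to $p_{j''}=M_1(s_i)$ with $\mathrm{rank}(s_i,p_j)<\mathrm{rank}(s_i,p_{j''})$, then $p_j$ is meta-preferred to $p_{j''}$, so $s_i$ could not have applied to the favourite $p_{j''}$ while $p_j$ was still on its list, again forcing an earlier removal of $p_j$. Let $T'$ denote the \emph{last} time $s_i$ removes $p_j$.

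Next I would read off, from the three call-sites of \textsc{Remove-Pref} in Algorithm \ref{Alg:APPROX_SPA-S_stable}, what must hold at $T'$. The removal is either a rejection of $s_i$'s application to $p_j$ (the final \emph{else} branch) or a non-precarious displacement of $s_i$ from $p_j$. Inspecting the guards of the two \emph{elseif} branches, in every case one of the following holds at $T'$: (I) $l_k$ is full and non-precarious and $s_i$ is worse than or equal to a worst assignee of $l_k$; or (II) $p_j$ is full and non-precarious and $s_i$ is worse than or equal to a worst assignee of $l_k$ in $M(p_j)$. In the displacement sub-cases $s_i$ \emph{is} such a worst assignee, which is strictly stronger.

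I would then propagate these facts forward to $T_1$ using the persistence results already proved: a full lecturer stays full; a project that is not fully available can never again become fully available (Proposition \ref{th:stable_SPA_S_propfullyavailable}); a full non-precarious lecturer or project stays non-precarious (Proposition \ref{prop-spa-st-subsequentprecarious}); and, crucially, once $l_k$ (respectively $p_j$) is full and non-precarious, the worst assignee of $l_k$ (respectively of $l_k$ in $M(p_j)$) can only improve in rank (Proposition \ref{th:stable_SPA_S_lecfullworseprec}). For type $(3a)$ this is immediate: $p_j$ would need to be fully available at $T_1$, but (I) makes $l_k$, and (II) makes $p_j$, not fully available at $T'$ and hence forever after. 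For type $(3c)$, $p_j$ is full at $T_1$ and $l_k$ strictly prefers $s_i$ to the worst student of $M_1(p_j)$; but the monotonicity above forces the worst assignee in $M(p_j)$ (case II), or the worst assignee of $l_k$ in $M(l_k)\supseteq M(p_j)$ (case I), to remain at least as good as $s_i$ at $T_1$, contradicting the strict preference. Type $(3bii)$ is handled analogously, with $M(l_k)$ in place of $M(p_j)$.

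The main obstacle I anticipate is the single awkward combination: type $(3bii)$ together with sub-case (II), where at $T'$ only $p_j$ (not necessarily $l_k$) is full, yet at $T_1$ I must bound the worst assignee of the now-full $l_k$ while $p_j$ itself has become undersubscribed. Here I would argue separately about the moment $T''$ at which $p_j$ passes from full to undersubscribed: since $p_j$ is non-precarious throughout by Proposition \ref{prop-spa-st-subsequentprecarious}, this can only occur inside the ``lecturer-full'' branch, so the student then removed from $p_j$ is a \emph{globally} worst assignee of the (full) $l_k$ and, lying in $M(p_j)$, is at least as good as $s_i$ by case (II); monotonicity (Proposition \ref{th:stable_SPA_S_lecfullworseprec}) from $T''$ onward then keeps the worst assignee of $l_k$ at least as good as $s_i$ at $T_1$, again contradicting the block. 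A secondary point needing care throughout is the phase-$1$-over-phase-$2$ tie-breaking hidden inside ``worst assignee'' and ``worse than or equal to''; since Proposition \ref{th:stable_SPA_S_lecfullworseprec} is stated in exactly that form, invoking it rather than bare ranks keeps the bookkeeping honest.
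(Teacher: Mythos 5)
Your proposal is correct and follows essentially the same route as the paper's proof: derive that $s_i$ must have removed $p_j$ via \textsc{Remove-Pref}, classify the removal call-sites into the lecturer-full and project-full cases, propagate via the persistence propositions (Propositions \ref{th:stable_SPA_S_propfullyavailable}, \ref{prop-spa-st-subsequentprecarious}, \ref{th:stable_SPA_S_lecfullworseprec}), and handle the delicate full-to-undersubscribed transition of $p_j$ by isolating that moment (your $T''$ is the paper's $T_{0.5}$) and transferring the worst-assignee bound to the full lecturer. The only deviation --- anchoring the analysis at the \emph{last} removal of $p_j$ rather than the paper's \emph{first} removal --- is immaterial, since the same case analysis and monotonicity arguments apply at either point.
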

% THEOREM END

\begin{proof} 
 Assume for contradiction that $(s_{b_1},p_{b_2})$ is a blocking pair of $M_1$ of type $(3a)$, $(3bii)$ or $(3c)$. Let $l_{b_3}$ be the lecturer who offers $p_{b_2}$. 

	It must be the case that in $M_1$, $s_{b_1}$ is either assigned to a project of lower rank than $p_{b_2}$ or is assigned to no project. Therefore, $s_{b_1}$ must have removed $p_{b_2}$ from their preference list during the main while loop of Algorithm \ref{Alg:APPROX_SPA-S_stable}. Let $M_0$ denote the matching constructed immediately before $p_{b_2}$ was first removed from $s_{b_1}$'s list and let $T_0$ denote this point in the algorithm's execution. We know that $p_{b_2}$ cannot be fully available at $T_0$ (otherwise $(s_{b_1},p_{b_2})$ would have been added to $M_0$) and cannot subsequently become fully available by Proposition \ref{th:stable_SPA_S_propfullyavailable}. There are three places where $p_{b_2}$ could be removed from $s_{b_1}$'s list, namely Lines \ref{Alg:APPROX_SPA-S_stable:rempref1}, \ref{Alg:APPROX_SPA-S_stable:rempref2} and \ref{Alg:APPROX_SPA-S_stable:rempref3}. We look at each type of blocking pair in turn. 
	
\begin{itemize}
	\item $(3a)$ - Assume we have a blocking pair of type $(3a)$ in $M_1$. Then, $p_{b_2}$ and $l_{b_3}$ are both undersubscribed (and hence $p_{b_2}$ is fully available) in $M_1$. But this contradicts the above statement that $p_{b_2}$ cannot be fully available after $T_0$.

	\item $(3bii)$ \& $(3c)$ - Assume we have a blocking pair of type $(3bii)$ or $(3c)$ in $M_1$. At $T_0$, $p_{b_2}$ is not fully available and so either $p_{b_2}$ is undersubscribed with $l_{b_3}$ being full, or $p_{b_2}$ is full. 
	
	\begin{itemize}
	
	\item If $p_{b_2}$ was undersubscribed and $l_{b_3}$ was full at $T_0$ then $l_{b_3}$ cannot have been precarious (since $s_{b_1}$ is about to remove $p_{b_2}$ from their list) and by Proposition \ref{prop-spa-st-subsequentprecarious}, cannot subsequently become precarious. By Proposition \ref{th:stable_SPA_S_lecfullworseprec}, $l_{b_3}$ cannot subsequently accept a student ranked lower than a worst student in $M_0(l_{b_3})$. Also either $s_{b_1}$ is a worst assignee in $M_0(l_{b_3})$ (Line \ref{Alg:APPROX_SPA-S_stable:rempref1}), or $l_{b_3}$ ranks $s_{b_1}$ at least as badly as a worst student in $M_0(l_{b_3})$ (Line \ref{Alg:APPROX_SPA-S_stable:rempref3}). Lecturer $l_{b_3}$ must remain full for the rest of the algorithm since if a student is removed from a project offered by $l_{b_3}$ then they are immediately replaced. Therefore $l_{b_3}$ must be full in $M_1$, and since $l_{b_3}$ cannot have accepted a worse ranked student than $s_{b_1}$ after $T_0$, $(s_{b_1},p_{b_2})$ cannot be a blocking pair of $M_1$ of type $(3bii)$ or $(3c)$. 
	
\item Instead assume at $T_0$ that $p_{b_2}$ is full in $M_0$. As $s_{b_1}$ is about to remove $p_{b_2}$, we know $p_{b_2}$ cannot be precarious, and by Proposition \ref{prop-spa-st-subsequentprecarious}, cannot subsequently become precarious. Either $s_{b_1}$ is a worst assignee in $M_0(p_{b_2})$ (Line \ref{Alg:APPROX_SPA-S_stable:rempref2}), or $l_{b_3}$ ranks $s_{b_1}$ at least as badly as a worst student in $M_0(p_{b_2})$ (Line \ref{Alg:APPROX_SPA-S_stable:rempref3}).

As $p_{b_2}$ is full and non-precarious, by Proposition \ref{th:stable_SPA_S_lecfullworseprec}, $p_j$ cannot accept a worse student than already exists in $M(p_{b_2})$ at $T_0$. If $p_{b_2}$ is full at $T_1$, then clearly $(s_{b_1},p_{b_2})$ cannot block $M_1$. So assume $p_{b_2}$ becomes undersubscribed at some point between $T_0$ and $T_1$ for the first time, say at $T_{0.5}$. Since $p_{b_2}$ is non-precarious, all pairs in $M$ associated with $p_{b_2}$ are also non-precarious. Therefore $p_{b_2}$ can only become undersubscribed at $T_{0.5}$ if $l_{b_3}$ is full and there is a student $s_i$ who assigns to another project $p_j$ that $l_{b_3}$ offers, where $s_i$ is meta-preferred to a worst student in $M_0(l_{b_3})$.  This worst student must also be a worst student in $M_0(p_{b_2})$ since we are removing from $M_0$ a pair associated with $p_{b_2}$. But then $s_{b_1}$ must be ranked at least as badly as a worst student in $M_0(l_{b_3})$. Using similar reasoning to the previous case, $l_{b_3}$ must be full in $M_1$, non-precarious, and since $l_{b_3}$ cannot have accepted a worse ranked student than $s_{b_1}$ after $T_{0.5}$, $(s_{b_1},p_{b_2})$ cannot be a blocking pair of $M_1$ of type $(3bii)$ or $(3c)$. 

	\end{itemize}

\end{itemize} 
Hence it is not possible for $(s_{b_1},p_{b_2})$ to be a blocking pair of $M$ of type $(3a)$, $(3bii)$ or $(3c)$ after the main while loop.
\end{proof}

Let $p_j$ be a project, where $l_k$ offers $p_j$. Propositions \ref{th:stable_SPA_S_3biworstinmlk} and \ref{th:stable_SPA_S_3bicreatedworstinmlk} show that from the end of the main while loop to the end of the execution of Algorithm \ref{Alg:APPROX_SPA-S_stable}, if $(s_i, p_j)$ is a blocking pair of type $(3bi)$, then $s_i$ must be one of the worst students in $M(l_k)$.

% THEOREM BEGIN
\begin{prop} 
	\label{th:stable_SPA_S_3biworstinmlk}
	Let $M_1$ be the matching constructed immediately at the end of the main while loop of Algorithm \ref{Alg:APPROX_SPA-S_stable}'s execution, and let $T_1$ denote this point in the algorithm's execution. At $T_1$, for each blocking pair $(s_i,p_{j'})$ of type $(3bi)$, $s_i$ must be one of the worst assignees of $M_1(l_k)$, where $l_k$ offers $p_{j'}$. 
	\end{prop}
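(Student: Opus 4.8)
The plan is to argue by contradiction: suppose $s_i$ is \emph{not} a worst assignee of $M_1(l_k)$, so that $l_k$ strictly meta-prefers $s_i$ to some worst assignee of $M_1(l_k)$. The starting point is to record, via Propositions \ref{th:stable_SPA_S_bp_lk_full} and \ref{th:stable_SPA_S_bp_lk_not_prec}, that at $T_1$ the lecturer $l_k$ is full and non-precarious, and to note (Proposition \ref{prop-spa-st-subsequentprecarious}) that $l_k$ stays non-precarious from the first moment it is full and non-precarious onwards. Write $p_j = M_1(s_i)$, which is a project of $l_k$ since the pair has type $(3bi)$, and recall that $s_i$ prefers $p_{j'}$ to $p_j$.

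First I would locate where $p_{j'}$ left $s_i$'s list. Since $s_i$'s assignment $p_j$ was accepted as a favourite at the moment $s_i$ last applied to it during the main while loop, and $p_{j'}$ has strictly smaller rank than $p_j$ (hence is meta-preferred to it), $p_{j'}$ cannot have been on $s_i$'s list at that application; so $p_{j'}$ was deleted at some earlier time $T_0$, through one of Lines \ref{Alg:APPROX_SPA-S_stable:rempref1}, \ref{Alg:APPROX_SPA-S_stable:rempref2} or \ref{Alg:APPROX_SPA-S_stable:rempref3}. Examining these three sites, each such deletion is triggered only when $l_k$ (at the sites where $p_{j'}$ is undersubscribed) or $p_{j'}$ itself (at the sites where $p_{j'}$ is full) is full and non-precarious, and when $l_k$ does not meta-prefer $s_i$ to the relevant worst assignee; thus at $T_0$ the student $s_i$ is ranked no better than a worst assignee of $M_0(l_k)$, respectively of $M_0(p_{j'})$.

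The second step is a monotonicity argument. Proposition \ref{th:stable_SPA_S_lecfullworseprec} shows that once $l_k$ (resp.\ $p_{j'}$) is full and non-precarious it never admits a student that is worse than or equal to its current worst assignee, so the rank of its worst assignee is non-increasing from $T_0$ to $T_1$. Combined with $s_i \in M_1(l_k)$ of rank $\text{rank}(l_k,s_i)$, this pins the worst rank of $M_1(l_k)$ to exactly $\text{rank}(l_k,s_i)$, placing $s_i$ at the worst rank at $T_1$. For the two sites where $p_{j'}$ is full at $T_0$ I would additionally use that $p_{j'}$ is undersubscribed at $T_1$, so it passed from full to undersubscribed; by Proposition \ref{th:stable_SPA_S_propfullyavailable} this can happen only while $l_k$ is full, and the transition removes a worst assignee of $M(l_k)$ lying in $p_{j'}$, which lets me transfer the bound from $M(p_{j'})$ to $M(l_k)$.

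The delicate point, which I expect to be the main obstacle, is the phase tie-break built into the definition of worst assignee: a phase-$2$ student is meta-preferred to a same-rank phase-$1$ student, so $s_i$ could lie at the worst rank yet fail to be a worst assignee if a phase-$1$ student of equal rank also belonged to $M_1(l_k)$. To exclude this I would track $s_i$'s phase. If $s_i$ is in phase $1$ throughout, then being at the worst rank already makes it a worst assignee, since the tie-break favours phase-$1$ students. If $s_i$ is in phase $2$ at $T_1$, then (because a phase-$1$ list is fully reinstated on entering phase $2$, whereas a phase-$2$ list is never reinstated) the deletion at $T_0$ is itself a phase-$2$ event, at which $s_i$ was rejected by, or displaced from, $l_k$ with phase-$2$ priority. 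Since $l_k$ is full and non-precarious there and nonetheless did not treat the phase-$2$ student $s_i$ as strictly better than its worst assignee, no phase-$1$ student of rank $\text{rank}(l_k,s_i)$ can have been present at $T_0$ (such a student would have been meta-worse than $s_i$ and hence preferred for removal, or would have blocked $s_i$'s own admission), and the monotonicity above prevents any such student from being admitted afterwards. Hence no phase-$1$ assignee of rank $\text{rank}(l_k,s_i)$ lies in $M_1(l_k)$, so $s_i$ is genuinely a worst assignee of $M_1(l_k)$, contradicting the assumption and completing the proof.
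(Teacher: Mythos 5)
Your proposal is correct and follows essentially the same route as the paper's proof: you locate the (last) deletion time $T_0$ of $p_{j'}$ from $s_i$'s list, split on whether $p_{j'}$ was full or undersubscribed there, invoke Propositions \ref{prop-spa-st-subsequentprecarious} and \ref{th:stable_SPA_S_lecfullworseprec} for the monotonicity of the worst assignee, and transfer the bound from $M(p_{j'})$ to $M(l_k)$ at the first moment $p_{j'}$ passes from full to undersubscribed (the paper's $T_{0.5}$), your explicit phase tie-break analysis simply spelling out what the paper folds into its meta-preference-aware notions of ``worse'' and ``worst assignee''. The one assertion you leave unjustified --- that the removal at that transition obeys the worst-assignee rule --- requires the paper's short check that $l_k$ cannot be precarious at $T_{0.5}$: if it were, the pair removed there, which involves $p_{j'}$, would have been selected as a precarious pair at Line \ref{Alg:APPROX_SPA-S_stable:pjidentified_asprecarious}, contradicting the fact that $p_{j'}$ is non-precarious from $T_0$ onwards.
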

% THEOREM END

\begin{proof}
Since $(s_i,p_{j'})$ is a blocking pair of type $(3bi)$, we know that $s_i$ is assigned to another project, say $p_j$, of $l_k$'s in $M_1$, where $s_i$ prefers $p_{j'}$ to $p_j$. 

During the main while loop's execution, student $s_i$ must have removed $p_{j'}$ from their preference list in order to eventually assign to $p_j$. We note that although $s_i$ may be in either phase, $s_i$ must have removed $p_{j'}$ from their preference list in the same phase that $s_i$ assigned to $p_j$ (also the same phase $s_i$ is in at $T_1$). For the remainder of this proof, we discuss removal of $p_{j'}$ from $s_i$'s preference list in the context of this phase. Student $s_i$'s removal of $p_{j'}$ could only happen if $p_{j'}$ was non-precarious at this point. By Proposition \ref{prop-spa-st-subsequentprecarious}, since $p_{j'}$ or $l_k$ are full, $p_{j'}$ cannot subsequently become precarious. Let the matching constructed immediately before this removal be denoted by $M_0$ and let $T_0$ denote this point in the algorithm's execution. Project $p_{j'}$ was either full or undersubscribed at $T_0$. We show in the former case that, $s_i$ is one of the worst students in $M_1(l_k)$, and that the latter case leads to a contradiction. 

\begin{itemize}
	\item Suppose $p_{j'}$ was full at $T_0$. Then as $p_{j'}$ is non-precarious, $p_{j'}$ cannot subsequently be assigned a student worse than the worst assignee in $M_0(p_{j'})$ up until $T_1$, by Proposition \ref{th:stable_SPA_S_lecfullworseprec}. Since $s_i$ removed $p_{j'}$ from their list while $p_{j'}$ was full we know that either $s_i$ is a worst assignee in $M_0(p_{j'})$ (Line \ref{Alg:APPROX_SPA-S_stable:rempref2}), or $l_k$ ranks $s_i$ at least as badly as a worst student in $M_0(p_{j'})$ (Line \ref{Alg:APPROX_SPA-S_stable:rempref3}). Between $T_0$ and $T_1$, $s_i$ assigns to the project $p_j$, at a worse rank than $p_{j'}$ in $s_i$'s list, where $p_j$ is also offered by $l_k$. 

Now, we know that $p_{j'}$ becomes undersubscribed by $T_1$ and so it must be the case that there is a point $T_{0.5}$ between $T_0$ and $T_1$, such that another student $s_{i'}$ assigns to a project (not $p_{j'}$) of $l_k$'s which removes pair $(s_{i''},p_{j'})$ from the matching constructed just before that removal, denoted by $M_{0.5}$. Let $T_{0.5}$ be the first point at which $p_{j'}$ becomes undersubscribed after $T_0$. Lecturer $l_k$ must be full at this point since the addition of $s_{i'}$ removes a student (namely $s_{i''}$) from a different project (namely $p_{j'}$). Also, lecturer $l_k$ cannot have been precarious, otherwise $p_{j'}$ would have been identified as a precarious project at Line \ref{Alg:APPROX_SPA-S_stable:pjidentified_asprecarious}, but we know $p_{j'}$ cannot have been precarious after $T_0$. So $s_{i''}$ must have been a worst assignee in $M_{0.5}(l_k)$ and therefore $M_{0.5}(p_{j'})$. From the beginning of this bullet point, we know that the worst student in $M_{0.5}(p_{j'})$ can be no worse than the worst student in $M_0(p_{j'})$. Thus since student $s_i$ cannot have changed phase from point $T_0$, $s_i$ must be either a worst student in $M_{0.5}(l_k)$, or be as bad as a worst student in $M_{0.5}(l_k)$. By Propositions \ref{prop-spa-st-subsequentprecarious} and \ref{th:stable_SPA_S_lecfullworseprec}, since $l_k$ is full and non-precarious at $T_{0.5}$, $l_k$ cannot be assigned a worse student than $s_i$ between $T_{0.5}$ to $T_1$, and so as $s_i$ is assigned to $l_k$ at $T_1$ and $s_i$ remains in the same phase until $T_1$, then they must be one of the worst students in $M_1(l_k)$.

\item Suppose then that $p_{j'}$ is undersubscribed at $T_0$. Since a preference element is being removed, $l_k$ must have been full, non-precarious and either $s_i$ is a worst assignee in $M_0(l_k)$ (Line \ref{Alg:APPROX_SPA-S_stable:rempref1}), or $l_k$ ranks $s_i$ at least as badly as a worst student in $M_0(l_k)$ (Line \ref{Alg:APPROX_SPA-S_stable:rempref3}). But then by Propositions \ref{prop-spa-st-subsequentprecarious} and \ref{th:stable_SPA_S_lecfullworseprec}, $l_k$ must have remained non-precarious until $T_1$ and been unable to assign to a student worse than or equal to $s_i$, including $s_i$ in the same phase, a contradiction.
\end{itemize}

Therefore, for any blocking pair $(s_i,p_{j'})$ of type $(3bi)$ of $M_1$, $s_i$ must be one of the worst students in $M_1(l_k)$.
\end{proof}

% THEOREM BEGIN
\begin{prop} 
\label{th:stable_SPA_S_3bicreatedworstinmlk} 
In Algorithm \ref{Alg:APPROX_SPA-S_stable_3bibpBrief}, if a blocking pair $(s_{i'},p_j)$ of type $(3bi)$ is created (in the process of removing a different blocking pair of type $(3bi)$) then $s_{i'}$ must be one of the worst students in $M_2(l_k)$, where $l_k$ is the lecturer who offers $p_j$ and $M_2$ is the matching constructed immediately after this removal occurs.

	\end{prop}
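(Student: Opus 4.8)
The hypothesis pins down the situation tightly, so I would start by unpacking it. Write the removed type-$(3bi)$ pair as $(s_a,p_{a'})$; its removal promotes $s_a$ from its current project to $p_{a'}$, both offered by $l_k$. This step changes only $s_a$'s assignment, and for $(s_{i'},p_j)$ to be a type-$(3bi)$ pair we need $p_j$ undersubscribed, $l_k$ full, $s_{i'}\in M(l_k)$, and $s_{i'}$ preferring $p_j$ to its project; of these only ``$p_j$ undersubscribed'' can be switched on by a single reassignment. Hence the created pair must have $p_j$ equal to the project $s_a$ vacates, i.e.\ $s_a$ is promoted out of $p_j$. I would also record two structural facts used throughout: Algorithm~\ref{Alg:APPROX_SPA-S_stable_3bibpBrief} never changes the set of students assigned to any lecturer (each promotion moves a student between two projects of the same lecturer, the invariance used in the proof of Proposition~\ref{th:stable_SPA_S_propfullyavailable}), so $M_2(l_k)=M_1(l_k)$ with unchanged ranks; and each promotion strictly improves the promoted student's project, since a $(3bi)$ pair $(s,p)$ has $s$ preferring $p$ to $M(s)$. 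Consequently it suffices to prove that $s_{i'}$ is a worst-ranked assignee of $M_1(l_k)$.

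I would then trace $s_{i'}$'s history in the main loop. Because $(s_{i'},p_j)$ blocks $M_2$, we have $s_{i'}\in M_2(l_k)=M_1(l_k)$ and $s_{i'}$ prefers $p_j$ to its $M_2$-project; by promotion-monotonicity the latter is no worse than its $T_1$-project, so $s_{i'}$ prefers $p_j$ to its $T_1$-project and therefore deleted $p_j$ during the main while loop, say at time $T_0$ with matching $M_0$. Inspecting the three deletion sites (Lines~\ref{Alg:APPROX_SPA-S_stable:rempref1}, \ref{Alg:APPROX_SPA-S_stable:rempref2} and~\ref{Alg:APPROX_SPA-S_stable:rempref3}), and using that deletions happen only at non-precarious projects or lecturers, which stay non-precarious by Proposition~\ref{prop-spa-st-subsequentprecarious}, at $T_0$ either $p_j$ was full with $s_{i'}$ no better than a worst student of $M_0(p_j)$, or $l_k$ was full with $s_{i'}$ no better than a worst student of $M_0(l_k)$.

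The final step is a case split on the status of $p_j$ at $T_1$. As $|M(l_k)|$ is invariant under Algorithm~\ref{Alg:APPROX_SPA-S_stable_3bibpBrief}, $l_k$ is already full at $T_1$; so if $p_j$ is undersubscribed at $T_1$ then $(s_{i'},p_j)$ is itself a $(3bi)$ blocking pair of $M_1$, and Proposition~\ref{th:stable_SPA_S_3biworstinmlk} applies verbatim to give that $s_{i'}$ is a worst assignee of $M_1(l_k)$. If instead $p_j$ is full at $T_1$, then it stays full throughout the promotion phase up to the removal of $s_a$ (promotions target only undersubscribed projects, so a full $p_j$ can gain no student), whence the occupants of $p_j$ are unchanged between $T_1$ and that moment. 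Here I would use that $s_a$, the vacating student, is a worst assignee of $l_k$---by Proposition~\ref{th:stable_SPA_S_3biworstinmlk} if $(s_a,p_{a'})$ persists from $T_1$, or inductively by the present proposition if $(s_a,p_{a'})$ was itself created earlier---hence a worst student of $M_1(p_j)$. Since the worst student of $M_1(p_j)$ is at least as good as that of $M_0(p_j)$ (monotone improvement of the worst assignee of a full, non-precarious project, Propositions~\ref{prop-spa-st-subsequentprecarious} and~\ref{th:stable_SPA_S_lecfullworseprec}), while $s_{i'}$ is no better than that of $M_0(p_j)$, the student $s_a$ is no worse than $s_{i'}$; combined with $s_a$ being a worst assignee of $M_2(l_k)$ and $s_{i'}\in M_2(l_k)$, this forces $s_{i'}$ and $s_a$ to share the worst rank, so $s_{i'}$ too is a worst assignee of $M_2(l_k)=M_1(l_k)$.

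The main obstacle is exactly this full-at-$T_1$ case. The bound extracted at the deletion time $T_0$ constrains $s_{i'}$ only against the occupants of $p_j$, whereas the conclusion is about all of $M_1(l_k)$, and Proposition~\ref{th:stable_SPA_S_3biworstinmlk} cannot be quoted because $(s_{i'},p_j)$ is not yet a blocking pair at $T_1$. The device that closes the gap is to let the promoted student $s_a$ play the role of an explicit worst-ranked witness inside $M_2(l_k)$: the inductive fact that $s_a$ is a worst assignee, together with the freezing of $p_j$'s occupants while it remains full during the promotion phase, are what allow the rank of $s_{i'}$ to be squeezed up to the worst. A final subtlety to discharge is the phase-$1$-over-phase-$2$ tiebreak in the definition of ``worst assignee'', which must be tracked so that the argument delivers $s_{i'}$ as genuinely one of the worst rather than merely tying in numerical rank.
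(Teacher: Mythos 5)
Your architecture is essentially the paper's: both proofs work by induction over promotions (the paper phrases it as ``the first promotion to reveal'' a bad pair), both use that Algorithm~\ref{Alg:APPROX_SPA-S_stable_3bibpBrief} never changes which students are assigned to which lecturer, both correctly identify the created pair's project as the one the promoted student vacates, and both hinge on the vacating student (your $s_a$, the paper's $s_i$) being a worst assignee of $l_k$ via Proposition~\ref{th:stable_SPA_S_3biworstinmlk} together with the inductive hypothesis. You diverge only in the finish: the paper transports the comparison back to the end-of-main-loop matching $M_0$ and contradicts Lemma~\ref{approx_end_while_only_3bi} by arguing that $(s_{i'},p_j)$ would have blocked $M_0$ as type $(3c)$ or $(3bii)$, whereas you re-open the Remove-Pref analysis at the moment $s_{i'}$ deleted $p_j$ and squeeze $s_{i'}$'s rank between the deletion-time worst of $M(p_j)$ (or of $M(l_k)$) and $s_a$. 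Your squeeze is more self-contained, at the cost of re-deriving machinery that Lemma~\ref{approx_end_while_only_3bi} and Proposition~\ref{th:stable_SPA_S_3biworstinmlk} already package.

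Two steps as written do not go through. First, ``it stays full throughout the promotion phase up to the removal of $s_a$'' is not justified by your parenthetical: a full $p_j$ cannot \emph{gain} a student, but it can \emph{lose} one by an out-promotion, become undersubscribed, and later regain one via an in-promotion, so neither permanence of fullness nor frozen occupancy follows. Moreover Proposition~\ref{th:stable_SPA_S_lecfullworseprec}, which you cite for the monotone improvement of the worst of $M(p_j)$, is stated only for the main while loop (``before $T_0$''). The tool you need is Proposition~\ref{th:stable_SPA_S_pfullthennoworseafterwhile}, which extends ``no worse-or-equal entrant to a once full, non-precarious $p_j$'' through the rest of the algorithm and renders your freezing claim unnecessary. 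Second, when $p_j$ is undersubscribed at $T_1$, Proposition~\ref{th:stable_SPA_S_3biworstinmlk} does \emph{not} apply ``verbatim'': it concerns blocking pairs of the end-of-main-loop matching, while $(s_{i'},p_j)$ here blocks an intermediate matching of Algorithm~\ref{Alg:APPROX_SPA-S_stable_3bibpBrief}; you must pass to the earliest matching at which it blocks and invoke either that proposition (if it already blocked at the end of the main loop) or your own inductive hypothesis (if it was created by an earlier promotion) --- the induction you already deploy for $s_a$ covers this, but you should say so. Your final flag about the phase-1-over-phase-2 tiebreak is well taken: your squeeze delivers only worst \emph{rank}, and the tie-broken notion is genuinely used downstream (e.g.\ in Lemma~\ref{th:stable_SPA_S_lemma_no_alt_size_3}); note, though, that the paper's own step ``$l_k$ must prefer $s_{i'}$ to $s_i$'' elides exactly the same issue when $s_{i'}$ ties with $s_i$ in rank and differs only in phase, so flagging it without discharging it leaves you no worse off than the published argument.
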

% THEOREM END

\begin{proof}
	Let $M_0$ denote the matching at the end of the main while loop of Algorithm \ref{Alg:APPROX_SPA-S_stable}, and let $T_0$ denote this point in the algorithm's execution. Assume that during Algorithm \ref{Alg:APPROX_SPA-S_stable_3bibpBrief}'s execution, the first promotion to reveal a blocking pair $(s_{i'},p_j)$ of type $(3bi)$ occurs such that $s_{i'}$ is not a worst student in $M_1(l_k)$. Let $M_1$ be the matching constructed just before this promotion occurs. Suppose the promotion involves student $s_i$ moving from a less preferred $p_j$ to a more preferred project $p_{j'}$. It is clear that in the removal of blocking pairs of type $(3bi)$ there is no change in regards to which students are assigned to projects of $l_k$, therefore the same students are assigned to each lecturer in $M_0$, $M_1$ and $M_2$. By Proposition \ref{th:stable_SPA_S_3biworstinmlk}, $s_i$ is and remains one of the worst assignees of $l_k$ in $M_0$, $M_1$ and $M_2$. Since $s_{i'}$ is not a worst assignee in $M_2(l_k)$, $l_k$ must prefer $s_{i'}$ to $s_i$. But this would mean $(s_{i'},p_j)$ was a blocking pair of type $(3c)$ in $M_0$ if $p_j$ were full or $(3bii)$ in $M_0$ if $p_j$ were undersubscribed, both a contradiction to Lemma \ref{approx_end_while_only_3bi}.
\end{proof}

In Proposition \ref{th:stable_SPA_S_finalpartonlybp3bi} we show that only blocking pairs of type $(3bi)$ may be created in $M$ after the main while loop of Algorithm \ref{Alg:APPROX_SPA-S_stable}.

% THEOREM BEGIN
\begin{prop} 
	\label{th:stable_SPA_S_finalpartonlybp3bi}
	It is not possible in Algorithm \ref{Alg:APPROX_SPA-S_stable_3bibpBrief}'s execution, for a blocking pair of any type other than $(3bi)$ to be created.
	\end{prop}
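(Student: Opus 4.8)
The plan is to establish, by induction on the successive promotions performed by Algorithm \ref{Alg:APPROX_SPA-S_stable_3bibpBrief}, the stronger invariant that \emph{throughout} its execution no blocking pair of type $(3a)$, $(3bii)$ or $(3c)$ is present; the proposition follows immediately, since then no such pair can ever come into being. Lemma \ref{approx_end_while_only_3bi} gives the base case, as the matching at the start of Algorithm \ref{Alg:APPROX_SPA-S_stable_3bibpBrief} (the end of the main while loop) admits no such pair. For the inductive step I would fix a single promotion moving a student $s_i$ from $p_j$ up to a more-preferred project $p_{j'}$, both offered by a lecturer $l_k$, assume that no $(3a)$, $(3bii)$ or $(3c)$ blocking pair existed immediately beforehand, suppose for contradiction that some pair $(s_{b_1},p_{b_2})$ of one of these three types is created, and derive a contradiction in each case.

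Two structural facts drive the argument and I would record them first. A promotion changes only: the assignment of $s_i$ (which moves upward, so the set of projects $s_i$ desires can only shrink); the occupancy of $p_j$ (which can only fall, possibly full-to-undersubscribed); and the occupancy of $p_{j'}$ (which can only rise, possibly undersubscribed-to-full). Crucially the \emph{set} $M(l_k)$ is unchanged, so the worst rank among $l_k$'s assignees is unchanged, and by Proposition \ref{th:stable_SPA_S_bp_promalg_notchange} no project changes its fully available status. Secondly, by Propositions \ref{th:stable_SPA_S_3biworstinmlk} and \ref{th:stable_SPA_S_3bicreatedworstinmlk} the promoted student $s_i$ is always one of the worst-ranked assignees of $l_k$; since $s_i\in M(p_j)$ and $s_i$ attains the worst rank over all of $M(l_k)$, it also attains the worst rank in $M(p_j)$, and after the promotion it attains the worst rank in $M(p_{j'})$. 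In particular $\text{rank}(l_k,s_i)$ equals the worst rank in $M(l_k)$, in $M(p_j)$, and (afterwards) in $M(p_{j'})$.

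With these in place the three types are dispatched as follows. For $(3a)$, the pair requires $p_{b_2}$ fully available; since fully available status is invariant and a blocking pair's desirability condition holds before whenever it holds after (unchanged students keep the same desires, and $s_i$'s desires only shrink), the pair would already have been a $(3a)$ pair, contradicting the hypothesis. For $(3bii)$ and $(3c)$ the only sub-conditions that can become newly true are ``$p_{b_2}$ undersubscribed'' (forcing $p_{b_2}=p_j$) and ``$p_{b_2}$ full'' (forcing $p_{b_2}=p_{j'}$). In the $(3bii)$ case one has $p_{b_2}=p_j$ with $l_k$ full, $s_{b_1}\notin M(l_k)$, and $l_k$ preferring $s_{b_1}$ to its worst assignee; since that worst rank equals $\text{rank}(l_k,s_i)$, which is also the worst rank in $M(p_j)$, the pair $(s_{b_1},p_j)$ was already a $(3c)$ blocking pair before the promotion, a contradiction.

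The $(3c)$ case reduces to $p_{b_2}=p_{j'}$ becoming newly full, whose worst assignee is then $s_i$, so the $(3c)$ condition forces $l_k$ to strictly prefer $s_{b_1}$ to $s_i$; I expect this to be the main obstacle. If $s_{b_1}\notin M(l_k)$ then $(s_{b_1},p_{j'})$ was a $(3bii)$ blocking pair before the promotion, contradicting the hypothesis. The delicate possibility is $s_{b_1}\in M(l_k)$: then $(s_{b_1},p_{j'})$ was itself a type-$(3bi)$ pair beforehand (a lingering $(3bi)$ pair threatening to ``masquerade'' as a new $(3c)$ pair once $p_{j'}$ fills up), and this is exactly what the worst-assignee propositions rule out -- by Propositions \ref{th:stable_SPA_S_3biworstinmlk} and \ref{th:stable_SPA_S_3bicreatedworstinmlk}, $s_{b_1}$ would also be a worst-ranked assignee of $l_k$ and hence share the worst rank with $s_i$, contradicting the strict preference of $l_k$ for $s_{b_1}$ over $s_i$ that the created $(3c)$ pair demands. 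Seeing that this scenario is impossible is the crux; the residual bookkeeping (ruling out $p_{b_2}=p_j$ staying full, which cannot happen since $p_j$ only loses occupancy) is routine.
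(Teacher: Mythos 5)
Your proof is correct, and at the level of strategy it matches the paper's: both arguments fix the first promotion that could create a non-$(3bi)$ blocking pair (your induction is just the contrapositive framing of the paper's minimal-counterexample setup), both use Lemma \ref{approx_end_while_only_3bi} for the base case, and both lean on Propositions \ref{th:stable_SPA_S_3biworstinmlk} and \ref{th:stable_SPA_S_3bicreatedworstinmlk} to place the promoted student $s_i$ at the worst rank of $M(l_k)$. The substantive difference is the scope of the case analysis. The paper's written proof examines candidate created pairs $(s_{i'},p_j)$ only at the \emph{vacated} project $p_j$: $(3a)$ dies because $l_k$ stays full, $(3bii)$ is pushed back to a $(3c)$ pair in $M_1$ via the worst-rank fact, and $(3c)$ dies because $p_j$ has just become undersubscribed. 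You additionally analyse the \emph{filled} project $p_{j'}$, which the paper's text never explicitly treats even though $p_{j'}$'s occupancy and the worst rank in $M(p_{j'})$ are the other quantities the promotion changes. Your identification of the dangerous sub-case there --- a lingering $(3bi)$ pair $(s_{b_1},p_{j'})$ with $s_{b_1}\in M(l_k)$ that would ``become'' a $(3c)$ pair once $p_{j'}$ fills --- and its resolution (such an $s_{b_1}$ is itself a worst-ranked assignee by the same two propositions, since $M(l_k)$ is invariant as a set during Algorithm \ref{Alg:APPROX_SPA-S_stable_3bibpBrief}, so $l_k$ cannot strictly prefer $s_{b_1}$ to the worst of $M_2(p_{j'})$, whose rank is $s_i$'s) is exactly the argument needed to close that case, and your $s_{b_1}\notin M(l_k)$ branch correctly reduces to a forbidden $(3bii)$ pair in $M_1$. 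The bookkeeping you record up front (only $s_i$'s assignment and the occupancies of $p_j$ and $p_{j'}$ change; worst rank in $M(l_k)$ is unchanged; fully available status is preserved by Proposition \ref{th:stable_SPA_S_bp_promalg_notchange}) is what licenses restricting attention to those two projects, and it also makes your $(3a)$ argument slightly more general than the paper's, covering all projects at once rather than just $p_j$. In short: same skeleton and same key lemmas, but your decomposition is the more complete one --- it buys an explicit justification for the $p_{j'}$ side that the paper leaves implicit.
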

% THEOREM END

\begin{proof}
Let $M_0$ denote the matching constructed immediately after the main while loop of Algorithm \ref{Alg:APPROX_SPA-S_stable} terminates and let $T_0$ denote this point in the algorithm's execution. By Lemma \ref{approx_end_while_only_3bi}, only blocking pairs of type $(3bi)$ of $M_0$ may exist at $T_0$ therefore we restrict our attention to the removal of such pairs. Assume for a contradiction that during Algorithm \ref{Alg:APPROX_SPA-S_stable_3bibpBrief}'s execution, the first promotion to reveal a blocking pair of type not equal to $(3bi)$ occurs. Let $M_1$ (respectively $M_2$) be the matching constructed just before (respectively after) this promotion occurs with $T_1$ (respectively $T_2$) denoting this point in the algorithm's execution. Suppose that this promotion involves student $s_i$ being promoted from project $p_j$ to project $p_{j'}$ as pair $(s_i,p_{j'})$ is a blocking pair of $M_1$ of type $(3bi)$. Since $(s_i,p_{j'})$ is a blocking pair of $M_1$ of type $(3bi)$ we know that $p_j$ and $p_{j'}$ are both offered by the same lecturer, say $l_k$. Assume that this promotion has now revealed a blocking pair $(s_{i'},p_j)$ of type $(3a)$, $(3bii)$ or $(3c)$ in $M_2$. We look at each case in turn.

\begin{itemize}
	\item $(3a)$ - Since in $M_1$, $(s_i,p_{j'})$ was a blocking pair of type $(3bi)$ we know that $l_k$ is full at $T_1$. The promotion involves moving $s_i$ from one project offered by $l_k$ to another, therefore at $T_2$, $l_k$ must be full and so $p_j$ cannot be involved in a blocking pair of type $(3a)$ in $M_2$, a contradiction.

	\item $(3bii)$ - Suppose $(s_{i'},p_j)$ is a blocking pair of type $(3bii)$ in $M_2$. Since it is of type $(3bii)$, $l_k$ must prefer $s_{i'}$ to a worst assignee in $M_2(l_k)$ (and consequently $M_1(l_k)$ as students do not change lecturer). If $p_j$ was undersubscribed at $T_1$ then $(s_{i'},p_j)$ would have constituted a blocking pair of type $(3bii)$, a contradiction. Therefore $p_j$ must have been full in $M_1$. We know that $(s_i,p_j) \in M_1$ and that $s_i$ is a worst assignee in $M_1(l_k)$ by Proposition \ref{th:stable_SPA_S_3biworstinmlk} and \ref{th:stable_SPA_S_3bicreatedworstinmlk}, therefore $l_k$ prefers $s_{i'}$ to $s_i$. It follows that $(s_{i'},p_j)$ would have constituted a blocking pair in $M_1$ of type $(3c)$, a contradiction to the fact that no blocking pair of any type other than $(3bi)$ was revealed prior to $T_2$.
	 
	\item $(3c)$ - Suppose finally that $(s_{i'},p_j)$ is a blocking pair of $M_2$ of type $(3c)$. But blocking pairs of type $(3c)$ require $p_j$ to be full in $M_2$ which it cannot be since $(s_i,p_j)$ has been removed just before $T_2$, hence $p_j$  cannot be involved in a blocking pair of type $(3c)$.
\end{itemize}
Therefore it is not possible for a blocking pair of type $(3a)$, $(3bii)$ or $(3c)$ to be created during the first promotion of a student, and hence any promotion.
\end{proof}

Finally, Theorem \ref{approx_not_destable} proves that any matching produced by Algorithm {\sf Max-SPA-ST-Approx} is stable.

% THEOREM BEGIN
\begin{theorem} 
	\label{approx_not_destable}
		Any matching produced by Algorithm \ref{Alg:APPROX_SPA-S_stable} must be stable.
\end{theorem}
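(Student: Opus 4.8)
The plan is to assemble the stability of the final matching $M$ from the two structural results already established, together with a short termination argument for the promotion phase. Write $M_1$ for the matching obtained at the end of the main while loop. By Lemma~\ref{approx_end_while_only_3bi}, $M_1$ admits no blocking pair of type $(3a)$, $(3bii)$ or $(3c)$, so the only blocking pairs that $M_1$ can possibly have are of type $(3bi)$. The algorithm then calls Promote-students (Algorithm~\ref{Alg:APPROX_SPA-S_stable_3bibpBrief}), whose sole purpose is to eliminate the remaining type-$(3bi)$ blocking pairs, and it is the interaction of this phase with the three forbidden types that must be controlled.

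First I would argue that Promote-students terminates. Each iteration takes a type-$(3bi)$ blocking pair $(s_i,p_{j'})$ and replaces $(s_i,M(s_i))$ by $(s_i,p_{j'})$, where $s_i$ strictly prefers $p_{j'}$ to $M(s_i)$ and where $l_k$ keeps the same assignees; no other student's assignment is altered in that step, and the set of assigned students is unchanged. Hence the quantity $\sum_{s_i} \text{rank}(s_i,M(s_i))$, summed over the (fixed) set of assigned students, strictly decreases at every iteration while remaining bounded below. Since preference lists are finite, only finitely many promotions can occur, so the while loop of Algorithm~\ref{Alg:APPROX_SPA-S_stable_3bibpBrief} halts, and its loop guard then guarantees that no type-$(3bi)$ blocking pair survives in the final matching $M$.

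It remains to rule out the other three types in $M$. By Proposition~\ref{th:stable_SPA_S_finalpartonlybp3bi}, no promotion performed in Algorithm~\ref{Alg:APPROX_SPA-S_stable_3bibpBrief} can ever create a blocking pair of type $(3a)$, $(3bii)$ or $(3c)$. Combined with the fact that $M_1$ contained none of these types to begin with (Lemma~\ref{approx_end_while_only_3bi}), this shows that throughout the promotion phase, and in particular in $M$, no blocking pair of type $(3a)$, $(3bii)$ or $(3c)$ is present. Together with the previous paragraph, $M$ admits no blocking pair of any type, and so $M$ is stable.

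The main obstacle is not the bookkeeping above but the two ingredients it rests on, both proved earlier. The first is Lemma~\ref{approx_end_while_only_3bi}, which requires the detailed case analysis over the three lines of Algorithm~\ref{Alg:APPROX_SPA-S_stable} at which preferences are deleted, using the monotonicity of fully-available and precarious status from Propositions~\ref{th:stable_SPA_S_propfullyavailable} and~\ref{prop-spa-st-subsequentprecarious}. The second is Proposition~\ref{th:stable_SPA_S_finalpartonlybp3bi}, whose proof hinges on the claim that any student promoted in the final phase is a worst assignee of the relevant lecturer (Propositions~\ref{th:stable_SPA_S_3biworstinmlk} and~\ref{th:stable_SPA_S_3bicreatedworstinmlk}). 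Granting those, the present theorem follows immediately by combining them with the termination argument.
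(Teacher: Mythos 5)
Your proposal is correct and follows essentially the same route as the paper: invoke Lemma~\ref{approx_end_while_only_3bi} to restrict blocking pairs at the end of the main while loop to type $(3bi)$, invoke Proposition~\ref{th:stable_SPA_S_finalpartonlybp3bi} to show the promotion phase creates no blocking pair of the other types, and observe that Promote-students eliminates all type-$(3bi)$ pairs before halting. Your explicit potential-function argument for termination (the sum $\sum_{s_i}\text{rank}(s_i,M(s_i))$ over the fixed set of assigned students strictly decreases) is a sound elaboration of what the paper compresses into ``each promotion improves the outcome for a student'' and defers formally to Proposition~\ref{th:alg_SPA_S_polytime_promoting}.
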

% THEOREM END

\begin{proof} 
Let $M_0$ be the matching constructed immediately after the termination of the main while loop of Algorithm \ref{Alg:APPROX_SPA-S_stable} and let $T_0$ denote this stage of the algorithm. Recall that by Lemma \ref{approx_end_while_only_3bi}, only blocking pairs of type $(3bi)$ may exist relative to $M_0$. Also, by Lemma \ref{th:stable_SPA_S_finalpartonlybp3bi}, no blocking pair of any other type can exist relative to the matching constructed after $T_0$. 
	
	Algorithm \ref{Alg:APPROX_SPA-S_stable_3bibpBrief} systematically removes blocking pairs of type $(3bi)$ in a series of student promotions. Each promotion improves the outcome for a student.
	
	Therefore there are no blocking pairs of any type in the finalised matching $M_s$ and so $M_s$ is stable. 
\end{proof}

Since this proof relies only on the fact that $p_{b_2}$ is removed from $s_{b_1}$'s list once for $(s_{b_1},p_{b_2})$ not to become a blocking pair, we can infer that if we allowed students to only iterate once through their preference preference list rather than twice, this would still result in a stable matching.

%%%%%%%%%%%%%%%%%%%%%%%%%%%%%%%%%%%%%%%%%%%%%%%%%%%
%%%%%%%%%%%%%%%%%%%%%%%%%%%%%%%%%%%%%%%%%%%%%%%%%%%
\subsection{Time complexity and termination}
\label{spa-st-proofs-sec-time}

In this section we prove that Algorithm {\sf Max-SPA-ST-Approx} runs in linear time with respect to the total length of student preference lists.

First, in Proposition \ref{th:stable_SPA_S_maxapplications}, we show that during the main while loop each student may only apply to a project on their preference list a maximum of three times.

% THEOREM BEGIN
\begin{prop} 
	\label{th:stable_SPA_S_maxapplications}
	The maximum number of times a student $s_i$ can apply to a project $p_j$ on their preference list during the main while loop of Algorithm \ref{Alg:APPROX_SPA-S_stable} is three.
	\end{prop}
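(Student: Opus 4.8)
The plan is to classify each application of $s_i$ to $p_j$ according to the fully-available status of $p_j$ at the instant $s_i$ applies, and to bound the two classes separately, exploiting the monotonicity established in Proposition \ref{th:stable_SPA_S_propfullyavailable}: once $p_j$ ceases to be fully available it never regains that status. Call an application \emph{of type F} if $p_j$ is fully available when $s_i$ applies and \emph{of type N} otherwise. Since a student makes applications only while in phase $1$ or phase $2$ (a phase-$3$ student is inactive), it suffices to prove two bounds: there is at most one type-F application in total, and at most one type-N application in each phase.

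First I would bound the type-F applications. When $s_i$ applies to a fully available $p_j$, the first branch of the main loop adds $(s_i,p_j)$ to $M$, and this pair is immediately \emph{confirmed}: $p_j$ is a favourite so no project is meta-preferred to it, and a fully available same-rank rival cannot meta-prefer past it. Moreover, while $p_j$ stays fully available both $p_j$ and its lecturer $l_k$ are undersubscribed, so none of the removal branches—each of which requires $p_j$ or $l_k$ to be full—can ever evict $(s_i,p_j)$; hence $s_i$ remains assigned and never becomes available to re-apply. Thus $s_i$ can only apply again after $p_j$ has lost fully-available status, and by Proposition \ref{th:stable_SPA_S_propfullyavailable} $p_j$ is then never fully available again, so no second type-F application can occur.

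Next I would bound the type-N applications within a single phase, and this is the heart of the argument: a type-N assignment can \emph{never} be precarious. Indeed, at a type-N application $p_j$ is a favourite yet not fully available, so no same-rank project on $s_i$'s list is fully available (such a project would be meta-preferred to $p_j$, contradicting that $p_j$ is a favourite) and no higher-ranked project remains either; by monotonicity no same-rank project can later become fully available. Hence $(s_i,p_j)$, once added, is and stays confirmed, so $s_i$ can leave $p_j$ only through a non-precarious (worst-assignee) removal, which invokes Remove-Pref and deletes $p_j$ from $s_i$'s list; the only other outcome of the application is an immediate rejection, which again deletes $p_j$ via Remove-Pref. In every case $p_j$ is struck from the list and is reinstated only on entering the next phase, so $s_i$ cannot apply to $p_j$ twice in the same phase. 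Combining with the type-F bound gives at most $1 + 2 = 3$ applications in total.

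I expect the main obstacle to be ruling out the re-application route in which $s_i$ is evicted from $p_j$ as a \emph{precarious} pair (which leaves $p_j$ on the list) and later returns to $p_j$ within the same phase, since naively this threatens an unbounded cascade through the rank block of $p_j$. The resolution is precisely the observation above that a type-N assignment is never precarious, so this route cannot originate from a type-N application; the precarious eviction the algorithm really does perform stems from a type-F assignment that becomes precarious only when $p_j$ loses fully-available status while a same-rank fully available project survives, and any return it causes is a type-N application already counted by the one-per-phase bound. Care will also be needed to confirm that list reinstatement happens exactly at phase boundaries, so that deletions genuinely prevent further applications to $p_j$ until the phase changes.
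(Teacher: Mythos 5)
Your proof is correct, and it takes a genuinely different route in its bookkeeping while resting on the same two pillars as the paper. The paper's proof traces the life of the pair $(s_i,p_j)$ sequentially: the first phase-1 application either ends in a rejection or a non-precarious removal (both of which pass through Remove-Pref, deleting $p_j$ and leaving room for only one further application, in phase 2), or in a precarious removal, after which the re-application must be re-added as a non-precarious pair because $s_i$ has by then exhausted all fully available projects at the head of their list; this gives the count as (at most) $2$ in phase 1 plus $1$ in phase 2. You instead decompose by the fully-available status of $p_j$ at the instant of application and prove two invariants --- at most one type-F application ever (via the monotonicity of Proposition \ref{th:stable_SPA_S_propfullyavailable} together with the observation that every eviction branch requires $p_j$ or $l_k$ to be full), and at most one type-N application per phase (since a type-N assignment can never be or become precarious, every exit is a rejection or worst-assignee removal through Remove-Pref, striking $p_j$ from the list until the phase boundary) --- giving $1+1+1$. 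Your key invariant ``a type-N assignment is never precarious'' is an up-front, strengthened form of the paper's in-line remark that a re-added pair must be non-precarious, and it echoes Proposition \ref{th:stable_SPA_S_phase2_not_fa_not_prec}; stating it as an invariant is what cleanly kills the cascade scenario you flag, and it buys a slightly more modular argument. One slip, though not a load-bearing one: your claim that a type-F pair is ``immediately confirmed'' is false when the addition itself fills $p_j$ or $l_k$ while a same-rank rival remains fully available --- the pair is then precarious from the outset (this is exactly what the algorithm's \texttt{second} pointer records) --- but your type-F bound uses only the fact that no eviction can occur while $p_j$ remains fully available, and your closing paragraph describes the precarious mechanism correctly, so the proof stands.
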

% THEOREM END

\begin{proof}
First we note that as soon as $s_i$ removes $p_j$ from their preference list once during Algorithm \ref{Alg:APPROX_SPA-S_stable}'s execution, $(s_i,p_j)$ cannot subsequently become a precarious pair by definition (since a precarious pair must be assigned in phase $1$).

Focussing on the main while loop, assume for some iteration, that phase $1$ student $s_i$ applies to project $p_j$ on their preference list. Either $(s_i,p_j)$ is added to the matching being built $M$, or $p_j$ is removed from $s_i$'s list. If $p_j$ is removed from $s_i$'s list then $s_i$ may still apply to project $p_j$ in phase $2$ but as noted above $(s_i,p_j)$ cannot become a precarious pair.
	
	Assume instead that $(s_i,p_j)$ is added to $M$. If it remains in $M$ until the algorithm completes then $s_i$ cannot apply to $p_j$ again. So assume that $(s_i,p_j)$ is removed from $M$ at some point due to another pair being added to $M$. If $(s_i,p_j)$ was non-precarious at the point it is removed from $M$ then $s_i$ removes $p_j$ from their list and the next time $s_i$ could apply to $p_j$ is when $s_i$ is in phase $2$ when as above $(s_i,p_j)$ cannot become a precarious pair. 
	
	Assume therefore that $(s_i,p_j)$ was precarious when removed from $M$. Then $s_i$ does not remove $p_j$ from their list and $s_i$ can again apply to $p_j$ during phase $1$. Note that if $(s_i,p_j)$ is re-added to $M$ it must be as a non-precarious pair. This is because, using similar reasoning that was used in Proposition \ref{prop-spa-st-subsequentprecarious}, at the point at which $s_i$ reapplies to $p_j$ they must have exhausted all fully available projects at the head of their list, therefore $(s_i,p_j)$ cannot again become precarious. Therefore, $s_i$ can apply to $p_j$ a maximum of three times during the execution of the while loop: at most twice while $s_i$ is in phase $1$ (twice only if $(s_i,p_j)$ is removed as a precarious pair) and at most once in phase $2$.
\end{proof}

Next, using the data structures summarised in Figure \ref{alg_SPA_ST_fig_datastructures}, Lemma \ref{th:alg_SPA_S_operations_constant} proves that all operations inside the main while loop of Algorithm \ref{Alg:APPROX_SPA-S_stable} run in constant time.

% THEOREM BEGIN
\begin{lemma} 
	\label{th:alg_SPA_S_operations_constant}
	All operations inside the main while loop of Algorithm \ref{Alg:APPROX_SPA-S_stable} run in constant time.
	\end{lemma}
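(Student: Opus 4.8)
The plan is to walk through every line that can fire inside the main while loop of Algorithm \ref{Alg:APPROX_SPA-S_stable} and, for each, exhibit a data structure that supports it in $O(1)$ time, pushing all expensive bookkeeping into an $O(m)$ preprocessing phase that lies outside the loop. In preprocessing I would store each student's list as an array of tie-groups with a pointer to the current head group, precompute $\text{rank}(s_i,p_j)$ and $\text{rank}(l_k,s_i)$ for every acceptable pair, and record for each project the lecturer offering it (so Line \ref{Alg:APPROX_SPA-S_stable_mainloop}'s ``let $l_k$ be the lecturer who offers $p_j$'' is a single lookup). I would maintain a stack of the currently available phase-$1$ and phase-$2$ students, so selecting $s_i$ and testing the loop guard are $O(1)$, and running counters $|M(p_j)|$ and $|M(l_k)|$, so that the predicates \emph{undersubscribed}, \emph{full}, and hence \emph{fully available} each reduce to comparing a counter with a fixed capacity.

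Next I would dispatch the branch tests. Precariousness of a lecturer or of a project is tracked by a counter of the precarious pairs currently charged to it (positive iff precarious), updated in $O(1)$ on every insertion, deletion, or change of a pair's precarious flag, and a pair's own flag is read in $O(1)$. To execute Line \ref{Alg:APPROX_SPA-S_stable:pjidentified_asprecarious} and its project analogue I would keep, for each lecturer and each project, a doubly linked list of the precarious pairs it holds, so a witness can be produced, inserted, or spliced out in $O(1)$. The pair updates on Lines \ref{Alg:APPROX_SPA-S_stable:pairadded1}, \ref{Alg:APPROX_SPA-S_stable:1}, \ref{Alg:APPROX_SPA-S_stable:add1}, \ref{Alg:APPROX_SPA-S_stable:2} and \ref{Alg:APPROX_SPA-S_stable:add2} then amount to splicing nodes and adjusting counters, and each Remove-Pref call (Lines \ref{Alg:APPROX_SPA-S_stable:rempref1}, \ref{Alg:APPROX_SPA-S_stable:rempref2}, \ref{Alg:APPROX_SPA-S_stable:rempref3}) advances the student's list pointer and, when the list empties, flips the phase flag -- all $O(1)$.

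The two genuinely delicate ingredients, and where I expect the real work to lie, are (i) selecting a \emph{favourite} project for $s_i$ and (ii) locating a \emph{worst assignee}. For (i) I would attach to each student's current head tie-group a sublist of its fully-available members; by Proposition \ref{th:stable_SPA_S_propfullyavailable} a project that leaves the fully-available state never re-enters it, so stale entries may be purged lazily (check the flag on access, discard and retry), and the favourite is the first surviving fully-available member if one exists, else any member of the head group. For (ii) I would maintain, for each lecturer $l_k$ and symmetrically for each project $p_j$, a bucket array indexed by rank, each bucket splitting its phase-$1$ and phase-$2$ assignees into separate lists, plus a pointer to the current worst occupied rank; the \emph{worst assignee} definition then resolves to reading this pointer (phase-$1$ assignees taking priority at that rank), and testing ``$l_k$ meta-prefers $s_i$ to a worst assignee'' becomes a rank-plus-phase comparison.

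The main obstacle is certifying that the worst-rank pointer stays current in $O(1)$ per operation: an insertion at a worse rank merely pushes the pointer outward, but deleting the sole occupant of the worst rank forces the pointer to retreat to the next occupied rank, and I have checked that in the precarious branch the newly added student may in fact be \emph{worse}-ranked than the removed one, so the worst assignee does not improve monotonically. To handle the retreat I would thread the occupied ranks of each lecturer into a rank-ordered doubly linked list (the worst rank being its tail), reducing a retreat to following one link. I would then argue the remaining cost -- positioning a newly occupied rank in this list -- is $O(1)$ amortized, charging each insertion-side walk against the total number of applications, which is bounded by Proposition \ref{th:stable_SPA_S_maxapplications}, and using the once-full-stays-full property together with Proposition \ref{prop-spa-st-subsequentprecarious} to bound how the occupied-rank set can evolve. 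Once every line is certified $O(1)$ in this way, the lemma follows.
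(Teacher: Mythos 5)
Your overall framework (counters for capacities, doubly-linked precarious lists, phase-split buckets with priority to phase~$1$, rank-plus-phase comparisons for meta-preference, and amortised pointer sweeps charged against total preference-list length) matches the paper's proof closely, but two of your steps have genuine gaps. First, the maintenance of the precarious counters is circular: you update them ``on every \ldots change of a pair's precarious flag,'' but nothing in your scheme generates that event. A pair $(s_{i'},p_{j'})\in M$ is precarious only while some \emph{other} fully available project sits at the same rank on $s_{i'}$'s list, and that supporting project can lose full availability through an application that never touches $s_{i'}$ or $p_{j'}$. Your lazy purging is applied only to the favourite-selection sublist, so the flag (and hence the counters read at the branch on Line~\ref{Alg:APPROX_SPA-S_stable_elseif1}) can go stale, which changes the algorithm's behaviour -- a pair wrongly treated as precarious is removed without Remove-Pref, breaking the accounting behind Proposition~\ref{th:stable_SPA_S_maxapplications}. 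The paper closes exactly this hole with a \texttt{supportList}: each fully available project records the students relying on it and, upon ceasing to be fully available, alerts them to advance their \texttt{second} pointers, with the cost charged to the one-way sweep of those pointers.

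Second, your worst-assignee machinery solves a non-problem and its amortisation does not go through as stated. The worst-assignee query is only executed when the lecturer (resp.\ project) is full \emph{and non-precarious}; by Proposition~\ref{prop-spa-st-subsequentprecarious} it then remains non-precarious forever, and by Propositions~\ref{th:stable_SPA_S_lecfullworseprec} and~\ref{th:stable_SPA_S_pfullthennoworseafterwhile} every subsequent insertion is strictly meta-preferred to the current worst. So the non-monotone evolution you observed happens only during the precarious era, when no queries occur; the paper simply (re)initialises its \texttt{last} pointer at the moment of becoming full and non-precarious and lets it sweep monotonically right-to-left, with total movement bounded by the list length -- no threaded occupied-rank list is needed. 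Your replacement argument is moreover unsound: charging the insertion-side walks (to position a newly occupied rank in the threaded list) ``against the total number of applications'' bounds the \emph{number} of walks, not their \emph{lengths}, each of which can be $\Theta(\text{list length})$ over empty buckets, so O$(1)$ amortised positioning is unsubstantiated. Replacing that machinery with the monotonicity observation above both repairs the argument and simplifies it to what the paper does.
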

% THEOREM END

\begin{proof}
The data structures required are described below and are summarised in Figure \ref{alg_SPA_ST_fig_datastructures}. For initialisation purposes, each student, project and lecturer has a list of length $n_2$, $n_1$ and $n_1$ respectively, each entry of which requires $O(1)$ space. In order to not exceed a time complexity of order the sum of lengths of preference lists, a process of virtual initialisation is used on these data structures \cite[p. 149]{BB96}.

\begin{figure}

\begin{tikzpicture}
\newcommand*{\TikzSstartY}{-0.5}%
\newcommand*{\TikzPstartY}{\TikzSstartY-6}%
\newcommand*{\TikzLstartY}{\TikzPstartY-7}%
\newcommand*{\TikzMstartY}{\TikzLstartY-5}%
\newcommand*{\TikzKstartY}{\TikzMstartY-2.5}%

% single boxes
\foreach \y in {-3.3+\TikzSstartY, -4+\TikzSstartY, -4.7+\TikzSstartY, -5.1+\TikzPstartY, -5.8+\TikzPstartY, \TikzLstartY-4, \TikzKstartY-1}
{
\draw[gray,very thin] (0,0+\y) -- (1.5,0+\y) -- (1.5,0.5+\y) -- (0,0.5+\y) -- cycle;
}

% linked lists
\foreach \y in {-2.6+\TikzSstartY, -0.5+\TikzPstartY, -2.2+\TikzPstartY, \TikzLstartY-0.5, \TikzLstartY-2.2, \TikzMstartY-1.5, \TikzKstartY-1.7}
{
\draw[gray,very thin] (0,0+\y) -- (1.5,0+\y) -- (1.5,0.5+\y) -- (0,0.5+\y) -- cycle;
\draw[gray,very thin] (1.5,0+\y) -- (3,0+\y) -- (3,0.5+\y) -- (1.5,0.5+\y);
\draw[gray,very thin] (3,0+\y) -- (4.5,0+\y) -- (4.5,0.5+\y) -- (3,0.5+\y);
\draw[gray,very thin] (4.5,0+\y) -- (5,0+\y);
\draw[gray,very thin] (4.5,0.5+\y) -- (5,0.5+\y);

\draw[dashed] (5.2,0.25+\y) -- (5.8,0.25+\y);
}

% arrays ends
\foreach \y in {-2.6+\TikzSstartY, -0.5+\TikzPstartY, -2.2+\TikzPstartY, \TikzLstartY-0.5, \TikzLstartY-2.2, \TikzMstartY-1.5, \TikzKstartY-1.7}
{
\draw[gray,very thin] (6,0+\y) -- (6.5,0+\y);
\draw[gray,very thin] (6,0.5+\y) -- (6.5,0.5+\y);
\draw[gray,very thin] (6.5,0+\y) -- (8,0+\y) -- (8,0.5+\y) -- (6.5,0.5+\y) -- cycle;
}

% doubly linked list embedded in an array
\foreach \y in {\TikzSstartY-0.7, \TikzPstartY-3.1, \TikzPstartY-4.2, \TikzLstartY-3.1, \TikzKstartY-2.6}
{
\draw[gray,very thin] (0.2,0+\y) -- (1.7,0+\y) -- (1.7,0.5+\y) -- (0.2,0.5+\y) -- cycle;
\draw[gray,very thin] (1.7,0+\y) -- (3.2,0+\y) -- (3.2,0.5+\y) -- (1.7,0.5+\y);
\draw[gray,very thin] (3.2,0+\y) -- (4.7,0+\y) -- (4.7,0.5+\y) -- (3.2,0.5+\y);
\draw[gray,very thin] (4.7,0+\y) -- (5.2,0+\y);
\draw[gray,very thin] (4.7,0.5+\y) -- (5.2,0.5+\y);
\draw[dashed] (5.4,0.25+\y) -- (6,0.25+\y);

\draw[gray,very thin] (6.2,0+\y) -- (6.7,0+\y);
\draw[gray,very thin] (6.2,0.5+\y) -- (6.7,0.5+\y);
\draw[gray,very thin] (6.7,0+\y) -- (8.2,0+\y) -- (8.2,0.5+\y) -- (6.7,0.5+\y) -- cycle;

\draw[gray,very thin] (2,-0.2+\y) -- (0,-0.2+\y) -- (0,0.7+\y) -- (2,0.7+\y);
\draw[gray,very thin] (6.4,-0.2+\y) -- (8.4,-0.2+\y) -- (8.4,0.7+\y) -- (6.4,0.7+\y);
}

% extra student things
\node[anchor=west] at (0,0.5+\TikzSstartY) {\textbf{Student} $\boldsymbol{s_i}$}; 
\node[anchor=west] at (8.5,-0.45+\TikzSstartY) {{\tt prefList}, (project, rank) tuples};

\node[anchor=west] at (0.3,-0.45+\TikzSstartY) {($p_3$, $1$)};
\node[anchor=west] at (1.8,-0.45+\TikzSstartY) {($p_2$, $1$)};
\node[anchor=west] at (3.3,-0.45+\TikzSstartY) {($p_1$, $1$)};

\draw [->] (0.75,-1.4+\TikzSstartY) -- (0.75,-0.9+\TikzSstartY);
\draw [->] (3.75,-1.4+\TikzSstartY) -- (3.75,-0.9+\TikzSstartY);
\node[] at (0.75,-1.65+\TikzSstartY) {{\tt first}};
\node[] at (3.75,-1.65+\TikzSstartY) {{\tt second}};

\node[anchor=west] at (8.5,-2.35+\TikzSstartY) {{\tt projPosition}};
\node[anchor=west] at (0.5,-2.35+\TikzSstartY) {$3$};
\node[anchor=west] at (2,-2.35+\TikzSstartY) {$2$};
\node[anchor=west] at (3.5,-2.35+\TikzSstartY) {$1$};

\node[anchor=west] at (2,-3.05+\TikzSstartY) {phase of $s_i$};
\node[anchor=west] at (0.5,-3.05+\TikzSstartY) {$1$};

\node[anchor=west] at (2,-3.75+\TikzSstartY) {{\tt firstFin}, True if {\tt first} pointer reaches end of tie};
\node[anchor=west] at (0.5,-3.75+\TikzSstartY) {F};
\node[anchor=west] at (2,-4.45+\TikzSstartY) {{\tt secondFin}, True if {\tt second} pointer reaches end of tie};
\node[anchor=west] at (0.5,-4.45+\TikzSstartY) {F};

% extra project things
\node[anchor=west] at (0,0.5+\TikzPstartY) {\textbf{Project} $\boldsymbol{p_j}$};
\node[anchor=west] at (8.5,-0.25+\TikzPstartY) {{\tt projectedPrefList},};
\node[anchor=west] at (8.5,-0.7+\TikzPstartY) {(student, rank, boolean) tuples};
\draw [->] (3.75,-1+\TikzPstartY) -- (3.75,-0.5+\TikzPstartY);
\node[] at (3.75,-1.25+\TikzPstartY) {{\tt last}};
\node[anchor=west] at (-0.12,-0.25+\TikzPstartY) {\small ($s_7$, $1$, T)};
\node[anchor=west] at (1.4,-0.25+\TikzPstartY) {\small ($s_4$, $1$, F)};
\node[anchor=west] at (2.9,-0.25+\TikzPstartY) {\small ($s_6$, $2$, T)};

\node[anchor=west] at (8.5,-1.95+\TikzPstartY) {{\tt studentPositions}};
\node[anchor=west] at (0.3,-1.95+\TikzPstartY) {null};
\node[anchor=west] at (2,-1.95+\TikzPstartY) {$5$};
\node[anchor=west] at (3.3,-1.95+\TikzPstartY) {null};

\node[anchor=west] at (8.5,-2.85+\TikzPstartY) {{\tt precariousList}};
\node[anchor=west] at (0.5,-2.85+\TikzPstartY) {null};
\node[anchor=west] at (2,-2.85+\TikzPstartY) {null};
\node[anchor=west] at (3.5,-2.85+\TikzPstartY) {null};

\node[anchor=west] at (8.5,-3.95+\TikzPstartY) {{\tt supportList}};
\node[anchor=west] at (0.5,-3.95+\TikzPstartY) {null};
\node[anchor=west] at (2,-3.95+\TikzPstartY) {null};
\node[anchor=west] at (3.6,-3.95+\TikzPstartY) {$s_3$};

\node[anchor=west] at (2,-4.85+\TikzPstartY) {lecturer offering $p_j$};
\node[anchor=west] at (0.5,-4.85+\TikzPstartY) {$l_5$};

\node[anchor=west] at (2,-5.55+\TikzPstartY) {number of allocations in $M$};
\node[anchor=west] at (0.5,-5.55+\TikzPstartY) {$2$};

% extra lecturer things
\node[anchor=west] at (0,0.5+\TikzLstartY) {\textbf{Lecturer} $\boldsymbol{l_k}$};
\node[anchor=west] at (8.5,-0.25+\TikzLstartY) {{\tt prefList}, (student, rank, boolean) tuples};
\node[anchor=west] at (-0.1,-0.25+\TikzLstartY) {\small ($s_8$, $1$, F)};
\node[anchor=west] at (1.4,-0.25+\TikzLstartY) {\small ($s_2$, $2$, T)};
\node[anchor=west] at (2.9,-0.25+\TikzLstartY) {\small ($s_3$, $2$, T)};

\draw [->] (8,-1+\TikzLstartY) -- (8,-0.5+\TikzLstartY);
\node[] at (8,-1.25+\TikzLstartY) {{\tt last}};
\node[anchor=west] at (8.5,-1.95+\TikzLstartY) {{\tt studentPositions}};
\node[anchor=west] at (0.3,-1.95+\TikzLstartY) {null};
\node[anchor=west] at (2,-1.95+\TikzLstartY) {$2$};
\node[anchor=west] at (3.5,-1.95+\TikzLstartY) {$3$};

\node[anchor=west] at (8.5,-2.75+\TikzLstartY) {{\tt precariousProjList}};
\node[anchor=west] at (0.5, -2.85+\TikzLstartY) {null};
\node[anchor=west] at (2.15,-2.85+\TikzLstartY) {$p_2$};
\node[anchor=west] at (3.5,-2.85+\TikzLstartY) {null};

\node[anchor=west] at (2,-3.75+\TikzLstartY) {number of allocations in $M$};
\node[anchor=west] at (0.5,-3.75+\TikzLstartY) {$3$};

% matching things
\node[anchor=west] at (0,\TikzMstartY) {\textbf{Matching} $\boldsymbol{M}$};
\node[anchor=west] at (0,-0.7+\TikzMstartY) {{\tt matchArray}, cell $i-1$ contains $p_j$ if $(s_i,p_j)\in M$ or null if $s_i$ is unassigned};
\draw[gray,very thin] (0,-2+\TikzMstartY) -- (\textwidth,-2+\TikzMstartY);
\node[anchor=west] at (0.3,-1.25+\TikzMstartY) {null};
\node[anchor=west] at (1.9,-1.25+\TikzMstartY) {$p_6$};
\node[anchor=west] at (3.3,-1.25+\TikzMstartY) {null};

% key things
\node[anchor=west] at (0,0\TikzKstartY) {\textbf{Key}};
\node[anchor=west] at (8.5,-0.75+\TikzKstartY) {boolean/int/link/tuple};
\node[anchor=west] at (8.5,-1.45+\TikzKstartY) {Array};
\node[anchor=west] at (8.5,-2.35+\TikzKstartY) {Doubly linked list embedded in};
\node[anchor=west] at (8.5,-2.75+\TikzKstartY) {an array};

\end{tikzpicture}
\caption[Data structures guide for Lemma \ref{th:alg_SPA_S_operations_constant}]{Data structures guide for Lemma \ref{th:alg_SPA_S_operations_constant}.}
\label{alg_SPA_ST_fig_datastructures}
\end{figure}

\textbf{Student data structures.} For each student a doubly-linked list of (project, rank) tuples embedded in an array, {\tt prefList}, stores their preference list in order of rank, representing the undeleted entries. A small example is shown in Figure \ref{alg_SPA_ST_fig_datastructures} with $p_3$, $p_2$ and $p_1$ all of rank $1$ on $s_i$'s preference list. Entries may be deleted from this array; a copy of this list prior to any deletions being carried out is retained in order to allow a second iteration through a student's preference list, if they move into phase $2$. An array {\tt projPosition} of length $n_2$ retains links to the position of (project, rank) tuples in {\tt prefList}, allowing a constant time removal of projects from {\tt prefList}. An integer variable associated with each student stores which phase this student is in. Examples for these final two data structures are also shown in Figure  \ref{alg_SPA_ST_fig_datastructures}.

\textbf{Project data structures.} Each project has a link to their supervising lecturer. An array, {\tt projectedPrefList} stores the projected preference list of $l_k$ for $p_j$ in the form of (student, rank, boolean) tuples. As an example, suppose $p_j$ has a projected preference list starting with $s_7$ at rank $1$, $s_4$ at rank $1$ and $s_6$ at rank $2$ as is shown in Figure \ref{alg_SPA_ST_fig_datastructures}. The boolean values indicate which student-project pairs are currently in the matching.

Once a project is full and non-precarious it cannot accept a worse student than it already has for the remainder of the algorithm, according to Proposition \ref{prop-spa-st-subsequentprecarious}. Assume $p_j$ is full and non-precarious.  Let the worst student assigned to $p_j$ be given by $s_w$. We retain a pointer, {\tt last}, which points to the rightmost student at the same rank as $s_w$ in $p_j$'s {\tt projectedPrefList}. This pointer must move from right to left in a linear fashion (moving up in ranks) given the above proposition. 

During the course of the algorithm, we may need to remove the worst student according to $l_k$ from $M(p_j)$. It is possible that there are two or more students who are worst assignees (according to rank) with some being in phase $1$ and some in phase $2$. In order to ensure that we prioritise the removal of phase $1$ students, two pointers are added for each entry in {\tt projectedPrefList}, which point to the head of a phase $1$ and a phase $2$ doubly-linked list associated with that tie embedded in the {\tt projectedPrefList} array (this data structure is not shown in Figure \ref{alg_SPA_ST_fig_datastructures}). Adding or removing a phase $1$ or $2$ student to either list takes constant time, as they do not need to be kept in order. Then, un-assigning a student requires a check to be made in the tie associated with the worst position (found using {\tt last}), in order to prioritise a phase $1$ student's removal. In total this takes constant time. Note that a student can only change phase if they are not allocated and therefore updating an allocated student's phase in these lists is not necessary unless they have just been added.

Each project also contains a doubly-linked list embedded in an array of students, denoted by {\tt precariousList}, containing students who have formed a precarious pair with this project. In the example in Figure \ref{alg_SPA_ST_fig_datastructures}, $p_j$ is non-precarious and so no students form a precarious pair with $p_j$. Adding to and removing from this list takes constant time if we assume that $s_i$ is stored at index $i-1$. A project $p_j$ \emph{supports} a student $s_i$ in being precarious if $(s_i,p_{j'})\in M$ and $p_j$ is the first fully available project at the same rank as $p_{j'}$ in $s_i$'s list. Then, a doubly-linked list embedded in an array of students, {\tt supportList}, stores the students for which $p_j$ gives their support. As before, adding to and removing from this list takes constant time. A counter stores the number of students assigned to $p_j$ in $M$.

\textbf{Lecturer data structures.} For each lecturer an array of (student, rank, boolean) tuples, {\tt prefList}, stores their preference list in order of rank, with a True value stored in the $i$th boolean if student $s_i$ is assigned to a project of $l_k$'s. Figure \ref{alg_SPA_ST_fig_datastructures} shows an example with $s_8$ not assigned to $l_k$ at rank $1$ and $s_2$ and $s_3$ both assigned at rank $2$. Each lecturer also has an array of length $n_1$, {\tt studentPositions} which retains links to the position of (student, rank, boolean) tuples in $l_k$'s {\tt prefList}, and a counter stores the number of students assigned to $l_k$ in $M$. A doubly-linked list embedded in an array of projects, {\tt precariousProjList}, stores the projects offered by $l_k$ that are precarious, where project $p_j$ is stored at index $j-1$ in this list if it is precarious. Figure \ref{alg_SPA_ST_fig_datastructures} shows $p_2$ being a precarious project of $l_k$.

Similar to projects, by Proposition \ref{prop-spa-st-subsequentprecarious}, once a lecturer is full and non-precarious they cannot accept a worse student than they already have for the remainder of the algorithm. Assume $l_k$ is full and non-precarious. Using similar data structures described in the Project Data Structures section above we are able to find $l_k$'s worst assigned student in constant time.

\textbf{Student pointers.} A student $s_i$ retains two pointers for the project(s) tied at the head of their list. One pointer {\tt first} stores the first fully available project when iterating from left to right, and {\tt second} stores the second fully available project. If {\tt first} (respectively {\tt second}) reaches the end of the tie, then a boolean {\tt firstFin} (respectively {\tt secondFin}) is set to True. For each iteration of the main while loop of Algorithm \ref{Alg:APPROX_SPA-S_stable}, each student $s_i$ first seeks a project at the head of their list that is fully available. This will be precisely the project $p_j$ that {\tt first} points to. Then if {\tt second} has not reached the end of the tie, $(s_i,p_j)$ (if added to the matching) is precarious and the project that {\tt second} points to, $p_{j'}$, supports $p_j$. If however, {\tt secondFin} is set to True, then $(s_i,p_j)$ (if added to the matching) is non-precarious. Finally, if {\tt firstFin} is set to True, then the leftmost project at the head of $s_i$'s list is a favourite project, with $(s_i,p_j)$ also being unable to become precarious. Proposition \ref{th:stable_SPA_S_maxapplications} shows that maximum number of applications a student can make to a project on their preference list is $3$. At most twice in phase $1$ (twice if removed as a precarious pair) and once in phase $2$.

During phase $2$ there are no fully available projects on $s_i$'s list since $s_i$ must have applied and been rejected (in some way) from every project on their preference list at least once already. Therefore, the {\tt first} and {\tt second} pointers are not required in phase $2$. During phase $1$ the the {\tt first} and {\tt second} pointers are only required to iterate once over each tie as described above. Hence, the maximum number of times a student's list is iterated over is $4$; once each for the two pointers {\tt first} and {\tt second}, once again after {\tt first} and {\tt second} have reached the end of the tie at the head of a student's list (the student may have retained projects at the head of their list after this point if the projects were precarious), and finally once during phase $2$.

\textbf{Matching data structures.} The current matching is stored in an array of cells {\tt matchArray} where cell $i-1$ contains project $p_j$ if $(s_i,p_j) \in M$ or null otherwise.  Figure \ref{alg_SPA_ST_fig_datastructures} shows and example with student $s_2$ being assigned to project $p_6$.

\textbf{Processes (in the order encountered in Algorithm \ref{Alg:APPROX_SPA-S_stable}):}

Let lecturer $l_k$ offer project $p_j$.
\begin{enumerate}

	\item \emph{A student $s_i$ applies to a favourite project}: if $s_i$'s {\tt FirstFin} is set to True then there are no fully available projects at the head of $s_i$'s list, and a favourite project of $s_i$ will be the leftmost project. If however, {\tt FirstFin} is False then there are fully available projects at the head of their list and a favourite project of $s_i$ is pointed to by {\tt first}, which is retrievable in constant time.
	
		\item \emph{Deciding if a project $p_j$ is undersubscribed or full or deciding if a lecturer $l_k$ is undersubscribed or full}: Using the counters described above a comparison can be made between $p_j$'s capacity and their current number of allocations. A similar comparison can be made for $l_k$. Both can be achieved in constant time.
	
		\item \emph{Deciding if project $p_j$ is fully available}: $p_j$ would not be fully available if either $p_j$ is full or $l_k$ is full. Therefore a comparison of the number of allocations for $p_j$ and $l_k$ and their respective capacities is required. Again this can be achieved in constant time.

\item \emph{Adding a pair $(s_i,p_j)$ to $M$}:
Project $p_j$ is placed in the $i-1$'th cell of {\tt matchArray} and $p_j$ and $l_k$'s allocation counters are incremented. Project $p_j$'s {\tt projectedPrefList} and lecturer $l_k$'s {\tt prefList} booleans are updated in constant time using their associated {\tt studentPositions} data structures. Each tuple in these lists has a link to the head of phase $1$ and phase $2$ lists for their tie. When the pair is added the tuple is added to either the phase $1$ or phase $2$ list. If $(s_i,p_j)$ is precarious then $s_i$ is added to $p_j$'s {\tt precariousList} and the project pointed to by {\tt second}, $p_{j'}$ adds $s_i$ to their {\tt supportList}.  If $p_j$ has just changed from being non-precarious to precarious then $l_k$ adds $p_j$ to their {\tt precariousProjList}. If the addition of $(s_i,p_j)$ to $M$ means that $p_j$ goes from being fully available to not being fully available then we need to ensure that other students who rely on $p_j$ as their support are updated. Therefore $p_j$ alerts each student on their {\tt supportList} that they are no longer to be relied upon as a fully available project. This triggers each of those students to update their {\tt second} pointers. The time required for this can be attributed to the movement of {\tt second} pointers as noted earlier. After being alerted, some other pair in $M$ may stop being precarious, but any changes can be conducted in constant time as described above. If on adding pair $(s_i,p_j)$, $p_j$ has now become full and non-precarious then the {\tt last} pointer will move from right to left over {\tt projectedPrefList} until it reaches the end of a tie whose phase $1$ and phase $2$ lists are non-empty. From this point on {\tt last} is only updated upon removing a pair from $M$ (Point \ref{point:removingpair}).

	\item \label{point:decideLKprec} \emph{Deciding if $l_k$ is precarious, and returning a precarious project if one exists}: Checking whether {\tt precariousProjList} is empty for $l_k$ is a simple process that takes constant time. Retrieving a precarious pair should one exist requires selection of the first student from $l_k$'s {\tt precariousProjList} and can be done in constant time.

	\item \label{point:findworstLK} \emph{Finding a worst assignee of $l_k$ and deciding if $l_k$ meta-prefers $s_i$ to this worst assignee}: This operation only needs to be executed if $l_k$ is full and non-precarious. In that situation $l_k$'s {\tt last} pointer will point to the rightmost position in a tie in {\tt prefList} such that $l_k$'s current worst student $s_w$ is assigned at the same rank. Then as previously discussed, all that is required is to check the links to phase $1$ and phase $2$ students for this tie and return a phase $1$ student if one exists, or phase $2$ student if not. This can be conducted in constant time. Deciding if $l_k$ meta-prefers $s_i$ to $s_w$ can also be done in constant time by comparing rank and phase.

	\item \emph{Removing a preference list entry from $s_i$'s list}: This process is shown in Algorithm \ref{Alg:APPROX_SPA-S_stable_extra}	which runs in constant time, since we can find a specific project $p_j$ in $s_i$'s {\tt prefList} using the {\tt projPosition} array.

\item \label{point:removingpair} \emph{Removing a pair $(s_i,p_j)$ from $M$}: 
The $i-1$'th cell of {\tt matchArray} is set to null, and $p_j$ and $l_k$'s allocation counters are decremented. Project $p_j$'s {\tt projectedPrefList} and lecturer $l_k$'s {\tt prefList} booleans are updated in constant time. The tuples associated with $s_i$ in these lists are removed from their phase $1$ or phase $2$ list in constant time. If a pair $(s_i,p_j)$ is removed from $M$, then this is either because $p_j$ or $l_k$ is full. By Proposition \ref{th:stable_SPA_S_propfullyavailable}, $p_j$ cannot subsequently become fully available. Thus, the removal of a pair cannot change $p_j$'s fully available status. All that is required then is to check whether $(s_i,p_j)$ was precarious, and update $p_j$'s {\tt precariousList} and $l_k$'s {\tt precariousProjList}. If on removing pair $(s_i,p_j)$, the {\tt last} pointer now points to a tie with empty phase $1$ and phase $2$ lists, {\tt last} needs to be updated and accordingly moves from right to left until it reaches the end of a tie with a non-empty phase $1$ or phase $2$ list.

	\item \emph{Deciding if $p_j$ is precarious, and returning a precarious pair if one exists}: Similar to Point \ref{point:decideLKprec} above but using the {\tt precariousList} of $p_j$.

\item \emph{Finding a worst assignee of $p_j$ according to $l_k$ and deciding if $l_k$ meta-prefers $s_i$ to this worst assignee}: Similar to Point \ref{point:findworstLK} above, this operation is only required if $p_j$ is full and non-precarious, at which point $p_j$'s {\tt last} pointer will point to the rightmost position in a tie in $\tt projectedPrefList$ such that $l_k$'s current worst student assigned to $p_j$, $s_w$ is assigned at the same rank. As above retrieving $s_w$ and comparing its rank and phase with $s_i$ takes constant time.
\end{enumerate}

Therefore, all operations inside the main while loop of Algorithm \ref{Alg:APPROX_SPA-S_stable} run in constant time.
\end{proof}

Algorithm \ref{Alg:APPROX_SPA-S_stable_3bibp} is a more detailed version of Algorithm \ref{Alg:APPROX_SPA-S_stable_3bibpBrief}, indicating how the operations in Algorithm \ref{Alg:APPROX_SPA-S_stable_3bibpBrief} can be implemented efficiently. Proposition \ref{th:alg_SPA_S_polytime_promoting} shows that Algorithm \ref{Alg:APPROX_SPA-S_stable_3bibp} runs in linear time.

% ALGORITHM 
\begin{algorithm} [t]

  \caption[{\sf Promote-Students}($M$), subroutine for Algorithm \ref{Alg:APPROX_SPA-S_stable} (detailed view).]{{\sf Promote-Students}($M$), subroutine for Algorithm \ref{Alg:APPROX_SPA-S_stable} (detailed view). Removes all blocking pairs of type $(3bi)$.}
  \label{Alg:APPROX_SPA-S_stable_3bibp}
	\begin{algorithmic}[1]
          \Require \acrshort{spa-st} instance $I$ and matching $M$ which does not contain blocking pairs of type \emph{(3a)}, \emph{(3bii)} or \emph{(3c)}.
          \Ensure Return a stable matching $M$.

	\State Create data structures as described in Proposition \ref{th:alg_SPA_S_polytime_promoting}
	\While {$S \neq \emptyset$}
	\State Pop $p_j$ from stack $S$
	\State Remove the first student $s_i$ from list $\rho_j$
	\State Let $p_k = M(s_i)$
	
	\If {$s_i$ prefers $p_j$ to $p_k$} \Comment{$p_j$ is undersubscribed, $s_i$ is assigned and prefers $p_j$ to $M(s_i)$}
	\State $M \gets M\backslash \{(s_i,p_k)\}$
	\State $M \gets M \cup \{(s_i,p_j)\}$
	\State Let $\rho_k$ be the list of student and rank tuples associated with project $p_k$ and let boolean $\beta_k$ indicate whether $M(s_i)$ is on stack $S$
	\If {$\rho_k \neq \emptyset$}
	\State Push $p_k$ onto stack $S$ if it is not already on $S$ \Comment{Using $\beta_k$.}
	\EndIf
	\EndIf 
	
	\If {$\rho_j \neq \emptyset$ \textbf{and} $p_j$ is undersubscribed}
	
	\State Push $p_j$ onto stack $S$ \Comment{$p_j$ cannot currently be on $S$}
	\EndIf
	\EndWhile
	\State \Return $M$
	\end{algorithmic}
\end{algorithm}
% ALGORITHM END

% THEOREM BEGIN
\begin{prop} 
	\label{th:alg_SPA_S_polytime_promoting}
	The time complexity of Algorithm \ref{Alg:APPROX_SPA-S_stable_3bibp} is $O(m)$ where $m$ is the total length of student preference lists.
	\end{prop}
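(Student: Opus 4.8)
The plan is to prove the O$(m)$ bound in three stages: specify the data structures and bound the set-up cost, bound the total number of iterations of the while loop, and bound the work performed in each iteration.

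First I would pin down the data structures promised by the algorithm's first line. I would maintain: a stack $S$ of projects together with a boolean $\beta_j$ per project recording whether $p_j \in S$, so that the ``push only if not already present'' test runs in O$(1)$; for each project $p_j$ a doubly-linked list $\rho_j$, embedded in an array indexed by student, holding exactly those students $s_i$ with $p_j \in A_i$ that could be the student side of a type-$(3bi)$ blocking pair involving $p_j$; the array {\tt matchArray} of Lemma \ref{th:alg_SPA_S_operations_constant} giving $M(s_i)$ in O$(1)$; the per-student rank information so that ``$s_i$ prefers $p_j$ to $M(s_i)$'' is decided in O$(1)$; and per-project and per-lecturer allocation counters so that ``$p_j$ is undersubscribed'' is O$(1)$. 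Using virtual initialisation as in Lemma \ref{th:alg_SPA_S_operations_constant} (cf. \cite[p. 149]{GB_PB}), all of these can be built by a single scan over the students' preference lists, and since each $s_i$ lies in $\rho_j$ only when $p_j \in A_i$ we have $\sum_{p_j} |\rho_j| \le \sum_{s_i} |A_i| = m$; hence the lists and the whole set-up cost O$(m)$. The stack is initialised to contain precisely those projects with a non-empty list.

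Next I would bound the number of iterations of the while loop. The key observation is that no step of Algorithm \ref{Alg:APPROX_SPA-S_stable_3bibp} ever inserts a student into any list $\rho_j$: every list operation is a removal of the first element. Moreover a project is pushed onto $S$ only when its list is non-empty, and since $\rho_j$ is modified solely when $p_j$ is itself popped, a project waiting on the stack retains a non-empty list until it is popped. Consequently each iteration pops a project whose list is non-empty and deletes exactly one student from that list, so the number of iterations is at most the total number of list memberships created at set-up, namely $\sum_{p_j} |\rho_j| \le m$. I would emphasise that the re-exposure of type-$(3bi)$ blocking pairs after a promotion is handled not by reinserting students but by re-pushing the affected projects $p_k$ and $p_j$ onto $S$; the relevant students are already present in the corresponding lists from initialisation, which is exactly why the total membership count, and hence the iteration count, stays bounded by $m$. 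Finally, each iteration costs O$(1)$: popping from $S$, deleting the head of the doubly-linked list $\rho_j$, reading $M(s_i)$, testing the preference and undersubscription conditions via the stored ranks and counters, performing the at most two matching updates, and the at most two conditional pushes (each guarded in O$(1)$ by the booleans) are all constant-time, mirroring the operations already shown to be O$(1)$ in Lemma \ref{th:alg_SPA_S_operations_constant}. Combining O$(m)$ set-up with at most $m$ iterations of O$(1)$ work yields the claimed O$(m)$ bound.

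The hard part will be the second stage: one must argue that the iteration count is genuinely linear. This rests on two invariants — that students are only ever deleted from the lists $\rho_j$, and that a project is popped only while its list is non-empty, so that every pop consumes one never-replenished list entry — together with the observation that newly exposed blocking pairs are recovered by re-pushing projects rather than by list insertions. Making the interaction between the stack discipline and the static lists precise is what rules out a naive over-count of promotions and delivers the O$(m)$ guarantee.
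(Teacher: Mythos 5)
Your proof is correct and follows essentially the same route as the paper's: the same data structures (the per-project lists $\rho_j$, the rank array, the stack $S$ with the membership booleans $\beta_j$, all built in O$(m)$ time) and the same charging argument that each while-loop iteration permanently consumes one entry of some $\rho_j$, bounding the iterations by $\sum_j |\rho_j| \leq m$ with O$(1)$ work per iteration. If anything, you make explicit the invariants (no insertions into the $\rho_j$ lists; re-exposed blocking pairs handled by re-pushing projects rather than replenishing lists) that the paper compresses into ``with each iteration we remove a tuple from some project's $\rho$ list \dots\ it is clear that the algorithm takes O$(m)$ time.''
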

% THEOREM END

\begin{proof}

Abraham \emph{et al.} \cite{ACMM04} describes the process of a sequence of promotions for houses in \acrshort{ha} in order to return a \emph{trade-in-free} matching. A similar process is described here to remove all blocking pairs of type \emph{(3bi)}. We create the following data structures.

\begin{itemize}
	\item A linked list $\rho_j$ of students $s_i$ for each project $p_j$ such that $s_i$ is assigned in $M$ and finds $p_j$ acceptable. We may also start by assuming that $\rho_j$ involves only students who prefer $p_j$ to $M(s_i)$, however the time complexity is unaffected by this.

	\item A ranking list $r_i$ for each student $s_i$ built as an array such that $r_i=j$ contains the rank of $p_j$ for student $s_i$;
	
	\item A stack $S$ of undersubscribed projects $p_j$, such that $\rho_j$ is non-empty, is created;
	
	\item A variable $\beta_j$ for each project $p_j$ which records whether $p_j$ is already in $S$.

\end{itemize}

These data structures can be initialised in $O(m)$ time where $m$ is the total length of student preference lists.

Execution of Algorithm \ref{Alg:APPROX_SPA-S_stable_3bibp} proceeds as follows. For each iteration of the while loop a project $p_j$ is taken from stack $S$. Project $p_j$ must be undersubscribed and have non-empty list $\rho_j$. The first student $s_i$ from $\rho_j$, is removed and if $s_i$ would prefer to be assigned to $p_j$ than $p_k = M(s_i)$ (found by comparing ranking list entries for $p_j$ and $p_k$) then we remove pair $(s_i,p_k)$ from $M$ and add $(s_i,p_j)$. Now, $p_k$ is certainly an undersubscribed project and it is added to $S$ (unless it already exists on $S$). Whether or not $(s_i,p_j)$ is added to $M$, $p_j$ may still be undersubscribed. If $\rho_j$ is non-empty and $p_j$ is undersubscribed, then $p_j$ is added to $S$.

With each iteration we remove a tuple from some project's $\rho$ list. These lists must be finite because preference lists are finite and therefore Algorithm \ref{Alg:APPROX_SPA-S_stable_3bibp} will terminate with empty $S$. It is clear that Algorithm \ref{Alg:APPROX_SPA-S_stable_3bibp} will take $O(m)$ time where $m$ is the total length of student preference lists.
\end{proof}

Finally, Theorem \ref{th:alg_SPA_S_polytime} establishes that Algorithm {\sf Max-SPA-ST-Approx} runs in linear time with respect to the total length of student preference lists.

% THEOREM BEGIN
\begin{theorem} 
	\label{th:alg_SPA_S_polytime}
	Algorithm \ref{Alg:APPROX_SPA-S_stable} always terminates and runs in linear time with respect to the total length of student preference lists.
	\end{theorem}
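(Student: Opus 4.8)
The plan is to bound separately the two structural components of Algorithm~\ref{Alg:APPROX_SPA-S_stable}, namely the main while loop and the concluding call to Promote-students, and then to add the cost of setting up the data structures. Write $m$ for the total length of all students' preference lists. First I would observe that each iteration of the main while loop corresponds to exactly one application of some available student $s_i$ to a favourite project $p_j$. Proposition~\ref{th:stable_SPA_S_maxapplications} guarantees that each student applies to any fixed project on their list at most three times over the whole execution of the loop. Summing over all acceptable student-project pairs, the total number of applications, and hence the total number of loop iterations, is at most $3m$. This simultaneously establishes that the while loop terminates and that its iteration count is $O(m)$.

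Next I would account for the per-iteration cost. By Lemma~\ref{th:alg_SPA_S_operations_constant}, every operation performed inside a single iteration of the main while loop runs in constant time using the data structures described there. Multiplying by the $O(m)$ iteration bound from the previous step gives an $O(m)$ total cost for the main while loop. One point requiring care is the initialisation of these data structures: each student, project and lecturer is equipped with arrays of length $n_2$, $n_1$ and $n_1$ respectively, so naively zeroing them would cost $O(n_1 n_2)$, potentially exceeding $O(m)$. I would invoke the virtual-initialisation technique cited in the proof of Lemma~\ref{th:alg_SPA_S_operations_constant} to reduce the setup cost to $O(m)$; the projected preference lists of projects and the preference lists of lecturers are each of total size $O(m)$, since every acceptable student-project pair contributes at most one entry to each, so they too can be built in $O(m)$ time.

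Finally I would handle the concluding step, the call to Promote-students (Algorithm~\ref{Alg:APPROX_SPA-S_stable_3bibpBrief}, realised in detail as Algorithm~\ref{Alg:APPROX_SPA-S_stable_3bibp}). Proposition~\ref{th:alg_SPA_S_polytime_promoting} already shows that this procedure terminates and runs in $O(m)$ time. Combining the three contributions --- $O(m)$ for initialisation, $O(m)$ for the main while loop, and $O(m)$ for Promote-students --- yields overall termination and an $O(m)$ running time, which is linear in the total length of students' preference lists, as required.

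The main obstacle is the termination-plus-iteration-count argument for the while loop, since progress is not obviously monotone: a pair may be added, removed as precarious, and re-added, so the loop cannot be bounded by a simple ``each pair is added once'' counting argument. The whole force of the bound rests on Proposition~\ref{th:stable_SPA_S_maxapplications}, which caps re-applications at three by exploiting the facts that a project, once not fully available, stays that way, and that a student exhausts all fully available projects at the head of their list before any re-application can recreate a precarious pair. A secondary subtlety is ensuring the data-structure initialisation does not dominate, which the virtual-initialisation device resolves.
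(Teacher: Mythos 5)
Your proposal is correct and takes essentially the same route as the paper's own proof: both decompose the cost into the main while loop, bounded by combining Proposition~\ref{th:stable_SPA_S_maxapplications} (at most three applications per preference-list entry, hence $O(3m)=O(m)$ iterations) with the constant-time operations of Lemma~\ref{th:alg_SPA_S_operations_constant}, plus the $O(m)$ bound for Promote-students from Proposition~\ref{th:alg_SPA_S_polytime_promoting}. Your additional care about virtual initialisation is already handled inside the paper's proof of Lemma~\ref{th:alg_SPA_S_operations_constant}, so it adds rigour to the same argument rather than a different one.
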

% THEOREM END

\begin{proof}
By Proposition \ref{th:stable_SPA_S_maxapplications} each student can apply to a project on their preference list a maximum of three times during the main while loop of Algorithm \ref{Alg:APPROX_SPA-S_stable}. Since all operations within this while loop run in constant time by Proposition \ref{th:alg_SPA_S_operations_constant}, this part of the algorithm must run in $O(3m)=$ $O(m)$ time, where $m$ is the total length of student preference lists. Proposition \ref{th:alg_SPA_S_polytime_promoting} shows that Algorithm \ref{Alg:APPROX_SPA-S_stable_3bibp} also runs in $O(m)$ time, therefore so must Algorithm \ref{Alg:APPROX_SPA-S_stable}. Finally, since student preference lists are of finite length, Algorithm \ref{Alg:APPROX_SPA-S_stable} must terminate.
\end{proof}

%%%%%%%%%%%%%%%%%%%%%%%%%%%%%%%%%%%%%%%%%%%%%%%%%%%%%%%%%
%%%%%%%%%%%%%%%%%%%%%%%%%%%%%%%%%%%%%%%%%%%%%%%%%%%%%%%%%

\subsection{Performance guarantee}
\label{spa-st-proofs-sec-performance}

\subsubsection{Introduction}
In Section \ref{spa-st-proofs-sec-stability} we showed that Algorithm {\sf Max-SPA-ST-Approx} always produces a stable matching. In this section we show that any stable matching produced by this algorithm is always at least two-thirds of the size of a maximum stable matching. 

We begin in Section \ref{spa-st-proofs-sec-32-prelim} by giving preliminary definitions of an \emph{underlying graph} and \emph{mapped graph} of an instance of \acrshort{spa-st}, which are used throughout the proofs in this section. Both the \emph{underlying graph} and the \emph{mapped graph} are illustrated in an example in Section \ref{spa-st-proofs-sec-32-example}. In Section \ref{spa-st-proofs-sec-32-components} we move on to look at the possible structures that may exist in \emph{mapped graphs}. Finally, using these structures, in Section \ref{spa-st-proofs-sec-32-proofs32} we prove that Algorithm {\sf Max-SPA-ST-Approx} has a performance guarantee of $\frac{3}{2}$.

%%%%%%%%%%%%%%%%%%%%%%%%%%%%%%%%%%%%%%%%%%%%%%%%%%%%%%%%%
%%%%%%%%%%%%%%%%%%%%%%%%%%%%%%%%%%%%%%%%%%%%%%%%%%%%%%%%%
\subsubsection{Preliminary definitions}
\label{spa-st-proofs-sec-32-prelim}

The \emph{underlying graph} $G$ of an \acrshort{spa-st} instance $I$ consists of sets of student, project and lecturer vertices. Edges exist between a student vertex $s_i$ and a project vertex $p_j$ if $s_i$ finds $p_j$ acceptable. Edges exist between a project vertex $p_j$ and lecturer vertex $l_k$ if $l_k$ offers $p_j$.

We now introduce the notion of a \emph{mapped graph} $G'$ of the underlying graph $G$ of the \acrshort{spa-st} instance $I$. This graph is not created in Algorithm \ref{Alg:APPROX_SPA-S_stable}, but is intended only to make it easier to prove the performance guarantee in Section \ref{spa-st-proofs-sec-32-proofs32}. The \emph{mapped graph} $G'$ is created in the following way. Let all student vertices remain unchanged. Let $M$ be the matching found by Algorithm \ref{Alg:APPROX_SPA-S_stable} for instance $I$ and let $M_{opt}$ be a maximum stable matching in $I$. For each lecturer vertex $l_k$ we create multiple \emph{cloned} vertices $l_k^1 \dots l_k^{r_k}$, where $r_k = d_k - |M(l_k) \cap M_{opt}(l_k)|$ and $d_k$ is $l_k$'s capacity. In $G$ there are edges between students and projects, and projects and lecturers, whereas $G'$ contains only edges between students and lecturer clones. 

An $(s_i,p_j)$ edge in $G$ corresponds to an $(s_i,l_k^r)$ edge in $G'$, where $l_k^r$ denotes the $r$\textsuperscript{th} lecturer clone of lecturer $l_k$. 
Edges in $G'$ are given by $M'$ and $M_{opt}'$, defined below. $M'_{opt}$ edges are defined as follows. For each lecturer $l_k$, if $\bigcup_{p_{j'} \in P_k} \{M_{opt}(p_{j'}) \backslash M(p_{j'})\} = \{s_{i_1},...,s_{i_t}\}$ then add $(s_{i_r},l_k^r), (1 \leq r \leq t)$ to $M'_{opt}$, the mapped version of $M_{opt}$ in $G'$. $M'$ edges are then added using Algorithm \ref{SPA_ST_approx_cloned_M}. By using this algorithm we ensure that where possible, pairs of edges in $M'$ and $M'_{opt}$ involving the same project are assigned to the same lecturer clone in $G'$. According to Algorithm \ref{SPA_ST_approx_cloned_M} we do the following. A copy of $M \backslash M_{opt}$ is created and denoted $M_0$ which intuitively contains the set of student-project pairs that have not yet been mapped. $L_0$ is a copy of the set of all lecturer clone vertices, and $L_0'$ is the empty set. Intuitively, $L_0'$ will collect up any remaining lecturer clones, after pairs of edges in $M'$ and $M_{opt}'$ involving the same project are dealt with. For each lecturer clone $l_k^r \in L_0$, if there is an edge $(s_i,l_k^r)$ in $M'_{opt}$ for some $s_i$ then we let $p_j$ be the project assigned to $s_i$ in $M_{opt}$. If there is not, then  $l_k^r$ is added to $L_0'$. Assuming $(s_i,l_k^r) \in M'_{opt}$, then we check if there is an edge $(s_{i'},p_j)$ in $M_0$ for some student $s_{i'}$. Again, if there is not then $l_k^r$ is added to $L_0'$. If $(s_{i'},p_j) \in M_0$ for some student $s_{i'}$ then we add edge $(s_{i'},l_k^r)$ to $M'$ and remove $(s_{i'},p_j)$ from $M_0$. After all lecturer clones have been tested, then for each student-project pair $(s_i,p_j)$ remaining in $M_0$ we find an unused lecturer clone $l_k^r \in L_0'$, where $l_k$ offers $p_j$, and add $(s_i,l_k^r)$ to $M'$. Project vertices and all other edges are ignored in $G'$.

%%%%%%%%%%%%%%%%%%%%%%%%%%%%%%%%%%%%%%%%%%%%%%%%%%%%%%%%%
%%%%%%%%%%%%%%%%%%%%%%%%%%%%%%%%%%%%%%%%%%%%%%%%%%%%%%%%%
% ALGORITHM 

\begin{algorithm} [t]

  \caption{{\sf Create-Mapped}$(M)$, obtains a set of edges $M'$ for the mapped graph $G'$ corresponding to edges in $M \backslash M_{opt}$.}
  \label{Alg:APPROX_SPA-ST_create_cloned}
	\begin{algorithmic}[1]
          \Require An instance $I$ of \acrshort{spa-st}, a stable matching $M$ and maximum stable matching $M_{opt}$ of $I$ and a mapped version $M'_{opt}$ of $M_{opt}$.
          \Ensure Return a mapped version $M'$ of $M \backslash M_{opt}$.           
          \State $M_0 \gets M \backslash M_{opt}$ \Comment where $M_0$ is the working set of student-project pairs in $M$ \label{alg-line-spa-st-create-mapped-m0}
          \State Let $L_0$ be a copy of the set of all lecturer clones
          \State $L'_0 \gets \emptyset$
          \State $M' \gets \emptyset$
          \While {$L_0$ is non-empty} \label{Alg:APPROX_SPA-ST_create_cloned_while1}
          \State Remove a lecturer clone $l_k^r$ from $L_0$
          \If {$(s_i,l_k^r)$ is an edge in $M'_{opt}$ for some $s_i$}
          \State Let $p_j$ be the project assigned to $s_i$ in $M_{opt}$
          \If {$(s_{i'},p_j)$ is in $M_0$ for some student $s_{i'}$}
          \State $M' \gets M' \cup \{(s_i',l_k^r)\}$
          \State $M_0 \gets M_0 \backslash \{(s_i',p_j)\}$

          \Else
          \State $L'_0 \gets L'_0 \cup \{l_k^r\}$

          \EndIf

          \Else 
          \State $L'_0 \gets L'_0 \cup \{l_k^r\}$ 
          \EndIf
          \EndWhile

          \While {$M_0$ is non-empty}  \label{Alg:APPROX_SPA-ST_create_cloned_while2}
          \State Pick some $(s_i,p_j) \in M_0$
          \State $M_0 \gets M_0 \backslash \{(s_i,p_j)\}$
          \State Let $l_k^r$ be some lecturer clone in $L'_0$, where $l_k$ offers $p_j$ \Comment $l_k^r$ must exist since there are $d_k - |M(l_k) \cap M_{opt}(l_k)|$ clones for $l_k$
          \State $L'_0 \gets L'_0 \backslash \{l_k^r\}$
   		  \State $M' \gets M' \cup \{(s_i,l_k^r)\}$
          \EndWhile 

          \State \Return $M'$
	\end{algorithmic}
	\label{SPA_ST_approx_cloned_M}
\end{algorithm}
% ALGORITHM END

%%%%%%%%%%%%%%%%%%%%%%%%%%%%%%%%%%%%%%%%%%%%%%%%%%%%%%%%%
%%%%%%%%%%%%%%%%%%%%%%%%%%%%%%%%%%%%%%%%%%%%%%%%%%%%%%%%%
\subsubsection{Example mapped graph}
\label{spa-st-proofs-sec-32-example}
In this section we introduce an example to demonstrate the creation of mapped graph $G'$ from underlying graph $G$ and matchings $M$ and $M_{opt}$. Figure \ref{fig:spast_example_demo_3_2_proof_cloning} shows example instance $I_4$ of \acrshort{spa-st}.

\begin{figure}
\begin{minipage}[t]{0.3\textwidth}
Student preferences:\\
$s_1$: $p_1$ $p_2$ $p_3$\\
$s_2$: ($p_1$ $p_2$)\\
$s_3$: ($p_3$ $p_2$)\\
$s_4$: $p_3$\\
\end{minipage}
\begin{minipage}[t]{0.3\textwidth}
Project details:\\
$p_1$: lecturer $l_1$, $c_1=2$\\
$p_2$: lecturer $l_2$, $c_2=1$\\
$p_3$: lecturer $l_2$, $c_3=1$\\
\end{minipage}
\begin{minipage}[t]{0.35\textwidth}
Lecturer preferences:\\
\begin{minipage}[t]{0.6\textwidth}
$l_1$: $s_1$ $s_2$\\
$l_2$: ($s_2$ $s_3$) $s_4$ $s_1$\\
\end{minipage}
\begin{minipage}[t]{0.25\textwidth}
$d_1=2$\\
$d_2=2$\\
\end{minipage}
\end{minipage}
\caption[{\sc spa-st} instance $I_4$.]{\acrshort{spa-st} instance $I_4$.}
\label{fig:spast_example_demo_3_2_proof_cloning}
\end{figure}

Let $M = \{(s_1,p_1),(s_2,p_2),(s_3,p_3)\}$ and $M_{opt} = \{(s_1,p_1),(s_2,p_1),(s_3,p_2),(s_4,p_3)\}$ be stable matchings in $I$. Clearly, $M_{opt}$ is also a maximum stable matching as all students are assigned. Figure \ref{SPA_approx_example1_G} shows the underlying graph $G$ of instance $I_4$. To create the vertices of $G'$, student vertices are copied, and multiple lecturer cloned vertices are created. For lecturer vertices $l_1$ and $l_2$ in $G$ with capacities of $2$, we create $l_1^1, l_1^2, l_2^1$ and $l_2^2$ in $G'$.  Using the definition of $M_{opt}'$ above, we obtain the edge set $M_{opt}'=\{(s_2,l_1^1),(s_3,l_2^1),(s_4,l_2^2)\}$.
Figure \ref{SPA_approx_example1_G_dash_mopt} shows a part built $G'$ with all $M'_{opt}$ edges added.

Next $M'$ is calculated using Algorithm \ref{SPA_ST_approx_cloned_M}. A copy of $M \backslash M_{opt}$ is created and denoted $M_0 = \{(s_2,p_2),(s_3,p_3)\}$. $L_0=\{l_1^1, l_1^2, l_2^1, l_2^2\}$ is a copy of the set of all lecturer cloned vertices, and $L_0'$ is the empty set. We iterate through $L_0$ as follows. 

\begin{itemize}
	\item Lecturer clone $l_1^1$ is removed from $L_0$. Since there is an edge $(s_2,l_1^1) \in M'_{opt}$ and $s_2$ is assigned $p_1$ in $M_{opt}$, but $(s_{i'},p_j) \notin M_0$ for each student $s_{i'}$, $l_1^1$ is added to $L_0'$;
	\item Lecturer clone $l_1^2$ is removed from $L_0$. As there is no edge $(s_i,l_1^2) \in M'_{opt}$ for any student $s_i$, $l_1^2$ is added to $L_0'$; 
	\item Next lecturer clone $l_2^1$ is removed from $L_0$. There is an edge $(s_3,l_2^1) \in M'_{opt}$, $s_3$ is assigned $p_2$ in $M_{opt}$ and there is an edge $(s_2,p_2) \in M_0$, hence $(s_2,l_2^1)$ is added to $M'$ and $(s_2,p_2)$ is removed from $M_0$;
	\item Using the same reasoning when the final lecturer clone $l_2^2$ is removed from $L_0$, $(s_3,l_2^2)$ is also added to $M'$ and $(s_3,p_3)$ is removed from $M_0$.
\end{itemize}

   As $M_0$ is now empty, we do not enter the final while loop on Line \ref{Alg:APPROX_SPA-ST_create_cloned_while2} of Algorithm \ref{Alg:APPROX_SPA-ST_create_cloned} therefore $M'$ is now complete. Figure \ref{SPA_approx_example1_G_dash_mopt_m} shows the completed mapped graph $G'$ with edge set $M' \cup M'_{opt}$. 

\begin{figure}[]
\centering
  \begin{subfigure}[t]{0.3\textwidth}
  \centering
        \begin{tikzpicture}
		\node [simpleNodes](st1) at (0,0) {$s_1$};
		\node [simpleNodes](st2) at (0,-\graphyStd) {$s_2$};
		\node [simpleNodes](st3) at (0,-\graphyStd * 2) {$s_3$};
		\node [simpleNodes](st4) at (0,-\graphyStd * 3) {$s_4$};
		\node [simpleNodes](pr1) at (\graphxStd,-\graphyStd * 0.5) {$p_1$};
		\node [simpleNodes](pr2) at (\graphxStd,-\graphyStd * 1.5) {$p_2$};
		\node [simpleNodes](pr3) at (\graphxStd,-\graphyStd * 2.5) {$p_3$};
		\node [simpleNodes](le1) at (\graphxStd * 2,-\graphyStd * 0.5) {$l_1$};
		\node [simpleNodes](le2) at (\graphxStd * 2,-\graphyStd * 2) {$l_2$};

		\draw[thinLine] ([yshift=\graphShiftStdL ]st1.east) -- ([yshift=\graphShiftStdXL ]pr1.west);
		\draw[thickLine] ([yshift=-\graphShiftStdL ]st1.east) -- (pr1.west);
		\draw[thinLine] (st2.east) -- (pr1.west);
		\draw[thinLine] (st3.east) -- (pr2.west);
		\draw[thinLine] (st4.east) -- (pr3.west);
		\draw[thickLine] (st2.east) -- (pr2.west);
		\draw[thickLine] (st3.east) -- (pr3.west);
		\draw[thinLine] ([yshift=\graphShiftStdL ]pr2.east) -- ([yshift=\graphShiftStdXL ]le2.west);
		\draw[thickLine] ([yshift=-\graphShiftStdL ]pr2.east) -- (le2.west);
		\draw[thinLine] ([yshift=\graphShiftStd ] pr1.east) -- ([yshift=\graphShiftStd ] le1.west);
		\draw[thickLine] ([yshift=-\graphShiftStd ] pr1.east) -- ([yshift=-\graphShiftStd ] le1.west);
		\draw[thinLine] ([yshift=\graphShiftStdL ]pr3.east) -- (le2.west);
		\draw[thickLine] ([yshift=-\graphShiftStdL ] pr3.east) -- ([yshift=-\graphShiftStdXL ]le2.west);
		\draw[thinLine] ([yshift=\graphShiftStd * 2 ] pr1.east) -- ([yshift=\graphShiftStd * 2 ] le1.west);

		\draw[dashed] (st1.east) -- (pr2.west);
		\draw[dashed] (st1.east) -- (pr3.west);
		\end{tikzpicture}
    \caption{The underlying graph $G$. $M$ and $M_{opt}$ are shown in bold and non-bold edges respectively. Edges not in $M \cup M_{opt}$ are dashed.}
    \label{SPA_approx_example1_G}
  \end{subfigure}
  \hspace*{\fill}
    \begin{subfigure}[t]{0.3\textwidth}
  \centering
        \begin{tikzpicture}
        \node [simpleNodes](st1) at (0,0) {$s_1$};
		\node [simpleNodes](st2) at (0,-\graphyStd) {$s_2$};
		\node [simpleNodes](st3) at (0,-\graphyStd * 2) {$s_3$};
		\node [simpleNodes](st4) at (0,-\graphyStd * 3) {$s_4$};
		\node [simpleNodes](le1) at (\graphxStd * 1.5,0) {$l_1^1$};
		\node [simpleNodes](le2) at (\graphxStd * 1.5,-\graphyStd) {$l_1^2$};
		\node [simpleNodes](le3) at (\graphxStd * 1.5,-\graphyStd* 2) {$l_2^1$};
		\node [simpleNodes](le4) at (\graphxStd * 1.5,-\graphyStd * 3) {$l_2^2$};

		\draw[thinLine] (st2.east) -- (le1.west);
		\draw[thinLine] (st3.east) -- (le3.west);
		\draw[thinLine] (st4.east) -- (le4.west);

		\end{tikzpicture}
    \caption{Part-built $G'$. All student and lecturer clone vertices are added. $M'_{opt} = \{(s_2,l_1^1),(s_3,l_2^1),(s_4,l_2^2)\}$ edges are also shown.}
    \label{SPA_approx_example1_G_dash_mopt}
  \end{subfigure}
  \hspace*{\fill}
  \begin{subfigure}[t]{0.3\textwidth}
  \centering
        \begin{tikzpicture}
        \node [simpleNodes](st1) at (0,0) {$s_1$};
		\node [simpleNodes](st2) at (0,-\graphyStd) {$s_2$};
		\node [simpleNodes](st3) at (0,-\graphyStd * 2) {$s_3$};
		\node [simpleNodes](st4) at (0,-\graphyStd * 3) {$s_4$};
		\node [simpleNodes](le1) at (\graphxStd * 1.5,0) {$l_1^1$};
		\node [simpleNodes](le2) at (\graphxStd * 1.5,-\graphyStd) {$l_1^2$};
		\node [simpleNodes](le3) at (\graphxStd * 1.5,-\graphyStd * 2) {$l_2^1$};
		\node [simpleNodes](le4) at (\graphxStd * 1.5,-\graphyStd * 3) {$l_2^2$};

		\draw[thinLine] (st2.east) -- (le1.west);
		\draw[thinLine] (st3.east) -- (le3.west);
		\draw[thinLine] (st4.east) -- (le4.west);

		\draw[thickLine] (st2.east) -- (le3.west);
		\draw[thickLine] (st3.east) -- (le4.west);
		\end{tikzpicture}
    \caption{$G'$ with edge set $M' \cup M'_{opt}$, where $M'_{opt} = \{(s_2,l_1^1),(s_3,l_2^1),(s_4,l_2^2)\}$ (non-bold edges) and $M'=\{(s_2,l_2^1),(s_3,l_2^2) \}$ (bold edges), is shown.}
    \label{SPA_approx_example1_G_dash_mopt_m}
  \end{subfigure}
  \caption[Example illustrating the underlying graph $G$ and mapped graph $G'$ of instance $I_4$.]{Example illustrating the underlying graph $G$ and mapped graph $G'$ of instance $I_4$, relative to two stable matchings $M$ and $M_{opt}$ in $G$.}
  \label{}
\end{figure}

%%%%%%%%%%%%%%%%%%%%%%%%%%%%%%%%%%%%%%%%%%%%%%%%%%%%%%%%%
%%%%%%%%%%%%%%%%%%%%%%%%%%%%%%%%%%%%%%%%%%%%%%%%%%%%%%%%%
\subsubsection{Components in $G'$}
\label{spa-st-proofs-sec-32-components}

In this section we define the possible structures that may exist in the mapped graph $G'$.

An \emph{alternating path} in $G'$ is defined as a path that comprises edges in $M_{opt}$ and in $M$ alternately. A path or alternating path is described as $even$ if there are an even number of edges in the path, \emph{odd} otherwise. Finally, an \emph{alternating cycle} is a sequence of edges in $M_{opt}$ and $M$ alternately, which forms a cycle. 

A \emph{component} $c$ in $G'$ is defined as any maximal connected subgraph in $G'$. Figure \ref{SPASTpossiblecomps} shows the possible component structures that may be found in $G'$ which are described in more detail below. Let $n_{c,l}$ and $n_{c,s}$ denote the maximum number of lecturer clone vertices and student vertices respectively, in some component $c$ of $G'$, and let $n_c = \max \{n_{c,l},n_{c,s}\}$. Notation for a lecturer clone in component $c$ is defined as $l^{c,r}$ indicating the $r$\textsuperscript{th} lecturer clone of component $c$. Similarly, $s^{c,r}$ indicates the $r$\textsuperscript{th} student of component $c$.

\begin{figure}[]
\centering
\captionsetup[subfigure]{justification=centering}
  \begin{subfigure}[b]{0.3\textwidth}
  \centering
        \begin{tikzpicture}
          \node [simpleNodesBigger](st1) at (0,0) {$s^{c,1}$};
          \node [simpleNodesBigger](st2) at (0,-\graphyStd) {$s^{c,2}$};
          \node [simpleNodesBigger](st3) at (0,-\graphyStd * 2) {$s^{c,3}$};
          \node [simpleNodesBigger](st4) at (0,-\graphyStd * 4) {$s^{c,\gamma}$};
          \node [simpleNodesBigger](le1) at (\graphxStd * 1.5,0) {$l^{c,1}$};
          \node [simpleNodesBigger](le2) at (\graphxStd * 1.5,-\graphyStd) {$l^{c,2}$};
          \node [simpleNodesBigger](le3) at (\graphxStd * 1.5,-\graphyStd *2) {$l^{c,3}$};
          \node [simpleNodesBigger](le4) at (\graphxStd * 1.5,-\graphyStd * 4) {$l^{c,\gamma}$};

          \draw[dashed] (0,-\graphyStd * 2.75) -- (0,-\graphyStd * 3.25);
          \draw[dashed] (\graphxStd * 1.5,-\graphyStd * 2.75) -- (\graphxStd * 1.5,-\graphyStd * 3.25);
          \draw[thinLine] (st1.east) -- (le1.west);
          \draw[thickLine] (st1.east) -- (le2.west);
          \draw[thinLine] (st2.east) -- (le2.west);
          \draw[thickLine] (st2.east) -- (le3.west);
          \draw[thinLine] (st3.east) -- (le3.west);
          \draw[thickLine] (st4.east) -- (le1.west);
          \draw[thinLine] (st4.east) -- (le4.west);

          \draw[thickLine] (st3.east) -- (\graphyStd *0.75,-\graphyStd * 2.5);
          \draw[thickLine] (\graphyStd *0.75, -\graphyStd *3.5) -- (le4.west);

          \end{tikzpicture}
    \caption{}
    \label{SPASTpossiblecomps_1}
  \end{subfigure}
  \hspace*{\fill}
  \begin{subfigure}[b]{0.3\textwidth}
  \centering
        \begin{tikzpicture}
          \node [simpleNodesBigger](st1) at (0,0) {$s^{c,1}$};
          \node [simpleNodesBigger](st2) at (0,-\graphyStd) {$s^{c,2}$};
          \node [simpleNodesBigger](st3) at (0,-\graphyStd * 3) {$s^{c,\mu}$};
          \node [simpleNodesBigger](le1) at (\graphxStd * 1.5,0) {$l^{c,1}$};
          \node [simpleNodesBigger](le2) at (\graphxStd * 1.5,-\graphyStd) {$l^{c,2}$};
          \node [simpleNodesBigger](le3) at (\graphxStd * 1.5,-\graphyStd *3) {$l^{c,\mu}$};
          \node [simpleNodesBigger](le4) at (\graphxStd * 1.5,-\graphyStd * 4) {$l^{c,\gamma}$};

          \draw[dashed] (0,-\graphyStd * 1.75) -- (0,-\graphyStd * 2.25);
          \draw[dashed] (\graphxStd * 1.5,-\graphyStd * 1.75) -- (\graphxStd * 1.5,-\graphyStd * 2.25);
          \draw[thinLine] (st1.east) -- (le1.west);
          \draw[thickLine] (st1.east) -- (le2.west);
          \draw[thinLine] (st2.east) -- (le2.west);
          \draw[thinLine] (st3.east) -- (le3.west);
          \draw[thickLine] (st3.east) -- (le4.west);
          \draw[thickLine] (st2.east) -- (\graphyStd *0.75,-\graphyStd * 1.5);
          \draw[thickLine] (\graphyStd *0.75, -\graphyStd *2.5) -- (le3.west);

          \end{tikzpicture}
    \caption{}
    \label{SPASTpossiblecomps_2}
  \end{subfigure}
  \hspace*{\fill}
  \begin{subfigure}[b]{0.3\textwidth}
  \centering
        \begin{tikzpicture}
          \node [simpleNodesBigger](st1) at (0,0) {$s^{c,1}$};
          \node [simpleNodesBigger](st2) at (0,-\graphyStd) {$s^{c,2}$};
          \node [simpleNodesBigger](st3) at (0,-\graphyStd * 3) {$s^{c,\mu}$};
          \node [simpleNodesBigger](st4) at (0,-\graphyStd * 4) {$s^{c,\gamma}$};
          \node [simpleNodesBigger](le1) at (\graphxStd * 1.5,0) {$l^{c,1}$};
          \node [simpleNodesBigger](le2) at (\graphxStd * 1.5,-\graphyStd) {$l^{c,2}$};
          \node [simpleNodesBigger](le3) at (\graphxStd * 1.5,-\graphyStd *3) {$l^{c,\mu}$};

          \draw[dashed] (0,-\graphyStd * 1.75) -- (0,-\graphyStd * 2.25);
          \draw[dashed] (\graphxStd * 1.5,-\graphyStd * 1.75) -- (\graphxStd * 1.5,-\graphyStd * 2.25);
          \draw[thinLine] (st1.east) -- (le1.west);
          \draw[thickLine] (st2.east) -- (le1.west);
          \draw[thinLine] (st2.east) -- (le2.west);
          \draw[thinLine] (st3.east) -- (le3.west);
          \draw[thickLine] (st4.east) -- (le3.west);

          \draw[thickLine] (le2.west) -- (\graphyStd *0.75,-\graphyStd * 1.5);
          \draw[thickLine] (\graphyStd *0.75, -\graphyStd *2.5) -- (st3.east);

          \end{tikzpicture}
    \caption{}
    \label{SPASTpossiblecomps_3}
  \end{subfigure}
  \vspace{1cm}

\begin{subfigure}[b]{0.3\textwidth}
  \centering
        \begin{tikzpicture}
          \node [simpleNodesBigger](st1) at (0,0) {$s^{c,1}$};
          \node [simpleNodesBigger](st2) at (0,-\graphyStd) {$s^{c,2}$};
          \node [simpleNodesBigger](st3) at (0,-\graphyStd * 3) {$s^{c,\gamma}$};
          \node [simpleNodesBigger](le1) at (\graphxStd * 1.5,0) {$l^{c,1}$};
          \node [simpleNodesBigger](le2) at (\graphxStd * 1.5,-\graphyStd) {$l^{c,2}$};
          \node [simpleNodesBigger](le3) at (\graphxStd * 1.5,-\graphyStd *3) {$l^{c,\gamma}$};

          \draw[dashed] (0,-\graphyStd * 1.75) -- (0,-\graphyStd * 2.25);
          \draw[dashed] (\graphxStd * 1.5,-\graphyStd * 1.75) -- (\graphxStd * 1.5,-\graphyStd * 2.25);
          \draw[thickLine] (st1.east) -- (le1.west);
          \draw[thinLine] (st1.east) -- (le2.west);
          \draw[thickLine] (st2.east) -- (le2.west);
          \draw[thickLine] (st3.east) -- (le3.west);

          \draw[thinLine] (st2.east) -- (\graphyStd *0.75,-\graphyStd * 1.5);
          \draw[thinLine] (\graphyStd *0.75, -\graphyStd *2.5) -- (le3.west);

          \end{tikzpicture}
    \caption{}
    \label{SPASTpossiblecomps_4}
  \end{subfigure}
  \hspace*{\fill}
  \begin{subfigure}[b]{0.3\textwidth}
  \centering
        \begin{tikzpicture}
          \node [simpleNodesBigger](st1) at (0,0) {$s^{c,1}$};
          \node [simpleNodesBigger](st2) at (0,-\graphyStd) {$s^{c,2}$};
          \node [simpleNodesBigger](st3) at (0,-\graphyStd * 3) {$s^{c,\gamma}$};
          \node [simpleNodesBigger](le1) at (\graphxStd * 1.5,0) {$l^{c,1}$};
          \node [simpleNodesBigger](le2) at (\graphxStd * 1.5,-\graphyStd) {$l^{c,2}$};
          \node [simpleNodesBigger](le3) at (\graphxStd * 1.5,-\graphyStd *3) {$l^{c,\gamma}$};

          \draw[dashed] (0,-\graphyStd * 1.75) -- (0,-\graphyStd * 2.25);
          \draw[dashed] (\graphxStd * 1.5,-\graphyStd * 1.75) -- (\graphxStd * 1.5,-\graphyStd * 2.25);
          \draw[thinLine] (st1.east) -- (le1.west);
          \draw[thickLine] (st1.east) -- (le2.west);
          \draw[thinLine] (st2.east) -- (le2.west);
          \draw[thinLine] (st3.east) -- (le3.west);

          \draw[thickLine] (st2.east) -- (\graphyStd *0.75,-\graphyStd * 1.5);
          \draw[thickLine] (\graphyStd *0.75, -\graphyStd *2.5) -- (le3.west);

          \end{tikzpicture}
    \caption{$\gamma \geq 3$}
    \label{SPASTpossiblecomps_5}
  \end{subfigure}
  \hspace*{\fill}
  \begin{subfigure}[b]{0.3\textwidth}
  \centering
        \begin{tikzpicture}
          \node [simpleNodesBigger](st1) at (0,0) {$s^{c,1}$};
          \node [simpleNodesBigger](st2) at (0,-\graphyStd) {$s^{c,2}$};

          \node [simpleNodesBigger](le1) at (\graphxStd * 1.5,0) {$l^{c,1}$};
          \node [simpleNodesBigger](le2) at (\graphxStd * 1.5,-\graphyStd) {$l^{c,2}$};
          \draw[thinLine] (st1.east) -- (le1.west);
          \draw[thickLine] (st1.east) -- (le2.west);
          \draw[thinLine] (st2.east) -- (le2.west);

          \end{tikzpicture}
    \caption{$\gamma = 2$}
    \label{SPASTpossiblecomps_6}
  \end{subfigure}

  \vspace{1cm}
  \begin{subfigure}[b]{0.3\textwidth}
  \centering
        \begin{tikzpicture}
          \node [simpleNodesBigger](st1) at (0,0) {$s^{c,1}$};

          \node [simpleNodesBigger](le1) at (\graphxStd * 1.5,0) {$l^{c,1}$};

          \draw[thinLine] (st1.east) -- (le1.west);
          \end{tikzpicture}
    \caption{$\gamma = 1$}
    \label{SPASTpossiblecomps_7}
  \end{subfigure}
  
  \caption[Possible component structures in $G'$.]{Possible component structures in $G'$ for a component $c$, where $\gamma = n_c$, the size of the component, and $\mu = \gamma - 1$. $M'$ and $M'_{opt}$ edges are shown in bold and non-bold, respectively.}
  \label{SPASTpossiblecomps}
\end{figure}

Each vertex in $G'$ is incident to at most one $M'$ edge and at most one $M'_{opt}$ edge, meaning every component must be a path or cycle comprising alternating $M'$ and $M'_{opt}$ edges. Therefore the structure of each component must have one of the following forms.

\begin{enumerate}[(a)]
  \item An alternating cycle; \label{SPASTpossiblecomps_1desc}
  \item An even length alternating path, with lecturer clone end vertices; \label{SPASTpossiblecomps_2desc}
  \item An even length alternating path, with student end vertices; \label{SPASTpossiblecomps_3desc}
  \item An odd length alternating path, with end edges in $M'$; \label{SPASTpossiblecomps_4desc}
  \item An odd length alternating path, with end edges in $M'_{opt}$, for $n_c \geq 3$; \label{SPASTpossiblecomps_5desc}
  \item An odd length alternating path, with end edges in $M'_{opt}$, for $n_c = 2$; \label{SPASTpossiblecomps_6desc}
  \item An odd length alternating path, with end edges in $M'_{opt}$, for $n_c = 1$; \label{SPASTpossiblecomps_7desc}
\end{enumerate}

We wish to show that any stable matching found by Algorithm \ref{Alg:APPROX_SPA-S_stable} must be at least two-thirds of the size of $M_{opt}$.

%%%%%%%%%%%%%%%%%%%%%%%%%%%%%%%%%%%%%%%%%%%%%%%%%%%%%%%%%
%%%%%%%%%%%%%%%%%%%%%%%%%%%%%%%%%%%%%%%%%%%%%%%%%%%%%%%%%
\subsubsection{Proof of the $\frac{3}{2}$ performance guarantee}
\label{spa-st-proofs-sec-32-proofs32}

In this section we prove that any stable matching produced by Algorithm {\sf Max-SPA-ST-Approx} must be at least two-thirds of the size of a maximum stable matching.

Propositions \ref{th:stable_SPA_S_alt_p_j_undersubscribed} and \ref{th:stable_SPA_S_alt_p_j_undersubscribed_mopt} detail two configurations of components in $G'$ where we may infer that a project is undersubscribed in either $M$ or $M_{opt}$. The following terminology, used in these first two propositions, is now introduced.
Let $M_{opt}'(p_j)$ denote the set of students who are paired with lecturer clones in $M_{opt}'$ associated with project $p_j$. Then $M_{opt}'(p_j) = M_{opt}(p_j) \backslash M(p_j)$ by construction. Finally, let $M_0^\alpha$ denote the value of $M_0=M\setminus M_{opt}$ on Line \ref{alg-line-spa-st-create-mapped-m0} of Algorithm \ref{Alg:APPROX_SPA-ST_create_cloned}, i.e. the original set of unmapped student-project pairs.

% THEOREM BEGIN
\begin{prop} 
	\label{th:stable_SPA_S_alt_p_j_undersubscribed}
	 Let $(s_i,l^{c,r}) \in M'_{opt}$ be an edge in $G'$ where $(s_i,p_j) \in M_{opt}$. If $l^{c,r}$ is unassigned in $M'$ or if there exists an edge $(s_{i'},l^{c,r}) \in M'$ where $s_{i'}$ is assigned to a project other than $p_j$ in $M$, then $|M_{opt}(p_j)| > |M(p_j)|$, and hence $p_j$ is undersubscribed in $M$. %{\color{blue} Further $|M(p_{j'})| > |M_{opt}(p_{j'})|$, and hence $p_{j'}$ is undersubscribed in $M_{opt}$.}
	\end{prop}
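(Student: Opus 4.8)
The plan is to trace through Algorithm~\ref{SPA_ST_approx_cloned_M}, keeping track only of the pairs of $M_0$ that involve the particular project $p_j$, and to extract a counting inequality that forces $M_{opt}$ to allocate strictly more students to $p_j$ than $M$ does. Let $l_k$ be the lecturer offering $p_j$, so that $l^{c,r}=l_k^r$. Write $B=|M(p_j)\setminus M_{opt}(p_j)|$ for the number of pairs involving $p_j$ that lie in $M_0=M\setminus M_{opt}$ at the start of the algorithm, and let $E$ denote the set of $M'_{opt}$-edges whose student is assigned $p_j$ in $M_{opt}$. By the definition of $M'_{opt}$ each such student lies in $M_{opt}(l_k)\setminus M(l_k)$ and contributes exactly one clone edge, so $|E|=|M_{opt}(p_j)\setminus M(l_k)|$; moreover the edge $(s_i,l^{c,r})$ itself belongs to $E$.

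First I would reduce both hypotheses to a single statement: at the moment the first while loop (Line~\ref{Alg:APPROX_SPA-ST_create_cloned_while1}) processes the clone $l^{c,r}$, the set $M_0$ contains no pair of the form $(s_{i'},p_j)$. If $l^{c,r}$ is unmatched in $M'$, then it was placed in $L_0'$ during the first loop and never used in the second loop; since $(s_i,l^{c,r})\in M'_{opt}$, the ``no matching $M'_{opt}$-edge'' branch does not apply, so $l^{c,r}$ can reach $L_0'$ only through the branch where no $(s_{i'},p_j)$ is present in $M_0$. If instead $(s_{i'},l^{c,r})\in M'$ with $s_{i'}$ assigned a project other than $p_j$ in $M$, then this edge cannot have been added in the first loop, because there $(s_{i'},l_k^r)$ is created only from a pair $(s_{i'},p_j)\in M_0\subseteq M$, which would force $s_{i'}$ to be assigned $p_j$ in $M$; hence it was added in the second loop (Line~\ref{Alg:APPROX_SPA-ST_create_cloned_while2}), which again means $l^{c,r}$ was sent to $L_0'$ in the first loop with no $(s_{i'},p_j)$ available.

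The counting step comes next. Only edges of $E$ can remove a pair involving $p_j$ from $M_0$ (a clone removes a $p_j$-pair precisely when its $M'_{opt}$-partner is assigned $p_j$ in $M_{opt}$), each clone is processed exactly once in the first loop, and each removal deletes exactly one $p_j$-pair. Since $M_0$ starts with $B$ pairs involving $p_j$ and, by the previous paragraph, holds none of them by the time $l^{c,r}$ is processed, all $B$ of these pairs must have been removed beforehand by $B$ distinct clones of $E$, none of which is $l^{c,r}$ (which removes nothing). Together with $(s_i,l^{c,r})$ this gives $|E|\ge B+1$, i.e. $|M_{opt}(p_j)\setminus M(l_k)|\ge B+1$.

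Finally I would convert this into the claimed cardinality comparison. Any student in $M_{opt}(p_j)\setminus M(l_k)$ is assigned $p_j$ in $M_{opt}$ but to no project of $l_k$ in $M$, hence not to $p_j$ in $M$; thus $M_{opt}(p_j)\setminus M(l_k)\subseteq M_{opt}(p_j)\setminus M(p_j)$, whence $|M_{opt}(p_j)\setminus M(p_j)|\ge B+1>B=|M(p_j)\setminus M_{opt}(p_j)|$. Subtracting the common term $|M_{opt}(p_j)\cap M(p_j)|$ from both sides yields $|M_{opt}(p_j)|>|M(p_j)|$, and since $M_{opt}$ respects the capacity of $p_j$ we obtain $|M(p_j)|<|M_{opt}(p_j)|\le c_j$, so $p_j$ is undersubscribed in $M$. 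I expect the main obstacle to be the second paragraph: the two hypotheses describe quite different final states of $l^{c,r}$, and the crux is to argue carefully that both force $l^{c,r}$ through the $L_0'$ branch of the first loop at a moment when $M_0$ is free of $p_j$-pairs; once this common reduction is established the counting and set-arithmetic are routine.
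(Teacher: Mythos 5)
Your proof is correct and takes essentially the same route as the paper's: both arguments count, over the first while loop of Algorithm~\ref{SPA_ST_approx_cloned_M}, how the clones whose $M'_{opt}$-partner is assigned $p_j$ in $M_{opt}$ greedily consume the $p_j$-pairs of $M_0$ (the paper phrases this as a contradiction from $|M_{opt}(p_j)|\le|M(p_j)|$, showing every such clone would then be paired with a $p_j$-pair, including $l^{c,r}$; you count directly to get $|E|\ge B+1$), and your case analysis of how each hypothesis forces $l^{c,r}$ through the $L_0'$ branch with no $p_j$-pair left in $M_0$ is, if anything, more explicit than the paper's. One harmless slip: by the construction of $M'_{opt}$ (confirmed by the paper's own worked example, where $s_3\in M(l_2)$ yet $(s_3,l_2^1)\in M'_{opt}$), a clone edge is created for every pair of $M_{opt}\setminus M$ involving a project of $l_k$, so a student of $E$ may be assigned a \emph{different} project of $l_k$ in $M$; hence $|E|=|M_{opt}(p_j)\setminus M(p_j)|$ rather than your claimed $|M_{opt}(p_j)\setminus M(l_k)|$, which only enlarges $|E|$, renders your final subset step unnecessary, and leaves the chain $|M_{opt}(p_j)\setminus M(p_j)|\ge B+1>B=|M(p_j)\setminus M_{opt}(p_j)|$ intact.
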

% THEOREM END

\begin{proof}
Suppose $l^{c,r}$ is unassigned in $M'$ or $(s_{i'},l^{c,r}) \in M'$ where $s_{i'}$ is assigned a project other than $p_j$ in $M$.
Assume for contradiction that $|M_{opt}(p_j)| \leq |M(p_j)|$. During the execution of Algorithm \ref{Alg:APPROX_SPA-ST_create_cloned}, the first while loop iterates over the lecturer clones in $G'$ once. This means that all edges in $M_{opt}'(p_j)$ are iterated over. 

Since $|M_{opt}(p_j)| \leq |M(p_j)|$, we know that $|M_{opt}(p_j) \backslash M(p_j)| \leq |M(p_j) \backslash M_{opt}(p_j)|$. But then $|M_{opt}'(p_j)| \leq |M_0^{\alpha}(p_j)|$ and so it must be the case that every lecturer clone in a pair of $M'_{opt}(p_j)$ (including $l^{c,r}$) is paired with a student in $M_0^{\alpha}(p_j)$. This contradicts the fact that $l^{c,r}$ is either unassigned in $M'$ or $(s_{i'},l^{c,r}) \in M'$ where $s_{i'}$ is assigned to a project other than $p_j$ in $M$. It follows immediately that $p_j$ is undersubscribed in $M$.
\end{proof}

% THEOREM BEGIN
\begin{prop} 
	\label{th:stable_SPA_S_alt_p_j_undersubscribed_mopt}
	 Let $(s_i,l^{c,r}) \in M'$ be an edge in $G'$ where $(s_i,p_j) \in M$. If there exists an edge $(s_{i'},l^{c,r}) \in M_{opt}'$ where $s_{i'}$ is assigned to a project other than $p_j$ in $M_{opt}$, then $|M(p_j)| > |M_{opt}(p_j)|$, and hence $p_j$ is undersubscribed in $M_{opt}$.
	\end{prop}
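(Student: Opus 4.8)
The plan is to prove this proposition as the natural dual of Proposition \ref{th:stable_SPA_S_alt_p_j_undersubscribed}, but since the mapping Algorithm \ref{Alg:APPROX_SPA-ST_create_cloned} treats $M$ and $M_{opt}$ asymmetrically (the $M'_{opt}$ edges are fixed first, and the $M'$ edges are only afterwards fitted onto the clones), I cannot appeal to symmetry and must argue directly about where the edge $(s_i,l^{c,r})$ is produced. The first, and I expect most delicate, step is to show that $(s_i,l^{c,r})$ must have been added to $M'$ during the \emph{second} while loop (Line \ref{Alg:APPROX_SPA-ST_create_cloned_while2}) of Algorithm \ref{Alg:APPROX_SPA-ST_create_cloned}, not the first. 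The key observation is that whenever the first while loop adds an edge $(s_v,l_k^r)$ to $M'$, the project assigned to $s_v$ in $M$ is exactly the project $p_q$ assigned to $s_u$ in $M_{opt}$, where $(s_u,l_k^r)$ is the unique $M'_{opt}$ edge incident to that clone. Here the clone $l^{c,r}$ already carries the $M'_{opt}$ edge $(s_{i'},l^{c,r})$ with $s_{i'}$ assigned $p_{j'}\neq p_j$ in $M_{opt}$; had $(s_i,l^{c,r})$ been created in the first loop we would be forced to conclude $p_j=p_{j'}$, a contradiction. Hence $(s_i,p_j)$ was still present in the working set $M_0=M\backslash M_{opt}$ at the end of the first loop and was mapped only in the second loop.

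Next I would run a counting argument, set up by contradiction in the same style as Proposition \ref{th:stable_SPA_S_alt_p_j_undersubscribed}: assume $|M(p_j)|\leq |M_{opt}(p_j)|$. During the first while loop the only clones that can absorb a $p_j$-pair of $M_0$ are those whose $M'_{opt}$ edge is incident to a student assigned $p_j$ in $M_{opt}$; by the construction of $M'_{opt}$ these clones are in bijection with the pairs $(s,p_j)\in M_{opt}\backslash M$, so there are $|M_{opt}(p_j)|-b$ of them, where $b$ is the number of students assigned $p_j$ in \emph{both} $M$ and $M_{opt}$. The pairs they must absorb are precisely the $p_j$-pairs in $M_0$, of which there are $|M(p_j)|-b$. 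The crucial simplification is that the common count $b$ cancels, so the assumption $|M(p_j)|\leq |M_{opt}(p_j)|$ yields at least as many absorbing clones as $p_j$-pairs; consequently every $p_j$-pair of $M_0$, and in particular $(s_i,p_j)$, is removed from $M_0$ during the first loop.

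This contradicts the first step, which established that $(s_i,p_j)$ was still in $M_0$ at the start of the second loop. Therefore $|M(p_j)|>|M_{opt}(p_j)|$, and since $M$ is a matching we have $|M_{opt}(p_j)|<|M(p_j)|\leq c_j$, so $p_j$ is undersubscribed in $M_{opt}$, as required. The main obstacle throughout is the careful bookkeeping of which clones are eligible to absorb a $p_j$-pair in the first loop, together with verifying that the number of eligible clones and the number of pairs to be absorbed differ by exactly $|M(p_j)|-|M_{opt}(p_j)|$ once the common pairs cancel; getting this cancellation right is what makes the clean strict inequality, rather than a weaker bound, come out.
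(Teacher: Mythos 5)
Your proof is correct and takes essentially the same approach as the paper's: the paper likewise assumes $|M(p_j)|\leq|M_{opt}(p_j)|$ for contradiction and argues that the first while loop of Algorithm \ref{Alg:APPROX_SPA-ST_create_cloned} would then use up every $p_j$-pair of $M_0$ on clones whose $M'_{opt}$ edge also involves $p_j$, contradicting the placement of $(s_i,l^{c,r})$ on a clone whose $M'_{opt}$ student has a different $M_{opt}$-project. Your write-up is if anything more careful than the paper's terse (and notationally garbled) version, in particular in pinning $(s_i,l^{c,r})$ to the second while loop and in making explicit the cancellation of the common count $b$ that yields the strict inequality.
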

% THEOREM END

\begin{proof}
We use a similar proof to Proposition \ref{th:stable_SPA_S_alt_p_j_undersubscribed}. Suppose there exists an edge $(s_{i'},l^{c,r}) \in M_{opt}'$ where $s_{i'}$ is assigned a project other than $p_j$ in $M_{opt}$ and assume for contradiction that $|M(p_j)| \leq |M_{opt}(p_j)|$. During the execution of Algorithm \ref{Alg:APPROX_SPA-ST_create_cloned}, the first while loop iterates over the lecturer clones in $G'$ once, meaning that all edges in $M_{opt}'(p_j)$ are iterated over. 

As $|M(p_j)| \leq |M_{opt}(p_j)|$, we know that $|M(p_j) \backslash M_{opt}(p_j)| \leq |M_{opt}(p_j) \backslash M(p_j)|$. But then $|M_0^{\alpha}(p_j)| \leq |M_{opt}'(p_j)|$ and so each student in $M_0^{\alpha}(p_j)$ is paired with a lecturer clone that exists in a pair of $M_{opt}'(p_j)$. But this contradicts the fact that both $(s_i,l^{c,r}) \in M'$ with $s_i \in M_0^{\alpha}(p_j)$ and $(s_i',l^{c,r}) \in M'_{opt}$ with $s_i \notin M_{opt}'(p_j)$. Hence $|M(p_j)| > |M_{opt}(p_j)|$ and so $p_j$ is undersubscribed in $M_{opt}$.
\end{proof}

We now give three proofs of preliminary results that are used to aid Lemma \ref{th:stable_SPA_S_lemma_no_alt_size_3} and Lemma \ref{th:stable_SPA_S_lemma_no_alt_size_1} in showing that it is not possible for a component of the types shown in Figure \ref{SPASTpossiblecomps_6} or in Figure \ref{SPASTpossiblecomps_7} to exist in $G'$.

First, Proposition \ref{th:stable_SPA_S_alt_3_p_j_never_applied} shows that for a component of the type shown in Figure \ref{SPASTpossiblecomps_6}, neither student $s^{c,1}$ nor $s^{c,2}$ can have applied to the project $s^{c,1}$ is assigned to in $M_{opt}$, during Algorithm \ref{Alg:APPROX_SPA-S_stable}'s execution.

% THEOREM BEGIN
\begin{prop} 
	\label{th:stable_SPA_S_alt_3_p_j_never_applied}
	 Let $c$ be the component of $G'$ in Figure \ref{SPASTpossiblecomps_6}. Let $s^{c,1}$ be assigned to project $p_j$ in $M_{opt}$. Then project $p_j$ is fully available in $M$ and, $s^{c,1}$ and $s^{c,2}$ can never have applied to $p_j$ at any point in Algorithm \ref{Alg:APPROX_SPA-S_stable}'s execution.
	\end{prop}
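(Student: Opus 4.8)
The plan is to first prove that $p_j$ is \emph{fully available} in $M$, and then to exploit the fact (Proposition \ref{th:stable_SPA_S_propfullyavailable}) that full availability can never be lost and later regained in order to run a short ``stuck assignment'' argument for the two students. First I would read off the structure of component $c$ from Figure \ref{SPASTpossiblecomps}\ref{SPASTpossiblecomps_6desc}: it is the path $l^{c,1} - s^{c,1} - l^{c,2} - s^{c,2}$ whose two end edges lie in $M'_{opt}$ and whose middle edge $(s^{c,1},l^{c,2})$ lies in $M'$. In particular $l^{c,1}$ is an end vertex incident only to an $M'_{opt}$ edge, so $l^{c,1}$ is unmatched in $M'$; and since $(s^{c,1},l^{c,1}) \in M'_{opt}$ with $(s^{c,1},p_j) \in M_{opt}$, the clone $l^{c,1}$ is a clone of the lecturer $l_k$ offering $p_j$. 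Applying Proposition \ref{th:stable_SPA_S_alt_p_j_undersubscribed} to this edge immediately yields $|M_{opt}(p_j)| > |M(p_j)|$, so $p_j$ is undersubscribed in $M$. The same reading also shows, since $s^{c,2}$ is an end vertex lying in no $M \setminus M_{opt}$ pair, that $s^{c,2}$ is unmatched in $M$, whereas $s^{c,1}$ is assigned in $M$ to the project of $l^{c,2}$, which is distinct from $p_j$.

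The crux is to upgrade ``$p_j$ undersubscribed'' to ``$p_j$ fully available'', i.e.\ to show that $l_k$ is also undersubscribed in $M$. For this I would count the clones of $l_k$. By construction there are $r_k = d_k - |M(l_k) \cap M_{opt}(l_k)|$ clones of $l_k$, and Algorithm \ref{SPA_ST_approx_cloned_M} maps every pair of $M \setminus M_{opt}$ whose project is offered by $l_k$ to a \emph{distinct} clone of $l_k$; the number of such pairs is $|M(l_k)| - |M(l_k) \cap M_{opt}(l_k)|$. Because $l^{c,1}$ is a clone of $l_k$ carrying no $M'$ edge, the number of $M'$-matched clones is strictly smaller than the total number of clones, so $|M(l_k)| - |M(l_k) \cap M_{opt}(l_k)| < d_k - |M(l_k) \cap M_{opt}(l_k)|$, whence $|M(l_k)| < d_k$. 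Thus $l_k$ is undersubscribed, so $p_j$ is fully available in $M$, which is the first claim.

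Finally I would prove the second claim. Since $p_j$ is fully available in the final matching $M$, Proposition \ref{th:stable_SPA_S_propfullyavailable} forces $p_j$ to have been fully available at every earlier point of Algorithm \ref{Alg:APPROX_SPA-S_stable}'s execution (had it ceased to be fully available it could never have recovered). Suppose for contradiction that some $s \in \{s^{c,1},s^{c,2}\}$ applied to $p_j$ at some point; at that moment $p_j$ was fully available, so the first case of the while loop adds $(s,p_j)$ to $M$ (Line \ref{Alg:APPROX_SPA-S_stable:pairadded1}). A pair can only be deleted from $M$ in the branches guarded by ``$l_k$ is full'' or ``$p_j$ is full'' (Lines \ref{Alg:APPROX_SPA-S_stable:1} and \ref{Alg:APPROX_SPA-S_stable:2}); as $p_j$ and $l_k$ are undersubscribed throughout, neither guard is ever satisfied and $(s,p_j)$ can never be removed, so $s$ stays assigned to $p_j$ until termination. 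This contradicts $s^{c,1}$ being assigned to a project other than $p_j$ and $s^{c,2}$ being unmatched in $M$; hence neither student ever applied to $p_j$. (If $s^{c,2}$ does not even find $p_j$ acceptable the conclusion is immediate.)

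The main obstacle I anticipate is the clone-counting step establishing lecturer undersubscription: one must argue carefully that the $M \setminus M_{opt}$ pairs offered by $l_k$ occupy \emph{distinct} clones of $l_k$ and that the $|M(l_k) \cap M_{opt}(l_k)|$ terms cancel against the definition of $r_k$ (the cancellation in fact delivers $|M(l_k)| < d_k$ whether this intersection is read as a set of pairs or of students). Everything else follows routinely from Propositions \ref{th:stable_SPA_S_alt_p_j_undersubscribed} and \ref{th:stable_SPA_S_propfullyavailable}.
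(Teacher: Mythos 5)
Your proposal is correct and takes essentially the same route as the paper's proof: Proposition~\ref{th:stable_SPA_S_alt_p_j_undersubscribed} gives that $p_j$ is undersubscribed in $M$, undersubscription of $l_k$ follows from the unmatched clone $l^{c,1}$ (the paper simply asserts this, whereas you make the clone-counting explicit, including the pairs-versus-students reading of $|M(l_k)\cap M_{opt}(l_k)|$ -- a worthwhile clarification), Proposition~\ref{th:stable_SPA_S_propfullyavailable} upgrades this to ``fully available throughout the execution'', and the accepted-but-never-removable contradiction finishes. The one thin spot is your claim that pairs can only be deleted at Lines~\ref{Alg:APPROX_SPA-S_stable:1} and~\ref{Alg:APPROX_SPA-S_stable:2}: the Promote-students phase (Algorithm~\ref{Alg:APPROX_SPA-S_stable_3bibpBrief}) also removes pairs, and since $l^{c,1}$ and $l^{c,2}$ may be clones of the \emph{same} lecturer, the scenario in which $s^{c,1}$ holds $p_j$ through the main loop and is then promoted out of it during Algorithm~\ref{Alg:APPROX_SPA-S_stable_3bibpBrief} must be explicitly excluded. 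The paper dispatches this up front via Proposition~\ref{prop-spa-st-prom_subsequentprecarious}; in your framing it closes in one line, since a type $(3bi)$ promotion out of (or into) $p_j$ requires the lecturer offering the preferred project -- necessarily $l_k$, as $s^{c,1}\in M(l_k)$ -- to be full, which your first claim shows never happens.
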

% THEOREM END

\begin{proof}
Let lecturer clone $l^{c,1}$ correspond to lecturer $l_k$. In $G'$, lecturer clone $l^{c,1}$ is unassigned in $M'$, therefore we know that $l_k$ is undersubscribed in $M$. Since there is no edge in $M'$ incident to $l^{c,1}$, by Proposition \ref{th:stable_SPA_S_alt_p_j_undersubscribed}, we know $p_j$ is undersubscribed in $M$. Project $p_j$ is, by definition, fully available in $M$ and, by Proposition \ref{th:stable_SPA_S_propfullyavailable}, must have been fully available throughout the algorithm's execution.

Now we prove that neither $s^{c,1}$ nor $s^{c,2}$ can have applied to $p_j$. In Algorithm \ref{Alg:APPROX_SPA-S_stable_3bibpBrief}, students can only apply to projects that are not fully available by Proposition \ref{prop-spa-st-prom_subsequentprecarious}, hence we only look at the main while loop of Algorithm \ref{Alg:APPROX_SPA-S_stable}.
We consider $s^{c,1}$ first. Assume for contradiction that $s^{c,1}$ applied to $p_j$ at some point during the main while loop of Algorithm \ref{Alg:APPROX_SPA-S_stable}'s execution. Then $(s^{c,1},p_j)$ would be added to $M$ as $p_j$ was always fully available. But we know that $(s^{c,1},p_j)$ is not in the final matching $M$ hence it must have been rejected by $l_k$ at some point. But this can only have happened if $p_j$ was not fully available, which contradicts the fact that $p_j$ is always fully available above. Therefore $s^{c,1}$ can never have applied to $p_j$ at any point. By identical reasoning $s^{c,2}$ can also never have applied to $p_j$.
\end{proof}

Next, in Proposition \ref{th:stable_SPA_S_phase2_not_fa_not_prec}, we prove a more general result that if a student applies to a project while they are in phase $2$, then that project is neither fully available nor precarious.

% THEOREM BEGIN
\begin{prop} 
	\label{th:stable_SPA_S_phase2_not_fa_not_prec}
	 Suppose that student $s_i$ applied to project $p_j$ in phase $2$ of Algorithm \ref{Alg:APPROX_SPA-S_stable} and denote this time by $T_0$. Then at time $T_0$, $p_j$ is not fully available and is non-precarious.
	\end{prop}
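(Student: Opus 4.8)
The plan is to exploit the defining feature of phase $2$: a student reaches phase $2$ only when its preference list is emptied during phase $1$ (the phase transition in the \textsc{Remove-Pref} routine, Algorithm \ref{Alg:APPROX_SPA-S_stable_extra}), so \emph{every} project originally on the list, and in particular the project $p_j$ that $s_i$ applies to at $T_0$, was deleted from the list by a \textsc{Remove-Pref} call at some earlier time $T_1 < T_0$ during phase $1$. First I would record the only three places where \textsc{Remove-Pref}$(s_i,p_j)$ can be invoked, namely lines \ref{Alg:APPROX_SPA-S_stable:rempref1}, \ref{Alg:APPROX_SPA-S_stable:rempref2} and \ref{Alg:APPROX_SPA-S_stable:rempref3} of Algorithm \ref{Alg:APPROX_SPA-S_stable}, and note that in each of these $p_j$ is not fully available at the moment of deletion (on line \ref{Alg:APPROX_SPA-S_stable:rempref1} the lecturer $l_k$ offering $p_j$ is full, on line \ref{Alg:APPROX_SPA-S_stable:rempref2} the project $p_j$ is full, and on line \ref{Alg:APPROX_SPA-S_stable:rempref3} the failure of the \texttt{if}/\texttt{elseif} guards forces $p_j$ to be not fully available).

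For the \emph{fully available} claim I would simply apply Proposition \ref{th:stable_SPA_S_propfullyavailable}: since $p_j$ is not fully available at $T_1$, it cannot subsequently become fully available, so it is not fully available at the later time $T_0$. (This is precisely the reason a phase-$2$ student has no fully available projects remaining on its list.)

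For the \emph{non-precarious} claim I would show that at the deletion time $T_1$ we are always in one of two configurations, each of which triggers Proposition \ref{prop-spa-st-subsequentprecarious}. On line \ref{Alg:APPROX_SPA-S_stable:rempref2} the deletion occurs in the \texttt{else} branch of the test ``$p_j$ is precarious'' inside the line-\ref{Alg:APPROX_SPA-S_stable_elseif2} case, so $p_j$ is full and non-precarious. On line \ref{Alg:APPROX_SPA-S_stable:rempref1} the deletion occurs in the \texttt{else} branch of ``$l_k$ is precarious'' inside the line-\ref{Alg:APPROX_SPA-S_stable_elseif1} case, so $l_k$ is full and non-precarious. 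On line \ref{Alg:APPROX_SPA-S_stable:rempref3} I would split on whether $p_j$ is full or merely undersubscribed with $l_k$ full: if $p_j$ is full then the negated guard of line \ref{Alg:APPROX_SPA-S_stable_elseif2} gives $p_j$ non-precarious, while if $p_j$ is undersubscribed the negated guard of line \ref{Alg:APPROX_SPA-S_stable_elseif1} gives $l_k$ non-precarious. Since a non-precarious lecturer offers only non-precarious projects, in every case either $p_j$ is full and non-precarious, or $l_k$ is full and $p_j$ is non-precarious. In either case Proposition \ref{prop-spa-st-subsequentprecarious} (its clause for a full non-precarious project, respectively its clause ``if $l_k$ is full and $p_j$ is not precarious then $p_j$ cannot subsequently become precarious'') shows $p_j$ cannot become precarious after $T_1$, hence $p_j$ is non-precarious at $T_0$.

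The main obstacle I anticipate is the careful bookkeeping of the nested \texttt{if}/\texttt{elseif}/\texttt{else} guards of Algorithm \ref{Alg:APPROX_SPA-S_stable}, especially in the line-\ref{Alg:APPROX_SPA-S_stable:rempref3} branch, where one must verify that the \emph{negations} of the line-\ref{Alg:APPROX_SPA-S_stable_elseif1} and line-\ref{Alg:APPROX_SPA-S_stable_elseif2} conditions deliver non-precariousness of exactly the right object ($p_j$ itself when $p_j$ is full, and $l_k$ when $p_j$ is undersubscribed), so that the correct clause of Proposition \ref{prop-spa-st-subsequentprecarious} can be applied. Once these branch conditions are pinned down, both conclusions follow immediately from Propositions \ref{th:stable_SPA_S_propfullyavailable} and \ref{prop-spa-st-subsequentprecarious}.
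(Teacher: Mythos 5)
Your proof is correct, and for the non-precarious half it takes a genuinely different route from the paper. On the fully-available half the two arguments essentially coincide: both observe that $p_j$ must have left $s_i$'s list during phase~$1$, that this can only happen when $p_j$ is not fully available, and then invoke Proposition~\ref{th:stable_SPA_S_propfullyavailable} (your version is in fact a little more careful, since the paper's phrase ``$l_k$ must have rejected $s_i$'' glosses over the eviction cases at lines~\ref{Alg:APPROX_SPA-S_stable:rempref1} and~\ref{Alg:APPROX_SPA-S_stable:rempref2}, which your call-site enumeration handles explicitly). For non-precariousness the paper argues by contradiction: it supposes $p_j$ precarious at $T_0$, deduces that $s_i$'s phase-$1$ application must have been \emph{accepted} (otherwise $p_j$ was already not fully available and non-precarious, and Proposition~\ref{prop-spa-st-subsequentprecarious} would forbid later precariousness), and then traces the removal history of $(s_i,p_j)$ --- including the case where it is removed as a precarious pair, so that $s_i$ re-applies in phase~$1$ and is re-added as a non-precarious pair --- until a non-precarious removal forces a contradiction. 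You instead argue forward and directly: since removing a precarious pair from $M$ never deletes the list entry, the event that actually empties $s_i$'s list is always a \textsc{Remove-Pref} call, and at each of the three call sites the branch guards (or their negations, in the line-\ref{Alg:APPROX_SPA-S_stable:rempref3} case, split on whether $p_j$ is full or $l_k$ is full) certify either that $p_j$ is full and non-precarious or that $l_k$ is full and non-precarious, whence $p_j$ is non-precarious; the appropriate clause of Proposition~\ref{prop-spa-st-subsequentprecarious} then propagates this to $T_0$. Your guard analysis is complete --- including the observation that at line~\ref{Alg:APPROX_SPA-S_stable:rempref1} the deleted project $p_{j'}$ is offered by the full, non-precarious $l_k$, and that a non-precarious lecturer offers only non-precarious projects --- so the case split is exhaustive and each case lands in a hypothesis of the proposition. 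What your approach buys is brevity and the elimination of the precarious-removal/re-application bookkeeping; what the paper's approach buys is an explicit record of that application--removal dynamic, which it reuses elsewhere (e.g., in Proposition~\ref{th:stable_SPA_S_maxapplications}).
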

% THEOREM END
\begin{proof}
Let $l_k$ be the lecturer offering $p_j$. Assume for contradiction that $p_j$ is fully available at $T_0$. Since $s_i$ is applying to $p_j$ in phase $2$, $l_k$ must have rejected $s_i$ when $s_i$ was in phase $1$. But this can only happen if $p_j$ is not fully available and by Proposition \ref{th:stable_SPA_S_propfullyavailable}, $p_j$ cannot again become fully available.

Assume then that $p_j$ is precarious at $T_0$. Then there must exist a precarious pair $(s_{i'},p_j)$ in the matching for some student $s_{i'}$. We know from Proposition \ref{prop-spa-st-subsequentprecarious} that when a project is not fully available and non-precarious, it cannot again become precarious. Therefore, when $s_i$ applied in phase $1$ to $p_j$, it was either fully available or precarious (or both), and so $(s_i,p_j)$ must have been added to the matching. But at some point before $T_0$, since $s_i$ is applying in phase $2$, $(s_i,p_j)$ was removed from the matching. This can only happen when $p_j$ is not fully available and either $(s_i,p_j)$ is precarious or is a worst student in $M(p_j)$ (also a worst assignee of $M(l_k)$).

If $(s_i,p_j)$ was precarious then, once removed, $s_i$ would again apply to $p_j$ in phase $1$ and must be successfully added for the same reason as before, although this time as a non-precarious pair (since other fully available projects tied with $p_j$ on $s_i$'s list would be applied to by $s_i$ before $p_j$). The removal of non-precarious $(s_i,p_j)$ can only happen because $p_j$ is non-precarious which contradicts the assumption that $p_j$ is precarious at $T_0$ by Proposition \ref{prop-spa-st-subsequentprecarious}, since $p_j$ is also not fully available. Therefore $p_j$ is non-precarious at $T_0$. 

Therefore $p_j$ can be neither fully available nor precarious at $T_0$.
\end{proof}

% {\color{blue}The following proposition extends Case 2 of Proposition \ref{th:stable_SPA_S_lecfullworseprec}.}

Finally, Proposition \ref{th:stable_SPA_S_pfullthennoworseafterwhile}, shows that if a project is full and non-precarious at some point before the end of the main while loop of Algorithm \ref{Alg:APPROX_SPA-S_stable}, then this project cannot subsequently accept a worse student (according to the lecturer who offers it).

% THEOREM BEGIN
\begin{prop} 
	\label{th:stable_SPA_S_pfullthennoworseafterwhile}
	 Let $T_{end}$ denote the point in Algorithm \ref{Alg:APPROX_SPA-S_stable}'s execution at the end of the main while loop. If a project $p_j$ offered by $l_k$ is full and non-precarious before $T_{end}$, then a student $s_i$ worse than $l_k$'s worst ranked assignees in $M(p_j)$ cannot subsequently become assigned to $p_j$.
	\end{prop}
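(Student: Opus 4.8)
The plan is to split the argument at $T_0$, handling the main while loop of Algorithm \ref{Alg:APPROX_SPA-S_stable} and the promotion phase (Algorithm \ref{Alg:APPROX_SPA-S_stable_3bibpBrief}) separately. Fix as the reference point the moment at which $p_j$ is full and non-precarious, and let $w^\ast$ denote (the phase-aware rank of) a worst assignee of $l_k$ in $M(p_j)$ at that instant. For the part of the execution up to $T_0$ I would invoke Proposition \ref{th:stable_SPA_S_lecfullworseprec}(1) directly: it already states that a student worse than or equal to $w^\ast$ cannot become assigned to $p_j$ before $T_0$ \emph{unless $p_j$ is precarious when that student applies}. Since $p_j$ is non-precarious at the reference point, Proposition \ref{prop-spa-st-subsequentprecarious} guarantees it can never subsequently become precarious, so this exception is vacuous and the claim holds before $T_0$.

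For the promotion phase I would first collect the structural facts. By Proposition \ref{th:stable_SPA_S_bp_promalg_notchange}, Algorithm \ref{Alg:APPROX_SPA-S_stable_3bibpBrief} changes neither the fully-available/precarious status of any project or lecturer nor the set of students assigned to each lecturer, and every assignment it performs is a promotion of a student to a \emph{more preferred} project of the same lecturer. Suppose, for contradiction, that some student $s$ worse than or equal to $w^\ast$ is promoted into $p_j$, and take the first such promotion. Then $(s,p_j)$ is a blocking pair of type $(3bi)$ at that moment, so $p_j$ is undersubscribed, $l_k$ is full, and by Propositions \ref{th:stable_SPA_S_3biworstinmlk} and \ref{th:stable_SPA_S_3bicreatedworstinmlk}, $s$ is one of the worst assignees of $M(l_k)$.

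The crux is then to bound $l_k$'s worst assignee against $w^\ast$. Since $p_j$ is now undersubscribed but was full before $T_0$, and since by Proposition \ref{th:stable_SPA_S_propfullyavailable} it cannot become fully available again, $p_j$ must have passed from full to undersubscribed during the main loop via the else-branch of Line \ref{Alg:APPROX_SPA-S_stable_elseif1} (the swap on Line \ref{Alg:APPROX_SPA-S_stable_elseif2} keeps it full), or, inductively, by a student being promoted \emph{out} of $p_j$ within Algorithm \ref{Alg:APPROX_SPA-S_stable_3bibpBrief}, a case controlled by Proposition \ref{th:stable_SPA_S_3bicreatedworstinmlk}. At the instant of the first such loss, $l_k$ is full and non-precarious and the student removed from $p_j$ is simultaneously a worst assignee of $l_k$ and of $p_j$, hence of rank at most $w^\ast$ (the worst rank of $p_j$ only improving while $p_j$ stays full, by Proposition \ref{th:stable_SPA_S_lecfullworseprec}(1)). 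Thereafter $l_k$ stays full and non-precarious, so by Propositions \ref{prop-spa-st-subsequentprecarious} and \ref{th:stable_SPA_S_lecfullworseprec}(2) its worst assignee can only improve; combining this with an analysis of where $s$ last relinquished $p_j$ should force $s$ to be strictly better than $w^\ast$, contradicting the choice of $s$.

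I expect the main obstacle to be the interaction of ties with the phase-$1$/phase-$2$ priority built into the definitions of \emph{worst assignee} and of ``worse than or equal''. Naive rank-only reasoning fails here: a student of the same rank as $w^\ast$ but in phase $2$ can legitimately re-enter $p_j$ by displacing an equally ranked phase-$1$ assignee through the meta-prefers rule, so one must verify that such a re-entry is \emph{not} an assignment of a ``worse than or equal'' student, because a phase-$2$ student of equal rank is meta-preferred to, not worse than, a phase-$1$ student of that rank. The safe route is therefore to carry out the monotonicity argument for $l_k$'s worst assignee using the phase-aware notion of ``worst'' throughout, exactly as in the proof of Proposition \ref{th:stable_SPA_S_3biworstinmlk}, rather than with bare ranks; this, together with the inductive treatment of undersubscription created inside Algorithm \ref{Alg:APPROX_SPA-S_stable_3bibpBrief}, is where the real care is needed.
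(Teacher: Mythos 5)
Your proposal takes essentially the same route as the paper's own proof: it dispatches the main while loop via Proposition \ref{th:stable_SPA_S_lecfullworseprec} (with Proposition \ref{prop-spa-st-subsequentprecarious} rendering the precarious exception vacuous), and handles the promotion phase by observing that any student entering $p_j$ after $T_0$ arises from a type-$(3bi)$ blocking pair and is therefore, by Propositions \ref{th:stable_SPA_S_3biworstinmlk} and \ref{th:stable_SPA_S_3bicreatedworstinmlk}, a worst assignee of $M(l_k)$, then pinning down the moment $p_j$ first went from full to undersubscribed (where $l_k$ is full and non-precarious and the removed student is simultaneously a worst assignee of $l_k$ and of $p_j$) and invoking the monotonicity of $l_k$'s worst assignee — exactly the paper's argument. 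Your explicit phase-aware handling of ties is, if anything, more careful than the paper's own terse treatment of that point.
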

% THEOREM END
\begin{proof}

As in Proposition \ref{th:stable_SPA_S_lecfullworseprec}, let $T_0$ be a point of the algorithm's execution, mentioned in the statement of the propostion, where $p_j$ is full and non-precarious. By Proposition \ref{th:stable_SPA_S_lecfullworseprec}, we know that after $T_0$ a student $s_i$ worse than $l_k$'s worst ranked assignees in $M(p_j)$ cannot subsequently become assigned to $p_j$ before $T_{end}$. Hence we concentrate only on changes made by Algorithm \ref{Alg:APPROX_SPA-S_stable_3bibpBrief}. Additionally, due to this same result, it suffices to show that a student $s_i$ worse than $l_k$'s worst ranked assignees in $M(p_j)$ at $T_x$ (for $T_x$ in the range $T_0$ to $T_{end}$) cannot subsequently become assigned to $p_j$. 

Project $p_j$ is either full or undersubscribed at $T_{end}$. We deal with each case in turn.

	\begin{itemize}
	\item Assume first, that $p_j$ is full at $T_{end}$. If $p_j$ remains full then no student can become assigned to $p_j$. Therefore assume that $p_j$ becomes undersubscribed after $T_{end}$ and let $T_1$ be the first time this occurs. It must be that, just before $T_1$, a blocking pair $(s_{i'}, p_j)$ exists, for some student $s_{i'}$. By Propositions \ref{th:stable_SPA_S_3biworstinmlk} and \ref{th:stable_SPA_S_3bicreatedworstinmlk}, $s_{i'} \in M(p_j)$ is one of the worst students in $M(l_k)$. Since $T_1$ is the first time $p_j$ becomes undersubscribed, no students have been removed from $M(p_j)$ since $T_{end}$, and so $s_{i'}$ must have existed in $M(p_j)$ at $T_{end}$ as well. Since no students are introduced in Algorithm \ref{Alg:APPROX_SPA-S_stable_3bibpBrief}, $s_{i'} \in M(p_j)$ is also one of the worst students in $M(l_k)$ at $T_{end}$. Therefore, by Proposition \ref{th:stable_SPA_S_lecfullworseprec}, $p_j$ cannot subsequently be assigned a worse student than exists in $M(p_j)$ at $T_{end}$.

	\item Assume now that $p_j$ is undersubscribed at $T_{end}$. Then there was some point $T_2$, before $T_{end}$ and after $T_0$, that $p_j$ became undersubscribed by the removal from $M$ of $(s_{i'},p_j)$ for some student $s_{i'}$. We know that $(s_{i'},p_j)$ is non-precarious since $p_j$ is non-precarious after $T_0$ by Proposition \ref{prop-spa-st-subsequentprecarious}. The removal of a pair $(s_{i'},p_j)$ at $T_2$ can only have happened if $l_k$ was full, non-precarious (since we are removing a non-precarious pair) and $l_k$ meta-preferred the student they are adding, to $s_{i'}$. But then $s_{i'} \in M(p_j)$ was one of the worst students in $M(l_k)$ just before $T_2$, and so, by Proposition \ref{th:stable_SPA_S_lecfullworseprec}, $p_j$ cannot subsequently be assigned a worse student.
	
	\end{itemize}

Therefore, a student $s_i$ worse than $l_k$'s worst ranked assignees in $M(p_j)$ at $T_0$ cannot subsequently become assigned to $p_j$.
\end{proof}

In Lemma \ref{th:stable_SPA_S_lemma_no_alt_size_3} we prove that it is not possible for a component of the type shown in Figure \ref{SPASTpossiblecomps_6} to exist in $G'$.

% THEOREM BEGIN

\begin{lemma} 
	\label{th:stable_SPA_S_lemma_no_alt_size_3}
	Let $M$ be a stable matching found by Algorithm \ref{Alg:APPROX_SPA-S_stable} for instance $I$ of \acrshort{spa-st}, and let $M_{opt}$ be a maximum stable matching in $I$. No component of the type given in Figure \ref{SPASTpossiblecomps_6} can exist in the mapped graph $G'$. 
	\end{lemma}
% THEOREM END

\begin{proof}
Assume for contradiction that there is a component $c$ of the type shown in Figure \ref{SPASTpossiblecomps_6} in $G'$. Let $p_j$ be the project assigned to $s^{c,1}$ in $M'_{opt}$ in Figure \ref{SPASTpossiblecomps_6}.
%Component $c$ is a  consists of an alternating path of length $3$ with end edges in $M'_{opt}$. 

We look at the possible configurations in $G$ that could map to $c$ in $G'$. Lecturer clones $l^{c,1}$ and $l^{c,2}$ may or may not be the same lecturer in $G$. It may also be the case that $s^{c,1}$ and $s^{c,2}$ are assigned to the same or different projects in $M'$ and $M'_{opt}$, respecting the fact that projects may only be offered by one lecturer. Let $s^{c,1} = s_i$ and $s^{c,2} = s_{i'}$. Figure \ref{alt_size3} shows the possible configurations in $G$ relating to $c$ in $G'$. 
They are found by noting that all configurations in $G$ must have: $2$ students; $1$ or $2$ lecturers; between $2$ and $3$ projects; student $s_{i'}$ must be unassigned in $M$; and $s_{i'}$ must be assigned a project of the same lecturer in $M_{opt}$ as $s_i$ is in $M$.
Note that it is not possible for there to be only one project $p_j$ in the configuration since $s_i$ would be assigned to $p_j$ in both $M$ and $M_{opt}'$, meaning $(s_i,p_j)$ would not exist in $M_{opt}\backslash M$ or $M\backslash M_{opt}$ and so neither of the edges from $s^{c,1}$ would exist in $G'$, a contradiction.

\begin{figure}[]
\centering
\captionsetup[subfigure]{justification=centering}
  \begin{subfigure}[b]{0.3\textwidth}
  \centering
        \begin{tikzpicture}
		\node [simpleNodes](st1) at (0,0) {$s_i$};
		\node [simpleNodes](st2) at (0,-\graphyStd) {$s_{i'}$};
		\node [simpleNodes](pr1) at (\graphxStd,0) {$p_j$};
		\node [simpleNodes](pr2) at (\graphxStd,-\graphyStd) {$p_{j'}$};
		\node [simpleNodes](le1) at (\graphxStd * 2,-\graphyStd * 0.5) {$l_k$};

		\draw[thinLine] (st1.east) -- (pr1.west);
		\draw[thickLine] (st1.east) -- (pr2.west);
		\draw[thinLine] (pr1.east) -- (le1.west);
		\draw[thinLine] (st2.east) -- (pr2.west);

		\draw[thinLine] ([yshift=\graphShiftStdL ]pr2.east) -- (le1.west);
		\draw[thickLine] ([yshift=-\graphShiftStdL ] pr2.east) -- ([yshift=-\graphShiftStdXL ]le1.west);

		\end{tikzpicture}
    \caption{}
    \label{SPA_alt_size3_1}
  \end{subfigure}
  \hspace*{\fill}
    \begin{subfigure}[b]{0.3\textwidth}
  \centering
        \begin{tikzpicture}
		\node [simpleNodes](st1) at (0,0) {$s_i$};
		\node [simpleNodes](st2) at (0,-\graphyStd) {$s_{i'}$};
		\node [simpleNodes](pr1) at (\graphxStd,0) {$p_j$};
		\node [simpleNodes](pr2) at (\graphxStd,-\graphyStd) {$p_{j'}$};
		\node [simpleNodes](le1) at (\graphxStd * 2,-\graphyStd * 0.5) {$l_k$};

		\draw[thinLine] (st1.east) -- (pr1.west);
		\draw[thickLine] (st1.east) -- (pr2.west);

		\draw[thinLine] (st2.east) -- (pr1.west);
		\draw[thickLine] (pr2.east) -- (le1.west);
		\draw[thinLine] ([yshift=\graphShiftStd ] pr1.east) -- ([yshift=\graphShiftStd ] le1.west);
		\draw[thinLine] ([yshift=-\graphShiftStd ] pr1.east) -- ([yshift=-\graphShiftStd ] le1.west);
		
		\end{tikzpicture}
    \caption{}
    \label{SPA_alt_size3_2}
  \end{subfigure}
  \hspace*{\fill}
  \begin{subfigure}[b]{0.3\textwidth}
  \centering
  		\begin{tikzpicture}
		\node [simpleNodes](st1) at (0,-\graphyStd * 0.5) {$s_i$};
		\node [simpleNodes](st2) at (0,-\graphyStd * 1.5) {$s_{i'}$};
		\node [simpleNodes](pr1) at (\graphxStd,0) {$p_j$};
		\node [simpleNodes](pr2) at (\graphxStd,-\graphyStd) {$p_{j'}$};
		\node [simpleNodes](pr3) at (\graphxStd,-\graphyStd * 2) {$p_{j''}$};
		\node [simpleNodes](le1) at (\graphxStd * 2,-\graphyStd) {$l_k$};

		\draw[thinLine] (st1.east) -- (pr1.west);
		\draw[thickLine] (st1.east) -- (pr2.west);
		\draw[thinLine] (pr1.east) -- (le1.west);
		\draw[thinLine] (pr1.east) -- (le1.west);
		\draw[thinLine] (st2.east) -- (pr3.west);
		\draw[thickLine] (pr2.east) -- (le1.west);
		\draw[thinLine] (pr3.east) -- (le1.west);
		\end{tikzpicture}
    \caption{}
    \label{SPA_alt_size3_3}
  \end{subfigure}
  
  \vspace{0.5cm}
  \begin{subfigure}[b]{0.3\textwidth}
  \centering
  		\begin{tikzpicture}
		\node [simpleNodes](st1) at (0,0) {$s_i$};
		\node [simpleNodes](st2) at (0,-\graphyStd) {$s_{i'}$};
		\node [simpleNodes](pr1) at (\graphxStd,0) {$p_j$};
		\node [simpleNodes](pr2) at (\graphxStd,-\graphyStd) {$p_{j'}$};
		\node [simpleNodes](le1) at (\graphxStd * 2,0) {$l_k$};
		\node [simpleNodes](le2) at (\graphxStd * 2,-\graphyStd) {$l_{k'}$};

		\draw[thinLine] (st1.east) -- (pr1.west);
		\draw[thickLine] (st1.east) -- (pr2.west);

		\draw[thinLine] (st2.east) -- (pr2.west);
		\draw[thinLine] (pr1.east) -- (le1.west);
		\draw[thinLine] ([yshift=\graphShiftStd ] pr2.east) -- ([yshift=\graphShiftStd ] le2.west);
		\draw[thickLine] ([yshift=-\graphShiftStd ] pr2.east) -- ([yshift=-\graphShiftStd ] le2.west);
		\end{tikzpicture}
    \caption{}
    \label{SPA_alt_size3_4}
  \end{subfigure}
  \hspace{1cm}
  \begin{subfigure}[b]{0.3\textwidth}
  \centering
  		\begin{tikzpicture}
		\node [simpleNodes](st1) at (0,-\graphyStd * 0.5) {$s_i$};
		\node [simpleNodes](st2) at (0,-\graphyStd * 1.5) {$s_{i'}$};
		\node [simpleNodes](pr1) at (\graphxStd,0) {$p_j$};
		\node [simpleNodes](pr2) at (\graphxStd,-\graphyStd) {$p_{j'}$};
		\node [simpleNodes](pr3) at (\graphxStd,-\graphyStd * 2) {$p_{j''}$};
		\node [simpleNodes](le1) at (\graphxStd * 2,-\graphyStd * 0.5) {$l_k$};
		\node [simpleNodes](le2) at (\graphxStd * 2,-\graphyStd * 1.5) {$l_{k'}$};
		\draw[thinLine] (st2.east) -- (pr3.west);
		\draw[thickLine] (st1.east) -- (pr2.west);
		\draw[thinLine] (pr1.east) -- (le1.west);
		\draw[thinLine] (st1.east) -- (pr1.west);
		\draw[thinLine] (pr3.east) -- (le2.west);
		\draw[thickLine] (pr2.east) -- (le2.west);
		\end{tikzpicture}
    \caption{}
        
    \label{SPA_alt_size3_6}
  \end{subfigure}
  \caption[Possible configurations in $G$ for an alternating path of size 3 in $G'$.]{Possible configurations in $G$ for an alternating path of size 3 in $G'$ with $M'_{opt}$ end edges. $M$ and $M_{opt}$ are shown in bold and non-bold edges respectively. Any project and lecturer vertices shown may have additional assignments involving other vertices not shown in the graphs.}
  \label{alt_size3}
\end{figure}

%%%%%%% used to be part e from the above figure - removed as it is not actually a possible configuration in G
  % \hspace*{\fill}
  %   \begin{subfigure}[b]{0.3\textwidth}
  % \centering
  % 		\begin{tikzpicture}
		% \node [simpleNodes](st1) at (0,0) {$s_i$};
		% \node [simpleNodes](st2) at (0,-\graphyStd) {$s_{i'}$};
		% \node [simpleNodes](pr1) at (\graphxStd,0) {$p_j$};
		% \node [simpleNodes](pr2) at (\graphxStd,-\graphyStd) {$p_{j'}$};
		% \node [simpleNodes](le1) at (\graphxStd * 2,0) {$l_k$};
		% \node [simpleNodes](le2) at (\graphxStd * 2,-\graphyStd) {$l_{k'}$};

		% \draw[thinLine] (st1.east) -- (pr1.west);
		% \draw[thickLine] (st1.east) -- (pr2.west);

		% \draw[thinLine] (st2.east) -- (pr1.west);
		% \draw[thickLine] (pr2.east) -- (le2.west);
		% \draw[thinLine] ([yshift=\graphShiftStd * 0.75 ] pr1.east) -- ([yshift=\graphShiftStd * 0.75 ] le1.west);
		% \draw[thinLine] ([yshift=-\graphShiftStd * 0.75 ] pr1.east) -- ([yshift=-\graphShiftStd * 0.75 ] le1.west);
		% \end{tikzpicture}
  %   \caption{}
  %   \label{SPA_alt_size3_5}
  % \end{subfigure}
  % \hspace*{\fill}

We now show that none of the subgraphs shown in Figure \ref{alt_size3} can occur in a matching $M$ with respect to $G$ found using Algorithm \ref{Alg:APPROX_SPA-S_stable}. Assume for contradiction that one does occur. We consider each type of subgraph separately.

\begin{enumerate}[(a)]

  \item Students $s_i$ and $s_{i'}$ are assigned to $p_j$ and $p_{j'}$ in $M_{opt}$ respectively, and $s_i$ is assigned to $p_{j'}$ in $M$. Lecturer $l_k$ offers both $p_j$ and $p_{j'}$.

  There are three sub-cases to consider. 
  \label{SPA_3_aug_3_2_a}

  \begin{enumerate}[i.]
  	\item \emph{$s_i$ strictly prefers $p_j$ to $p_{j'}$:} If $s_i$ strictly prefers $p_j$ to $M(s_i)$ then $s_i$ must have applied to $p_j$ at least once. But this contradicts Proposition \ref{th:stable_SPA_S_alt_3_p_j_never_applied}.
  	\label{SPA_3_aug_3_2_ai}

  	\item \emph{$p_j$ and $p_{j'}$ are tied on $s_i$'s preference list:} Project $p_j$ is fully available in the finalised matching $M$ by Proposition \ref{th:stable_SPA_S_alt_3_p_j_never_applied} and has always been fully available by Proposition \ref{th:stable_SPA_S_propfullyavailable}. As there is a fully available project tied with $p_{j'}$ on $s_i$'s list, once edge $(s_i,p_{j'})$ is added to $M$, as long as it remains, it must be precarious. Pair $(s_i,p_{j'})$ cannot be removed at any stage before the end of the main while loop, since doing so would mean $s_i$ would apply to $p_j$ before again applying to $p_{j'}$ (since $p_j$ is fully available) contradicting Proposition \ref{th:stable_SPA_S_alt_3_p_j_never_applied}. Also Algorithm \ref{Alg:APPROX_SPA-S_stable_3bibpBrief} cannot change the allocations of any precarious lecturer by Proposition \ref{th:stable_SPA_S_bp_lk_not_prec}. Therefore Algorithm \ref{Alg:APPROX_SPA-S_stable} must terminate with $(s_i,p_{j'})$ as a precarious pair. 

Student $s_{i'}$ must have applied to $p_{j'}$ in phase $2$ since they are unassigned in the finalised matching $M$. By Proposition \ref{th:stable_SPA_S_phase2_not_fa_not_prec}, at the point of application, $p_{j'}$ is not fully available and is non-precarious. But by Proposition \ref{prop-spa-st-subsequentprecarious} $p_{j'}$ cannot subsequently become precarious and so the algorithm will terminate with a non-precarious $p_{j'}$, contradicting the above.

  	\label{SPA_3_aug_3_2_aii}

  	\item \emph{$s_i$ strictly prefers $p_{j'}$ to $p_j$:} We consider three further sub-cases based on $l_k$'s preference list.
  	\label{SPA_3_aug_3_2_aiii}
  	
  	\begin{enumerate}[1.]
  	
  	  \item \emph{$l_k$ strictly prefers $s_i$ to $s_{i'}$:} We know that $s_i$ strictly prefers $p_{j'}$ to $p_j$ and that $l_k$ strictly prefers $s_i$ to $s_{i'}$. But then $(s_i,p_{j'})$ forms a blocking pair of stable matching $M_{opt}$, a contradiction.
 
   		\item \emph{$s_{i'}$ and $s_i$ are tied on $l_k$'s preference list:} Project $p_j$ is fully available in $M$ by Proposition \ref{th:stable_SPA_S_alt_3_p_j_never_applied} and has always been fully available by Proposition \ref{th:stable_SPA_S_propfullyavailable}. Student $s_i$ must have been  assigned to $p_{j'}$ in phase $1$, otherwise $s_i$ would have applied to $p_j$, contradicting Proposition \ref{th:stable_SPA_S_alt_3_p_j_never_applied}.
   	Student $s_{i'}$ must have applied to $p_{j'}$ in phase $2$ since $s_{i'}$ is unassigned in $M$. Denote the point at which $s_{i'}$ applies to $p_{j'}$ in phase $2$ as $T_0$. By Proposition \ref{th:stable_SPA_S_phase2_not_fa_not_prec}, at $T_0$, $p_{j'}$ is not fully available and is non-precarious. By Proposition \ref{prop-spa-st-subsequentprecarious}, $p_{j'}$ remains non-precarious from time $T_0$ until the algorithm's termination. Regardless of whether $(s_i,p_{j'})$ exists in the matching at time $T_0$, we know from time $T_0$, $(s_i,p_{j'})$ cannot be removed from $M$, otherwise $s_i$ would remove $p_{j'}$ from their preference list (as $p_{j'}$ is non-precarious) contradicting the fact that $s_i$ assigned to $p_{j'}$ in phase $1$.
   	
   	\label{SPA_3_aug_3_2_aiii2}
   	
   	We consider the following two possibilities.

   	\begin{itemize}
   		\item \emph{$s_{i'}$ applied to $p_{j'}$ in phase $2$ before pair $(s_i,p_{j'})$ was added:} Assume first that $s_{i'}$ is unsuccessful in its application at $T_0$. From above we know that $p_{j'}$ remains non-precarious from $T_0$ onwards.
   		   		
   		\begin{itemize}
   		\item If $p_{j'}$ is undersubscribed at time $T_0$, then $l_k$ cannot be precarious (as $s_{i'}$ was rejected) and so $l_k$ does not meta-prefer $s_{i'}$ to its worst assignee $s_w \in M(l_k)$ at $T_0$. Lecturer $l_k$ must be full at this point since $p_{j'}$ is not fully available and is undersubscribed. As $l_k$ is full and must remain non-precarious by Proposition \ref{prop-spa-st-subsequentprecarious}, after $T_0$, it is only possible for $l_k$ to improve their allocations, by Proposition \ref{th:stable_SPA_S_lecfullworseprec}. Since $l_k$ meta-prefers $s_{i'}$ to $s_i$ ($s_{i'}$ is in phase $2$), when $s_i$ applies to $p_{j'}$, $s_i$ must also be rejected. Project $p_{j'}$ is non-precarious after $T_0$ and so $s_i$ must remove $p_{j'}$ from their list, contradicting the fact that $s_i$ must be assigned to $p_{j'}$ in phase $1$.
   		
   		\item If $p_{j'}$ is full at $T_0$ then since it is also non-precarious, we know $l_k$ does not meta-prefer $s_{i'}$ to its worst assignee $s_w$ in $M(p_{j'})$. By Proposition \ref{th:stable_SPA_S_lecfullworseprec}, $p_{j'}$ cannot accept assignments that are worse than or equal to the worst assignee in $M(p_{j'})$ until after the main while loop. Therefore when $s_i$ applies to $p_{j'}$ before the main while loop, as in the previous case, they must also be rejected, a contradiction as above.
   		\end{itemize}
   		
   		Assume therefore that $s_{i'}$ is successful in their application at $T_0$. Pair $(s_{i'},p_{j'})$ must be removed at some point after $T_0$ since $(s_{i'},p_{j'}) \notin M$. Denote the time $(s_{i'},p_{j'})$ is removed as $T_1$. The removal at $T_1$ must have occurred before the end of the main while loop since otherwise $s_{i'}$ would be assigned to some project in the finalised matching $M$ (the same students are assigned when removing blocking pairs of type $(3bi)$), which it is not. We know that $(s_i,p_{j'})$ was added either after $T_0$ and before $T_1$ or after $T_1$. Once added $(s_i,p_{j'})$ cannot be removed from above.
   		\begin{itemize}
   			\item Assume $(s_i,p_{j'})$ was added before $T_1$. At $T_1$ (before the end of the main while loop) pair $(s_{i'},p_{j'})$ is removed. This must either be because $p_{j'}$ is undersubscribed and $l_k$ is full, or because $p_{j'}$ is full. 
   			\begin{itemize}
   				\item If the former then $l_k$ is full and cannot be precarious since we are removing a non-precarious pair ($p_{j'}$ is non-precarious after $T_0$). But this removal can only happen if $s_{i'}$ is the worst student assigned to $l_k$ at $T_1$. But by the definition of a worst assignee $s_i$ (being in phase $1$) would be removed before $s_{i'}$. Therefore, $(s_i,p_{j'})$ must have already been removed from the matching, a contradiction to the fact that $(s_i,p_{j'})$ cannot be removed.
   				\item Using similar reasoning, if $p_{j'}$ is full at $T_1$, then (as we are removing a non-precarious pair), $s_{i'}$ must be the worst student assigned in $M(p_{j'})$. But this would mean $(s_i,p_{j'})$ had already been removed, a contradiction.
   			\end{itemize}

   			\item Assume $(s_i,p_{j'})$ was added after $T_1$. Again we consider two sub-cases.
   		\begin{itemize}
   		\item If $p_{j'}$ was undersubscribed at time $T_1$, then $l_k$ must have been full and must be non-precarious since non-precarious pair $(s_{i'},p_{j'})$ was removed. This pair was removed as $s_{i'}$ was a worst student in $l_k$ at $T_1$. By Proposition \ref{prop-spa-st-subsequentprecarious}, $l_k$ remains non-precarious from this point onwards and therefore by Proposition \ref{th:stable_SPA_S_lecfullworseprec}, $l_k$ can only improve their allocations from time $T_1$. Therefore, as $l_k$ meta-prefers $s_{i'}$ to $s_i$ (as $s_{i'}$ is in phase $2$), it must be that $s_i$, applying after $T_1$, will be rejected. This would result in the removal of $p_{j'}$ from $s_i$'s list contradicting the fact that $s_i$ assigned to $p_{j'}$ in phase $1$.

   		\item If $p_{j'}$ is full at $T_1$ then using similar reasoning to above, we know that at $T_1$, $s_{i'}$ is a worst assignee in $M(p_{j'})$. Since, at $T_1$, $p_{j'}$ is full and non-precarious, and remains non-precarious, by Proposition \ref{th:stable_SPA_S_lecfullworseprec}, $p_{j'}$ cannot subsequently accept assignments that are worse than or equal to the worst assignee in $M(p_{j'})$ until the end of the main while loop. Therefore $s_i$ will be rejected on application, a contradiction as above.
   		\end{itemize}

   		\end{itemize}

   		\item \emph{$s_{i'}$ applied to $p_{j'}$ in phase $2$ after pair $(s_i,p_{j'})$ was added:} At $T_0$, since $l_k$ meta-prefers $s_{i'}$ to $s_i$ ($s_i$ is in phase $1$ whereas $s_{i'}$ is in phase $2$), $(s_{i'},p_{j'})$ must be added to the matching with some student other than $s_{i'}$ being removed, since $s_i$ cannot be removed from $M$ from time $T_0$. But now we are in the same position as before where $(s_{i'},p_{j'})$ must be removed from $M$, but this can only happen if $(s_i,p_{j'})$ is removed first, a contradiction.

   	\end{itemize}

  		\item \emph{$l_k$ strictly prefers $s_{i'}$ to $s_i$:} Since $s_{i'}$ is unassigned in $M$, $(s_{i'},p_{j'})$ is a blocking pair of stable matching $M$, a contradiction.

  	\end{enumerate}
  \end{enumerate}

  \item Students $s_i$ and $s_{i'}$ are both assigned to the same project $p_j$ in $M_{opt}$ and $s_i$ is assigned to project $p_{j'}$ in $M$. Both $p_j$ and $p_{j'}$ are offered by lecturer $l_k$. Since $s_{i'}$ is unassigned in $M$, we know that $s_{i'}$ has to have applied to $p_j$ during the algorithm's execution, but this directly contradicts Proposition \ref{th:stable_SPA_S_alt_3_p_j_never_applied}.
\label{SPA_3_aug_3_2_b}

  \item Students $s_i$ and $s_{i'}$ are assigned to $p_j$ and $p_{j''}$ in $M_{opt}$ respectively, and $s_i$ is assigned to $p_{j'}$ in $M$. Lecturer $l_k$ offers $p_j$, $p_{j'}$ and $p_{j''}$.
  By Proposition \ref{th:stable_SPA_S_alt_p_j_undersubscribed}, $p_{j''}$ is undersubscribed in $M$ since lecturer clone $l^{c,2}$ in Figure \ref{SPASTpossiblecomps_6} is connected to an edge in $M'_{opt}$ corresponding to $p_{j''}$ and an edge in $M'$ corresponding to $p_{j'}$. We know that $l_k$ is undersubscribed in $M$ since $l^{c,1}$ is not assigned in $M'$, and so $p_{j''}$ must be fully available. By Proposition \ref{th:stable_SPA_S_propfullyavailable} we know that $p_{j''}$ has always been fully available during the algorithm's execution. Since $s_{i'}$ is unassigned in $M$ they must have applied to $p_{j''}$ during the course of the algorithm. But as $p_{j''}$ has always been fully available this must have been accepted. As we end up with $s_{i'}$ being unassigned it must also be the case that $l_k$ rejects pair $(s_{i'},p_{j''})$ but we know that $p_{j''}$ is always fully available and so this cannot have happened, a contradiction.
  \label{SPA_3_aug_3_2_c}

  \item Students $s_i$ and $s_{i'}$ are assigned to $p_j$ and $p_{j'}$ in $M_{opt}$ respectively, and $s_i$ is assigned to $p_{j'}$ in $M$. Lecturers $l_k$ and $l_{k'}$ offer projects $p_j$ and $p_{j'}$, respectively. Identical arguments to those found in Case \ref{SPA_3_aug_3_2_a}\ref{SPA_3_aug_3_2_ai} and \ref{SPA_3_aug_3_2_a}\ref{SPA_3_aug_3_2_aii} can be used to show a contradiction. Similarly, identical arguments to those found in Case \ref{SPA_3_aug_3_2_a}\ref{SPA_3_aug_3_2_aiii} can also be used to show a contradiction, but exchanging $l_k$ for $l_{k'}$.   
  \label{SPA_3_aug_3_2_d}

% \item \todo{I don't think e is possible to make from $G'$ in the first place.} Students $s_i$ and $s_{i'}$ are both assigned to project $p_j$ in $M_{opt}$ and $s_i$ is assigned to project $p_{j'}$ in $M$. Projects $p_j$ and $p_{j'}$ are offered by lecturers $l_k$ and $l_{k'}$ respectively. Using identical arguments to Case \ref{SPA_3_aug_3_2_b}, as student $s_{i'}$ is unassigned in $M$, they must have applied to $p_j$ during the algorithm's execution, but this contradicts Proposition \ref{th:stable_SPA_S_alt_3_p_j_never_applied}.
% \label{SPA_3_aug_3_2_e}

  \item Students $s_i$ and $s_{i'}$ are assigned to $p_j$ and $p_{j''}$ in $M_{opt}$ respectively, and $s_i$ is assigned to $p_{j'}$ in $M$. Lecturer $l_k$ offers project $p_j$, whereas $l_{k'}$ offers projects $p_{j'}$ and $p_{j''}$. We consider the following $3$ sub-cases.

  	\label{SPA_3_aug_3_2_f}
  \begin{enumerate}[i.]
  	\item \emph{$s_i$ strictly prefers $p_j$ to $p_{j'}$:} Identical arguments to those found in Case \ref{SPA_3_aug_3_2_a}\ref{SPA_3_aug_3_2_ai} can be used to show a contradiction.
  	\label{SPA_3_aug_3_2_fi}
	
  	\item \emph{$p_j$ and $p_{j'}$ are tied on $s_i$'s preference list:} Using similar reasoning to Case \ref{SPA_3_aug_3_2_a}\ref{SPA_3_aug_3_2_aii} we know that once edge $(s_i,p_{j'})$ is added to $M$ it is, and remains, a precarious pair and cannot be removed at any stage. Also $p_{j'}$ must have been fully available on application by $s_i$ otherwise fully available $p_j$ at the same rank would have been applied to by $s_i$. Therefore $p_{j'}$ is either fully available or precarious throughout the algorithm's execution. Student $s_{i'}$ is not assigned in $M$ and so must have applied to $p_{j''}$ whilst they were in phase $2$. Let this time of application be denoted $T_0$. By Proposition \ref{th:stable_SPA_S_phase2_not_fa_not_prec}, $p_{j''}$ cannot be precarious at $T_0$. \label{SPA_3_aug_3_2_fii}
  	\begin{itemize}
  		\item If $p_{j'}$ is fully available at $T_0$ then $l_{k'}$ is undersubscribed and so $s_{i'}$ would only be rejected if $p_{j''}$ was non-precarious, full and $s_{i'}$ was not meta-preferred by $l_{k'}$ to an student in $M(p_{j''})$.
  		\item If $p_{j'}$ is precarious at $T_0$ then $l_{k'}$ is also precarious and therefore $s_{i'}$ would again  only be rejected if $p_{j''}$ was non-precarious, full and $s_{i'}$ was not meta-preferred by $l_{k'}$ to an student in $M(p_{j''})$.
  	\end{itemize}
 Therefore we have the following two cases.

  	\begin{enumerate}[1.]

	\item If $s_{i'}$ was rejected then it must be because $p_{j''}$ was non-precarious, full and $s_{i'}$ was not meta-preferred by $l_{k'}$ to any student in $M(p_{j''})$, by above. But similar to Case \ref{SPA_3_aug_3_2_c} we can say that by Proposition \ref{th:stable_SPA_S_alt_p_j_undersubscribed}, $p_{j''}$ is undersubscribed in the finalised matching $M$. Therefore, at least one non-precarious pair involved with $p_{j''}$ must be removed (without a pair involving $p_{j''}$ immediately replacing it) before the end of the algorithm. Denote this point in the algorithm's execution as $T_1$ and the removed pair $(s_{i''},p_{j''})$ for some student $s_{i''}$. Note $T_1$ may either be before of after the end of the main while loop. This type of removal can only happen when $l_{k'}$ is full (this is clear before the main while loop, and is true after the main while loop by Proposition \ref{th:stable_SPA_S_bp_lk_full}), and once a lecturer is full they remain full (since any pair deletion involving a project of $l_{k'}$ can only occur with a pair addition involving a project of $l_{k'}$). But $(s_i,p_{j'})$ was assigned when $p_{j'}$ was fully available and so $(s_i,p_{j'})$ was assigned before $T_1$. If $T_1$ occurs after the end of the main while loop then $l_{k'}$ must be non-precarious at $T_1$ by Proposition \ref{th:stable_SPA_S_bp_lk_not_prec}. But this means before $T_1$, pair $(s_i,p_{j'})$ is either in the matching but no longer precarious, or has been removed from the matching, a contradiction to the fact that $(s_i,p_{j'})$, once added, remains a precarious pair that can never be removed. Therefore the removal of $(s_{i''},p_{j''})$ at $T_1$ must have occurred before the end of the main while loop. But, since $(s_i,p_{j'})$ is precarious, it would be removed before non-precarious $(s_{i''},p_{j''})$, a contradiction.

	\item If $s_{i'}$ was accepted then pair $(s_{i'},p_{j''})$ would need to be removed before the algorithm terminated (since $(s_{i'},p_{j''}) \notin M$). We know from before that $p_{j''}$ is non-precarious at the point of application and further that pair $(s_{i'},p_{j''})$ must remain non-precarious by definition since $s_{i'}$ applied in phase $2$. Therefore, we need to remove non-precarious pair $(s_{i'},p_{j''})$ from the matching which can only happen if either $p_{j''}$ is full or $l_{k'}$ is full. Firstly assume that  $p_{j''}$ is full and pair $(s_{i'},p_{j''})$ is replaced with a meta-preferred student assigned to $p_{j''}$ ($p_{j''}$ must be non-precarious since we are removing a non-precarious pair). Since $p_{j''}$ needs to be undersubscribed in the finalised matching $M$ we are in the same position and contradiction as the previous case. Secondly, $l_{k'}$ is full (and therefore remains full), in which case pair $(s_i,p_{j'})$ must already be in $M$ (since it was added to $M$ when $p_{j'}$ was fully available) and we can use a similar reasoning to the latter half of the previous case to show a contradiction.

  	\end{enumerate}

  	\item \label{SPA_3_aug_3_2_fiii} \emph{$s_i$ strictly prefers $p_{j'}$ to $p_j$:} We now consider three sub-cases based on $l_{k'}$'s preference list. 

  	\begin{enumerate}[1.]
  	\item \emph{$l_{k'}$ strictly prefers $s_i$ to $s_{i'}$:} We know that $p_{j'}$ is undersubscribed in $M_{opt}$ by Proposition \ref{th:stable_SPA_S_alt_p_j_undersubscribed_mopt}. Either $l_{k'}$ is undersubscribed (in which case $p_{j'}$ is fully available), or $l_{k'}$ is full (and strictly prefers $s_i$ to $s_{i'}$). In either case, $(s_i, p_{j'})$ is a blocking pair of stable $M_{opt}$, a contradiction.

   		\item \label{SPA_3_aug_3_2_fiii2} \emph{$s_{i'}$ and $s_i$ are tied on $l_{k'}$'s preference list:} For this initial paragraph we use some similar reasoning to Case \ref{SPA_3_aug_3_2_a}\ref{SPA_3_aug_3_2_aiii}\ref{SPA_3_aug_3_2_aiii2}. Student $s_i$ must have assigned to $p_{j'}$ in phase $1$, otherwise $s_i$ would have applied to $p_j$ a contradiction to Proposition \ref{th:stable_SPA_S_alt_3_p_j_never_applied}. Unlike Case \ref{SPA_3_aug_3_2_a}\ref{SPA_3_aug_3_2_aiii}\ref{SPA_3_aug_3_2_aiii2}, $p_{j'}$ may be precarious at this point of application. Also, student $s_{i'}$, not being assigned in $M$, must have applied to $p_{j''}$ whilst in phase $2$. Denote the point at which $s_{i'}$ applies to $p_{j''}$ in phase $2$ as $T_0$. At $T_0$ we know that $p_{j''}$ is not fully available and is non-precarious by Proposition \ref{th:stable_SPA_S_phase2_not_fa_not_prec} and that $p_{j''}$ remains non-precarious from this point onwards by Proposition \ref{prop-spa-st-subsequentprecarious}.
   		
   	We look at two possibilities:

   \begin{enumerate}
   \item \label{SPA_3_aug_3_2_fiii2i} \emph{$s_{i'}$ was rejected at $T_0$:} There would be two possible reasons for the rejection. Firstly, that $l_{k'}$ is non-precarious, full and $l_{k'}$ does not meta-prefer $s_{i'}$ to any student in $M(l_{k'})$. Secondly, that $p_{j''}$ is non-precarious, full and $l_{k'}$ does not meta-prefer $s_{i'}$ to any student in $M(p_{j''})$. We can rule out the first option as follows. If $l_{k'}$ is non-precarious and full at $T_0$ then by Proposition \ref{th:stable_SPA_S_lecfullworseprec}, $l_{k'}$ cannot accept a worse student than currently exists in $M(l_{k'})$ for the remainder of the algorithm. Since $s_{i'}$ was rejected we can conclude that no worse student than $s_{i'}$ can exist in $M(l_{k'})$ at $T_0$ and cannot exist in $M(l_{k'})$ from $T_0$ onwards. But $s_{i'}$ being in phase $2$ is meta-preferred to $s_i$ in phase $1$. This contradicts the fact that $(s_i,p_{j'}) \in M$. Therefore, $s_{i'}$ was rejected because $p_{j''}$ is non-precarious, full and $l_{k'}$ does not meta-prefer $s_{i'}$ to any student in $M(p_{j''})$. 
   % {\color{red}By Proposition \ref{th:stable_SPA_S_pfullthennoworseafterwhile}\todo{5.4.22}, $l_{k'}$ cannot accept a student to project $p_{j''}$ that is worse or equal to a worst student existing in $M(p_{j''})$ for the remainder of the algorithm.}
   
   Using a similar strategy to Case \ref{SPA_3_aug_3_2_f}\ref{SPA_3_aug_3_2_fii}, we know that $p_{j''}$ is undersubscribed in the finalised matching $M$ by Proposition \ref{th:stable_SPA_S_alt_p_j_undersubscribed}, therefore before the algorithm terminates a pair $(s_{i''},p_{j''})$ involving $p_{j''}$ must be removed without being immediately replaced with another pair involving $p_{j''}$. Denote the first such occurrence as happening at time $T_1$, where $T_1$ occurs after $T_0$. Note that $T_1$ may be either before or after the end of the main while loop.    
%   As before we note that this kind of removal requires $l_{k'}$ to be full \todoc{reference} and $l_{k'}$ is also non-precarious at $T_1$ as we are removing a non-precarious pair. 
  
  \begin{itemize}
  	\item Assume $T_1$ occurs before the end of the main while loop. We know any removal of the type occurring at $T_1$ must be due to $l_{k'}$ being full and non-precarious (since $(s_{i''},p_{j''})$ is removed as a non-precarious pair). By Proposition \ref{th:stable_SPA_S_lecfullworseprec}, we know $l_{k'}$ cannot subsequently be assigned in $M$ a student worse than or equal to a worst student in $M(l_{k'})$ at $T_1$. Using the same proposition we know that $p_{j''}$ cannot be assigned in $M$ a worse student until the end of the main while loop than exists in the matching at $T_0$. 
  
 Since a pair involving $p_{j''}$ was removed at $T_1$, a worst assignee in $M(l_{k'})$ at $T_1$ can be no worse than a worst assignee in $M(p_{j''})$ at $T_0$. Finally, this means that no student assigned to $l_{k'}$ from $T_1$ onwards can be worse than $s_{i'}$ rejected at $T_0$, but $s_i$ being in phase $1$ is worse than $s_{i'}$ in phase $2$ according to $l_{k'}$, a contradiction to the fact that $(s_i,p_{j'})\in M$.
  	
  	\item Assume therefore that $T_1$ occurs after the end of the main while loop. Then $s_i$ has to be assigned to $p_{j'}$ at this point. Since $p_{j''}$ becomes undersubscribed at $T_1$, pair $(s_{i''},p_{j'''})$ must be a blocking pair of type $(3bi)$, where $p_{j'''}$ is an undersubscribed project of $l_{k'}$, and $s_i$ strictly prefers $p_{j'''}$ to $p_{j''}$. By Propositions \ref{th:stable_SPA_S_3biworstinmlk} and \ref{th:stable_SPA_S_3bicreatedworstinmlk}, $s_{i''}$ must be a worst student in $M(l_{k'})$ and therefore $M(p_{j''})$ at $T_1$. But, we also know that at $T_0$ when $s_{i'}$ was rejected, $l_{k'}$ could not subsequently accept a student to project $p_{j''}$ that is worse than a worst student existing in $M(p_{j''})$ for the remainder of the algorithm, by Proposition \ref{th:stable_SPA_S_pfullthennoworseafterwhile}. Also, since $s_{i'}$ was rejected, no worse student can exist in $p_{j''}$ at $T_0$. Therefore,  since $(s_{i''}, p_{j''})$ exists in $M$ just before $T_1$, $s_{i''}$ cannot be worse than $s_{i'}$ according to $l_{k'}$ (and so must either be of equal rank and in phase $2$ or of higher rank). Recall that $(s_i,p_{j'})$ was assigned in phase $1$ and so must exist in the matching at $T_1$. This means that $s_{i''}$ is either at an equal rank to $s_i$ (as the rank of $s_i$ and $s_{i'}$ are equal) but is in phase $2$ with $s_i$ being in phase $1$, or $s_{i''}$ is at a higher rank than $s_i$. In either case this contradicts the fact that $s_{i''}$ is a worst student in $M(l_{k'})$ at $T_1$.
  \end{itemize}

   		\item \emph{$s_{i'}$ application to $p_{j''}$ was accepted at $T_0$:} Pair $(s_{i'}, p_{j''})$ does not exist in the finalised matching $M$ and therefore must be removed sometime after $T_0$. We know $(s_{i'}, p_{j''})$ is always non-precarious by definition (as $s_{i'}$ applied in phase $2$ and hence is removed as a non-precarious pair. Since $s_{i'}$ is unassigned in the finalised matching $M$ and Algorithm \ref{Alg:APPROX_SPA-S_stable_3bibpBrief} cannot change which students are assigned, $(s_{i'}, p_{j''})$ must be removed before the end of the main while loop. Denote this time as $T_2$. 
   		
   		This can only happen if either $l_{k'}$ is full and $s_{i'}$ is a worst assignee in $M(l_{k'})$ or $p_{j''}$ is full and $s_{i'}$ is a worst assignee in $M(p_{j''})$. If the former, then by Proposition \ref{th:stable_SPA_S_lecfullworseprec}, $l_{k'}$ cannot accept a worse student than a current worst student in $M(l_{k'})$. This worst student cannot be worse than $s_{i'}$, since $s_{i'}$ was just removed, hence $s_i$ (of equal rank to $s_{i'}$ and in phase $1$) cannot be assigned a project of $l_{k'}$'s in the finalised matching $M$, a contradiction. Therefore at $T_2$, $p_{j''}$ is full and $s_{i'}$ is removed as a worst assignee in $M(p_{j''})$. 
   		% From $T_2$ onwards, no student worse than a worst student in $M(p_{j''})$ can be assigned to $p_{j''}$ by Proposition \ref{th:stable_SPA_S_pfullthennoworseafterwhile}\todo{5.4.22}.

   		But since $p_{j''}$ must be undersubscribed in the finalised matching $M$ by Proposition \ref{th:stable_SPA_S_alt_p_j_undersubscribed}, we can now use almost identical arguments as in case \ref{SPA_3_aug_3_2_f}\ref{SPA_3_aug_3_2_fiii}\ref{SPA_3_aug_3_2_fiii2}\ref{SPA_3_aug_3_2_fiii2i} to show a contradiction, noting that $T_2$ replaces $T_0$ and $s_{i'}$ was removed rather than rejected.

   	\end{enumerate}

   		\item \emph{$l_{k'}$ strictly prefers $s_{i'}$ to $s_i$:} 
   		Project $p_{j''}$ is undersubscribed in $M$ by Proposition \ref{th:stable_SPA_S_alt_p_j_undersubscribed}. Either $l_{k'}$ is undersubscribed (in which case $p_{j''}$ is fully available), or $l_{k'}$ is full (and strictly prefers $s_{i'}$ to $s_i$). In either case, $(s_{i'}, p_{j''})$ is a blocking pair of stable $M$, a contradiction.
   		 \end{enumerate}
   	
  	\end{enumerate}

\end{enumerate}

Therefore it is not possible for a component structured as in Figure \ref{SPASTpossiblecomps_6} to exist in $G'$.
\end{proof}

Similar to our previous lemma, we prove in Lemma \ref{th:stable_SPA_S_lemma_no_alt_size_1} that it is not possible for a component of the type shown in Figure \ref{SPASTpossiblecomps_7} to exist in $G'$.

% THEOREM BEGIN
\begin{lemma} 
	\label{th:stable_SPA_S_lemma_no_alt_size_1}
	Let $M$ be a stable matching found by Algorithm \ref{Alg:APPROX_SPA-S_stable} for instance $I$ of \acrshort{spa-st}, and let $M_{opt}$ be a maximum stable matching in $I$. No component of the type given in Figure \ref{SPASTpossiblecomps_7} can exist in the mapped graph $G'$. 
	\end{lemma}
% THEOREM END

\begin{proof}
Let $p_j$ be the project that student $s^{c,1}$ is assigned to in $M_{opt}$ and let lecturer clone $l^{c,1}$ correspond to lecturer $l_k$ in $G$. $l_k$ must be undersubscribed in $M$ as there is a lecturer clone $l^{c,1}$ unassigned in $M'$. Also, by Proposition \ref{th:stable_SPA_S_alt_p_j_undersubscribed}, $p_j$ must also be undersubscribed in $M$. Therefore, $p_j$ is fully available at the end of the algorithm's execution and must always have been fully available by Proposition \ref{th:stable_SPA_S_propfullyavailable}.

Student $s_i$ is unassigned in $M$ and so we know that $s_i$ has to have applied to $p_j$ during the algorithm's execution. Since $p_j$ has always been fully available this had to have been accepted. Now, $(s_i,p_j)$ is not in the finalised matching $M$ and so it must have been removed and this could only have happened if $p_j$ or $l_k$ were full. But this contradicts the fact that $p_j$ has always been fully available. Therefore, no component of type \ref{SPASTpossiblecomps_7desc} can exist in the mapped graph $G'$.
\end{proof}

Finally, Theorem \ref{th:stable_SPA_S} proves that Algorithm {\sf Max-SPA-ST-Approx} is a $\frac{3}{2}$-approximation algorithm for \acrshort{max-spa-st}.

% THEOREM BEGIN
\begin{theorem} 
	\label{th:stable_SPA_S}
	Let $M$ be a stable matching found by Algorithm \ref{Alg:APPROX_SPA-S_stable} for instance $I$ of \acrshort{spa-st}, and let $M_{opt}$ be a maximum stable matching in $I$. Then $|M|\geq \frac{2}{3}|M_{opt}|$.
	\end{theorem}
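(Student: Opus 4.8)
The plan is to bound the ratio $|M_{opt}|/|M|$ by analysing the mapped graph $G'$ one component at a time. First I would record the bookkeeping that makes this reduction clean: by the construction of $G'$, every pair in $M \cap M_{opt}$ is absorbed into the reduced clone counts $r_k = d_k - |M(l_k)\cap M_{opt}(l_k)|$, so such pairs contribute identically to $M$ and to $M_{opt}$ and may be set aside. The remaining pairs are exactly the edges of $G'$: each edge of $M'$ corresponds to a distinct pair of $M\setminus M_{opt}$, and each edge of $M'_{opt}$ to a distinct pair of $M_{opt}$ not already served by $l_k$ in $M$. Writing $o_c$ and $a_c$ for the numbers of $M'_{opt}$ and $M'$ edges in a component $c$, we get $|M_{opt}| = |M\cap M_{opt}| + \sum_c o_c$ and $|M| = |M\cap M_{opt}| + \sum_c a_c$. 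It therefore suffices to prove the per-component inequality $o_c \leq \frac{3}{2} a_c$; summing over all components and adding the common part $|M\cap M_{opt}|$ on both sides then gives $|M_{opt}| \leq \frac{3}{2}|M|$, which is the claimed performance guarantee.

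Next I would use the already-established fact that every vertex of $G'$ is incident to at most one $M'$ edge and at most one $M'_{opt}$ edge, so each component has one of the seven forms (a)--(g) of Figure \ref{SPASTpossiblecomps}. For an alternating cycle (\ref{SPASTpossiblecomps_1desc}) and the two even paths (\ref{SPASTpossiblecomps_2desc}), (\ref{SPASTpossiblecomps_3desc}) the two edge types occur equally often, so $o_c = a_c$ and the bound holds trivially; for an odd path whose end edges both lie in $M$ (\ref{SPASTpossiblecomps_4desc}) there is one more $M'$ edge than $M'_{opt}$ edge, so $o_c < a_c$. The only components that could violate $\frac{3}{2}$ are the odd paths with both end edges in $M_{opt}$, namely (\ref{SPASTpossiblecomps_5desc}), (\ref{SPASTpossiblecomps_6desc}), (\ref{SPASTpossiblecomps_7desc}). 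A path of this shape on $\gamma = n_c$ lecturer clones carries $o_c = \gamma$ edges of $M'_{opt}$ and $a_c = \gamma - 1$ edges of $M'$, so its ratio is $\gamma/(\gamma-1)$, which is at most $\frac{3}{2}$ exactly when $\gamma \geq 3$. Hence every component of type (\ref{SPASTpossiblecomps_5desc}) already satisfies the required inequality, with equality precisely at $\gamma = 3$.

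This leaves the two short offenders, $\gamma = 2$ (type (\ref{SPASTpossiblecomps_6desc}), ratio $2$) and $\gamma = 1$ (type (\ref{SPASTpossiblecomps_7desc}), ratio unbounded), each of which would break the bound. The crux of the argument is therefore to invoke Lemmas \ref{th:stable_SPA_S_lemma_no_alt_size_3} and \ref{th:stable_SPA_S_lemma_no_alt_size_1}, which guarantee that no component of type (\ref{SPASTpossiblecomps_6desc}) or (\ref{SPASTpossiblecomps_7desc}) can arise in $G'$ for a matching produced by Algorithm \ref{Alg:APPROX_SPA-S_stable}. With these two structures excluded, $o_c \leq \frac{3}{2} a_c$ holds for every surviving component, and summing completes the proof. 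I expect the genuine difficulty to lie entirely in those two exclusion lemmas, which themselves rest on the fully-available and precariousness machinery of the earlier propositions; the theorem proper then collapses to the routine summation above. I would finish by remarking that the $\gamma = 3$ instance of type (\ref{SPASTpossiblecomps_5desc}) attains equality, so this analysis of the $\frac{3}{2}$ bound is tight.
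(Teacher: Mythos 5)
Your proposal is correct and follows essentially the same route as the paper's proof: a component-by-component analysis of $G'$, observing that cycles, even paths, and odd paths with end edges in $M'$ trivially satisfy the bound, that odd paths with end edges in $M'_{opt}$ on $\gamma \geq 3$ clones have ratio $\gamma/(\gamma-1) \leq \frac{3}{2}$, and that Lemmas \ref{th:stable_SPA_S_lemma_no_alt_size_3} and \ref{th:stable_SPA_S_lemma_no_alt_size_1} exclude the two short offending types. Your explicit bookkeeping of the $M \cap M_{opt}$ pairs and the per-component summation makes rigorous a reduction the paper leaves implicit, but the substance of the argument is identical.
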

% THEOREM END

\begin{proof}
Let $G'$ be the mapped graph constructed from the underlying graph of the instance $G$. Components in $G'$ may only exist in the forms shown in Figure \ref{SPASTpossiblecomps}. Therefore we need only show that no component in $G'$ can exist where the number of $M'$ edges is less than two-thirds of the number of $M'_{opt}$ edges. We run through each component of Figure \ref{SPASTpossiblecomps} in turn. Let the current component be denoted $c$, where $M'(c)$ and $M'_{opt}(c)$ denote the set of edges in $M'$ and $M'_{opt}$ involved in $c$, respectively. 

For Case \ref{SPASTpossiblecomps_1}, an alternating cycle, and Cases \ref{SPASTpossiblecomps_2} and \ref{SPASTpossiblecomps_3}, alternating paths of even length, it is clear that $|M'(c)| = |M'_{opt}(c)|$. Case \ref{SPASTpossiblecomps_4} involves an odd length alternating path with end edges in $M'$. It must be the case therefore that $|M'(c)| > |M'_{opt}(c)|$ for components of this type. Case \ref{SPASTpossiblecomps_5} shows an odd length alternating path with end edges in $M'_{opt}$, but for path sizes greater than $5$. Therefore, $|M'(c)| \geq \frac{2}{3}|M'_{opt}(c)|$ as required. Neither Case \ref{SPASTpossiblecomps_6} nor \ref{SPASTpossiblecomps_7} can exist in $G'$ by Lemmas \ref{th:stable_SPA_S_lemma_no_alt_size_3} and \ref{th:stable_SPA_S_lemma_no_alt_size_1} respectively.

Hence it is not possible for the mapped graph $G'$ to contain components in which $|M'(c)| < \frac{2}{3}|M'_{opt}(c)|$. Algorithm \ref{Alg:APPROX_SPA-S_stable} is therefore a $\frac{3}{2}$-approximation algorithm for the problem of finding a maximum stable matching in $I$.
\end{proof}

%%%%%%%%%%%%%%%%%%%%%%%%%%%%%%%%%%%%%%%%%%%%
%%%%%%%%%%%%%%%%%%%%%%%%%%%%%%%%%%%%%%%%%%%%

\subsection{Lower bound for the algorithm}
\label{sec:lb_alg_2_3}

Figure \ref{fig:example_2_3_size_instance} shows instance $I_5$ of \acrshort{spa-st}. A maximum stable matching $M'$ in $I$ is given by $M'=\{(s_1,p_2),(s_2,p_3),(s_3,p_1)\}$. The only possible blocking pairs for this matching are $(s_3,p_3)$ and $(s_3,p_2)$. However, neither pair can be a blocking pair since $l_2$ prefers both of their current assignees to $s_3$. 

A trace is given as Table \ref{tab:example_2_3_size_instance_trace} which shows the execution run of Algorithm \ref{Alg:APPROX_SPA-S_stable} over instance $I_5$. The algorithm outputs stable matching $M=\{(s_1,p_3),(s_3,p_2)\}$. The possible blocking pairs of this matching are $(s_2,p_3)$ and $(s_3,p_3)$. Neither can be a blocking pair since  $l_2$ prefers $s_1$ to both $s_2$ and $s_3$.

Therefore, Algorithm {\sf Max-SPA-ST-Approx} has found a stable matching that is \emph{exactly} two-thirds of the size of the maximum stable matching, thus the algorithm cannot guarantee a better bound than $\frac{3}{2}$.

We further note that this result holds for the arbitrarily large family of instances generated by copying instance $I_5$ a constant number of times, such that in each copy, indices of students and projects increase by $3$ and indices of lecturers increase by $2$.

\begin{figure}
\begin{minipage}[t]{0.3\textwidth}
Student preferences:\\
$s_1$: ($p_3$ $p_2$)\\
$s_2$: $p_3$\\
$s_3$: $p_3$ $p_2$ $p_1$\\
\end{minipage}
\begin{minipage}[t]{0.3\textwidth}
Project details:\\
$p_1$: lecturer $l_1$, $c_1=2$\\
$p_2$: lecturer $l_1$, $c_2=1$\\
$p_3$: lecturer $l_2$, $c_3=1$\\
\end{minipage}
\begin{minipage}[t]{0.35\textwidth}
Lecturer preferences:\\
\begin{minipage}[t]{0.6\textwidth}
$l_1$: $s_1$ $s_3$\\
$l_2$: $s_1$ $s_2$ $s_3$\\
\end{minipage}
\begin{minipage}[t]{0.25\textwidth}
$d_1=2$\\
$d_2=1$\\
\end{minipage}
\end{minipage}
\caption[{\sc spa-st} instance $I_5$ in which Algorithm {\sf Max-SPA-ST-Approx} finds a stable matching two-thirds of the size of optimal.]{\acrshort{spa-st} instance $I_5$ in which Algorithm {\sf Max-SPA-ST-Approx} finds a stable matching two-thirds of the size of optimal.}
\label{fig:example_2_3_size_instance}
\end{figure}

\begin{table}[tbp] \centering\begin{tabular}{ p{0.5cm}p{9cm} | p{0.7cm} p{0.7cm}p{0.7cm}}\hline\hline 
& Action & $s_1$ & $s_2$ & $s_3$  \\ 
\hline 
$1$ & $s_1$ applies to $p_3$, accepted & $p_3$ &  & \\
$2$ & $s_2$ applies to $p_3$, accepted &  & $p_3$ & \\
$3$ & $s_3$ applies to $p_3$, rejected, $s_3$ removes $p_3$ &  & $p_3$ & \\
$4$ & $s_3$ applies to $p_2$, accepted &  & $p_3$ & $p_2$ \\
$5$ & $s_1$ applies to $p_3$, accepted, $s_2$ removes $p_3$ & $p_3$ &  & $p_2$ \\
$6$ & $s_2$ moves to phase $2$ &  $p_3$ &  & $p_2$ \\
$7$ & $s_2$ applies to $p_3$, rejected, $s_2$ removes $p_3$ & $p_3$ &  & $p_2$ \\
$8$ & $s_2$ moves to phase $3$ &  $p_3$ &  & $p_2$ \\
   \hline\hline \end{tabular} \caption[Trace of running Algorithm {\sf Max-SPA-ST-Approx} for instance $I_5$.]{Trace of running Algorithm {\sf Max-SPA-ST-Approx} for instance $I_5$ in Figure \ref{fig:example_2_3_size_instance}. In this table, the phrase ``$s_i$ removes $p_j$'' indicates that student $s_i$ removes project $p_j$ from their preference list.}\label{tab:example_2_3_size_instance_trace} \end{table}

\section{IP model for {\sc max-spa-st}}
\label{sec_ip}

In this section we present an \acrshort{ip} model for \acrshort{max-spa-st}.

\subsection{Stability definition}
For the stability constraints in the model, it is advantageous to use an equivalent condition for stability, as given by the following lemma.

% THEOREM BEGIN
\begin{lemma}
	\label{lemma_stability_2}
	Let $I$ be an instance of \acrshort{spa-st} and let $M$ be a matching in $I$. Then $M$ is stable if and only if the following condition, referred to as condition (*) holds: For each student $s_i \in S$ and project $p_j \in P$, if $s_i$ is unassigned in $M$ and finds $p_j$ acceptable, or $s_i$ prefers $p_j$ to $M(s_i)$, then either:
	\begin{itemize}
	\item  $l_k$ is full, $s_i \notin M(l_k)$ and $l_k$ prefers the worst student in $M(l_k)$ to $s_i$ or is indifferent between them, or;
	\item $p_j$ is full and $l_k$ prefers the worst student in $M(p_j)$ to $s_i$ or is indifferent between them, where $l_k$ is the lecturer offering $p_j$.
	\end{itemize}
\end{lemma}

\begin{proof}

Suppose $M$ is stable. Assume for contradiction that condition (*) is not satisfied. Then there exists a student $s_i \in S$ and a project $p_j \in P$ such that $s_i$ is unassigned in $M$ and finds $p_j$ acceptable, or $s_i$ prefers $p_j$ to $M(s_i)$, and one of the following four cases arises:

\begin{enumerate}
  \item $p_j$ and $l_k$ are both undersubscribed;
  \item $p_j$ is undersubscribed, $l_k$ is full and $s_i \in M(l_k)$;
  \item $p_j$ is undersubscribed, $l_k$ is full and $l_k$ prefers $s_i$ to the worst student in $M(l_k)$;
  \item $p_j$ is full and $l_k$ prefers $s_i$ to the worst student in $M(p_j)$.
\end{enumerate} 

Each of these scenarios clearly describes a blocking pair as in Section \ref{sec:spa-st_stab_defns}, hence we have a contradiction to the stability of $M$.

Conversely, assume $M$ satisfies condition (*). Suppose for contradiction that $M$ is not stable. Then, there exists a blocking pair $(s_i,p_j)$, implying that $s_i$ is unassigned in $M$ and finds $p_j$ acceptable, or $s_i$ prefers $p_j$ to $M(s_i)$, and one of the above four cases will be true.

Whichever one of these cases holds, we then obtain a contradiction to the fact that the conditions given in the statement of the theorem holds. Thus $M$ is stable.
\end{proof}
% THEOREM END

\subsection{Description of variables and constraints}
%\label{ip_max_model}
The key variables in the model are binary-valued variables $x_{ij}$, defined for each $s_i\in S$ and $p_j\in P$, where $x_{ij} = 1$ if and only if student $s_i$ is assigned to project $p_j$. Additionally, we have binary-valued variables $\alpha_{ij}$ and $\beta_{ij}$ for each $s_i\in S$ and $p_j\in P$. These variables allow us to more easily describe the stability constraints below. For each $s_i\in S$ and $l_k\in L$, let 
$$T_{ik} = \{s_u \in S: \text{rank}(l_k,s_u) \leq \text{rank}(l_k,s_i) \wedge s_u \neq s_i\}.$$ 
That is, $T_{ik}$ is the set of students ranked at least as highly as student $s_i$ in lecturer $l_k$'s preference list not including $s_i$. Also, for each $p_j\in P$, let 
$$T_{ijk} = \{s_u \in S: \text{rank}(l_k,s_u) \leq \text{rank}(l_k,s_i) \wedge s_u \neq s_i \wedge p_j \in A(s_u)\}.$$ 
That is, $T_{ijk}$ is the set of students $s_u$ ranked at least as highly as student $s_i$ in lecturer $l_k$'s preference list, such that project $p_j$ is acceptable to $s_u$, not including $s_i$. Finally, let $S_{ij} = \{p_r \in P: \text{rank}(s_i,p_r) \leq \text{rank}(s_i,p_j)\}$, that is, $S_{ij}$ is the set of projects ranked at least as highly as project $p_j$ in student $s_i$'s preference list, including $p_j$. Figure \ref{IP:SPA-ST} shows the IP model for \acrshort{max-spa-st}.

\begin{figure}[]
\begin{mdframed}
\begin{flalign*}
&\text{maximise: } \sum_{s_i \in S}\sum_{p_j \in P} x_{ij}&
\end{flalign*}

  subject to:
  
\begin{flalign*}
	1.\hspace{1cm}& x_{ij} \leq 0 & \forall{s_i \in S} \hspace{0.2cm} \forall{p_j \in P},\hspace{0.1cm} p_j \notin A(s_i)\\
	2. \hspace{1cm}& \sum_{p_j \in P}x_{ij} \leq 1 & \forall{s_i \in S}\\
	3. \hspace{1cm}& \sum_{s_i \in S}x_{ij} \leq c_j & \forall{p_j \in P}\\
	4. \hspace{1cm}& \sum_{s_i \in S} \sum_{p_j \in P_k}x_{ij} \leq d_k & \forall{l_k \in L}\\
	5.\hspace{1cm} & 1 - \sum_{p_r \in S_{ij}}x_{ir} \leq \alpha_{ij} + \beta_{ij} & \forall{s_i \in S} \hspace{0.2cm} \forall{p_j \in P}\\
	6.\hspace{1cm}& \sum_{s_u \in T_{ik}} \sum_{p_r \in P_k} x_{ur} \geq d_k  \alpha _{ij}          &  \forall{s_i \in S} \hspace{0.2cm} \forall{p_j \in P}\\
	7.\hspace{1cm} & \sum_{s_u \in T_{ijk}} x_{uj} \geq c_j  \beta_{ij}         &  \forall{s_i \in S} \hspace{0.2cm} \forall{p_j \in P}\\
\end{flalign*}

\begin{flalign*}
	& x_{ij} \in \{0,1\}  & \forall{s_i \in S} \hspace{0.2cm} \forall{p_j \in P}\\
	& \alpha_{ij} \in \{0,1\} & \forall{s_i \in S} \hspace{0.2cm} \forall{p_j \in P}\\
	& \beta_{ij} \in \{0,1\} & \forall{s_i \in S} \hspace{0.2cm} \forall{p_j \in P}\\
\end{flalign*}
\end{mdframed}
\caption[IP model for {\sc max-spa-st}.]{IP model for \acrshort{max-spa-st}.}
\label{IP:SPA-ST}
\end{figure}

Constraint $1$ enforces $x_{ij} = 0$ if $s_i$ finds $p_j$ unacceptable (since $x_{ij} \in \{0, 1\}$). Constraint $2$ ensures that a student may be assigned to a maximum of one project. Constraints $3$ and $4$ ensure that project and lecturer capacities are enforced. In the left hand side of the inequality of Constraint $5$, if $1 - \sum_{p_r \in S_{ij}}x_{ir} = 1$, then either $s_i$ is unassigned or $s_i$ prefers $p_j$ to $M(s_i)$. This ensures that if $s_i$ is unassigned or $s_i$ prefers $p_j$ to $M(s_i)$ then either $\alpha_{ij}=1$ or $\beta_{ij} = 1$ (or both), where $\alpha_{ij}$ and $\beta_{ij}$ are described in Constraints $6$ and $7$. Constraint $6$ ensures that, if $\alpha_{ij}=1$, then the number of students assigned to $l_k$ who are ranked at least as highly as student $s_i$ by $l_k$ (not including $s_i$) must be at least $l_k$'s capacity $d_k$. Constraint $7$ ensures that, if $\beta_{ij}=1$, then the number of students assigned to $p_j$ who are ranked at least as highly as student $s_i$ by $l_k$ (not including $s_i$) must be at least $p_j$'s capacity $c_j$.

Finally, for our optimisation we maximise the sum of all $x_{ij}$ variables in order to maximise the number of students assigned. 
% {\color{blue}Note that in order to solve the problem of finding a minimum stable matching (\acrshort{min-spa-st}), we need only change the optimisation of this \acrshort{ip} model from maximising the sum of all $x_{ij}$ variables to minimising this sum.}
The following section establishes the correctness of the IP model. 

\subsection{Proof of correctness}

% THEOREM BEGIN
\begin{theorem} 
	\label{thm:ip_stable_correct}
	Given an instance $I$ of \acrshort{spa-st}, let $J$ be the IP model as defined in Figure \ref{IP:SPA-ST}. A stable matching in $I$ corresponds to a feasible solution in $J$ and vice versa. 
\end{theorem}
% THEOREM END

\begin{proof}

Assume instance $I$ of \acrshort{spa-st} contains a stable matching $M$. We construct a feasible solution to $J$ involving the variables \textbf{x}, $\boldsymbol\upalpha$ and $\boldsymbol\upbeta$ as follows.

The variables \textbf{x}, $\boldsymbol\upalpha$ and $\boldsymbol\upbeta$ are constructed as follows. For each student $s_i \in S$ and for each project $p_j \in P$, if $s_i$ is assigned to $p_j$ in $M$ then we set variable $x_{ij}=1$, otherwise $x_{ij} = 0$. Let lecturer $l_k$ be the proposer of project $p_j$. Let variable $\alpha_{ij}=1$ if the following two conditions hold: \emph{i)} student $s_i$ is not assigned to lecturer $l_k$, and \emph {ii)} lecturer $l_k$ is full and prefers their worst ranked assignee to $s_i$, or is indifferent between them. Else let $\alpha_{ij}=0$. Let variable $\beta_{ij}=1$ if the following two conditions hold: \emph{i)} student $s_i$ is not assigned to project $p_j$, and \emph {ii)} project $p_j$ is full,  and $l_k$ prefers $p_j$'s worst assignee to $s_i$, or is indifferent between them. Else, set $\beta_{ij}=0$.

Now it must be shown that all constraints described in Figure \ref{IP:SPA-ST} are satisfied.

\begin{enumerate}
	\item \emph{Constraints 1 - 4.} It is clear by the construction of $J$ that Constraints 1-4 are satisfied.

	\item \emph{Constraint 5.} Recall $S_{ij} = \{p_r \in P: \text{rank}(s_i,p_r) \leq \text{rank}(s_i,p_j)\}$ is the set of projects ranked at least as highly as $p_j$ in $s_i$'s preference list. Let $\gamma_{ij} = 1 - \sum_{p_r \in S_{ij}}x_{ir}$. We must show that whenever $\gamma_{ij} = 1$, $\alpha_{ij} + \beta_{ij} \geq 1$. Assume $\gamma_{ij} = 1$, that is $s_i$ is unassigned or would prefer to be assigned to $p_j$ than to $M(p_j)$. As $M$ is stable we know that condition (*) of Lemma \ref{lemma_stability_2} is satisfied. Therefore $\alpha_{ij} + \beta_{ij} \geq 1$ by construction. This directly satisfies Constraint 5.

	\item \emph{Constraint 6.} Recall that for student $s_i$ and lecturer $l_k$, $T_{ik}$ is the set of students ranked at least as highly as student $s_i$ in lecturer $l_k$'s preference list, not including $s_i$. Assume $\alpha_{ij} = 1$. Then, by definition, we know that lecturer $l_k$ is full and prefers their worst ranked assignee to $s_i$, or is indifferent between them. Therefore, the LHS of the inequality must equal $d_k$ and so this constraint is satisfied.

	\item \emph{Constraint 7.} Recall that $T_{ijk}$ is the set of students ranked at least as highly as student $s_i$ in lecturer $l_k$'s preference list, such that the project $p_j$ is acceptable to each student. Similar to above, assume $\beta_{ij} = 1$. Then, by definition, we know that project $p_j$ is full and $l_k$ prefers their worst ranked assignee in $M(p_j)$ to $s_i$, or is indifferent between them. Therefore, the LHS of the inequality must equal $c_j$ and so this constraint is also satisfied.
\end{enumerate}

We have shown that the assignment of values to \textbf{x}, $\boldsymbol\upalpha$ and $\boldsymbol\upbeta$ satisfy all the constraints in $J$, thus if there is a stable matching $M$ in $I$, then there is a feasible solution of $J$.

Conversely, we now show that a feasible solution of $J$ corresponds to a stable matching $M$ in $I$. Let \textbf{x}, $\boldsymbol\upalpha$ and $\boldsymbol\upbeta$ be a feasible solution of $J$. For each $x_{ij}$ variable in $J$, if $x_{ij} = 1$ then add pair $(s_i,p_j)$ to $M$ in $I$. It is now shown that this assignment of students to projects satisfies the definition of a stable matching $M$ in $I$.

\begin{enumerate}
	\item The following constraints are clearly satisfied by Constraints 1-4:
	\begin{itemize}
		\item \emph{A student $s_i$ may be assigned to a maximum of $1$ project;}
		\item \emph{A student $s_i$ may only be assigned to a project that they find acceptable;}
		\item \emph{The number of students assigned to project $p_j$ is less than or equal to $c_j$;}
		\item \emph{The number of students assigned to projects offered by lecturer $l_k$ is less than or equal to $d_k$.}
	\end{itemize}

	\item \emph{$M$ is stable}. Assume for contradiction that there exists a blocking pair $(s_i,p_j) \in M$. Then by Lemma \ref{lemma_stability_2}, neither of the sub-conditions of condition (*) can be true. Both of these sub-conditions being false imply that, as Constraints 6 and 7 must be satisfied, $\alpha_{ij} = 0$ and $\beta_{ij} = 0$. 

Recall $\gamma_{ij} = 1 - \sum_{p_r \in S_{ij}}x_{ir}$. $\sum_{p_r \in S_{ij}}x_{ir}$ is the number of projects that student $s_i$ is assigned to at a higher or equal ranking than $p_j$ in $s_i$'s preference list (including $p_j$). Since $(s_i,p_j)$ is a blocking pair, then it must be the case that $\sum_{p_r \in S_{ij}}x_{ir} = 0$. But this forces $\gamma_{ij} = 1$, and we know that $\alpha_{ij}$ and $\beta_{ij}$ are equal to $0$ so Constraint 5 is contradicted. 
\end{enumerate}

We have shown that if there is a feasible solution of \textbf{x}, $\boldsymbol\upalpha$ and $\boldsymbol\upbeta$ of $J$, then there is a stable matching $M$ in $I$. This completes the proof.
\end{proof}

% THEOREM BEGIN
\begin{corollary} 
	\label{thm:ip_max_correct_app} 
	Given an instance $I$ of \acrshort{spa-st}, let $J$ be the IP model as defined in Figure \ref{IP:SPA-ST}. A maximum stable matching in $I$ corresponds to an optimal solution in $J$ and vice versa. 
\end{corollary}
% THEOREM END

\begin{proof}
	Assume $M$ is a maximum stable matching in $I$. Let $f = \langle$\textbf{x}$,\boldsymbol\upalpha, \boldsymbol\upbeta \rangle$ be the solution in $J$ constructed according to the description in Theorem \ref{thm:ip_stable_correct}. We must show that $f$ forms an optimal solution of $J$. Firstly, since $M$ is stable, we know by Theorem \ref{thm:ip_stable_correct} that $f$ is a feasible solution of $J$. Suppose for contradiction that $f$ is not optimal. Then there is some solution $g = \langle$\textbf{x}$',\boldsymbol\upalpha', \boldsymbol\upbeta' \rangle$ of $J$ in which obj$(g) >$ obj$(f)$, where obj$(f')$ gives the objective value of $f'$. But by construction, $g$ would translate into a stable matching $M'$ such that $|M'| = \text{obj}(g) > \text{obj}(f) = |M|$ in $I$ contradicting the fact that $M$ is maximum.
	
	Conversely, assume $f = \langle$\textbf{x}$,\boldsymbol\upalpha, \boldsymbol\upbeta \rangle$ is an optimal solution in $J$, and let $M$ be the stable matching in $I$ constructed according to the description in Theorem \ref{thm:ip_stable_correct}. Suppose for contradiction that there is some stable matching $M'$ in $I$ such that $|M'| > |M|$. Then by construction, there must be some corresponding solution $g = \langle$\textbf{x}$',\boldsymbol\upalpha', \boldsymbol\upbeta' \rangle$ of $J$ such that $\text{obj}(g) = |M'| > |M| = \text{obj}(f)$, giving the required contradiction.
	\end{proof}

\section{Experimental evaluation}
\label{sec_spa-st_more_sec_exp}

\subsection{Methodology}
Experiments were conducted on our $\frac{3}{2}$-approximation algorithm and an implementation of the IP model in Figure \ref{IP:SPA-ST} using randomly-generated data, in order to measure the effects on matching properties when changing different parameter values (including instance size, probability of ties in preference lists and preference list lengths). Two further experiments explored the scalability of instance size and preference list lengths. Instances were generated using both existing and new software. The existing software is known as the \emph{Matching Algorithm Toolkit} and is a collaborative project developed by students and staff at the University of Glasgow. A web application of the Matching Algorithm Toolkit, which allows access to many of its functions was developed by Lazarov \cite{BL19}.

%Let the number of students, projects and lecturers per instance be given by $n_1$, $n_2$ and $n_3$, respectively.
For a given \acrshort{spa-st} instance, let the total project and lecturer capacities be denoted by $c_P$ and $d_L$, respectively. Capacities were distributed randomly over the sets of projects and lecturers, subject to there being a maximum difference of $1$ between the capacities of any two projects or any two lecturers. The minimum and maximum size of student preference lists is given by $l_{min}$ and $l_{max}$, and $t_s$ represents the probability that a project on a student's preference list is tied with the next project. Lecturer preference lists were generated initially from the student preference lists, where a lecturer $l_k$ must rank a student if a student ranks a project offered by $l_k$. These lists were randomly shuffled and $t_l$ denotes the ties probability for lecturer preference lists. A linear distribution was used to make some projects more popular than others and in all experiments the most popular project is around $5$ times more popular than the least. This distribution influenced the likelihood of a student finding a given project acceptable.  Parameter details for each experiment are given below.

\begin{enumerate}
	\item \emph{Increasing instance size:} $10$ sets of  $10,000$ instances were created (labelled SIZE1, ..., SIZE10). The number of students $n_1$ increased from $100$ to $1000$ in steps of $100$, with $n_2=0.6n_1$, $n_3=0.4n_1$, $c_P=1.4n_1$ and $d_L=1.2n_1$.  The probabilities of ties in preference lists were $t_s = t_l = 0.2$ throughout all instance sets. Student preference list lengths, bound by $l_{min}=3$ and $l_{max}=5$, also remained the same and were kept low to ensure a wide variability in stable matching size per instance. 
	\item \emph{Increasing probability of ties:} $11$ sets of $10,000$ instances were created (labelled TIES1, ..., TIES11). Throughout all instance sets $n_1=300$, $n_2=250$, $n_3=120$, $c_P=420$, $d_L=360$, $l_{min}=3$ and $l_{max}=5$.  The probabilities of ties in student and lecturer preference lists increased from $t_s = t_l = 0.0$ to $t_s = t_l = 0.5$ in steps of $0.05$.  
	\item \emph{Increasing preference list lengths:} $10$ sets of $10,000$ instances were generated (labelled PREF1, ..., PREF10). Similar to the TIES cases, throughout all instance sets $n_1=300$, $n_2=250$, $n_3=120$, $c_P=420$ and $d_L=360$. Additionally, $t_s = t_l = 0.2$. Student preference list lengths increased from $l_{min}=l_{max}=1$ to $l_{min}=l_{max}=10$ in steps of $1$.
	\item \emph{Instance size scalability:} $5$ sets of  $10$ instances were generated (labelled SCALS1, ..., SCALS5). All instance sets in this experiment used the same parameter values as the SIZE experiment, except the number of students $n_1$ increased from $10,000$ to $50,000$ in steps of $10,000$.
	\item \emph{Preference list scalability:} Finally, $6$ sets of $10$ instances were created (labelled SCALP1, ..., SCALP6). Throughout all instance sets $n_1=500$ with the same values for other parameters as in the SIZE experiment. However in this case ties were fixed at $t_s = t_l = 0.4$, and student preference list lengths increased from $l_{min}=l_{max} = 25$ to $l_{min}=l_{max} = 150$ in steps of $25$.
\end{enumerate}

For each generated instance, we ran our $\frac{3}{2}$-approximation algorithm and then used the IP model to find a maximum stable matching. Additionally, we computed a minimum stable matching using a simple adaptation of our IP model for \acrshort{max-spa-st}.  A timeout of $1800$ seconds ($30$ minutes) was imposed on each algorithm, for each instance run, with all instances of Experiments $1$, $2$ and $3$ completing within the timeout time. All experiments were conducted on a machine with 32 cores, 8$\times$64GB RAM and Dual Intel\textsuperscript{\textregistered} Xeon\textsuperscript{\textregistered} CPU E5-2697A v4 processors. The operating system was Ubuntu version 17.04 with all code compiled in Java version 1.8, where the IP models were solved using Gurobi version 7.5.2. Each approximation algorithm instance was run on a single thread while each IP instance was run on two threads. No attempt was made to parallelise Java garbage collection. The statistics collection program, and plot and table generation program, were written in Python and run using Python version 3.6.1. 
Repositories for the software and data used in these experiments can be found at \url{https://doi.org/10.5281/zenodo.1183221} and \url{https://doi.org/10.5281/zenodo.1186823} respectively.

Correctness testing was conducted over all generated instances. This consisted of checking that the matchings produced by the approximation and IP-based algorithms adhered to $(1)$ \emph{capacity}: each student is assigned to a maximum of $1$ project, and each project and lecturer is not assigned to more students than their given capacity; and $(2)$ \emph{stability}: no blocking pair exists. Additionally, it was checked that each matching produced by the approximation algorithm was at least two-thirds the size of a maximum stable matching. All correctness tests passed successfully.

\subsection{Experimental results}
Experimental results can be seen in Figures \ref{spa-st-exp-plots-a}, \ref{spa-st-exp-plots-b} and \ref{spa-st-exp-plots-c}, and in Tables \ref{tab_res_size_sizes}, \ref{tab_res_ties_sizes}, \ref{tab_res_pref_sizes}, \ref{tab_res_size}, \ref{tab_res_ties}, \ref{tab_res_pref} and \ref{tab_res_scal_tim} of Appendix \ref{app_spa-st_more_exp_sec}. 

Figures \ref{spa-st-exp-plots-a}, \ref{spa-st-exp-plots-b} and \ref{spa-st-exp-plots-c} (with associated Tables \ref{tab_res_size_sizes}, \ref{tab_res_ties_sizes} and \ref{tab_res_pref_sizes}) show plots comparing the size of matching returned by the approximation algorithm with the sizes of minimum and maximum stable matchings and a $\frac{3}{2}$ bound, for Experiments $1$, $2$ and $3$ respectively. In each of these Figures, the median values of the size of stable matchings are plotted and a $90\%$ confidence interval is displayed using the $5$th and $95$th percentile measurements. 

Tables \ref{tab_res_size}, \ref{tab_res_ties} and \ref{tab_res_pref} show additional results for Experiments $1$, $2$ and $3$ respectively. From this point onwards an \emph{optimal} matching refers to a maximum stable matching. In these tables, `A' represents statistics relating to the approximation algorithm, and `Min' and `Max' represent statistics relating to the IP models to find a minimum and maximum stable matching, respectively. Column `Minimum A/Max' gives the minimum ratio of approximation algorithm matching size to optimal matching size that occurred, `\% A=Max' displays the percentage of times the approximation algorithm achieved an optimal result, and `\% A$\geq 0.98$Max' shows the percentage of times the approximation algorithm achieved a result at least $98$\% of optimal. The `Mean size' columns are somewhat self explanatory, with sub-columns `A/Max' and `Min/Max'  showing the mean approximation algorithm matching size and minimum stable matching size as a fraction of optimal. Finally, the mean time taken in milliseconds to run each algorithm per instance is given in the last three columns. Table \ref{tab_res_scal_tim} shows the scalability results for increasing instance sizes (Experiment $4$) and increasing preference list lengths (Experiment $5$). The `Instances completed' column indicates the number of instances that completed before timeout occurred.  The mean time taken is also shown, where instances that did not complete within the timeout time were said to have taken the maximum time of $30$ minutes. 

The main findings are summarised below.

\begin{itemize}
%	\item \textbf{In all experiments}
%	\begin{itemize}
		\item \emph{The approximation algorithm consistently far exceeds its $\frac{3}{2}$ bound.} Considering the column labelled `Minimum A/Max' in Tables \ref{tab_res_size}, \ref{tab_res_ties} and \ref{tab_res_pref}, %(showing the smallest ratio of approximation algorithm stable matching size divided by the maximum stable matching size for each instance set),
we see that the smallest value was within the SIZE1 instance set with a ratio of $0.9286$. This is well above the required bound of $\frac{2}{3}$.
		\item \emph{Approximation algorithm matchings are closer in size to the maximum stable matchings than to the minimum stable matchings.} The columns `A/Max' and `Min/Max' of Tables \ref{tab_res_size}, \ref{tab_res_ties} and \ref{tab_res_pref} %columns show the average size of stable matchings found by the approximation algorithm and by the \acrshort{ip} model for \acrshort{min-spa-st} as a proportion of the size of a maximum stable matching, respectively. These
        show that, on average, for each instance set, the approximation algorithm produces a solution that is within $98$\% of maximum and far closer to the maximum size than to the minimum size. This may also be seen each of the Figures \ref{spa-st-exp-plots-a}, \ref{spa-st-exp-plots-b} and \ref{spa-st-exp-plots-c}, where in general, the size of a stable matching output by the approximation algorithm, indicated in blue, is far closer to the size of a maximum stable matching (red) than the size of a minimum stable matching (green), in all cases.
%	\end{itemize} 

	\item \emph{Divergence in sizes of minimum and maximum stable matchings when increasing the probability of ties.} Unlike in Figures \ref{spa-st-exp-plots-a} and \ref{spa-st-exp-plots-c} which show similar trends for mean matching sizes produced by the approximation algorithm, and the minimum and maximum sized stable matchings, Figure \ref{spa-st-exp-plots-b} shows a marked divergence of the minimum stable matching size as the probability of ties increases. With no ties in the preference lists, all matchings are the same size (namely $284$), as expected from the theory. As the probability of ties increases, the maximum stable matching size and size of matching from the approximation algorithm increase steadily to $294.8$ and $299.5$ respectively. However, the minimum stable matching size decreases steadily to $254.8$. This behaviour may be explained by noticing that the higher the probability of ties, the larger the number of stable matchings that will exist. Hence there is a higher probability that the sizes of minimum and maximum stable matchings will diverge. Interestingly however, the decrease in size of minimum stable matchings does not have a noticeable effect on the approximation algorithm output, as matching sizes from this algorithm closely align with optimal.
	\item In Table \ref{tab_res_scal_tim}, Experiment 4 (SCALS) shows the number of instances solved within the $30$-minute timeout reduced from $10$ to $0$ for the IP-based algorithm for \acrshort{max-spa-st}. However, even for the largest instance set sizes the approximation algorithm was able to solve each instance on average within $21$ seconds. For Experiment 5 (SCALP), with a higher probability of ties and increasing preference list lengths, the IP-based algorithm to find a maximum stable matching was only able to solve all the instances of one instance set (SCALP2) within $30$ minutes each, however the approximation algorithm took less than $0.3$ seconds on average to return a solution for each instance. This shows that the approximation algorithm is useful for either larger or more complex instances than the IP-based algorithm can handle, motivating its use for real world scenarios.
\end{itemize}

	\begin{figure}
	\centering	
	\includegraphics[scale=0.7]{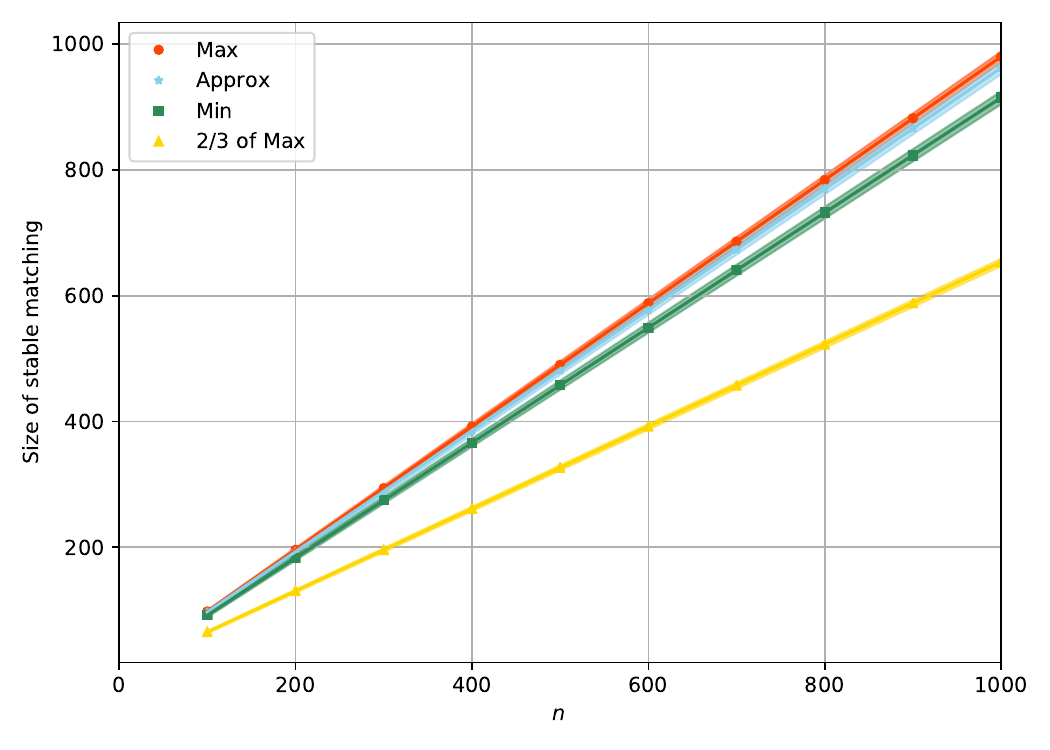}
    \caption[Plot of the size of stable matching returned by the approximation algorithm, and the minimum and maximum stable matching sizes]{Plot of the size of stable matching returned by the approximation algorithm, and the minimum and maximum stable matching sizes, with increasing $n$, where $n$ is the number of men or women. A second-order polynomial has been assumed for all best-fit lines.}
    \label{spa-st-exp-plots-a}
    \end{figure}

\begin{figure}	
\centering	
	\includegraphics[scale=0.7]{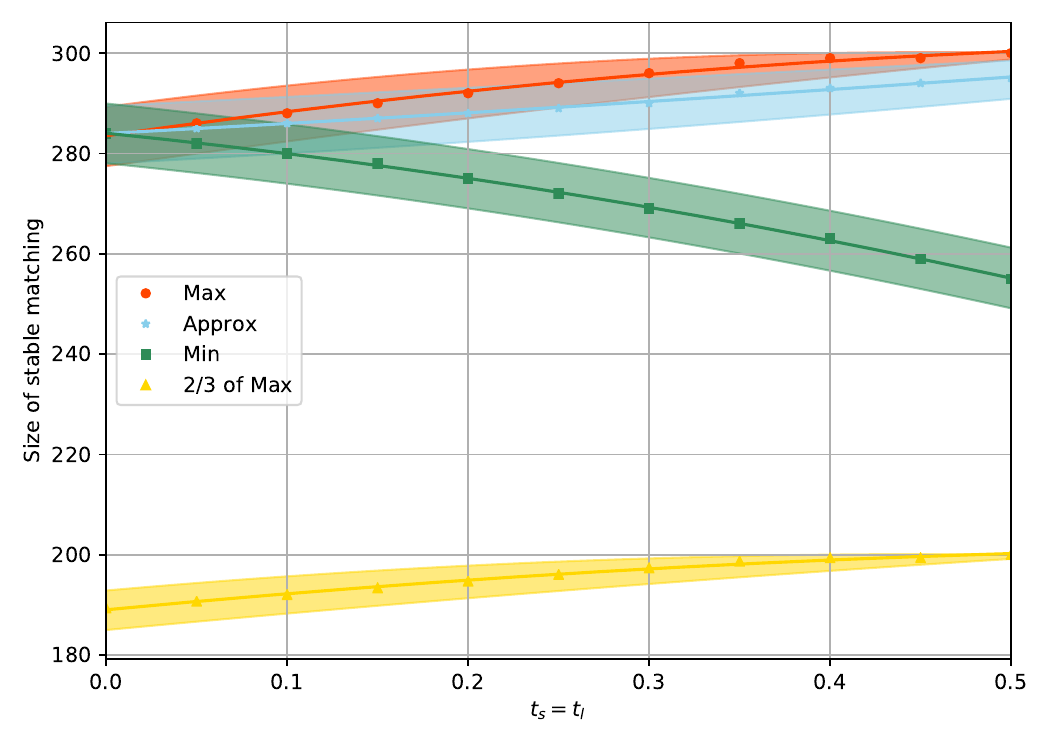}
    \caption[Plot of the size of stable matching returned by the approximation algorithm, and the minimum and maximum stable matching sizes, with increasing probability of ties.]{Plot of the size of stable matching returned by the approximation algorithm, and the minimum and maximum stable matching sizes, with increasing probability of ties. A second-order polynomial has been assumed for all best-fit lines.}
    \label{spa-st-exp-plots-b}
   \end{figure}

\begin{figure}	
	\centering	
	\includegraphics[scale=0.7]{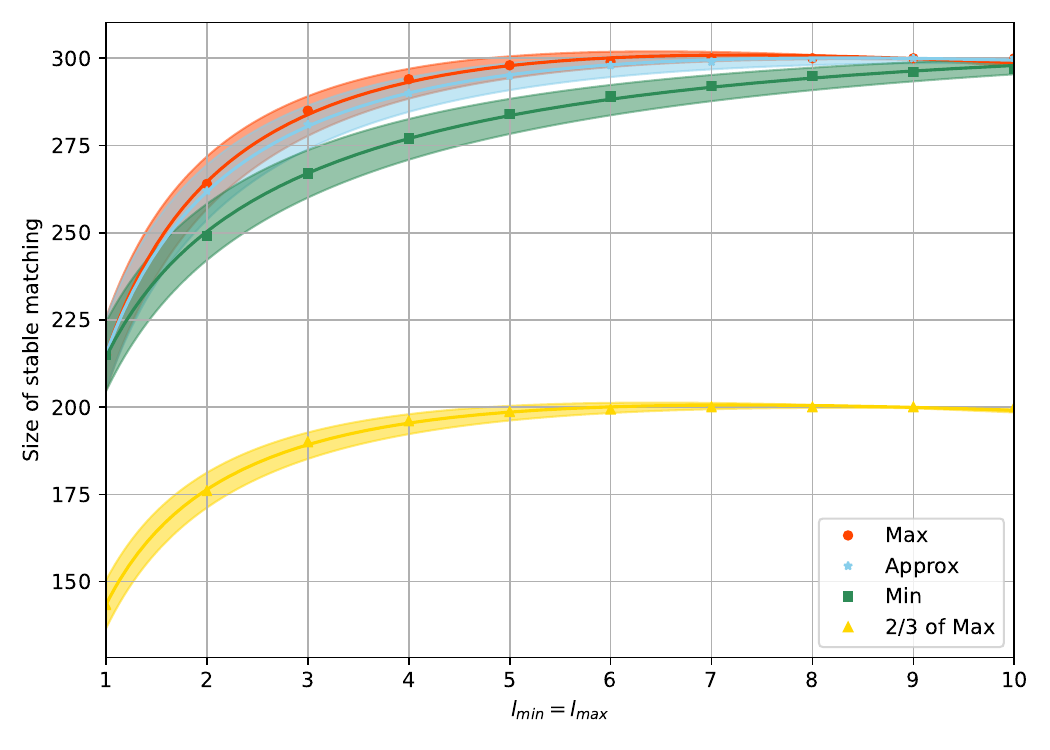}
    \caption[Plot of the size of stable matching returned by the approximation algorithm, and the minimum and maximum stable matching sizes, with increasing student preference list length.]{Plot of the size of stable matching returned by the approximation algorithm, and the minimum and maximum stable matching sizes, with increasing student preference list length. A second-order polynomial in $\log(l_{min})$ has been assumed for all best-fit lines.}
    \label{spa-st-exp-plots-c}
    \end{figure}

% \caption{Comparing the size of stable matching output by the approximation algorithm and the size of the minimum and maximum stable matching, as instance size, probability of ties and preference list lengths are increased.}
% \label{spa-st-exp-plots}
% \end{figure}    

\section{Future work}
\label{sec:conc}
This paper has described a $\frac{3}{2}$-approximation algorithm for \acrshort{max-spa-st}.  It remains open to describe an approximation algorithm that has a better performance guarantee, and/or to prove a stronger lower bound on the inapproximability of the problem than the current best bound of $\frac{33}{29}$ \cite{Yan07}. Further experiments could also measure the extent to which the order that students apply to projects in Algorithm \ref{Alg:APPROX_SPA-S_stable} affects the size of the stable matching generated.

The work in this paper has mainly focused on the size of stable matchings. However, it is possible for a stable matching to admit a \emph{blocking coalition}, where a permutation of student assignments could improve the allocations of the students and lecturers involved without harming anyone else. Since permutations of this kind cannot change the size of the matching they are not studied further here, but would be of interest for future work. 

\FloatBarrier
\bibliography{matching}

\appendix
% \vspace{5mm}
\noindent
\LARGE {\bf Appendix}
\normalsize
\section{Further discussion on Kir\'aly's $\frac{3}{2}$-approximation algorithm for {\sc smti}}
\label{app_spa-st_further_discussions}

Let $I$ be an instance of \acrshort{spa-st}. In Section \ref{cloning_spast_smti} we showed that converting $I$ to an instance of \acrshort{smti} and then using Kir\'aly's approximation algorithm does not necessarily result in a matching $M$ in $I$ that is a $\frac{3}{2}$-approximation to a maximum stable matching, even in the case where $M$ is stable. In this section we give some intuition as to why this is the case, by comparing the effect of conversion to an \acrshort{smti} instance and then use of Kir\'aly's approximation algorithm, when applied to instance $I_1$ from Section \ref{cloning_spast_smti} and when applied to a new instance $I_2$.

Figure \ref{spastapprox:ex2_before_clone} shows the \acrshort{spa-st} instance $I_2$ which is the same as the instance in Figure \ref{spastapprox:ex_before_clone} but with the capacities of projects $p_1$ and $p_2$ reduced to $1$. Figure \ref{spastapprox:ex2_after_clone2} shows $I_2$ converted into an \acrshort{smti} instance $I_2''$ using the same two-stage process as in Figure \ref{spastapprox:ex_beforeandafter_clone}. Finally, Table \ref{tab:tracekiraly2} shows the algorithm trace for instance $I_2''$ using Kir\'aly's $\frac{3}{2}$-approximation algorithm for \acrshort{smti} \cite{Kir12}. We can see from Figure \ref{tab:tracekiraly2} that, in contrast to the case for instance $I_1$, this process yields a stable matching $M=\{(s_1,p_3), (s_2,p_4), (s_4,p_2)\}$ in $I_2$ that is a $\frac{3}{2}$-approximation to a maximum stable matching (this must be the case since there are only $4$ students in the instance).

The main first difference in the traces can be seen on line $14$ of Table \ref{tab:tracekiraly} and line $11$ of Table \ref{tab:tracekiraly2}. On line $11$ of Table \ref{tab:tracekiraly2}, $m_2$ applies to $w_4$ as an advantaged man, giving him the ability to take $w_4$ from $m_3$. This shows the benefit of having a tie including $m_2$ and $m_3$ at the beginning of $w_4$'s list - either of these men being assigned to $w_4$ would be equally useful in a stable matching. Therefore allowing $m_2$ to take $w_4$ from $m_3$ gives $m_3$ a chance to get another partner, increasing the size of matching eventually attained.  On line $14$ of Table \ref{tab:tracekiraly}, $m_2$ becomes `stuck' on $w_5$, one of the women derived from a dummy resident. This stops $m_2$ being able to ever apply to $w_4$ as an advantaged man and the benefits of having $m_2$ and $m_3$ tied at the beginning of $w_4$'s preference list are not realised.

\begin{figure}[]
\centering
\begin{subfigure}[t]{0.3\textwidth}
Students preferences:\\
$s_1$: $p_3$\\
$s_2$: $p_4$ $p_1$ $p_2$\\
$s_3$: $p_3$\\
$s_4$: ($p_2$ $p_3$) $p_4$ $p_1$\\

Project details:\\
$p_1$: lecturer $l_1$, $c_1=1$\\
$p_2$: lecturer $l_1$, $c_2=1$\\
$p_3$: lecturer $l_2$, $c_3=2$\\
$p_4$: lecturer $l_2$, $c_4=1$\\
  
Lecturer preferences:\\
\begin{minipage}[b]{0.6\textwidth}
$l_1$: $s_2$ $s_4$\\
$l_2$: $s_4$ ($s_1$ $s_2$ $s_3$)\\
\end{minipage}
\begin{minipage}[b]{0.3\textwidth}
$d_1=2$\\
$d_2=2$\\
\end{minipage}

    \subcaption{\acrshort{spa-st} instance $I_2$. Same as instance $I$ in Figure \ref{spastapprox:ex_before_clone} except that projects $1$ and $2$ have an capacity of $1$.}
    \label{spastapprox:ex2_before_clone}
  \end{subfigure}
  \hspace*{\fill}
  \begin{subfigure}[t]{0.3\textwidth}
 
Resident preferences:\\
$r_1$: $h_3$ \\
$r_2$: $h_4$ $h_1$ $h_2$ \\
$r_3$: $h_3$ \\
$r_4$: ($h_2$ $h_3$) $h_4$ $h_1$ \\
$r_5$: ($h_3$ $h_4$) \\

Hospital preferences:\\
\begin{minipage}[b]{0.6\textwidth}
$h_1$: $r_2$ $r_4$ \\
$h_2$: $r_2$ $r_4$ \\
$h_3$: $r_5$ $r_4$ ($r_1$ $r_3$) \\
$h_4$: $r_5$ $r_4$ $r_2$\newline\newline\newline\newline
\end{minipage}
\begin{minipage}[b]{0.3\textwidth}
$c'_1=1$\\
$c'_2=1$\\
$c'_3=2$\\
$c'_4=1$\newline\newline\newline\newline
\end{minipage}

   \subcaption{\acrshort{hrt} instance $I_2'$ converted from the \acrshort{spa-st} instance in Figure \ref{spastapprox:ex2_before_clone}.}
    \label{spastapprox:ex2_after_clone1}
  \end{subfigure}
  \hspace*{\fill}
    \begin{subfigure}[t]{0.3\textwidth}

Women's preferences:\\
$w_1$: ($m_3$ $m_5$) \\
$w_2$: $m_4$ $m_1$ $m_2$ \\
$w_3$: ($m_3$ $m_5$) \\
$w_4$: ($m_2$ $m_3$ $m_5$) $m_4$ $m_1$ \\
$w_5$: ($m_3$ $m_5$ $m_4$) \\

Men's preferences:\\
$m_1$: $w_2$ $w_4$\\
$m_2$: $w_2$ $w_4$\\
$m_3$: $w_5$ $w_4$ ($w_1$ $w_3$)\\
$m_4$: $w_5$ $w_4$ $w_2$\\
$m_5$: $w_5$ $w_4$ ($w_1$ $w_3$)\newline\newline\newline

   \subcaption{\acrshort{smti} instance $I_2''$ converted from the \acrshort{hrt} instance in Figure \ref{spastapprox:ex2_after_clone1}.}
    \label{spastapprox:ex2_after_clone2}
  \end{subfigure}
  \caption[Conversion of an {\sc spa-st} instance to an {\sc smti}  instance.]{Conversion of an \acrshort{spa-st} instance to an \acrshort{smti}  instance.}
  \label{spastapprox:ex2_beforeandafter_clone}
    \end{figure}

\begin{table}[tbp] \centering\begin{tabular}{ p{0.5cm}p{8cm} | p{0.7cm} p{0.7cm}p{0.7cm}p{0.7cm}p{0.7cm}}\hline\hline 
& Action & $m_1$ & $m_2$ & $m_3$ & $m_4$ & $m_5$  \\ 
\hline 
$1$ & $m_5$ applies to $w_5$, accepted &  &  &  &  & $w_5$ \\
$2$ & $m_4$ applies to $w_5$, rejected, $m_4$ removes $w_5$ &  &  &  &  & $w_5$ \\
$3$ & $m_4$ applies to $w_4$, accepted &  &  &  & $w_4$ & $w_5$ \\
$4$ & $m_3$ applies to $w_5$, rejected, $m_3$ removes $w_5$ &  &  &  & $w_4$ & $w_5$ \\
$5$ & $m_3$ applies to $w_4$, accepted, $m_4$ removes $w_4$ &  &  & $w_4$ &  & $w_5$ \\
$6$ & $m_4$ applies to $w_2$, accepted &  &  & $w_4$ & $w_2$ & $w_5$ \\
$7$ & $m_2$ applies to $w_2$, rejected, $m_2$ removes $w_2$ &  &  & $w_4$ & $w_2$ & $w_5$ \\
$8$ & $m_2$ applies to $w_4$, rejected, $m_2$ removes $w_4$ &  &  & $w_4$ & $w_2$ & $w_5$ \\
$9$ & $m_2$ advantaged &  &  & $w_4$ & $w_2$ & $w_5$ \\

$10$ & $m_2$ applies to $w_2$, rejected, $m_2$ removes $w_2$ &  &  & $w_4$ & $w_2$ & $w_5$ \\
$11$ & $m_2$ applies to $w_4$, accepted, $m_3$ removes $w_4$ &  & $w_4$ &  & $w_2$ & $w_5$ \\
$12$ & $m_3$ applies $w_1$, accepted &  & $w_4$ & $w_1$ & $w_2$ & $w_5$ \\
$13$ & $m_1$ applies to $w_2$, rejected, $m_1$ removes $w_2$ &  & $w_4$ & $w_1$ & $w_2$ & $w_5$ \\
$14$ & $m_1$ applies to $w_4$, rejected, $m_1$ removes $w_4$ &  & $w_4$ & $w_1$ & $w_2$ & $w_5$ \\
$15$ & $m_1$ advantaged &  & $w_4$ & $w_1$ & $w_2$ & $w_5$ \\
$16$ & $m_1$ applies to $w_2$, rejected, $m_1$ removes $w_2$ &  & $w_4$ & $w_1$ & $w_2$ & $w_5$ \\
$17$ & $m_1$ applies to $w_4$, rejected, $m_1$ removes $w_4$ &  & $w_4$ & $w_1$ & $w_2$ & $w_5$ \\
$18$ & $m_1$ inactive & $-$ & $w_4$ & $w_1$ & $w_2$ & $w_5$ \\

   \hline\hline \end{tabular} \caption[Trace of running Kir\'aly's {\sc smti} $\frac{3}{2}$-approximation algorithm for instance $I_2''$]{Trace of running Kir\'aly's \acrshort{smti} $\frac{3}{2}$-approximation algorithm for instance $I_2''$ in Figure \ref{spastapprox:ex_after_clone2}. In this table, the phrase ``$m_i$ removes $w_j$'' indicates that man $m_i$ removes woman $w_j$ from their preference list.} \label{tab:tracekiraly2} \end{table} 

\newpage
\section{Experimental results tables}
\label{app_spa-st_more_exp_sec}

In this Section we present tables of results for the experiments conducted in Section \ref{sec_spa-st_more_sec_exp}. 
\FloatBarrier
%  \begin{landscape}
%  \thispagestyle{empty}
% \input{Appendix_SM_profile/tables/latex_table_RM.txt}
%  \end{landscape}

\begin{table}[h!]
 \centerline{\begin{tabular}{ R{1.3cm} | R{1.1cm} R{1.1cm} R{1.1cm} R{1.8cm} R{1.1cm} R{1.1cm} R{1.8cm} R{1.1cm} R{1.1cm} }\hline\hline & \multicolumn{3}{c}{Approx} &  \multicolumn{3}{c}{Minimum}  &  \multicolumn{3}{c}{Maximum} \\ 
Case & Median & $5$th & $95$th & Median & $5$th & $95$th & Median & $5$th & $95$th \\ 
\hline SIZE1 & $97.0$ & $93.0$ & $99.0$ & $92.0$ & $89.0$ & $95.0$ & $98.0$ & $95.0$ & $100.0$ \\ 
 SIZE2 & $193.0$ & $188.0$ & $197.0$ & $183.0$ & $179.0$ & $188.0$ & $196.0$ & $191.0$ & $199.0$ \\ 
 SIZE3 & $289.0$ & $283.0$ & $294.0$ & $275.0$ & $269.0$ & $280.0$ & $294.0$ & $288.0$ & $298.0$ \\ 
 SIZE4 & $385.0$ & $379.0$ & $391.0$ & $366.0$ & $360.0$ & $373.0$ & $392.0$ & $386.0$ & $397.0$ \\ 
 SIZE5 & $481.0$ & $474.0$ & $488.0$ & $458.0$ & $450.0$ & $465.0$ & $490.0$ & $483.0$ & $496.0$ \\ 
 SIZE6 & $577.0$ & $570.0$ & $585.0$ & $549.0$ & $541.0$ & $557.0$ & $588.0$ & $580.0$ & $594.0$ \\ 
 SIZE7 & $673.0$ & $665.0$ & $681.0$ & $641.0$ & $632.0$ & $649.0$ & $686.0$ & $678.0$ & $693.0$ \\ 
 SIZE8 & $770.0$ & $761.0$ & $778.0$ & $732.0$ & $723.0$ & $741.0$ & $784.0$ & $775.0$ & $791.0$ \\ 
 SIZE9 & $866.0$ & $856.0$ & $875.0$ & $823.0$ & $813.0$ & $833.0$ & $882.0$ & $873.0$ & $890.0$ \\ 
 SIZE10 & $962.0$ & $952.0$ & $971.0$ & $915.0$ & $904.0$ & $925.0$ & $980.0$ & $970.0$ & $988.0$ \\ 
 \hline\hline \end{tabular}} 
\caption{Comparison of the size of the stable matching returned by the approximation algorithm, and the minimum and maximum stable matching sizes, with increasing instance size.} \label{tab_res_size_sizes} \end{table}

\begin{table}[]
 \centerline{\begin{tabular}{ R{1.3cm} | R{1.1cm} R{1.1cm} R{1.1cm} R{1.8cm} R{1.1cm} R{1.1cm} R{1.8cm} R{1.1cm} R{1.1cm} }\hline\hline & \multicolumn{3}{c}{Approx} &  \multicolumn{3}{c}{Minimum}  &  \multicolumn{3}{c}{Maximum} \\ 
Case & Median & $5$th & $95$th & Median & $5$th & $95$th & Median & $5$th & $95$th \\ 
\hline TIES1 & $284.0$ & $278.0$ & $290.0$ & $284.0$ & $278.0$ & $290.0$ & $284.0$ & $278.0$ & $290.0$ \\ 
 TIES2 & $285.0$ & $279.0$ & $290.0$ & $282.0$ & $276.0$ & $288.0$ & $286.0$ & $280.0$ & $291.0$ \\ 
 TIES3 & $286.0$ & $280.0$ & $291.0$ & $280.0$ & $274.0$ & $286.0$ & $288.0$ & $282.0$ & $293.0$ \\ 
 TIES4 & $287.0$ & $281.0$ & $292.0$ & $278.0$ & $272.0$ & $283.0$ & $290.0$ & $284.0$ & $295.0$ \\ 
 TIES5 & $288.0$ & $282.0$ & $293.0$ & $275.0$ & $269.0$ & $281.0$ & $292.0$ & $287.0$ & $297.0$ \\ 
 TIES6 & $289.0$ & $284.0$ & $294.0$ & $272.0$ & $266.0$ & $278.0$ & $294.0$ & $289.0$ & $298.0$ \\ 
 TIES7 & $290.0$ & $285.0$ & $295.0$ & $269.0$ & $263.0$ & $275.0$ & $296.0$ & $291.0$ & $299.0$ \\ 
 TIES8 & $292.0$ & $286.0$ & $296.0$ & $266.0$ & $260.0$ & $272.0$ & $298.0$ & $294.0$ & $300.0$ \\ 
 TIES9 & $293.0$ & $288.0$ & $297.0$ & $263.0$ & $257.0$ & $269.0$ & $299.0$ & $296.0$ & $300.0$ \\ 
 TIES10 & $294.0$ & $289.0$ & $298.0$ & $259.0$ & $253.0$ & $265.0$ & $299.0$ & $297.0$ & $300.0$ \\ 
 TIES11 & $295.0$ & $291.0$ & $298.0$ & $255.0$ & $249.0$ & $261.0$ & $300.0$ & $298.0$ & $300.0$ \\ 
 \hline\hline \end{tabular}} 
\caption{Comparison of the size of the stable matching returned by the approximation algorithm, and the minimum and maximum stable matching sizes, with increasing probability of ties.} \label{tab_res_ties_sizes} \end{table}

\begin{table}[]
 \centerline{\begin{tabular}{ R{1.3cm} | R{1.1cm} R{1.1cm} R{1.1cm} R{1.8cm} R{1.1cm} R{1.1cm} R{1.8cm} R{1.1cm} R{1.1cm} }\hline\hline & \multicolumn{3}{c}{Approx} &  \multicolumn{3}{c}{Minimum}  &  \multicolumn{3}{c}{Maximum} \\ 
Case & Median & $5$th & $95$th & Median & $5$th & $95$th & Median & $5$th & $95$th \\ 
\hline PREF1 & $215.0$ & $205.0$ & $225.0$ & $215.0$ & $205.0$ & $225.0$ & $215.0$ & $205.0$ & $225.0$ \\ 
 PREF2 & $262.0$ & $254.0$ & $270.0$ & $249.0$ & $241.0$ & $257.0$ & $264.0$ & $256.0$ & $272.0$ \\ 
 PREF3 & $281.0$ & $274.0$ & $287.0$ & $267.0$ & $260.0$ & $273.0$ & $285.0$ & $278.0$ & $291.0$ \\ 
 PREF4 & $290.0$ & $285.0$ & $295.0$ & $277.0$ & $271.0$ & $283.0$ & $294.0$ & $289.0$ & $298.0$ \\ 
 PREF5 & $295.0$ & $291.0$ & $298.0$ & $284.0$ & $279.0$ & $289.0$ & $298.0$ & $295.0$ & $300.0$ \\ 
 PREF6 & $298.0$ & $294.0$ & $300.0$ & $289.0$ & $284.0$ & $293.0$ & $299.0$ & $297.0$ & $300.0$ \\ 
 PREF7 & $299.0$ & $296.0$ & $300.0$ & $292.0$ & $288.0$ & $296.0$ & $300.0$ & $299.0$ & $300.0$ \\ 
 PREF8 & $300.0$ & $298.0$ & $300.0$ & $295.0$ & $291.0$ & $297.0$ & $300.0$ & $299.0$ & $300.0$ \\ 
 PREF9 & $300.0$ & $299.0$ & $300.0$ & $296.0$ & $293.0$ & $299.0$ & $300.0$ & $299.0$ & $300.0$ \\ 
 PREF10 & $300.0$ & $299.0$ & $300.0$ & $297.0$ & $295.0$ & $299.0$ & $300.0$ & $300.0$ & $300.0$ \\ 
 \hline\hline \end{tabular}} 
\caption{Comparison of the size of the stable matching returned by the approximation algorithm, and the minimum and maximum stable matching sizes, with increasing student preference list length.} \label{tab_res_pref_sizes} \end{table}

% \newgeometry{bottom=2cm, left=1.8cm, right=1.8cm}

\begin{landscape}
\begin{table}[tbp]
 \centerline{\begin{tabular}{ R{1.3cm} | R{1.6cm} R{1.6cm} R{1.6cm} R{1.8cm} R{1.3cm} R{1.3cm} R{1.3cm} R{1.6cm} R{1.8cm} R{1.3cm} R{1.3cm} }\hline\hline & Minimum &  &  \%A$\geq$ & \multicolumn{5}{c}{Mean size} & \multicolumn{3}{c}{Mean time (ms)}\\ 
Case & A/Max  & \% A=Max  & $0.98 $Max  & A & Min & Max & A/Max & Min/Max & A & Min & Max \\ 
\hline SIZE1 & $0.9286$ & $17.8$ & $62.7$ & $96.4$ & $92.0$ & $97.8$ & $0.9859$ & $0.9408$ & $43.3$ & $147.6$ & $137.8$ \\ 
 SIZE2 & $0.9585$ & $1.6$ & $62.6$ & $192.6$ & $183.4$ & $195.7$ & $0.9840$ & $0.9373$ & $51.2$ & $230.6$ & $210.6$ \\ 
 SIZE3 & $0.9556$ & $0.1$ & $63.7$ & $288.7$ & $274.9$ & $293.7$ & $0.9831$ & $0.9361$ & $56.6$ & $346.4$ & $313.4$ \\ 
 SIZE4 & $0.9644$ & $0.0$ & $65.6$ & $384.9$ & $366.4$ & $391.7$ & $0.9827$ & $0.9354$ & $59.7$ & $488.7$ & $429.3$ \\ 
 SIZE5 & $0.9654$ & $0.0$ & $66.5$ & $481.0$ & $457.7$ & $489.6$ & $0.9824$ & $0.9349$ & $62.8$ & $660.3$ & $555.6$ \\ 
 SIZE6 & $0.9641$ & $0.0$ & $66.8$ & $577.2$ & $549.3$ & $587.7$ & $0.9821$ & $0.9346$ & $66.4$ & $862.3$ & $713.0$ \\ 
 SIZE7 & $0.9679$ & $0.0$ & $65.4$ & $673.3$ & $640.5$ & $685.7$ & $0.9819$ & $0.9341$ & $69.8$ & $1127.8$ & $900.6$ \\ 
 SIZE8 & $0.9684$ & $0.0$ & $67.4$ & $769.5$ & $732.0$ & $783.8$ & $0.9818$ & $0.9340$ & $73.0$ & $1437.3$ & $1098.2$ \\ 
 SIZE9 & $0.9653$ & $0.0$ & $68.6$ & $865.6$ & $823.4$ & $881.7$ & $0.9817$ & $0.9339$ & $76.5$ & $1784.3$ & $1343.9$ \\ 
 SIZE10 & $0.9701$ & $0.0$ & $68.0$ & $961.7$ & $914.7$ & $979.7$ & $0.9816$ & $0.9337$ & $86.6$ & $2281.2$ & $1651.0$ \\ 
 \hline\hline \end{tabular}} 
\caption{Comparisons of the matching output by the approximation algorithm, and IP model implementation outputs, with increasing instance size.} \label{tab_res_size} \end{table}

\begin{table}[tbp]
 \centerline{\begin{tabular}{ R{1.3cm} | R{1.6cm} R{1.6cm} R{1.6cm} R{1.8cm} R{1.3cm} R{1.3cm} R{1.3cm} R{1.6cm} R{1.8cm} R{1.3cm} R{1.3cm} }\hline\hline & Minimum &  &  \%A$\geq$ & \multicolumn{5}{c}{Mean size} & \multicolumn{3}{c}{Mean time (ms)}\\ 
Case & A/Max  & \% A=Max  & $0.98 $Max  & A & Min & Max & A/Max & Min/Max & A & Min & Max \\ 
\hline TIES1 & $1.0000$ & $100.0$ & $100.0$ & $284.0$ & $284.0$ & $284.0$ & $1.0000$ & $1.0000$ & $59.2$ & $184.0$ & $186.9$ \\ 
 TIES2 & $0.9792$ & $38.0$ & $100.0$ & $284.9$ & $282.0$ & $285.8$ & $0.9968$ & $0.9866$ & $61.2$ & $192.4$ & $194.7$ \\ 
 TIES3 & $0.9722$ & $12.1$ & $99.3$ & $285.9$ & $279.9$ & $287.9$ & $0.9933$ & $0.9722$ & $61.7$ & $201.0$ & $203.1$ \\ 
 TIES4 & $0.9655$ & $3.4$ & $95.2$ & $287.0$ & $277.6$ & $289.9$ & $0.9900$ & $0.9576$ & $62.3$ & $213.3$ & $214.5$ \\ 
 TIES5 & $0.9626$ & $1.0$ & $82.5$ & $288.0$ & $275.1$ & $291.9$ & $0.9865$ & $0.9423$ & $62.9$ & $234.3$ & $231.0$ \\ 
 TIES6 & $0.9558$ & $0.4$ & $66.7$ & $289.2$ & $272.4$ & $294.0$ & $0.9837$ & $0.9266$ & $64.2$ & $274.2$ & $260.6$ \\ 
 TIES7 & $0.9486$ & $0.2$ & $52.9$ & $290.3$ & $269.4$ & $295.7$ & $0.9816$ & $0.9111$ & $64.3$ & $358.3$ & $311.3$ \\ 
 TIES8 & $0.9527$ & $0.2$ & $46.4$ & $291.4$ & $266.2$ & $297.2$ & $0.9803$ & $0.8957$ & $64.2$ & $577.3$ & $380.7$ \\ 
 TIES9 & $0.9467$ & $0.2$ & $50.4$ & $292.5$ & $262.7$ & $298.3$ & $0.9805$ & $0.8804$ & $65.2$ & $1234.1$ & $427.5$ \\ 
 TIES10 & $0.9529$ & $0.5$ & $61.9$ & $293.7$ & $258.9$ & $299.1$ & $0.9821$ & $0.8656$ & $59.6$ & $2903.4$ & $409.1$ \\ 
 TIES11 & $0.9467$ & $1.0$ & $74.2$ & $294.8$ & $254.8$ & $299.5$ & $0.9842$ & $0.8506$ & $60.4$ & $5756.9$ & $377.4$ \\ 
 \hline\hline \end{tabular}} 
\caption{Comparisons of the matching output by the approximation algorithm, and IP model implementation outputs, with increasing probability of ties.} \label{tab_res_ties} \end{table}

\begin{table}[tbp]
 \centerline{\begin{tabular}{ R{1.3cm} | R{1.6cm} R{1.6cm} R{1.6cm} R{1.8cm} R{1.3cm} R{1.3cm} R{1.3cm} R{1.6cm} R{1.8cm} R{1.3cm} R{1.3cm} }\hline\hline & Minimum &  &  \%A$\geq$ & \multicolumn{5}{c}{Mean size} & \multicolumn{3}{c}{Mean time (ms)}\\ 
Case & A/Max  & \% A=Max  & $0.98 $Max  & A & Min & Max & A/Max & Min/Max & A & Min & Max \\ 
\hline PREF1 & $1.0000$ & $100.0$ & $100.0$ & $215.0$ & $215.0$ & $215.0$ & $1.0000$ & $1.0000$ & $74.3$ & $107.5$ & $105.1$ \\ 
 PREF2 & $0.9699$ & $12.3$ & $99.0$ & $262.1$ & $249.1$ & $264.1$ & $0.9926$ & $0.9433$ & $67.5$ & $133.8$ & $128.7$ \\ 
 PREF3 & $0.9617$ & $1.2$ & $84.0$ & $280.9$ & $266.4$ & $284.7$ & $0.9867$ & $0.9361$ & $68.1$ & $181.4$ & $174.0$ \\ 
 PREF4 & $0.9623$ & $1.0$ & $82.8$ & $290.0$ & $277.0$ & $293.9$ & $0.9866$ & $0.9426$ & $69.1$ & $249.7$ & $242.6$ \\ 
 PREF5 & $0.9661$ & $4.2$ & $95.1$ & $294.8$ & $283.9$ & $297.7$ & $0.9902$ & $0.9537$ & $68.3$ & $346.7$ & $340.3$ \\ 
 PREF6 & $0.9732$ & $15.7$ & $99.5$ & $297.3$ & $288.7$ & $299.1$ & $0.9940$ & $0.9653$ & $66.1$ & $472.4$ & $440.6$ \\ 
 PREF7 & $0.9767$ & $36.2$ & $100.0$ & $298.7$ & $292.1$ & $299.7$ & $0.9966$ & $0.9746$ & $64.5$ & $638.3$ & $550.9$ \\ 
 PREF8 & $0.9833$ & $58.2$ & $100.0$ & $299.3$ & $294.4$ & $299.9$ & $0.9982$ & $0.9819$ & $64.1$ & $811.9$ & $660.3$ \\ 
 PREF9 & $0.9866$ & $75.5$ & $100.0$ & $299.7$ & $296.1$ & $299.9$ & $0.9991$ & $0.9873$ & $63.4$ & $1032.2$ & $789.1$ \\ 
 PREF10 & $0.9900$ & $87.3$ & $100.0$ & $299.8$ & $297.4$ & $300.0$ & $0.9995$ & $0.9913$ & $104.3$ & $1239.4$ & $931.0$ \\ 
 \hline\hline \end{tabular}} 
\caption{Comparisons of the matching output by the approximation algorithm, and IP model outputs, with increasing student preference list length.} \label{tab_res_pref} \end{table}

\begin{table}[tbp] \centerline{\begin{tabular}{ R{1.5cm} | R{1.5cm} R{1.5cm} R{1.5cm} R{2.5cm} R{2cm} R{2cm} }\hline\hline & \multicolumn{3}{c}{Instances completed} & \multicolumn{3}{c}{Mean time (ms)}\\ 
Case & A & Min & Max & A & Min & Max \\ 
\hline SCALS1 & $10$ & $10$ & $10$ & $1393.8$ & $127980.3$ & $227764.3$ \\ 
 SCALS2 & $10$ & $10$ & $9$ & $5356.7$ & $353272.3$ & $1166441.0$ \\ 
 SCALS3 & $10$ & $10$ & $0$ & $13095.3$ & $785421.2$ & $1800000.0$ \\ 
 SCALS4 & $10$ & $7$ & $0$ & $18883.5$ & $1283453.5$ & $1800000.0$ \\ 
 SCALS5 & $10$ & $7$ & $0$ & $20993.0$ & $1455410.1$ & $1800000.0$ \\ 
\hline SCALP1 & $10$ & $0$ & $9$ & $193.3$ & $1800000.0$ & $264818.6$ \\ 
 SCALP2 & $10$ & $1$ & $10$ & $189.4$ & $1762884.4$ & $631225.2$ \\ 
 SCALP3 & $10$ & $0$ & $3$ & $196.6$ & $1800000.0$ & $1524675.3$ \\ 
 SCALP4 & $10$ & $0$ & $1$ & $248.5$ & $1800000.0$ & $1779420.1$ \\ 
 SCALP5 & $10$ & $0$ & $0$ & $283.7$ & $1800000.0$ & $1800000.0$ \\ 
 SCALP6 & $10$ & $0$ & $0$ & $288.4$ & $1800000.0$ & $1800000.0$ \\ 
 \hline\hline \end{tabular}} \caption{Scalability experiment results.} \label{tab_res_scal_tim} \end{table} 
\end{landscape}

\end{document}